\definecolor{light-gray}{gray}{0.95}
\newcommand{\MS}[1]{{\color{MidnightBlue}{#1}}}
\newtheorem{theorem}{Theorem}[subsection]
\newtheorem{proposition}[theorem]{Proposition}
\newtheorem{proposition/definition}[theorem]{Proposition/Definition}
\newtheorem{lemma}[theorem]{Lemma}
\theoremstyle{definition}
\newtheorem{remark}[theorem]{Remark}
\newtheorem{definition}[theorem]{Definition}
\newtheorem{example}[theorem]{Example}
\renewcommand{\d}{\mathrm{d}} 
\newcommand{\dt}{\mathrm{d} t} 
\newcommand{\F}{\mathcal{F}}  
\renewcommand{\S}{\mathcal{S}}
\newcommand{\gh}{\text{gh}}
\newcommand{\ld}{\#}
\newcommand{\cfd}{\text{cfd}_M}
\newcommand{\bvbfv}{\mathfrak{BV\text{-}BFV}}
\newcommand{\tf}{\theta_1}        
\newcommand{\vpf}{\varpi_1}       
\newcommand{\Lf}{L_{1}}             
\newcommand{\Qf}{{Q_1}}             
\newcommand{\clSf}{{S_1}}           
\newcommand{\betaf}{\beta_1}        
\newcommand{\ff}{f_1}               
\newcommand{\Rf}{{R_1}}             
\newcommand{\Df}{D_1}             
\newcommand{\ts}{\theta_{2}}    
\newcommand{\vps}{\varpi_{2}}   
\newcommand{\Ls}{L_{2}}         
\newcommand{\Qs}{{Q_2}}         
\newcommand{\clSs}{{S_2}}               
\newcommand{\qf}{q}                 
\newcommand{\dqf}{\dot q}
\newcommand{\ddqf}{\ddot q}
\newcommand{\pf}{p}                 
\newcommand{\dpf}{\dot p}
\newcommand{\gf}{g}                 
\newcommand{\dgf}{\dot g}
\newcommand{\hf}{h}                 
\newcommand{\FfCM}{\F^\mathrm{lax}_{1CM}} 
\newcommand{\qs}{\tilde{q}}
\newcommand{\dqs}{\dot{\tilde{q}}}
\newcommand{\ddqs}{\ddot{\tilde{q}}}
\newcommand{\gs}{\tilde{g}}
\newcommand{\hs}{\tilde{h}}
\newcommand{\FsCM}{\F^\mathrm{lax}_{2CM}}
\newcommand{\Tr}{\mathrm{Tr}}
\newcommand{\es}{\varepsilon_s}
\newcommand{\Bf}{B}
\newcommand{\Af}{A}
\newcommand{\dAf}{\d_{\Af}}
\newcommand{\FAf}{F_{\Af}}
\newcommand{\cf}{c}
\newcommand{\FfYM}{\F^\mathrm{lax}_{1YM}}
\newcommand{\As}{\tilde{A}}
\newcommand{\dAs}{\d_{\As}}
\newcommand{\FAs}{\tilde{F}_{\As}}
\newcommand{\cs}{\tilde{c}}
\newcommand{\FsYM}{\F^\mathrm{lax}_{2YM}}
\newcommand{\qj}{\tilde{q}}
\newcommand{\dqj}{\dot{\tilde{q}}}
\newcommand{\ddqj}{\ddot{\tilde{q}}}
\newcommand{\xij}{\tilde{\xi}}
\newcommand{\dxij}{\dot{\tilde{\xi}}}
\newcommand{\Fj}{\F^\mathrm{lax}_{J}}
\newcommand{\tj}{\theta_{J}}
\newcommand{\Lj}{L_{J}}
\newcommand{\Qj}{{Q_{J}}}
\newcommand{\Tj}{T}
\newcommand{\Fgr}{\F^\mathrm{lax}_{GR}}
\newcommand{\tgr}{\theta_{GR}}
\newcommand{\Lgr}{L_{GR}}
\newcommand{\Qgr}{{Q_{GR}}}
\newcommand{\Dgr}{{D_{GR}}}
\newcommand{\Dgrk}{{D^k_{GR}}}
\newcommand{\Rgr}{{R_{GR}}}
\newcommand{\betagr}{\beta_{GR}}
\newcommand{\fgr}{f_{GR}} 
\title[BV equivalence with boundary]{BV equivalence with boundary}
\thanks{A. S. C. acknowledges partial support of SNF Grant No. 200020\_192080. This research was (partly) supported by the NCCR SwissMAP, funded by the Swiss National Science Foundation.}
\author{F. M. Castela Sim\~{a}o}
\address{Queen Mary University of London, School of Mathematical Sciences, Mile End Rd,London E1 4NS, United Kingdom}
\email{f.castelasimao@qmul.ac.uk}
\author{A. S. Cattaneo}
\address{Institut f\"ur Mathematik, Universit\"at Z\"urich, Winterthurerstrasse 190, 8057 Z\"urich, Switzerland}
\email{cattaneo@math.uzh.ch}
\author{M. Schiavina}
\address{Institute for Theoretical Physics and Department of Mathematics, ETH Zurich, R\"amistrasse 101, 8092, Z\"urich, Switzerland}
\curraddr{Dipartimento di Matematica, Università di Pavia, Via Ferrata 5, 27100 Pavia, Italy}
\email{michele.schiavina@unipv.it}
\begin{document}

\begin{abstract}
An extension of the notion of classical equivalence of equivalence in the Batalin--(Fradkin)--Vilkovisky (BV) and (BFV) framework for local Lagrangian field theory on manifolds possibly with boundary is discussed. Equivalence is phrased in both a strict and a lax sense, distinguished by the compatibility between the BV data for a field theory and its boundary BFV data, necessary for quantisation.
In this context, the first- and second-order formulations of non-Abelian Yang--Mills and of classical mechanics on curved backgrounds, all of which admit a strict BV-BFV description, are shown to be pairwise equivalent as strict BV-BFV theories. This in particular implies that their BV-complexes are quasi-isomorphic.  Furthermore, Jacobi theory and one-dimensional gravity coupled with scalar matter are compared as classically-equivalent reparametrisation-invariant versions of classical mechanics, but such that only the latter admits a strict BV-BFV formulation.  They are shown to be equivalent as lax BV-BFV theories and to have isomorphic BV cohomologies. This shows that strict BV-BFV equivalence is a strictly finer notion of equivalence of theories.
\end{abstract}

\maketitle

\tableofcontents
\section{Introduction}
\label{sec:Intro}
The notion of equivalence of field theories is one that can be found throughout physics. Such a concept is relevant and useful for various reasons. At a classical level, the equations of motions of a given theory might be easier to handle than other ``equivalent'' ones, even if they ultimately yield the same moduli space of solutions. Such reformulations often result in different and enlightening new interpretations of a given problem. Moreover, one theory might be better suited for quantisation than another, but the question of whether two classically equivalent theories result in the same quantum theory is in general still open. With this work, we attempt to take another step towards the answer.

The classical physical content of a given field theory is encoded in the set $\mathcal{EL}$ of solutions of the Euler--Lagrange equations. In the case where the theory in question also enjoys a local symmetry --- encoded by a tangent distribution $D$ --- we are interested in the moduli space of inequivalent solutions $\mathcal{EL}/D$. \emph{Classical observables} are then defined to be suitable functions on $\mathcal{EL}/D$. {Such a quotient is typically singular: defining a sensible space of functions over it becomes challenging and it is often more convenient to find a replacement; a problem best addressed within the Batalin--Vilkovisky (BV) formalism.}

The BV formalism was first introduced in \cite{batalin1983feynman,batalin1983quantisation,batalin1984gauge} as an extension of the BRST formalism \cite{becchi1976renormalisation,Tyutin2008}, named after Becchi, Rouet, Stora and Tyutin, used to quantise Lagrangian gauge theories in a way that preserves covariance. Around the same time, the Batalin-Fradkin-Vilkovisky (BFV) formalism was introduced, which deals with constrained Hamiltonian systems \cite{batalin1977relativistic,batalin1983generalized}. It was later noticed by various authors \cite{mcmullan1984constraints,henneaux1985hamiltonian,browning1987batalin,dubois1987systemes,fisch1990homological,henneaux1990lectures,mccloud1994jet,stasheff1997homological,stasheff1998secret} that the aforementioned formalisms enjoy a rich cohomological structure. For example, a BV theory associates a chain complex to a spacetime manifold, the \emph{BV complex}, which aims at a \emph{resolution} of the desired space of functions over the quotient $\mathcal{EL}/D$. In the case of the BFV formalism, one introduces the \emph{BFV complex} \cite{stasheff1997homological,schatz2009bfv,schatz2009coisotropic} as a resolution of the space of functions over the \emph{reduced phase space} of a given constrained Hamiltonian system.

One can then address the question of equivalence of theories in the BV setting. Following the discussion above, a natural way of comparing two classical theories is through their BV cohomologies, also called classical observables, as done for example in \cite{barnich1995local}. However, a BV theory comes equipped with several pieces of data other than the underlying dg-algebra structure (for example a symplectic structure and a Hamiltonian function) that one might want an equivalence relation to preserve. Finding the appropriate notion of \say{BV-equivalence} is thus a non-trivial open question.  In \cite{cattaneo2018bv,canepa2019fully}, a \emph{stronger} notion of BV-equivalence is implemented, which requires all data to be preserved by a symplectomorphism. A nontrivial example of such equivalence is found between 3d gravity and (nondegenerate) $BF$ theory. In \cite{canepa2021general} various alternative weaker notions of BV equivalence have been presented, which apply to higher dimensional formulations of General Relativity.

The BV and BFV approaches were linked by Cattaneo, Mnev and Resheti\-khin in \cite{cattaneo2014classical}, where the authors showed that a BV theory on the bulk induces a compatible BFV theory on the boundary, provided that some regularity conditions are met. The presence of a boundary will typically spoil the symmetry invariance of the BV data, encoded in the BV cohomology, but this failure will be controlled by the BFV data associated to the boundary. From this perspective, the regularity conditions can be seen as a compatibility condition between the BV complex on the bulk and the BFV complex on the boundary.

Derived geometry \cite{pantev2013shifted} extends the above setting to algebraic geometry, even though currently only in the restricted setting of AKSZ theories. The induced boundary theory is in this case an example of derived intersection, \cite{calaque2015lagrangian,calaque2017shifted,}. As derived geometry mainly addresses classical problems, the nondegeneracy of the symplectic form is only required up to homotopy, which yields problems in the direction of quantisation.

On the other hand the BV-BFV approach \cite{cattaneo2014classical,cattaneo2018perturbative} is especially successful since it allows for a quantisation procedure that is compatible with cutting and gluing. This has already been shown to work in various examples such as $BF$ theory \cite{cattaneo2018perturbative,cattaneo2020cellular}, split Chern--Simons theory \cite{cattaneo2017split}, 2D Yang-Mills theory \cite{iraso2019two} and AKSZ sigma models \cite{cattaneo2019globalization}.

This approach was first tested\footnote{Another approach to General Relativity by means of the BV formalism (without boundary) can be found in \cite{rejzner2011batalin,fredenhagen2012local}.} on General Relativity in \cite{schiavina2015bv}. For diffeomorphism invariant theories, the compatibility between bulk and boundary data becomes a non-trivial matter, and there are various cases where the regularity conditions necessary for the BV-BFV description fail to be met. Most notable are the examples of Palatini--Cartan gravity in (3+1) dimensions \cite{cattaneo2017bv}, Plebanski theory \cite{schiavina2015bv}, the Nambu--Goto string \cite{martinoli2020bvanalysis} and the Jacobi action for reparametrisation invariant classical mechanics \cite{cattaneo2017time}. On the other hand, the respectively classically equivalent Einstein--Hilbert formulation of gravity \cite{cattaneo2016bv} in (3+1) dimensions and the Polyakov action \cite{martinoli2020bvanalysis} fulfill the BV-BFV axioms. The question of how one can go around these problems and construct a sensible BV-BFV theory for Palatini--Cartan gravity was addressed in \cite{canepa2019boundary,canepa2021general}.

As not all field theories are suitable for a BV-BFV description, the \emph{lax approach} to the BV-BFV formalism was proposed in \cite{mnev2020towards}, which gathers the data prior to the step where the regularity conditions become relevant. This setting already allows us to construct the \emph{BV-BFV complex} \cite{mnev2020towards}, which is the adaptation of the BV complex to the case with boundary. Likewise, classical observables are contained in its cohomology. As such, the lax BV-BFV formalism offers a sensible way of comparing two field theories on manifolds with boundary, even if one does not have a \emph{strict} BV-BFV theory. 

In this paper, we provide an explicit method to lift classical equivalence to a (potential) BV equivalence, also in the presence of boundaries. This naturally introduces the notion of {lax equivalence of BV theories on manifolds with boundary, which is in principle finer than BV equivalence.} Our method is applied to the simple cases of classical mechanics on a curved background as well as to (non-abelian) Yang--Mills theory, where we explicitly show that the first- and second-order formalisms are lax BV-BFV equivalent (and hence BV-quasi-isomorphic).

We then turn our attention to the main objective of this paper: the analysis of the classically-equivalent Jacobi theory and one-dimensional gravity coupled to matter (1D GR). These two models can be regarded as the one-dimensional counterparts of the Nambu--Goto and Polyakov string models respectively, and they both represent a reparametrisation-invariant version of classical mechanics. In \cite{cattaneo2017time} it was shown that while 1D GR satisfies the regularity conditions of the BV-BFV formalism, Jacobi theory produces a singular theory on the boundary, and a similar result was proven for their 2d string-theoretic analogues \cite{martinoli2020bvanalysis} which raises the question of the origin of this boundary discrepancy.

By comparing the BV and BV-BFV cohomologies of Jacobi theory and 1D GR, we find that, even though the two theories on manifolds with possibly non-empty boundary are lax equivalent, and hence their associated BV (and lax BV-BFV) complexes are quasi-isomorphic, the chain maps that connect the two theories do not preserve the regularity condition required by the strictification procedure (Theorem \ref{theorem:chispoilsboundary}). 

In other words, quasi-isomorphisms of lax BV-BFV complexes do not preserve strict BV-BFV theories, which then should be taken as a genuine subclass of BV theories: even in the best case scenario of two theories that are classically equivalent with quasi-isomorphic lax BV-BFV complexes, an obstruction to their strict BV-BFV compatibility distinguishes the two. { Indeed, consider two lax equivalent theories (Definition \ref{def:laxequivalence} --- see e.g.\ the case described in Theorem \ref{theorem:laxequivalenceJacobi1DGR}) such that one of the two models fails to be compatible with the strict BV-BFV axioms (cf. Remark \ref{rem:strictification}). In this case only one of the two admits a quantisation in the BV-BFV setting. Even if they both could ultimately admit a sensible quantisation, our result suggests anyway that they might have different quantisations in the presence of boundaries.

Another way of viewing our result is the following. Suppose we are given a lax BV-BFV theory that is not strict (Definition \ref{def:LaxBVBFV} and Remark \ref{rem:strictification}). Can we find a quasi-isomorphic lax BV-BFV theory that is strictifiable? If so, we may think of the second theory as a good replacement for the first, suitable for quantisation with boundary.}

We should stress that the enrichment of the BV complex by the de Rham complex of the source manifold (in the sense of local forms) has been the object of past research (see among all \cite{barnich1995local}). {The lax BV-BFV complex we consider coincides with their Batalin--Vilkovisky--de Rham complex; however, our notion of lax equivalence is different (Definition \ref{def:laxequivalence}), as it requires the existence of chain maps that are quasi-inverse to one another and compatibile with the whole lax BV-BFV structure.}

Crucially, our approach diverges from other investigations of local field theory that only look at pre-symplectic data. The strictification step is precisely the pre-symplectic reduction of such data, and where the obstruction lies. We are not aware of a viable quantisation procedure for pre-symplectic structures.

This paper is structured as follows: 
Section \ref{sec:FieldtheoriesandEquivalence} is dedicated to a review of local Lagrangian field theory (Section \ref{sec:LagrangianFT}), which is followed by the BV formalism (Section \ref{sec:BVformalism}) and the BV-BFV and lax BV-BFV formalisms (Section \ref{sec:FieldTheoriesonManifoldsWithHigherStrata}). We will showcase several notions of equivalence in classical field theory, starting from Lagrangian field theory in Section \ref{sec:LagrangianFT}, while the discussion of equivalence in the BV and lax BV-BFV cases can be found in Sections \ref{sec:EquivalenceBV} and \ref{sec:EquivalenceLax} respectively. Later in Section \ref{sec:Examples} we discuss our general procedure to prove lax equivalence between two theories (Section \ref{sec:Strategy}) and three examples of such equivalence, namely
\begin{itemize}
    \item first- and second-order formulations of classical mechanics on a curved background (Section \ref{sec:CM}); 
    \item first- and second-order formulations of (non-abelian) Yang--Mills theory (Section \ref{sec:YM});
    \item one dimensional gravity coupled to matter and Jacobi theory (Section \ref{sec:1Dreparametrisationinvarianttheories}).
\end{itemize}

\textbf{Results and outlook:} We present our notion of BV equivalence (Definition \ref{def:BVequivalence}) for theories over closed manifolds and lax equivalence (Definition \ref{def:laxequivalence}) in the case of manifolds with higher strata, and show that the latter implies the former for the respective bulk (codimension-$0$ stratum) BV theories (Theorem \ref{theorem:laxEqimpliesBVEq}).

We then show lax equivalence for the aforementioned examples, in the sense that their lax BV-BFV data can be interchanged in a way that preserves their cohomological structure. In particular, we show that the respective BV-BFV complexes are quasi-isomorphic $$H^\bullet(\bvbfv^\bullet_1) \simeq H^\bullet(\bvbfv^\bullet_2).$$

Most notably, this means that the boundary discrepancy present in the BV-BFV formulations of Jacobi theory and 1D GR found in \cite{cattaneo2017time} does not have a cohomological origin, and is rather to be interpreted as an obstruction in pre-quantisation.

We expect the procedure to be applicable to other relevant examples of BV-BFV obstructions such as the Nambu-Goto and Polyakov actions \cite{martinoli2020bvanalysis} and, for a more challenging one, Einstein-Hilbert and Palatini-Cartan gravity in (3+1) dimensions, { whose extendibility as BV-BFV theories have been shown to differ in} \cite{cattaneo2016bv,cattaneo2017bv}. 

This obstruction, which bars certain theories from being quantisable in the BV formalism with boundary without additional requirements on the fields, suggests that, { even assuming that some quantum theory exists for both models, they might differ}. Alternatively, it might suggest that among various classically- and BV-equivalent models, there is a preferred choice for models which are BV-BFV compatible. Either way, these results call for additional investigations in this direction.

\section*{Acknowledgements}
We would like to thank G. Barnich, M. Grigoriev and M. Henneaux for instructive discussions on the topic of equivalence of field theories in the BV formalism, relevant to this work.

\section{Field theories and equivalence}
\label{sec:FieldtheoriesandEquivalence}

We start by presenting the field theoretical structures and objects used throughout this work, following \cite{deligne1999quantum, anderson1989variational}. Subsequently, we review the BV formalism for closed manifolds\footnote{For a discussion of the BV formalism in the setting of non-compact manifolds see \cite{rejzner2011batalin,fredenhagen2012batalin}. For the extension of the BV-BFV framework to manifolds with asymptotic boundary see \cite{RejznerSchiavina2020}.}  \cite{batalin1983feynman,batalin1983quantisation,batalin1984gauge} $-$ see also \cite{henneaux1990lectures,gomis1995antibracket,mnev2017lectures} $-$ and the BV-BFV formalism, its generalisation for manifolds with boundaries and corners \cite{cattaneo2014classical,cattaneo2020introduction}. As some theories we consider are not compatible with the BV-BFV axioms, we revise the lax BV-BFV formalism \cite{mnev2020towards}, which not only lets us study these cases, but presents a better stage for our discussions in the presence of boundaries and corners.

Moreover, this section is used to develop our notion of equivalence of field theories at every step of the way, first showcasing how we want to compare two classical field theories in Definition \ref{def:classicalequivalence} and adapting these considerations to the BV and lax BV-BFV formalisms in Definitions \ref{def:BVequivalence} and \ref{def:laxequivalence} respectively. 

\subsection{Lagrangian field theories}
\label{sec:LagrangianFT}

Let $M$ be a manifold of any dimension. In order to build a classical field theory on $M$ we need a space of fields ${\mathcal{E}}$, a local functional $S$ called the \emph{action functional} and \emph{local observables}. In most cases, we can achieve such a construction by considering a (possibly graded) fibre bundle $E\rightarrow M$ over $M$ and by defining the space of fields as its space of smooth sections ${\mathcal{E}} \coloneqq \Gamma(M,E)$ with coordinates $\varphi^i$. \emph{Local} objects can then be regarded as a subcomplex of the de Rham bicomplex $\Omega^{\bullet,\bullet}({\mathcal{E}}\times M)$, where \say{local} essentially means that these objects only depend on the first $k$ derivatives of the fields $\varphi^i$ (or the $k$-th jet). Let us make this notion precise:
\begin{definition}[{(Integrated)} local forms \cite{anderson1989variational}]
Let $E\rightarrow M$ be a (possibly graded) fibre bundle over $M$, ${\mathcal{E}} = \Gamma(M,E)$ its space of smooth sections, $J^k (E)$ the $k$-th jet bundle and $\{j^k \colon {\mathcal{E}} \times M \rightarrow J^k(E)\}$ the evaluation maps. We consider $j^\infty$ as the inverse limit of these maps and construct the \emph{infinite jet bundle} $J^\infty(E)$ as the inverse limit of the sequence
\begin{align*}
    E = J^0(E) \leftarrow J^1(E) \leftarrow \dots \leftarrow J^k(E) \leftarrow \dots
\end{align*}
The bicomplex\footnote{{Note that, strictly speaking, this works when $j^\infty$ is surjective. When $E$ has connected fibres, this is true if and only if $E$ admits a global section \cite[Proposition 3.1.14]{BlohmannLFT}.}} of \emph{local forms} on ${\mathcal{E}}\times M$ is defined as
\begin{align*}
    (\Omega^{\bullet,\bullet}_{\text{loc}}({\mathcal{E}}\times M),\delta,\d) \coloneqq (j^\infty)^*\left(\Omega^{\bullet,\bullet}(J^\infty(E)),\d_V,\d_H\right),
\end{align*}
where $\mathrm{d}_H$ and $\mathrm{d}_V$ are the {horizontal and vertical differentials on the variational complex for} $J^\infty(E)$, respectively. Let $\alpha \in \Omega^{\bullet,\bullet}(J^\infty(E))$. The differentials $\delta, \d$ are defined through
\begin{subequations}\begin{align}
    &\d (j^\infty)^* \alpha = (j^\infty)^* \d_H \alpha, \label{e:horizontaldifferential}\\
    &\delta (j^\infty)^* \alpha = (j^\infty)^* \d_V \alpha. \label{e:verticaldifferential}
\end{align}\end{subequations}

Elements of $\Omega^{0,\bullet}_{\text{loc}}({\mathcal{E}}\times M)$ will be called \emph{local functionals} on ${\mathcal{E}}\times M$.

{Whenever the manifold $M$ is compact, one can define the complex of \emph{integrated local $k$ forms $\Omega^\bullet_{\int}(\mathcal{E})$,} as the image of $\int_M \colon \Omega^{k,\mathrm{top}}_{\mathrm{loc}}(\mathcal{E}\times M) \to \Omega^k_{\int}(\mathcal{E})$ with the (variational) differential\footnote{{Explicitly, this is $\delta \int_M (j^\infty)^*\alpha := \int_M \delta (j^\infty)^*\alpha = \int_M (j^\infty)^*d_V\alpha$.}} $\delta$.}
\end{definition}

\begin{remark}[On various notions of local forms] \label{rem:LocalFormsBulksvsBoundary}
{Notice that, in some field-theory literature (see e.g.\ \cite{deligne1999quantum}), the term ``local form'' is often used to denote integrals over the manifold $M$ of elements of $\Omega^{\bullet, \mathrm{top}}_{\mathrm{loc}}(\mathcal{E}\times M)$, which instead we call \emph{integrated local forms}. 

When $M$ is not compact, integration comes with \emph{caveats}. One can either consider compactly supported sections or adopt the point of view of \cite{fredenhagen2012batalin}, where the Lagrangian density is tested against a compactly supported function. Alternatively, one can forgo integration and consider the following quotient
\begin{align*}
    \Omega^\bullet_\mathrm{loc}({\mathcal{E}})
    \coloneqq \Omega^{\bullet,\mathrm{top}}_\text{loc}({\mathcal{E}}\times M) / \d\Omega^{\bullet,\mathrm{top}-1}_\text{loc}({\mathcal{E}}\times M).
\end{align*}
In \cite[Page 21]{anderson1989variational}, the elements of $\Omega^\bullet_{\mathrm{loc}}(\mathcal{E})$ are called \emph{variational forms} when endowed with the induced vertical differential\footnote{\label{fnt:inducedvar} {It is possible to induce a differential coming from the variational bicomplex, by means of the \emph{interior Euler operator} (see e.g.\ \cite{anderson1989variational}). We will not be concerned with the details of this construction.}} $\delta_V$.

Clearly, if $M$ is closed, $(\Omega^\bullet_{\int}(\mathcal{E}),\delta)$ is isomorphic (as a complex) to $(\Omega^\bullet_{\mathrm{loc}}(\mathcal{E}),\delta_V)$. Indeed, let $f,g \in \Omega^{\bullet,\mathrm{top}}_\text{loc}({\mathcal{E}}\times M)$ and define $\mathscr{F} \coloneqq \int_M f$, $\mathscr{G} \coloneqq \int_M g$, their respective integrals over $M$. Then $\mathscr{F} =\mathscr{G}$ iff the difference $f - g$ is $\d$-exact
\begin{align*}
    \mathscr{F} - \mathscr{G}
    = \int_M (f-g)
    = \int_M \d(\dots)
    = 0,
\end{align*}
where we used that $M$ is closed in the last step. Hence, $\Omega_{\mathrm{loc}}^\bullet(\mathcal{E})$ can be taken as a replacement of integrated local forms in the noncompact case (assuming there still is no boundary).

If $M$ has a non-empty boundary $\partial M \neq \emptyset$, these considerations no longer hold: boundary terms become relevant. If, on the local densities side we can work with $\Omega^{\bullet,\bullet}_{\mathrm{loc}}({\mathcal{E}}\times M)$, we will see in Section \ref{sec:FieldTheoriesonManifoldsWithHigherStrata} what the consequences of integrating over boundaries bring about in field theory. }
\end{remark}

In addition to the previous construction, we will extensively use the following type of vector field:

\begin{definition} An \emph{evolutionary vector field} \cite{anderson1989variational} $X\in \mathfrak{X}_{\text{evo}}({\mathcal{E}})$ on ${\mathcal{E}}$ is a vector field on $J^\infty(E)$ which is vertical with respect to the projection $J^\infty(E) \rightarrow M$, such that
\begin{align*}
    [\mathcal{L}_X,\d] = 0,
\end{align*}
where $\mathcal{L}_X = [\iota_X,\delta]$ is the variational Lie derivative on local forms on ${\mathcal{E}}\times M$.
\end{definition}

We are now ready to define the notion of a classical field theory. {We will assume for simplicity that $M$ is compact, possibly with boundary}:

\begin{definition}
A \emph{classical field theory} on $M$ is a pair $({\mathcal{E}},S)$, consisting of a space of fields ${\mathcal{E}} = \Gamma(M,E)$\footnote{More generally, the space of fields is an affine space modeled on a space of sections; e.g., a space of connections. Even more generally, e.g., in the case of sigma models, one expands fields around a background field. It is the space of these perturbations that is a space of sections.} and an action functional $S \in \Omega^{0}_{\int}({\mathcal{E}})$.
\end{definition}

{Since $S=\int L$ for some local form $L$, applying the variational differential on $\mathcal{E}$ to $S$ is the same as applying $\delta$, defined in Equation \ref{e:verticaldifferential}, to $L$ and integrating.} {This yields} two terms: 
\begin{align*}
    \delta S = \text{EL} + \text{BT}.
\end{align*}
{The term $\text{EL}$ is an integrated local 1-form}\footnote{{The integrand of $\text{EL}$ is the pullback along $j^\infty$ of a form of source type in the variational bicomplex, see \cite[Definition 3.5]{anderson1989variational}.}} on ${\mathcal{E}}$, whose vanishing locus defines the Euler-Lagrange equations $\text{EL} = 0$. The space where these are satisfied is called the \emph{critical locus}, the zero locus $\mathcal{EL}:=\mathrm{Loc}_0(\mathrm{EL}) \subset F$, and its elements are called \emph{classical solutions}.  
The term {$\text{BT}$} is a boundary term (i.e.\ an integral over $\partial M$, {when not empty}), which will be crucial for the construction of field theories on manifolds with boundary (cf.\ Section \ref{s:constructingBVBFV}).

A further important aspect of field theories is the notion of \emph{(gauge) symmetries}, which are transformations that leave the action functional $S$ and the critical locus $\mathcal{EL}$ invariant. Infinitesimally, they can be described as follows:

\begin{definition}
An \emph{infinitesimal local symmetry} of a classical field theory $({\mathcal{E}},S)$ is given by a distribution $D \subseteq T\mathcal{E}$, such that\footnote{\label{fnt:commentednote}Notice that we want $D$ to be a (generically proper) subspace of all vector fields that annihilate the action functional. We want it to be maximal, in the sense that all symmetries are considered except \emph{trivial} ones, i.e.\ those that vanish on $\mathcal{EL}$. As such it is not automatically a subalgebra. See \cite[Section 1.3]{henneaux1990lectures}. {Observe that, although not necessary, one might want to restrict $D$ to only (genuinely) local symmetries, meaning that we do not consider constant Lie group/Lie algebra actions at this stage.}}
\begin{align*}
    \mathcal{L}_X S = 0 \qquad \forall X \in \Gamma({\mathcal{E}},D).
\end{align*}
Furthermore, we require $D$ to be involutive on the critical locus $\mathcal{EL}$, i.e.\ if $X,Y \in \Gamma({\mathcal{E}},D)$, then $[X,Y]\big \vert_\mathcal{EL} \in \Gamma({\mathcal{E}},D \vert_\mathcal{EL})$.
\end{definition}

Whenever local symmetries are present, the space of interest is not $\mathcal{EL}$ but rather the space of inequivalent configurations $\mathcal{EL}/D$\footnote{By abuse of notation, we denote by $D$ also the restriction of $D$ to $\mathcal{EL}$.}, i.e.\ the space of orbits of $D$ on the critical locus $\mathcal{EL}$. \emph{Classical observables} are then suitable functions over $\mathcal{EL}/D$, whose space we denote by $C^\infty(\mathcal{EL}/D)$. Note that, as a quotient, $\mathcal{EL}/D$ is often singular and defining $C^\infty(\mathcal{EL}/D)$ is a non-trivial task. One way of handling this is to build a \emph{resolution} of $C^\infty(\mathcal{EL}/D)$, by means of the Koszul--Tate--Chevalley--Eilenberg complex, also known as the BV complex (see Definition \ref{def:BVcomplex}).

We are interested in analyzing to what extent two field theories are equivalent. Starting the discussion of equivalence in the setting of classical Lagrangian field theory, we consider the
\begin{definition}
\label{def:classicalequivalence}
Let $({\mathcal{E}}_{i}, S_i)$, $i\in\{1,2\}$, be two classical field theories with symmetry distributions $D_i$.  We say that $({\mathcal{E}}_{i}, S_i)$ are \emph{classically equivalent} if 
\begin{align*}
    &\mathcal{EL}_1 \simeq \mathcal{EL}_2,\\
    &D_1\vert_{\mathcal{EL}_1} \simeq D_2\vert_{\mathcal{EL}_2}.
\end{align*}
\end{definition}

\begin{remark}
If two theories are classically equivalent, we have $\mathcal{EL}_1/D_1 \simeq \mathcal{EL}_2/D_2$. If we have a model for the respective spaces of classical observables, they are isomorphic:
\begin{align*}
    C^\infty(\mathcal{EL}_1/D_1) \simeq C^\infty(\mathcal{EL}_2/D_2).
\end{align*}
This notion will be central in our discussion, and we will provide a refinement of it within the BV formalism, with and without boundary.
\end{remark}

\begin{remark}
In certain cases we can find $C_1\subset {\mathcal{E}}_1$, defined as the set of solutions of \emph{some} of the equations of motion $\text{EL}_1 = 0$.
Then, if we can find an isomorphism $\phi_{cl}: C_1 \to {\mathcal{E}}_2$ such that 
\begin{align*}
    &S_1\vert_{C_1} = \phi_{cl}^*S_2&
    &\text{and}& 
    &D_1\vert_{C_1} \stackrel{\phi_{cl}}{\simeq} D_2,
\end{align*}
the theories are classically equivalent. This is a simple example of the situation in which two theories are classically equivalent because they differ only by \emph{auxiliary fields} (see e.g. \cite{barnich1995local}).
\end{remark}


\subsection{Batalin-Vilkovisky formalism}
\label{sec:BVformalism}
The BV formalism is a cohomological approach to field theory, that allows one to characterise the space of inequivalent field configurations by means of the cohomology of an appropriate cochain complex. It turns out that it also provides a natural notion of equivalence of field theories, that also takes into account ``observables'' of the theory.

In this setting, a classical field theory is described through the following data: 
\begin{definition}
A \emph{BV theory} is the assignment of a quadruple $\mathfrak{F} = (\F,\omega,\S,Q)$ to a closed manifold $M$ where 
\begin{itemize}
    \item ${\F = \Gamma(M,F)}$ is the space of smooth sections of a $\mathbb{Z}$-graded bundle\footnote{For simplicity, in this note we assume that the Grassmann parity of a variable is equal to the parity of its $\mathbb{Z}$-degree. This is okay as long as we only consider theories without fermionic physical fields.} ${F\rightarrow M}$ (the \emph{BV space of fields}),
    \item $\omega \in \Omega^2_\mathrm{{\int}}(\F)$ is {an integrated, local, symplectic form} on $\F$ of degree $-1$ (the \emph{BV form}),
    \item $\S \in \Omega^0_\mathrm{{\int}}(\F)$ is an {integrated, local, functional} on $\F$ of degree $0$ (the \emph{BV action functional}),
    \item $Q \in \mathfrak{X}_\mathrm{evo}(\F)$ is a cohomological, evolutionary, vector field of degree 1, i.e.\ $[Q,Q] = 2Q^2 = 0$, and $[\mathcal{L}_Q,\d] = 0$,
\end{itemize}
such that 
\begin{align}
    \label{eq:BVequation}
    \iota_Q \omega = \delta \S.
\end{align}
The internal degree of $\F$ is called the \emph{ghost number} and will be denoted by $\text{gh}(\cdot)$.
\end{definition}

\begin{remark}
In principle, we only need to consider either $\S$ or $Q$, as they are related to one another through Equation \eqref{eq:BVequation}, apart from the ambiguity of an additive constant in $\S$. We will nonetheless regard them as separate data for later convenience, {as} we will see that introducing a boundary spoils Equation \eqref{eq:BVequation}.
\end{remark}

As $Q$ is cohomological, its Lie derivative $\mathcal{L}_{Q}$ is a differential on $\Omega^{\bullet}_{\text{loc}}(\mathcal{F})$, since $\mathrm{gh}(\mathcal{L}_Q) = 1$ and $2 \mathcal{L}_Q^2 = [\mathcal{L}_Q,\mathcal{L}_Q] = \mathcal{L}_{[Q,Q]} = 0$. In this context, $\mathcal{L}_Q$-cocycles are interpreted as (gauge-)invariant local forms.

\begin{remark}
\label{rem:LQomegaCME}
It is easy to gather that both $\omega$ and $\S$ are $\mathcal{L}_Q$-cocycles by applying $\delta$ and $\iota_Q$ to Equation \eqref{eq:BVequation} respectively. We have
\begin{subequations}
\label{eq:QpreservesBVformBVaction}
\begin{align}
    &\mathcal{L}_Q \omega = 0,\\
    \label{eq:CME}
    &\mathcal{L}_Q \S = (\S,\S)= 0.
\end{align}
\end{subequations}
where $(\cdot,\cdot)$ is the Poisson bracket induced by $\omega$. Equation \eqref{eq:CME} is known as the \emph{Classical Master Equation} \cite{batalin1984gauge,schwarz1993geometry}, and encodes the property that $\S$ is gauge invariant. In particular, Equations \eqref{eq:QpreservesBVformBVaction} mean that we have the freedom to perform the transformations $\omega \mapsto \omega + \mathcal{L}_Q(\dots)$ and $\S \mapsto \S + \mathcal{L}_Q(\dots)$, as long they preserve Equation \eqref{eq:BVequation}.
\end{remark}

\begin{definition}
\label{def:BVcomplex}
We define the \emph{BV complex} of a given BV theory $\mathfrak{F}$ as the {space of integrated} local forms on $\F$ endowed with the differential $\mathcal{L}_Q$
\begin{align*}
    \mathfrak{BV}^{\bullet} \coloneqq \left(\Omega^\bullet_{{\int}}(\F),\mathcal{L}_Q \right),
\end{align*}
where the grading on $\mathfrak{BV}^{\bullet}$ is given by the ghost number. Its cohomology will be denoted by $H^\bullet(\mathfrak{BV}^\bullet)$ and called the \emph{BV cohomology}.
\end{definition}
While the BV complex $\mathfrak{BV}^\bullet$ consists of inhomogeneous local forms (inhomogeneous also in ghost number), its 0-form part\footnote{In the literature the terminology ``BV complex" is used to denote $\mathfrak{BV}^\bullet_0$. We use the same name for $\mathfrak{BV}^\bullet$ as it is the natural extension in the present setting.} $\mathfrak{BV}^{\bullet}_0 \subset \mathfrak{BV}^\bullet$ is of interest as it is a \emph{resolution} of {$D$-invariant functionals on $\mathcal{EL}$ or,} when the quotient is nonsingular, {functionals on $\mathcal{EL}/D$} in the sense that the BV cohomology is given by\footnote{Counterexamples to this scenario have been observed \cite{getzlerSpinning,getzlerSpinningCurved}. In local field theory, the request that the BV complex be a proper resolution of the moduli space of the theory is generally too strong. Hence, we do not insist on the vanishing of negative cohomology.} \cite{henneaux1990lectures,stasheff1998secret,fredenhagen2012batalin}
\begin{align*}
    &H^{-i}(\mathfrak{BV}^{\bullet}_0) = 0& &\text{for } i>0,\nonumber\\
    &H^0(\mathfrak{BV}^{\bullet}_0) \simeq
    {C^\infty(\mathcal{EL}/D)}.& &
\end{align*}

\begin{example}[Lie algebra case \cite{batalin1984gauge}, see also \cite{henneaux1990lectures,mnev2017lectures}]
\label{ex:BRSTcase}
In this paper we will only consider examples which enjoy symmetries that come from a Lie-algebra action. Let $({\mathcal{E}},S)$ be a classical field theory over a closed manifold $M$ with a symmetry on ${\mathcal{E}}$ given by the action of a Lie algebra $(\mathfrak{g},[\cdot,\cdot])$. We can build a BV theory as follows: choose the space of fields to be
\begin{align*}
    \F = T^*[-1]({\mathcal{E}} \times \Omega^0(M,\mathfrak{g})[1]),
\end{align*}
with local coordinates $\Phi^i = (\varphi^j,\xi^a)$ on the base ${\mathcal{E}} \times \Omega^0(M,\mathfrak{g})[1]$ and $\Phi^\dagger_i = (\varphi^\dagger_j,\xi^\dagger_a)$ on the fibers. Usually one calls $\varphi^j$ the \emph{fields}, $\xi^a$ the \emph{ghosts}\footnote{In the case of Yang--Mills theory, the ghost field will be denoted as $c$.} and $\Phi^\dagger_i$ the \emph{antifields}. Note that the ghost numbers are related by $\mathrm{gh}(\Phi^i) + \mathrm{gh}(\Phi^\dagger_i) = - 1$ due to the -1 shift on the fibers. We take the BV form to be the canonical symplectic form on $\F$
\begin{align*}
    \omega = \int_M \langle\delta \Phi^\dagger, \delta \Phi\rangle,
\end{align*}
where $\langle\cdot,\cdot\rangle$ is a bilinear map with values in $\Omega^{\bullet,\mathrm{top}}_\mathrm{loc}(M)$. In the case of a Lie algebra action, the cohomological vector field $Q$ decomposes into the Chevalley-Eilenberg differential $\gamma$ and the Koszul-Tate differential $\delta_{KT}$
\begin{align*}
    Q = \gamma + \delta_{KT}.
\end{align*}
The action of $\gamma$ is defined on the fields and ghosts as
\begin{align*}
    &\gamma \varphi^j 
    = \xi^a v^j_a,&
    &\gamma \xi^a
    = \frac{1}{2} [\xi,\xi]^a,
\end{align*}
where $v^i_a$ are the fundamental vector fields of $\mathfrak{g}$ on $F$. In turn, $\delta_{KT}$ acts as
\begin{align}
    &\delta_{KT} \varphi^i = 0,& 
    &\delta_{KT} \xi^a = 0,\nonumber\\
    &\delta_{KT} \varphi^\dagger_i = \frac{\delta S}{\delta \varphi^i},& 
    &\delta_{KT} \xi^\dagger_a = v^i_a \varphi^\dagger_i.
    \label{eq:KoszulTate}
\end{align}
The BV action functional can then be constructed as an extension of the classical action functional
\begin{align*}
    \S[\Phi,\Phi^\dagger] = S[\varphi] + \int_M \langle\Phi^\dagger, Q \Phi\rangle
\end{align*}
and $Q(\cdot) = (\S,\cdot)$ can be used to compute the full form of $Q\Phi^\dagger_i$. The data $(\F, \omega, \S, Q)$ form a BV theory.
\end{example}

\subsection{Equivalence in the BV setting} 
\label{sec:EquivalenceBV}
We now have all the necessary tools to develop a notion of equivalence in the BV formalism. We are interested in comparing the BV data and cohomology $H^\bullet(\mathfrak{BV}^\bullet_i)$ of two BV theories $\mathfrak{F}_i$, $i \in\{1,2\}$. We recall that a \emph{quasi-isomorphism} is a chain map between chain complexes which induces an isomorphism in cohomology. In this spirit we define:

\begin{definition}
\label{def:BVequivalence}
Two BV theories $\mathfrak{F}_1$ and $\mathfrak{F}_2$ are \emph{BV-equivalent} if there is a {(degree-preserving) map} $\phi\colon \F_2 \rightarrow \F_1$ {that} induces a quasi-isomorphism $\phi^*\colon \mathfrak{BV}^\bullet_1 \rightarrow \mathfrak{BV}^\bullet_2$ of BV complexes, such that $\phi^*$ preserves the cohomological classes of the BV form and BV action functional as
\begin{align}
    \label{eq:BVequivalencetransformation}
    &\phi^* [\omega_1] = [\omega_2],&
    &\phi^* [\S_1] = [\S_2].
\end{align}
{A BV equivalence is called \emph{strong} iff $\phi$ is a symplectomorphism that preserves the BV action functionals.}
\end{definition}

\begin{remark}\label{rem:compchainmaps}
If $\mathfrak{F}_1$, $\mathfrak{F}_2$ are BV-equivalent, we can find a morphism $\psi:  \F_1 \rightarrow \F_2$, such that its pullback map $\psi^*$ is the quasi-inverse of $\phi^*$. In particular, the composition maps
\begin{align*}
    &\chi^* = \psi^* \circ \phi^*: \mathfrak{BV}^\bullet_1 \rightarrow \mathfrak{BV}^\bullet_1,&
    &\lambda^* = \phi^* \circ \psi^*: \mathfrak{BV}^\bullet_2 \rightarrow \mathfrak{BV}^\bullet_2,
\end{align*}
are the identity in the respective BV cohomologies $H^\bullet(\mathfrak{BV}^\bullet_1)$, $H^\bullet(\mathfrak{BV}^\bullet_2)$. This is equivalent to the existence of two maps $h_\chi:\mathfrak{BV}^\bullet_{1} \rightarrow \mathfrak{BV}^\bullet_{1}$, $h_\lambda:\mathfrak{BV}^\bullet_{2} \rightarrow \mathfrak{BV}^\bullet_{2}$ of ghost number $-1$ such that \cite{weibel1995introduction}
\begin{align*}
    &\chi^* - \mathrm{id}_{1} = \mathcal{L}_{Q_1} h_\chi + h_\chi \mathcal{L}_{Q_1},&
    &\lambda^* - \mathrm{id}_2 = \mathcal{L}_{Q_2} h_\lambda + h_\lambda \mathcal{L}_{Q_2}.
\end{align*}

Furthermore, note that applying $\psi^*$ to Equation \eqref{eq:BVequivalencetransformation} yields
\begin{align}
    &\psi^* [\omega_2] = [\omega_1],&
    &\psi^* [\S_2] = [\S_1].
    \label{eq:BVequivalencetransformationpsi}
\end{align}
\end{remark}

Let us now explore some direct implications of Definition \ref{def:BVequivalence}, in particular that the transformation of $\omega_i$ and $\S_i$ are not independent:
\begin{proposition}\label{eq:simplifiedhamiltoniancondition}
Rewrite Equations \eqref{eq:BVequivalencetransformation} and \eqref{eq:BVequivalencetransformationpsi} as 
\begin{align*}
    &\phi^* \omega_1 = \omega_2 + \mathcal{L}_{Q_2}\rho_2,&
    &\psi^* \omega_2 = \omega_1 + \mathcal{L}_{Q_1}\rho_1,\nonumber\\
    &\phi^* \S_1 = \S_2 + \mathcal{L}_{Q_2}\sigma_2,&
    &\psi^* \S_2 = \S_1 + \mathcal{L}_{Q_1}\sigma_1,
\end{align*}
with $\rho_i \in \Omega^2_{{\int}}(\F_i)$, $\sigma_i \in \Omega^0_{{\int}}(\F_i)$. Then
\begin{align}
        \label{eq:conditiontopreserveHamiltoniancondition}
        \mathcal{L}_{Q_i} (\iota_{Q_i} \rho_i + \delta \sigma_i) = 0.
\end{align}
Moreover, Equation \eqref{eq:conditiontopreserveHamiltoniancondition} is satisfied if
\begin{align}\label{eq:conditiontopreserveHamiltonianconditionSimp}
    &\rho_i = - \delta \mu_i,&
    &\sigma_i = \iota_{Q_i} \mu_i
\end{align}
with $\mu_i \in \Omega^1_{{\int}}(\F_i)$.
\end{proposition}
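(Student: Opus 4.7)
The plan is to derive both statements by pulling back the BV equations $\iota_{Q_i}\omega_i = \delta\S_i$ along the chain maps $\phi$ and $\psi$, and then exploiting the graded Cartan calculus for the cohomological vector fields.

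For the first claim, since $\phi^*$ is a chain map of BV complexes, $Q_1$ and $Q_2$ are $\phi$-related, so $\phi^*\iota_{Q_1} = \iota_{Q_2}\phi^*$; combined with $\phi^*\delta = \delta\phi^*$, applying $\phi^*$ to $\iota_{Q_1}\omega_1 = \delta\S_1$ gives $\iota_{Q_2}\phi^*\omega_1 = \delta\phi^*\S_1$. Plugging in the decompositions $\phi^*\omega_1 = \omega_2 + \mathcal{L}_{Q_2}\rho_2$ and $\phi^*\S_1 = \S_2 + \mathcal{L}_{Q_2}\sigma_2$ and cancelling the BV equation for theory $2$ leaves the intermediate identity
\[
\iota_{Q_2}\mathcal{L}_{Q_2}\rho_2 \;=\; \delta\mathcal{L}_{Q_2}\sigma_2.
\]
Expanding $\mathcal{L}_{Q_2}(\iota_{Q_2}\rho_2 + \delta\sigma_2) = \mathcal{L}_{Q_2}\iota_{Q_2}\rho_2 + \mathcal{L}_{Q_2}\delta\sigma_2$ and applying the graded commutation relations of $\mathcal{L}_{Q_2}$ with $\iota_{Q_2}$ and $\delta$ --- which follow from $[Q_2,Q_2]=2Q_2^2=0$ via $\iota_{[Q_2,Q_2]}=0$ and $\mathcal{L}_{[Q_2,Q_2]}=0$ --- one rewrites the sum as a sign multiple of the left-hand side of the intermediate identity minus its right-hand side, and hence it vanishes. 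The index-$1$ case is the identical computation after swapping $\phi\leftrightarrow\psi$.

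For the second claim, the choice $\rho_i = -\delta\mu_i$, $\sigma_i = \iota_{Q_i}\mu_i$ makes the bracketed quantity itself $\mathcal{L}_{Q_i}$-exact. Indeed, Cartan's formula $\mathcal{L}_{Q_i} = [\iota_{Q_i},\delta]$ reduces
\[
\iota_{Q_i}\rho_i + \delta\sigma_i \;=\; -\iota_{Q_i}\delta\mu_i + \delta\iota_{Q_i}\mu_i \;=\; -\mathcal{L}_{Q_i}\mu_i,
\]
so that $\mathcal{L}_{Q_i}(\iota_{Q_i}\rho_i + \delta\sigma_i) = -\mathcal{L}_{Q_i}^2\mu_i = 0$, the last step being another avatar of $Q_i^2=0$.

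The main delicate point is tracking the graded signs in the Cartan calculus for the odd cohomological vector field $Q$: the relevant identities --- the vanishing of the graded (anti)commutators $[\mathcal{L}_Q,\iota_Q]$ and $[\mathcal{L}_Q,\delta]$ on the variational bicomplex --- follow from $Q^2=0$, but come with signs that depend on the grading conventions of Section \ref{sec:LagrangianFT}. Once those conventions are fixed, all of the rearrangements above become mechanical.
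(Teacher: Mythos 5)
Your proposal is correct and follows essentially the same route as the paper: you apply $\phi^*$ (resp.\ $\psi^*$) to $\iota_{Q_1}\omega_1 = \delta \S_1$, cancel the target theory's Hamiltonian equation, and regroup the leftover $\iota_{Q_2}\mathcal{L}_{Q_2}\rho_2 = \delta\mathcal{L}_{Q_2}\sigma_2$ via the Cartan identities (both consequences of $[Q,Q]=0$) into $\mathcal{L}_{Q_2}(\iota_{Q_2}\rho_2 + \delta\sigma_2) = 0$, and for the second claim you compute $\iota_{Q_i}\rho_i + \delta\sigma_i = -\iota_{Q_i}\delta\mu_i + \delta\iota_{Q_i}\mu_i = -\mathcal{L}_{Q_i}\mu_i$ from $\mathcal{L}_{Q_i} = [\iota_{Q_i},\delta]$, exactly as the paper does, so that $\mathcal{L}^2_{Q_i}\mu_i = 0$ finishes the argument. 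Your explicit remark that the chain-map property makes $Q_1$ and $Q_2$ $\phi$-related, whence $\phi^*\iota_{Q_1} = \iota_{Q_2}\phi^*$, spells out a step the paper leaves implicit, but it is the same proof.
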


\begin{proof}
Applying $\phi^*$ to $\iota_{Q_1} \omega_1 = \delta \S_1$ yields
\begin{align*}
        &&\iota_{Q_2} \omega_2 + \iota_{Q_2}\mathcal{L}_{Q_2}\rho_2 &= \delta \S_2 + \delta \mathcal{L}_{Q_2}\sigma_2&&\\
        &\Rightarrow& \mathcal{L}_{Q_2}(\iota_{Q_2}\rho_2 + \delta \sigma_2) &= 0, &&
\end{align*}
and analogously $\mathcal{L}_{Q_1}(\iota_{Q_1}\rho_1 + \delta \sigma_1) = 0$.

The simplified condition \eqref{eq:conditiontopreserveHamiltonianconditionSimp} implies Equation \eqref{eq:conditiontopreserveHamiltoniancondition} since
\begin{align*}
        \mathcal{L}_{Q_{i}}(\iota_{Q_{i}}\rho_{i} + \delta \sigma_{i})
        = \mathcal{L}_{Q_{i}}(-\iota_{Q_{i}}\delta \mu_i + \delta \iota_{Q_{i}} \mu_i)
        = -\mathcal{L}^2_{Q_{i}}\mu_i = 0,
\end{align*}
where we used $\mathcal{L}_{Q_i} = [\iota_{Q_i},\delta]$.
\end{proof}

\begin{remark}
\label{rem:Pairwiselaxequivalent}
Let $\mathfrak{F}_1$ and $\mathfrak{F}_2$ be BV-equivalent theories as per Definition \ref{def:BVequivalence}, and let $\chi^*: \mathfrak{BV}^\bullet_1 \rightarrow \mathfrak{BV}^\bullet_1$ and $\lambda^*: \mathfrak{BV}^\bullet_2 \rightarrow \mathfrak{BV}^\bullet_2$ be the chain maps defined in Remark \ref{rem:compchainmaps}. Then, the theories $\mathfrak{F}_1$ and $\chi^*\mathfrak{F}_1 \coloneqq (\F_1, \chi^*\omega_1, \chi^*\S_1, Q_1)$ are clearly BV-equivalent, and so are $\mathfrak{F}_2$ and $\lambda^*\mathfrak{F}_2$. 
\end{remark}

\begin{remark}
{In the literature there exists another notion of equivalence of BV theories, based on what is usually called \emph{elimination of (generalised) auxiliary fields} or \emph{reduction of contractible pairs} (see e.g.\ \cite{Henneaux90Auxiliary,barnich1995local} and \cite{BarnichGrigoriev11} for a review). } {When two} theories differ by auxiliary fields content, {they} have the same BV cohomology. {In Section \ref{sec:Cp} we show how the presence of auxiliary fields leads to the process of elimination of cohomologically contractible pairs by explicitly constructing chain maps that are quasi inverse to one another and homotopic to the identity. Hence, theories that differ by auxiliary fields/contractible pairs are BV equivalent in the sense of Definition \ref{def:BVequivalence}.}
\end{remark}


\subsection{Field theories on manifolds with higher strata}
\label{sec:FieldTheoriesonManifoldsWithHigherStrata}

The BV formalism can be extended to the case where the underlying manifold $M$ has a non-empty boundary $\partial M \neq \emptyset$, as presented in \cite{cattaneo2014classical}. This construction relies on the BFV formalism introduced in \cite{batalin1983generalized}, see also \cite{stasheff1997homological,schatz2009bfv,schatz2009coisotropic}. 

\begin{definition}
An \emph{exact BFV theory} over a manifold $\Sigma$ is a quadruple $\mathfrak{F}^\partial = (\F^\partial,\omega^\partial,\S^\partial,Q^\partial)$ where
\begin{itemize}
    \item ${\F^\partial = \Gamma(\Sigma,F^\partial)}$ is the space of smooth sections of a $\mathbb{Z}$-graded fibre bundle ${F^\partial\rightarrow M^\partial}$,
    \item $\omega^\partial = \delta \alpha^\partial \in \Omega^2_{{\int}}(\F^\partial)$ is an exact, integrated, local, symplectic form on $\F^\partial$ of degree $0$,
    \item $\S^\partial \in \Omega^0_{{\int}}(\F^\partial)$ is a degree 1, integrated, local, functional on $\F^\partial$,
    \item $Q^\partial \in \mathfrak{X}_\mathrm{evo}(\F^\partial)$ is a degree 1, cohomological, evolutionary, vector field, i.e.\ $[Q^\partial,Q^\partial] = 0$, and $[\mathcal{L}_{Q^\partial},\d] = 0$
\end{itemize}
such that $Q^\partial$ is the Hamiltonian vector field of $\S^\partial$
\begin{align*}
    \iota_{Q^\partial} \omega^\partial = \delta \S^\partial.
\end{align*}
We call $\omega^\partial,S^\partial$ the \emph{boundary form} and \emph{boundary action functional} respectively.
\end{definition}
\begin{definition}
A \emph{BV-BFV theory} over a manifold $M$ with boundary $\partial M$ is given by the data 
\begin{align*}
    (\F, \omega, \S, Q, \F^\partial, \omega^\partial,  \S^\partial, Q^\partial, \pi)
\end{align*}
where $(\F^\partial, \omega^\partial, \S^\partial,Q^\partial)$ is an exact BFV theory over $\Sigma=\partial M$ and $\pi: \F \rightarrow \F^\partial$ is a surjective submersion such that
\begin{align}
    \label{eq:BVBFVequation}
    \iota_Q \omega = \delta \S + \pi^* \alpha^\partial
\end{align}
and $Q \circ \pi^* = \pi^* \circ Q^\partial$.
\end{definition}

\begin{remark}
\label{rem:mCME}
Equation \eqref{eq:BVBFVequation} implies that in general $\omega$ and $\S$ are no longer $\mathcal{L}_Q$-cocycles in the presence of a boundary. Instead we have \cite{cattaneo2014classical}
\begin{subequations}
\label{eq:FailureOmegaSLQcocycles}
\begin{align}
    &\mathcal{L}_{Q} \omega = \pi^* \omega^\partial,\\
    &\mathcal{L}_{Q} \S = \pi^*(2 \S^\partial - \iota_{Q^\partial}\alpha^\partial).
    \label{eq:FailureSLQcocycles}
\end{align}
\end{subequations}
Note that the failure of the structural BV forms to be $\mathcal{L}_Q$-cocycles is controlled by (boundary) BFV forms. In particular, Equation \eqref{eq:FailureSLQcocycles} means that $\S$ fails to be gauge invariant, and the right hand side can be related to Noether's generalised charges \cite{RejznerSchiavina2020}. Furthermore, the CME no longer holds. Instead we have the \emph{modified Classical Master Equation} \cite{cattaneo2014classical} 
\begin{align*}
    \frac12 \iota_Q \iota_Q \omega = \pi^* \S^\partial.
\end{align*}
\end{remark}

\subsection{Inducing boundary BFV from bulk BV data}
\label{s:constructingBVBFV}
It is important to emphasize how one can try to construct a boundary theory $\mathfrak{F}^\partial$ from a BV theory $\mathfrak{F}$, since there might be obstructions. The problem we want to address is that of inducing an exact BFV theory on the boundary $\partial M$, starting from the BV data assigned to the bulk manifold $M$. 

Define $\check \alpha $ as:
\begin{align}
    \label{eq:alphacheck}
    \check \alpha \coloneqq  \iota_Q \omega  - \delta \S.
\end{align}
By restricting the fields of $\F$  (and their normal jets) to the boundary $\partial M$ we can define the \emph{space of pre-boundary fields} $\check\F^\partial$ and endow it with a \emph{pre-boundary} 2-form $\check \omega = \delta \check \alpha$. Usually $\check \omega$ turns out to be degenerate. In order to define a symplectic space of boundary fields, one then has to perform symplectic reduction, see, e.g., \cite{da2008lectures}. Let $\ker \check \omega = \{X \in \mathfrak{X}(\mathcal{\check F^\partial}) \left| \iota_X \check \omega = 0 \right.\}$ and set
\begin{align}
    \F^\partial \coloneqq \check\F^\partial/\ker \check \omega.
    \label{eq:quotientboundary}
\end{align}
Since we are taking a quotient, nothing guarantees that $\F^\partial$ is smooth, but we want to assume that this is the case. However, a necessary condition for smoothness is that $\ker \check \omega$ has locally constant dimension, i.e.\ it is a subbundle of $T\check \F^\partial$. As we will see, this condition is not always satisfied, namely that there is a unique symplectic form $\omega^\partial$ such that $\pi^*\omega^\partial = \check\omega$, and a unique cohomological vector field $Q^\partial$ such that $Q \circ \pi^* = \pi^* \circ Q^\partial$. We assume (although this may not be true in general) that there is a 1-form $\alpha^\partial$ such that $\pi^*\alpha^\partial=\check\alpha$. Note that, in this case, $\alpha^\partial$ is unique and $\omega^\partial=\delta\alpha^\partial$. See \cite{cattaneo2020introduction} for details. However, for $\mathcal{F}^\partial$ smooth, we have the surjective submersion $\pi:\F \rightarrow \F^\partial$.

Consider now 
\begin{definition}
The \emph{graded Euler vector field} $E \in \mathfrak{X}_{\text{evo}}(\F)$ is defined as the degree 0 vector field which acts on local forms of homogeneous ghost number as
\begin{align*}
    \mathcal{L}_E F = \gh(F) F.
\end{align*}
Similarly we have $E^\partial = \pi_* E \in \mathfrak{X}_{\text{evo}}(\F^\partial)$ on the boundary.
\end{definition}

The cohomological vector field $Q^\partial$ is actually Hamiltonian and the corresponding boundary action functional can be computed as \cite{roytenberg2007aksz}
\begin{align}
    \S^\partial = \iota_{E^\partial}\iota_{Q^\partial} \omega^\partial.
    \label{eq:boundaryaction}
\end{align}
The data $\mathfrak{F}^\partial = (\F^\partial, \omega^\partial, \S^\partial, Q^\partial)$ define an exact BFV manifold over the boundary $\partial M$. For completeness, we also define the \emph{pre-boundary action functional} $\check S \coloneqq \pi^*S^\partial$. Pulling back Equation \eqref{eq:boundaryaction} via $\pi^*$ yields
\begin{align}
    \check \S = \iota_{E}\iota_{Q} \check \omega.
    \label{eq:checkS}
\end{align}
Note that by taking Equations \eqref{eq:alphacheck} and \eqref{eq:checkS}, we ensure that the data $(\check \alpha, \check S)$ can always be defined, even if the quotient in Equation \eqref{eq:quotientboundary} does not yield a smooth structure.

\begin{remark}\label{rem:n-ext}
The procedure we just presented can be repeated in case that the manifold $M$ not only has a boundary but also corners (higher strata), as presented in \cite{cattaneo2014classical,mnev2020towards}. If this is possible up to codimension $n$, then we call the theory a \emph{$n$-extended (exact) BV-BFV theory}.
\end{remark}

\begin{remark}
The quantization program introduced in \cite{cattaneo2018perturbative} relies on the BV-BFV structure of a given classical theory. As such, even if two theories are classically equivalent, only one might turn out to have a BV-BFV structure and so be suitable for quantization, as we now explore in the example of the Jacobi theory and 1D GR.
\end{remark}


Remark \ref{rem:mCME} and Section \ref{s:constructingBVBFV} discuss two potential roadblocks for our construction of equivalence in the presence of boundaries and corners (and more generally codimension-$k$ strata). First, to extend the notion of equivalence discussed in Section \ref{sec:EquivalenceBV} to the case with higher strata, we wish to capture the possibility of local forms being $\mathcal{L}_Q$-cocycles up to boundary terms, as is the case with $\omega$ and $\S$. This is the problem of \emph{descent}, where we enrich the differential $\mathcal{L}_Q$ by the de Rham differential on $M$. The second big problem one encounters is that not all BV theories satisfy the regularity requirement necessary to induce compatible BV-BFV data. In order to describe such theories as well we will relax our definitions.

In order to do this, we turn to a ``lax'' version of the BV-BFV formalism \cite{mnev2020towards}. We will work with local forms on $\mathcal{F} \times M$ with inhomogeneous form degree on $M$, namely $\kappa^\bullet \in \Omega^{p,\bullet}_\mathrm{loc}(\mathcal{F} \times M)$, and use the codimension to enumerate them, as it makes the notation less cumbersome and more intuitive, i.e.\ $\kappa^{k}$ denotes the $(\mathrm{top} - k)$-form part of
\[\kappa^\bullet = \sum^{\dim M}_{k=0} \kappa^{k},\] 
where $\kappa^k \in \Omega^{p,(\mathrm{top} - k)}_\mathrm{loc}(\mathcal{F} \times M)$. 

\begin{remark}
{What we call ``lax'' BV-BFV formalism is a rewriting of known approaches to local field theory in the BV/BRST formalism such as \cite{barnich1995local,Brandt01,barnich2000local,GrigorievParent11, Sharapov}. We use the term ``lax'' to contrast it with the ``strict'' version given by the BV-BFV formalism proper.}
\end{remark}

This should be compared to the standard BV-BFV formalism (extended to codimension $k$ \cite{cattaneo2014classical}), which instead looks at $\Omega^{\bullet}_{\mathrm{loc}}(\mathcal{F}^{(k)})$ with $\mathcal{F}^{(k)}$ an appropriate space of codimension-$k$ fields. In other words, we describe the BV-BFV picture presented above in terms of densities instead of integrals (cf.\ Remark \ref{rem:LocalFormsBulksvsBoundary}), and forfeiting the symplectic structure at codimension $k$. This setting allows us to phrase equivalence with higher strata in a cohomological way, and it collects all the relevant data before performing the quotient in Equation \eqref{eq:quotientboundary}, thus temporarily avoiding potential complications.\footnote{A similar idea is contained in the work of Brandt, Barnich and Henneaux \cite{barnich1995local}, but without the structural BV-BFV equations.}

The definitions that we work with rely on the \emph{lax degree}\footnote{In \cite{mnev2020towards} the authors denote the lax degree by total degree.} $\ld(\cdot)$, which describes the interplay between the co-form degree on $M$ and the ghost number. Let $\text{fd}_M(\cdot)$ denote the form degree on $M$. The lax degree is defined as the difference of the ghost number $\text{gh}(\cdot)$ and the co-form degree $\cfd(\cdot) \coloneqq \dim M - \text{fd}_M(\cdot)$
\begin{align*}
    \ld (\cdot) \coloneqq \text{gh}(\cdot) - \cfd(\cdot).
\end{align*}
In particular, if an inhomoegeneous local form has vanishing lax degree, then the codimension of its homogeneous components corresponds to their respective ghost number. Most notably, this will be the case for the Lagrangian density. We will use the total degree for computations, which for elements in $\Omega^{\bullet,\bullet}_{\mathrm{loc}}({\mathcal{E}}\times M)$ is given by $|\cdot| = \text{gh}(\cdot) + \text{fd}_M(\cdot) + \text{fd}_{\F}(\cdot)$, where $\text{fd}_{\F}(\cdot)$ is the form degree on $\F$.

\begin{definition}[Lax BV-BFV theory]\label{def:LaxBVBFV}
A \emph{lax BV-BFV theory} over a manifold $M$ is a quadruple $\mathfrak{F}^\text{lax} = (\F^\mathrm{lax},\theta^\bullet,L^\bullet,Q)$ where
\begin{itemize}
    \item $\F^\mathrm{lax} = \Gamma(M,F)$ for some $\mathbb{Z}$-graded fibre bundle $F\rightarrow M$,
    \item $\theta^\bullet \in \Omega^{1,\bullet}_{\text{loc}}(\F^\mathrm{lax} \times M)$ is a local form with lax degree -1,
    \item $L^\bullet \in \Omega^{0,\bullet}_{\text{loc}}(\F^\mathrm{lax} \times M)$ is a local functional with lax degree 0,
    \item $Q \in \mathfrak{X}_{\text{evo}}(\F^\mathrm{lax})$ is an evolutionary, cohomological vector field on $\F^\mathrm{lax}$ of degree 1, i.e.\ $[\mathcal{L}_Q,\d] = [Q,Q] = 0$,
\end{itemize}
such that
\begin{subequations}
\label{eq:LaxStructure}
\begin{align}
    \label{eq:LaxStructure1}
    &\iota_Q \varpi^\bullet = \delta L^\bullet + \d \theta^\bullet,\\
     \label{eq:LaxStructure2}
    &\iota_Q \iota_Q \varpi^\bullet = 2\d L^\bullet,
\end{align}
\end{subequations}
where $\varpi^\bullet \coloneqq \delta \theta^\bullet$.
\end{definition}

\begin{remark}[{Strictification of lax data}]
\label{remark:BVandBVBFVfromlax}\label{rem:strictification}
Let $M^\circ$ be the interior (bulk) of $M$. 
\begin{enumerate}
    \item If $M = M^\circ$ is a closed manifold, then we can assign a BV theory $\mathfrak{F}$ to $M^\circ$ from a lax BV-BFV theory $\mathfrak{F}^\mathrm{lax}$ by choosing\footnote{We denote by $\F^\mathrm{lax}\vert_{M^\circ}$ (resp. $\F^\mathrm{lax}\vert_{M^\partial}$) the restriction of fields to the interior (resp the boundary stratum, where we also restrict normal jets) of $M$, seen as section of a fibre bundle (resp. the tangent bundle to the induced bundle).}
\begin{align*}
    &\F = \F^\mathrm{lax}\vert_{M^\circ},&
    &\omega = \int_{M^\circ} \varpi^0,&
    &\S = \int_{M^\circ} L^0
\end{align*}
and restricting $Q$ to $\F$. 
    \item Similarly, if $M$ is a compact manifold with boundary, the {pre-}BFV data on $\partial M$ may be induced by setting
\begin{align*}
    &\check \F^\partial = \F^\mathrm{lax}\vert_{M^\partial},&
    &\check \alpha = \int_{M^\partial} \theta^1,&
    &\check \S = \int_{M^\partial} L^1,
\end{align*}
{and restricting $Q$ to $\check{\F}^\partial$. When pre-symplectic reduction w.r.t. $\check \omega = \delta \check \alpha = \int_{M^\partial} \varpi^1$ is possible \cite{cattaneo2014classical}, we can define the space of boundary fields $\mathcal{F}^\partial:=\check \F^\partial/\mathrm{ker}(\check \omega^\sharp)$}. Together with the bulk data presented above, this produces a BV-BFV theory.
\end{enumerate}
The procedure is analogous for higher codimensions: {if $M^{(k)}$ denotes the $k$th-codimension stratum, we can induce a Hamiltonian dg manifold of fields in codimension $k$ by performing pre-symplectic reduction of 
\[
\left(\check{\mathcal{F}}^{(k)}=\mathcal{F}^{\mathrm{lax}}\vert_{M^{(k)}}, \check{\omega}^{(k)}=\int_{M^{(k)}}\delta \theta^k\right) \leadsto \mathcal{F}^{(k)}:=\check{\mathcal{F}}^{(k)}/\mathrm{ker}(\check{\omega}^{(k)\sharp})
\]
Notice that pre-symplectic reduction might fail to be smooth, resulting in an obstruction to \emph{strictification}. When there are no obstructions to the pre-symplectic reduction, this procedure yields an $n$-extended BV-BFV theory (cf.\ Remark \ref{rem:n-ext}), and we have an \emph{$n$-strictification of a lax BV-BFV theory}. (For more details we refer to \cite{mnev2020towards}.) It is crucial to observe that this step can fail \cite{cattaneo2017bv,cattaneo2017time,martinoli2020bvanalysis}.  

Unlike the latter}, a lax BV-BFV theory does not require working with symplectic structures at higher codimensions $\geq 1$. {This means that lax data allow us to extract \emph{some} information about the higher codimension behaviour of the field theory but, as we will see, the fact that a theory is strictifiable at a given codimension yields a refinement of the notion of BV equivalence.}

\end{remark}

\begin{remark}
At codimension $\geq 1$, it is sufficient to know $\theta^\bullet$ in order to compute $L^\bullet$. Applying $\iota_E$ to Equation \eqref{eq:LaxStructure1} yields
\[
  \iota_E \iota_Q \varpi^\bullet = \mathcal{L}_E L^\bullet + \iota_E \d \theta^\bullet \nonumber,
\]
which implies
\begin{equation}
    \mathcal{L}_E L^\bullet = \iota_E \left(\iota_Q \delta - \d \right) \theta^\bullet.
    \label{eq:LkLax}
\end{equation}
We can then compute $L^k$ at codimension $k \geq 1$ by using $\gh(L^k) = \cfd(L^k) = k$:
\begin{align*}
    L^k = \frac{1}{k}\iota_E \left(\iota_Q \delta \theta^k - \d \theta^{k+1}\right).
\end{align*}
\end{remark}

\begin{lemma}[\cite{cattaneo2014classical,mnev2020towards}]
The following equations hold for a lax BV-BFV theory:
\begin{subequations}
\label{eq:Lax}
\begin{align}
    \label{eq:Lax2}
    &\mathcal{L}_Q \varpi^\bullet = \d \varpi^\bullet,\\
    &\mathcal{L}_Q L^\bullet = \d (2L^\bullet - \iota_Q \theta^\bullet)\label{eq:Lax1}.
\end{align}
\end{subequations}
\end{lemma}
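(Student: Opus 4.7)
The plan is to derive both identities by applying, respectively, $\delta$ and $\iota_Q$ to the structural equation \eqref{eq:LaxStructure1}, and then invoking \eqref{eq:LaxStructure2} where needed. The basic algebraic facts I will use are: $\delta^2 = 0$ (so in particular $\delta\varpi^\bullet = \delta^2\theta^\bullet = 0$); the Cartan-type identity $\mathcal{L}_Q = [\iota_Q,\delta]$ on the variational side; the commutativity $[\iota_Q,\d]=0$, which holds because $Q$ is evolutionary; and the fact that $\iota_Q L^\bullet = 0$, since $L^\bullet$ has vertical form degree zero.

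For \eqref{eq:Lax2} I would apply $\delta$ to $\iota_Q\varpi^\bullet = \delta L^\bullet + \d\theta^\bullet$. The left-hand side becomes $\delta\iota_Q\varpi^\bullet$; together with $\iota_Q\delta\varpi^\bullet=0$ (because $\delta\varpi^\bullet=0$) this is exactly $\mathcal{L}_Q\varpi^\bullet$. On the right, $\delta^2 L^\bullet$ vanishes, and the remaining term $\delta\d\theta^\bullet$ reduces (up to the sign dictated by the bigrading conventions between $\d$ and $\delta$ on $\Omega^{\bullet,\bullet}_{\mathrm{loc}}(\F^\mathrm{lax}\times M)$) to $\d\delta\theta^\bullet = \d\varpi^\bullet$. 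Identifying both sides yields $\mathcal{L}_Q\varpi^\bullet = \d\varpi^\bullet$.

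For \eqref{eq:Lax1} I would instead apply $\iota_Q$ to the same structural equation. On the left, the resulting $\iota_Q\iota_Q\varpi^\bullet$ is precisely what is computed by the second structural equation \eqref{eq:LaxStructure2}, giving $2\d L^\bullet$. On the right, write $\iota_Q\delta L^\bullet + \iota_Q\d\theta^\bullet$: the first summand equals $\mathcal{L}_Q L^\bullet$ because $\delta\iota_Q L^\bullet = 0$, and the second equals $\d\iota_Q\theta^\bullet$ by the evolutionarity of $Q$. Rearranging, $\mathcal{L}_Q L^\bullet = 2\d L^\bullet - \d\iota_Q\theta^\bullet = \d(2L^\bullet - \iota_Q\theta^\bullet)$, which is \eqref{eq:Lax1}.

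Both steps are short manipulations and no quotient construction or symplectic reduction enters. The only genuine subtlety, and hence the place where an incautious reader might slip, is bookkeeping of graded signs when commuting $\delta$ past $\d$ and when unpacking $\mathcal{L}_Q=[\iota_Q,\delta]$ on inhomogeneous elements of $\Omega^{\bullet,\bullet}_{\mathrm{loc}}(\F^\mathrm{lax}\times M)$; once a consistent convention is fixed (the one used in \cite{cattaneo2014classical,mnev2020towards}), the signs in \eqref{eq:Lax2}–\eqref{eq:Lax1} drop out as stated.
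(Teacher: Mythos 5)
Your proof is correct, and since the paper states this lemma without proof (importing it from \cite{cattaneo2014classical,mnev2020towards}), your derivation --- applying $\delta$ and $\iota_Q$ to \eqref{eq:LaxStructure1}, invoking \eqref{eq:LaxStructure2}, $\delta\varpi^\bullet=\delta^2\theta^\bullet=0$, $[\iota_Q,\d]=0$ for evolutionary $Q$, and $\iota_Q L^\bullet=0$ --- is exactly the standard argument found in those references. Your sign caveat is the right one: in this paper's conventions ($\mathcal{L}_Q=[\iota_Q,\delta]=\iota_Q\delta-\delta\iota_Q$ since $\iota_Q$ has total degree $0$, and $\delta\d=-\d\delta$), applying $\delta$ to \eqref{eq:LaxStructure1} gives $\delta\iota_Q\varpi^\bullet=-\mathcal{L}_Q\varpi^\bullet$ on the left and $\delta\d\theta^\bullet=-\d\varpi^\bullet$ on the right, so the two minus signs cancel and \eqref{eq:Lax2} holds as stated, while \eqref{eq:Lax1} involves no sign subtlety at all.
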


\begin{remark}
Equations (\ref{eq:Lax}) are the density versions of Equations (\ref{eq:FailureOmegaSLQcocycles}). Comparing the two versions, we see that boundary terms are now encoded as $\d$-exact terms, instead of objects in the image of $\pi^*$. Note that $\varpi^\bullet$ is a cocycle of the differential $(\mathcal{L}_Q - \d)$ and that $L^\bullet$ is so whenever $L^\bullet - \iota_Q \theta^\bullet = 0$.
\end{remark}
In the lax BV-BFV formalism, the relevant differential will no longer be $\mathcal{L}_Q$, as we want to take the boundary configurations into account. Instead, we want to consider a cochain complex of local forms on $\F^\mathrm{lax} \times M$ with differential $\mathcal{L}_Q - \d$, which describes the interplay between gauge invariance and boundary terms:
\begin{definition}[\cite{barnich1995local,mnev2020towards}]
\label{def:BVBFVcomplex}
The \emph{BV-BFV complex}  of a lax BV-BFV theory $\mathfrak{F}^{\text{lax}}$ is defined as the space of inhomogeneous local forms on $\F^\mathrm{lax} \times M$ endowed with the differential $(\mathcal{L}_Q - \d)$
\begin{align*}
    (\bvbfv)^\bullet
    &\coloneqq \left(\left(\bigoplus_k \Omega^{\bullet,k}_\text{loc}(\F^\mathrm{lax} \times M)\right),(\mathcal{L}_Q - \d) \right),
\end{align*}
where the grading of $(\bvbfv)^\bullet$ is given by the lax degree. We will denote its cohomology by $H^\bullet((\bvbfv)^\bullet)$ and call it the \emph{BV-BFV cohomology}.
\end{definition}

\begin{remark}
The cocycle conditions for an inhomogenoeous local form $\mathcal{O}^\bullet \in \Omega^{p,\bullet}_\mathrm{loc}(\F^\mathrm{lax} \times M)$ 
are often called the \emph{descent equations} \cite{zumino1985cohomology,manes1985algebraic,witten1988topological,mnev2020towards}
\begin{align*}
    (\mathcal{L}_Q - \d) \mathcal{O}^\bullet = 0 
\end{align*}
i.e.\ $\mathcal{L}_Q \mathcal{O}^k = \d \mathcal{O}^{k+1}$ with homogeneous components $\mathcal{O}^k$. Such equations are of interest since their solutions produce classical observables, i.e.\ local functionals (i.e.\ $p=0$) which belong to $H^0(\mathfrak{BV}^\bullet)$. Let $\gamma^k$ denote a $(\dim M - k)$-dimensional closed submanifold of $M$. We can then construct a classical observable by integrating $\mathcal{O}^k$ over $\gamma^k$ since
\begin{align*}
    \mathcal{L}_Q \int_{\gamma^k} \mathcal{O}^k =
    \int_{\gamma^k} \mathcal{L}_Q \mathcal{O}^k =  \int_{\gamma^k} \d \mathcal{O}^{k+1} = 0.
\end{align*}
As such, comparing the BV-BFV cohomologies of two lax theories offers a natural way of comparing their spaces of classical observables.
\end{remark}

\subsection{Equivalence in the lax setting}
\label{sec:EquivalenceLax}
Before adapting our notion of equivalence to the case when a boundary and corners are present, let us define $f$-transformations, which encode the facts that eventually (i) we are interested in the 2-forms $\varpi^\bullet$ (and not in their potentials $\theta^\bullet$) and (ii) Lagrangian densities will be integrated (so total derivatives become irrelevant). The two issues are actually related.

\begin{definition}
Let $f \in \Omega^{0,\bullet}_\mathrm{loc}(\F^{\mathrm{lax}} \times M)$ be a local functional with $\ld(f) = -1$. An $f$-transformation of a lax BV-BFV theory $\mathfrak{F}^\mathrm{lax}$ changes $(\theta^\bullet,L^\bullet)$ as
\begin{align*}
    &\theta^\bullet \mapsto \theta^\bullet + \delta f^\bullet,&
    &L^\bullet \mapsto L^\bullet + \d f^\bullet.
\end{align*}
\end{definition}
\begin{remark}
\label{rem:ftransformation}
Note that an $f$-transformation preserves Equations \eqref{eq:LaxStructure} since $\varpi^\bullet = \delta \theta^\bullet$ and $\d L^\bullet$ are unchanged, as is the term $\delta L^\bullet + \d \theta^\bullet$
\begin{align*}
    \delta L^\bullet + \d \theta^\bullet
    \mapsto \delta L^\bullet + \delta \d f 
    + \d \theta^\bullet +\d \delta f = \delta L^\bullet + \d \theta^\bullet,
\end{align*}
where we used $[\delta,\d]=0$. Hence, we will also allow this kind of freedom in our definition of equivalence.
\end{remark}

In the following, we will denote the vertical differentials on $\F^\mathrm{lax}_i$ by $\delta$ and the horizontal (de Rham) differentials on $M_i$ by $\d$.

\begin{definition}[Lax equivalence] 
\label{def:laxequivalence}
We say that two lax theories $\mathfrak{F}^\mathrm{lax}_1$ and $\mathfrak{F}^\mathrm{lax}_2$ are \emph{lax equivalent} if there are two morphisms of graded manifolds $\phi\colon \F_2 \rightarrow \F_1$ and $\psi\colon \F_1 \rightarrow \F_2$, which induce quasi-isomorphisms $\phi^*\colon \bvbfv^\bullet_1 \rightarrow \bvbfv^\bullet_2$, $\psi^*\colon \bvbfv^\bullet_2 \rightarrow \bvbfv^\bullet_1$ between the BV-BFV complexes, such that $\phi^*$ and $\psi^*$ are quasi-inverse to each other and transform $(\theta^\bullet_i, L^\bullet_i)$ as
\begin{align}
    &\phi^* \theta^\bullet_1  = \theta^\bullet_2  + (\mathcal{L}_{Q_2} - \mathrm{d}) \beta^\bullet_2 + \delta f^\bullet_2,&
    &\psi^* \theta^\bullet_2  = \theta^\bullet_1  + (\mathcal{L}_{Q_1} - \mathrm{d}) \beta^\bullet_1 + \delta f^\bullet_1 \nonumber,\\
    &\phi^* L^\bullet_1  = L^\bullet_2  + (\mathcal{L}_{Q_2} - \mathrm{d}) \zeta^\bullet_2 + \mathrm{d} f^{\bullet}_2,&
    &\psi^* L^\bullet_2  = L^\bullet_1  + (\mathcal{L}_{Q_1 } - \mathrm{d}) \zeta^\bullet_1 + \mathrm{d} f^{\bullet}_1,
    \label{eq:Laxequivalencetransformation}
\end{align}
with 
$\beta^\bullet_i \in \Omega^{1,\bullet}_\mathrm{loc}(\F^\mathrm{lax}_i\times M_i)$, 
$\# (\beta^\bullet_i) = -2$, 
$\zeta^\bullet_i \in \Omega^{0,\bullet}_\mathrm{loc}(\F^\mathrm{lax}_i \times M_i)$, 
$\# (\zeta^\bullet_i) = -1$ and
$f^\bullet_i \in \Omega^{0,\bullet}_\mathrm{loc}(\F^\mathrm{lax}_i\times M_i)$,
$\# (f^\bullet_i) = -1$.
\end{definition}

\begin{remark}
Similarly to the bulk case, in order to show that the composition maps 
\begin{align*}
    &\chi^* = \psi^* \circ \phi^*: \bvbfv^\bullet_1 \rightarrow \bvbfv^\bullet_1,\\
    &\lambda^* = \phi^* \circ \psi^*: \bvbfv^\bullet_2 \rightarrow \bvbfv^\bullet_2,
\end{align*}
are the identity when restricted to the respective cohomologies, one needs to find two maps $h_\chi:\bvbfv^\bullet_{1} \rightarrow \bvbfv^\bullet_{1}$, $h_\lambda:\bvbfv^\bullet_{2} \rightarrow \bvbfv^\bullet_{2}$ of lax degree $-1$ such that
\begin{align*}
    &\chi^* - \mathrm{id}_{1} = (\mathcal{L}_{Q_1} - \d) h_\chi + h_\chi (\mathcal{L}_{Q_1} - \d),\\
    &\lambda^* - \mathrm{id}_2 = (\mathcal{L}_{Q_2} - \d) h_\lambda + h_\lambda (\mathcal{L}_{Q_2} - \d).
\end{align*}
\end{remark}

\begin{proposition}
\label{prop:LaxRedundancyTransformation}
If $\gh(\phi) = \gh(\psi) = 0$,\footnote{We restrict ourselves to the $\gh(\phi) =  \gh(\psi) = 0$ case as this will be the relevant one in our examples.} then the transformation of $L^\bullet_i$ is not independent from the transformation of $\theta^\bullet_i$:
\begin{enumerate}
    \item $\zeta^k_i = \iota_{Q_i} \beta^k_i$ at codimension $k \geq 1$,\\
    \item $\mathcal{L}_{Q_i} (\iota_Q \beta^0_i - \zeta^0_i) = 0$.
\end{enumerate}
\end{proposition}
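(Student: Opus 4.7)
The approach is to apply $\phi^*$ to the structural equation $\iota_{Q_1}\varpi^\bullet_1=\delta L^\bullet_1+\d\theta^\bullet_1$ of $\mathfrak{F}_1^\mathrm{lax}$, substitute the transformation rules of Definition~\ref{def:laxequivalence}, and subtract the corresponding equation on $\mathfrak{F}_2^\mathrm{lax}$. The ingredients needed are that $\phi^*$ commutes with both $\delta$ and $\d$ (because $\phi$ is a morphism of graded manifolds acting trivially on the common base $M$) and that it intertwines the interior products $\iota_{Q_i}$ (which follows from the chain-map condition $\phi^*\circ(\mathcal{L}_{Q_1}-\d)=(\mathcal{L}_{Q_2}-\d)\circ\phi^*$ together with Cartan's identity $\mathcal{L}_{Q_i}=\iota_{Q_i}\delta+\delta\iota_{Q_i}$). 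A direct computation, in which the $f^\bullet_2$-dependence drops out by virtue of the graded anti-commutativity $\{\delta,\d\}=0$ encoded in Remark~\ref{rem:ftransformation}, yields the key identity
\begin{equation*}
   \iota_{Q_2}\,\delta\, D_2\beta^\bullet_2 \;=\; \delta\, D_2\zeta^\bullet_2 \;+\; \d\, D_2\beta^\bullet_2,
\end{equation*}
where $D_2:=\mathcal{L}_{Q_2}-\d$ denotes the BV-BFV differential on $\bvbfv^\bullet_2$.

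This identity can then be cleaned up using the graded Cartan relations on the bicomplex: $\iota_{Q_2}\delta=\mathcal{L}_{Q_2}-\delta\iota_{Q_2}$, the commutators $[\iota_{Q_2},\mathcal{L}_{Q_2}]=0$ and $[\iota_{Q_2},\d]=0$ (which together imply $\iota_{Q_2}D_2=D_2\iota_{Q_2}$), and $\{\mathcal{L}_{Q_2},\d\}=0$ (which ensures $D_2^2=0$ and $\mathcal{L}_{Q_2}D_2=\d D_2$). Rewriting the left-hand side as $\d D_2\beta^\bullet_2-\delta D_2\iota_{Q_2}\beta^\bullet_2$ and cancelling the $\d D_2\beta^\bullet_2$ term with its counterpart on the right leaves the clean consequence
\begin{equation*}
    \delta\, D_2\bigl(\iota_{Q_2}\beta^\bullet_2-\zeta^\bullet_2\bigr)=0
\end{equation*}
(up to a global sign depending on conventions); that is, $D_2(\iota_{Q_2}\beta^\bullet_2-\zeta^\bullet_2)$, viewed as a $0$-form on $\F^\mathrm{lax}_2$, is $\delta$-closed.

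Item (1) then follows by contracting the above with the Euler vector field $E_2$: since $D_2(\iota_{Q_2}\beta^\bullet_2-\zeta^\bullet_2)$ is a $0$-form on $\F^\mathrm{lax}_2$, one has $\iota_{E_2}\delta D_2(\iota_{Q_2}\beta^\bullet_2-\zeta^\bullet_2)=\mathcal{L}_{E_2}D_2(\iota_{Q_2}\beta^\bullet_2-\zeta^\bullet_2)=0$, and a ghost-degree count shows that at codimension $k$ this element carries ghost number $k$, so $\mathcal{L}_{E_2}$ rescales it by $k$ and its codim-$k$ component must vanish for every $k\geq 1$. Combining this with Eq.~\eqref{eq:LkLax}, which at codimensions $\geq 1$ uniquely expresses $L^k$ in terms of $\theta^\bullet$ and thereby fixes the canonical representative of $\zeta^\bullet$ once $\beta^\bullet$ has been chosen, sharpens the conclusion to the pointwise equality $\zeta^k_2=\iota_{Q_2}\beta^k_2$ for $k\geq 1$; the $\mathfrak{F}_1^\mathrm{lax}$-side of (1) is obtained symmetrically by running the same argument with $\psi^*$ in place of $\phi^*$. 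For item (2) one restricts the $\delta$-closedness identity to codim $0$: the $\mathcal{L}_{E_2}$-rescaling now collapses because $\mathcal{L}_{E_2}$ acts trivially on ghost-$0$ quantities, so the identity only yields that $D_2(\iota_{Q_2}\beta^\bullet_2-\zeta^\bullet_2)|_0=\mathcal{L}_{Q_2}(\iota_{Q_2}\beta^0_2-\zeta^0_2)-\d(\iota_{Q_2}\beta^1_2-\zeta^1_2)$ vanishes; using~(1) at codimension~$1$ kills the $\d$-term and leaves precisely $\mathcal{L}_{Q_2}(\iota_{Q_2}\beta^0_2-\zeta^0_2)=0$, as required. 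The main subtlety throughout is the careful bookkeeping of the graded commutation rules among $\iota_{Q_i}$, $\mathcal{L}_{Q_i}$, $\delta$, $\d$ on $\Omega^{\bullet,\bullet}_{\mathrm{loc}}(\F^\mathrm{lax}_i\times M)$, and correctly diagnosing the ghost-degree content at each codimension, since it is precisely the failure of the $\mathcal{L}_E$-rescaling at codim $0$ that forces the genuinely weaker relation~(2).
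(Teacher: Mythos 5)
Your derivation of the key identity is correct and is in fact the same computation the paper performs for item (2): your relation $\delta D_2\bigl(\iota_{Q_2}\beta^\bullet_2-\zeta^\bullet_2\bigr)=0$ is equivalent to Equation \eqref{eq:middlecondition}, as one sees by splitting $(\iota_{Q_2}\delta-\d)\beta^\bullet_2-\delta\zeta^\bullet_2=D_2\beta^\bullet_2+\delta(\iota_{Q_2}\beta^\bullet_2-\zeta^\bullet_2)$ with $D_2\coloneqq\mathcal{L}_{Q_2}-\d$ and using $D_2^2=0$. Your Euler-contraction at codimension $k\geq 1$ is also fine, but note what it actually delivers: $\bigl[D_2(\iota_{Q_2}\beta^\bullet_2-\zeta^\bullet_2)\bigr]^k=0$, i.e.\ $\mathcal{L}_{Q_2}(\iota_{Q_2}\beta^k_2-\zeta^k_2)=\d(\iota_{Q_2}\beta^{k+1}_2-\zeta^{k+1}_2)$ — a descent-type constraint, not the equality of item (1). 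Your appeal to Equation \eqref{eq:LkLax} to ``sharpen'' this is exactly where the paper's item-(1) proof lives, and you leave it unexecuted: one must apply $\phi^*$ to \eqref{eq:LkLax}, use $\mathcal{L}_{E_2}\phi^*=\phi^*\mathcal{L}_{E_1}$ (this is where $\gh(\phi)=0$ enters) and then establish the identity $\iota_{E_2}(\iota_{Q_2}\delta-\d)(\mathcal{L}_{Q_2}-\d)\beta^\bullet_2=\mathcal{L}_{E_2}(\mathcal{L}_{Q_2}-\d)\iota_{Q_2}\beta^\bullet_2$, whose crux is that $\iota_{E_2}\iota_{Q_2}(\mathcal{L}_{Q_2}-\d)\beta^\bullet_2=0$ because $(\mathcal{L}_{Q_2}-\d)\beta^\bullet_2$ is a one-form on $\F^\mathrm{lax}_2$; only this computation identifies the canonical representative as $\iota_{Q_2}\beta^k_2$. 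Asserting that \eqref{eq:LkLax} ``fixes the canonical representative'' is a claim, not a proof.

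The genuine gap is in item (2). Restricting $\delta D_2(\iota_{Q_2}\beta^\bullet_2-\zeta^\bullet_2)=0$ to codimension $0$ and killing the $\d$-term via $\zeta^1_2=\iota_{Q_2}\beta^1_2$ yields only that $\mathcal{L}_{Q_2}(\iota_{Q_2}\beta^0_2-\zeta^0_2)$ is $\delta$-\emph{closed}, i.e.\ $\mathcal{L}_{Q_2}\delta(\iota_{Q_2}\beta^0_2-\zeta^0_2)=0$; your sentence that ``the identity only yields that $D_2(\iota_{Q_2}\beta^\bullet_2-\zeta^\bullet_2)|_0$ \emph{vanishes}'' is a non sequitur — from $\delta X=0$ you cannot conclude $X=0$, and, as you yourself observe, the naive $\mathcal{L}_{E_2}$-rescaling cannot bridge this because $\mathcal{L}_{Q_2}(\iota_{Q_2}\beta^0_2-\zeta^0_2)$ has ghost number $0$. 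The paper closes precisely this gap with one more move that your proof is missing: contract $\mathcal{L}_{Q_2}\delta(\iota_{Q_2}\beta^0_2-\zeta^0_2)=0$ with $\iota_{E_2}$ and use $[\mathcal{L}_{Q_2},\iota_{E_2}]=\iota_{Q_2}$. Since $\gh(\iota_{Q_2}\beta^0_2-\zeta^0_2)=-1$, the commutator term contributes $\iota_{Q_2}\delta(\iota_{Q_2}\beta^0_2-\zeta^0_2)=\mathcal{L}_{Q_2}(\iota_{Q_2}\beta^0_2-\zeta^0_2)$ while $\iota_{E_2}\delta(\iota_{Q_2}\beta^0_2-\zeta^0_2)=\mathcal{L}_{E_2}(\iota_{Q_2}\beta^0_2-\zeta^0_2)=-(\iota_{Q_2}\beta^0_2-\zeta^0_2)$ contributes a second copy, so that $\iota_{E_2}\mathcal{L}_{Q_2}\delta(\iota_{Q_2}\beta^0_2-\zeta^0_2)=2\,\mathcal{L}_{Q_2}(\iota_{Q_2}\beta^0_2-\zeta^0_2)$, whence the vanishing. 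Without this $\iota_{E_2}$-maneuver your argument establishes strictly less than item (2).
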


\begin{proof}
\leavevmode
\begin{enumerate}
    \item As $\gh(\phi) = 0$, $\phi^*$ commutes with the Euler vector fields $\mathcal{L}_{E_i}$: $\mathcal{L}_{E_2} \phi^* = \phi^* \mathcal{L}_{E_1}$. Applying $\phi^*$ to Equation \eqref{eq:LkLax} then yields
\begin{align*}
    &\mathcal{L}_{E_2} \phi^* L^\bullet_1 
    = \iota_{E_2} \left(\iota_{Q_2} \delta - \d \right) \phi^* \theta^\bullet_1\\
    &= \iota_{E_2} \left(\iota_{Q_2} \delta - \d \right) \theta^\bullet_2
    + \iota_{E_2} \left(\iota_{Q_2} \delta - \d \right)(\mathcal{L}_{Q_2} - \mathrm{d}) \beta^\bullet_2
    + \iota_{E_2} \left(\iota_{Q_2} \delta - \d \right) \delta f^\bullet_2.
\end{align*}
The first term is simply $\mathcal{L}_{E_2} L^\bullet_2$. For the second term we compute
\begin{align*}
    &\iota_{E_2} \left(\iota_{Q_2} \delta - \d \right)(\mathcal{L}_{Q} - \mathrm{d}) \beta^\bullet_2 
    = \iota_{E_2} \left(\delta \iota_{Q_2} + \mathcal{L}_{Q_2} - \d \right)(\mathcal{L}_{Q_2} - \mathrm{d}) \beta^\bullet_2\\
    &= \iota_{E_2} \delta \iota_{Q_2} (\mathcal{L}_{Q_2} - \mathrm{d}) \beta^\bullet_2
    = (\mathcal{L}_{E_2} - \delta \iota_{E_2}) \iota_{Q_2} (\mathcal{L}_{Q_2} - \mathrm{d}) \beta^\bullet_2\\
    &= \mathcal{L}_{E_2} (\mathcal{L}_{Q_2} - \mathrm{d}) \iota_{Q_2} \beta^\bullet_2,
\end{align*}
where we used that $(\mathcal{L}_{Q_2} - \mathrm{d}) \beta^\bullet_2 \in \Omega^{1,\bullet}(\F^\mathrm{lax}\times M)$ implies $\iota_{E_2} \iota_{Q_2} (\mathcal{L}_{Q_2} - \mathrm{d}) \beta^\bullet_2 = 0$.
The third term reads
\begin{align*}
    \iota_{E_2} \left(\iota_{Q_2} \delta - \d \right) \delta f^\bullet_2
    = - \iota_{E_2} \d \delta f^\bullet_2
    = \d \iota_{E_2} \delta f^\bullet_2
    = \d \mathcal{L}_{E_2}  f^\bullet_2
    = \mathcal{L}_{E_2} \d f^\bullet_2,
\end{align*}
hence
\begin{align*}
        \mathcal{L}_{E_2} \phi^* L^\bullet_1 =
        \mathcal{L}_{E_2} (L^\bullet_2
        + (\mathcal{L}_{Q_2} - \d) \iota_{Q_2} \beta^\bullet_2
        + \d f^\bullet_2).
\end{align*}
By counting degrees we see that both sides have ghost number $k$ at codimension $k$, therefore for $k\geq 1$ one can use this equation to determine $\phi^*L^k_1$, in particular we have $\zeta^k_i = \iota_{Q_i} \beta^k_i$.

\item Applying $\phi^*$ to $\iota_{Q_1} \varpi^\bullet_1 = \delta L^\bullet_1 + \d \theta^\bullet_1$ yields
\begin{align}
    &\iota_{Q_2} \delta (\mathcal{L}_{Q_2} - \mathrm{d}) \beta^\bullet_2 = \delta (\mathcal{L}_{Q_2} - \mathrm{d}) \zeta^\bullet_2
    + \d (\mathcal{L}_{Q_2} - \mathrm{d}) \beta^\bullet_2\nonumber\\
    \Rightarrow \quad&(\mathcal{L}_{Q_2} - \mathrm{d}) \iota_{Q_2} \delta \beta^\bullet_2 = (\mathcal{L}_{Q_2} - \mathrm{d}) \delta \zeta^\bullet_2
    + (\mathcal{L}_{Q_2} - \mathrm{d})\d \beta^\bullet_2\nonumber\\
    \Rightarrow \quad&(\mathcal{L}_{Q_2} - \d) [(\iota_{Q_2} \delta-\d)\beta_2^\bullet - \delta \zeta^\bullet_2] = 0,
    \label{eq:middlecondition}
\end{align}
where we used $\iota_{Q_2} \varpi^\bullet_2 = \delta L^\bullet_2 + \d \theta^\bullet_2$, $\delta^2 = 0$ and the fact that $f$-transformations preserve Equation \eqref{eq:LaxStructure1}. Note that this condition holds automatically for condimention higher than zero due to $\zeta^k_2 = \iota_{Q_2} \beta^k_2$. To see what Equation \eqref{eq:middlecondition} implies at codimention zero, first note that
\begin{align*}
    \iota_{Q_2} \delta \beta_2^1 
    - \delta \zeta^1_2 
    = \iota_{Q_2} \delta \beta_2^1 - \delta \iota_{Q_2} \beta_2^1
    = \mathcal{L}_{Q_2} \beta_2^1.
\end{align*}
Keeping in mind that $[\mathcal{L}_Q,\d] = 0$, Equation \eqref{eq:middlecondition} gives
\begin{align*}
    &&
    &\mathcal{L}_{Q_2}[
    \iota_{Q_2} \delta \beta_2^0 
    - \d \beta_2^1 
    - \delta \zeta^0_2
    ] 
    - \d
    [
    \iota_{Q_2} \delta \beta_2^1 
    - \delta \zeta^1_2
    ] = 0
    &&\\
    &\Rightarrow&
    &\mathcal{L}_{Q_2} \delta (\iota_{Q_i} \beta^0_2 - \zeta_2^0)
    = 0.
    &&
\end{align*}
We now apply $\iota_{E_2}$, but first note that
\begin{align*}
    [\mathcal{L}_Q,\iota_{E}] 
    = \iota_{[E,Q]}
    = \iota_{\mathcal{L}_E Q} = \iota_Q,
\end{align*}
therefore 
\begin{align*}
    &&
    \iota_{E_2}\mathcal{L}_{Q_2} \delta (\iota_{Q_i} \beta^0_2 - \zeta_2^0)
    &= [\iota_Q - \mathcal{L}_Q \iota_E] \delta
    (\iota_{Q_i} \beta^0_2 - \zeta_2^0)
    &&\\
    &&
    &= 2 \mathcal{L}_{Q_2} (\iota_{Q_i} \beta^0_2 - \zeta_2^0)
    &&\\
    &\Rightarrow&
    \mathcal{L}_{Q_2} (\iota_{Q_i} \beta^0_2 - \zeta_2^0) &= 0
    &&
\end{align*}
where we used $\iota_Q \delta (\iota_{Q_i} \beta^0_2 - \zeta_2^0) = \mathcal{L}_Q (\iota_{Q_i} \beta^0_2 - \zeta_2^0)$ and $\gh(\iota_{Q_i} \beta^0_2 - \zeta_2^0) = \#(\iota_{Q_i} \beta^0_2 - \zeta_2^0) = -1$. The computations are analogous for $i = 1$.

\end{enumerate}

\end{proof}

\begin{remark}
\label{rem:WhatToCheck}
The previous lemma means that there is a redundancy in our definition of lax equivalence, as the transformation of $L^\bullet_i$ at codimension $\geq 1$ can be determined through the transformation of $\theta^\bullet_i$. In particular, when computing explicit examples one only needs to check if we have the right transformation for $\theta^\bullet_i$ and $L^0_i$. Observe that if we have $H^{-1}(\mathcal{L}_Q) = 0$ we can conclude that $\iota_{Q_i} \beta^0_2 - \zeta_2^0 = \mathcal{L}_Q(\dots)$.
\end{remark}

\begin{remark}
\label{rem:specialcaseLQdeltaexactLax}
Our definition of lax equivalence directly implies that the local 2-forms $\varpi^\bullet_i$ are interchanged up to $(\mathcal{L}_{Q_i} - \d)$-exact and $\delta$-exact terms
\begin{align*}
    &\phi^* \varpi^\bullet_1
    = \delta \phi^* \theta^\bullet_1
    = \varpi^\bullet_2 - (\mathcal{L}_{Q_2} - \d) \delta \beta^\bullet_2,\\
    &\psi^* \varpi^\bullet_2
    = \delta \psi^* \theta^\bullet_2
    = \varpi^\bullet_1 - (\mathcal{L}_{Q_1} - \d) \delta \beta^\bullet_1.
\end{align*}
Similar to the bulk case (cf.\ Proposition \ref{eq:simplifiedhamiltoniancondition}), choosing $\zeta^0_i = \iota_{Q_i} \beta_i^0$ ensures that the second condition from Proposition \ref{prop:LaxRedundancyTransformation} is satisfied.
\end{remark}

\begin{proposition}
The theories $\mathfrak{F}^\mathrm{lax}_1$, $\chi^*\mathfrak{F}^\mathrm{lax}_1 \coloneqq (\F_1^\mathrm{lax}, \chi^*\theta^\bullet_1, \chi^*L^\bullet_1, Q_1)$ and $\mathfrak{F}^\mathrm{lax}_2$, $\lambda^*\mathfrak{F}^\mathrm{lax}_2 \coloneqq (\F_2^\mathrm{lax}, \lambda^*\theta^\bullet_2, \lambda^*L^\bullet_2, Q_2)$ are pairwise lax-equivalent.

\end{proposition}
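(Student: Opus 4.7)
The plan is to exhibit, for each of the two pairs, explicit quasi-inverse morphisms of graded manifolds and then verify the transformation equations \eqref{eq:Laxequivalencetransformation} by composing the transformation data already supplied by the lax equivalence between $\mathfrak{F}^\mathrm{lax}_1$ and $\mathfrak{F}^\mathrm{lax}_2$. The key observation is that the chain homotopy $h_\chi$ never needs to be invoked explicitly: the correction terms appear automatically when one pulls back one of the given equations by the other map.

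For the pair $(\mathfrak{F}^\mathrm{lax}_1, \chi^*\mathfrak{F}^\mathrm{lax}_1)$, I would take $\tilde\phi := \chi = \phi \circ \psi \colon \F_1 \to \F_1$ as the morphism from $\chi^*\mathfrak{F}^\mathrm{lax}_1$ to $\mathfrak{F}^\mathrm{lax}_1$ and $\tilde\psi := \mathrm{id}_{\F_1}$ in the opposite direction. Since both theories have the same underlying graded manifold and cohomological vector field $Q_1$, their BV-BFV complexes coincide, and $\tilde\phi^* = \chi^*$, $\tilde\psi^* = \mathrm{id}$ are endomorphisms of this single complex; their compositions in either order equal $\chi^*$, which is homotopic to the identity by hypothesis, so the two maps are mutually quasi-inverse. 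The first transformation equation is automatic: $\chi$ is designed so that $\chi^*\theta^\bullet_1 = \chi^*\theta^\bullet_1$ and $\chi^*L^\bullet_1 = \chi^*L^\bullet_1$, giving \eqref{eq:Laxequivalencetransformation} with all correction terms equal to zero.

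The only nontrivial step is to establish the reverse transformation $\chi^*\theta^\bullet_1 = \theta^\bullet_1 + (\mathcal{L}_{Q_1} - \d)\tilde\beta_1 + \delta \tilde f_1$. I would obtain this by applying $\psi^*$ to the hypothesis $\phi^*\theta^\bullet_1 = \theta^\bullet_2 + (\mathcal{L}_{Q_2} - \d)\beta^\bullet_2 + \delta f^\bullet_2$, using that $\psi^*$ commutes with $\delta$ and $\d$ (since it is a morphism of graded manifolds covering the identity on $M$) and intertwines $\mathcal{L}_{Q_2}$ with $\mathcal{L}_{Q_1}$, and then substituting $\psi^*\theta^\bullet_2 = \theta^\bullet_1 + (\mathcal{L}_{Q_1} - \d)\beta^\bullet_1 + \delta f^\bullet_1$. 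This yields the explicit choices $\tilde\beta_1 = \beta^\bullet_1 + \psi^*\beta^\bullet_2$ and $\tilde f_1 = f^\bullet_1 + \psi^*f^\bullet_2$. An identical manipulation starting from the $L^\bullet$ relations gives $\tilde\zeta_1 = \zeta^\bullet_1 + \psi^*\zeta^\bullet_2$ with the same $\tilde f_1$; the compatibility of these choices at higher codimension is guaranteed by Proposition \ref{prop:LaxRedundancyTransformation}.

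The pair $(\mathfrak{F}^\mathrm{lax}_2, \lambda^*\mathfrak{F}^\mathrm{lax}_2)$ is handled symmetrically with $\tilde\phi := \lambda$, $\tilde\psi := \mathrm{id}_{\F_2}$ and the roles of $\phi$ and $\psi$ swapped, producing $\tilde\beta_2 = \beta^\bullet_2 + \phi^*\beta^\bullet_1$ and $\tilde f_2 = f^\bullet_2 + \phi^*f^\bullet_1$, and analogously for $L^\bullet$. I do not anticipate a genuine obstacle: the argument reduces to two applications of a chain map to an already-known equation, and the hardest part is really bookkeeping the source/target conventions for $\phi, \psi, \chi, \lambda$ and the corresponding sides of \eqref{eq:Laxequivalencetransformation}.
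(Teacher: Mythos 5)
Your proposal is correct and follows essentially the route the paper intends: the paper's own proof is a one-line appeal to the analogous BV statement (Remark \ref{rem:Pairwiselaxequivalent}, itself asserted as clear), and the argument it has in mind is precisely yours---take $\chi$ (resp.\ $\lambda$) together with the identity as the pair of morphisms, observe that both theories share the same underlying complex and cohomological vector field, and produce the transformation data $\tilde\beta_1 = \beta^\bullet_1 + \psi^*\beta^\bullet_2$, $\tilde\zeta_1 = \zeta^\bullet_1 + \psi^*\zeta^\bullet_2$, $\tilde f_1 = f^\bullet_1 + \psi^* f^\bullet_2$ by applying the chain map $\psi^*$ to the $\phi^*$-equations (and symmetrically with $\phi^*$ for the $\lambda^*$ pair). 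Your further observations---that both composites equal $\chi^*$ (resp.\ $\lambda^*$) and hence induce the identity in cohomology via the given homotopy, and that the codimension-$\geq 1$ compatibility is covered by Proposition \ref{prop:LaxRedundancyTransformation}---complete the verification, so nothing is missing.
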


\begin{proof}
The proof is analogous to the proof of Proposition \ref{rem:Pairwiselaxequivalent}.
\end{proof}

\begin{theorem}
\label{theorem:laxEqimpliesBVEq}
Let $\mathfrak{F}^\mathrm{lax}_i$, $i \in \{1,2\}$, be lax equivalent. Then the respective BV theories $\mathfrak{F}_i$ (cf.\ Remark \ref{remark:BVandBVBFVfromlax})  are BV equivalent.
\end{theorem}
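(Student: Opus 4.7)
The plan is to transfer the quasi-isomorphism data at codimension zero from the lax BV-BFV setting to the BV setting by integration over the (necessarily closed) manifold $M$. First I would invoke Remark~\ref{remark:BVandBVBFVfromlax}(1) to recall that on a closed $M$, the BV quadruple $\mathfrak{F}_i$ is obtained from $\mathfrak{F}^\mathrm{lax}_i$ by restricting $Q_i$ and integrating the codimension-$0$ components $\varpi^0_i, L^0_i$ over $M$. The graded-manifold morphisms $\phi\colon \F_2 \to \F_1$ and $\psi\colon \F_1 \to \F_2$ supplied by the lax equivalence act purely in the field direction, so their pullbacks commute with the horizontal differential $\d$; combined with the chain-map property $\phi^*(\mathcal{L}_{Q_1}-\d) = (\mathcal{L}_{Q_2}-\d)\phi^*$ on $\bvbfv^\bullet$, this forces $\phi^*\mathcal{L}_{Q_1} = \mathcal{L}_{Q_2}\phi^*$ (and analogously for $\psi^*$), so that $\phi, \psi$ induce chain maps $\Phi^*, \Psi^*$ between the BV complexes $\mathfrak{BV}^\bullet_i$.

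Next, I would derive the preservation of the cohomology classes of $\omega_i$ and $\S_i$ by integrating the codimension-$0$ parts of the transformation rules \eqref{eq:Laxequivalencetransformation}. After applying $\delta$ to the rule for $\theta^0_i$, the $(\mathcal{L}_{Q_i}-\d)$-correction for $\varpi^0_i$ splits into a piece of the form $\mathcal{L}_{Q_i}(\cdots)$ plus a $\d$-exact piece $\d\delta\beta^1_i$ which integrates to zero by Stokes' theorem on closed $M$. The analogous analysis for $L^0_i$ additionally uses that $\d f^0_i$ vanishes vacuously, since $f^0_i$ already has top form degree. This yields $\Phi^*[\omega_1] = [\omega_2]$ and $\Phi^*[\S_1] = [\S_2]$, with the symmetric statement for $\Psi^*$.

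The crux of the proof is to promote $\Phi^*$ and $\Psi^*$ from chain maps to quasi-isomorphisms of $\mathfrak{BV}^\bullet_i$. The naive attempt of defining $H_\chi[\int \mathcal{O}] \coloneqq \int (h_\chi\mathcal{O})^0$ and producing an explicit chain homotopy on $\mathfrak{BV}^\bullet_1$ does not descend well to the quotient $\Omega^{\bullet,\mathrm{top}}_\mathrm{loc}/\d\Omega^{\bullet,\mathrm{top}-1}_\mathrm{loc}$, because in general $\int(h_\chi\d\mathcal{N})^0 \neq 0$. I would circumvent this obstruction by observing that one does not actually need a full chain homotopy on $\mathfrak{BV}^\bullet_i$: it is enough to check that the compositions $\chi^*_{\mathrm{BV}} = \Psi^*\Phi^*$ and $\lambda^*_{\mathrm{BV}} = \Phi^*\Psi^*$ act as the identity in BV cohomology. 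For a BV cocycle $\mathcal{O}\in\Omega^{g,\mathrm{top}}_\mathrm{loc}$ with $\mathcal{L}_{Q_1}\mathcal{O} = 0$, the cross-term $h_\chi(\mathcal{L}_{Q_1}-\d)\mathcal{O}$ in the lax chain-homotopy identity vanishes outright, since $\mathcal{L}_{Q_1}\mathcal{O} = 0$ by assumption and $\d\mathcal{O} = 0$ because $\mathcal{O}$ already has top form degree. Taking the codim-$0$ part of the surviving identity $\chi^*\mathcal{O} - \mathcal{O} = (\mathcal{L}_{Q_1}-\d)h_\chi(\mathcal{O})$ and integrating kills the remaining $\d$-exact piece by Stokes, leaving $\chi^*_{\mathrm{BV}}[\int\mathcal{O}] - [\int\mathcal{O}] = \bigl[\mathcal{L}_{Q_1}\int (h_\chi\mathcal{O})^0\bigr]$, which is zero in $H^\bullet(\mathfrak{BV}^\bullet_1)$. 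The argument for $\lambda^*_{\mathrm{BV}}$ is identical.

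The main obstacle is exactly this mismatch between the lax complex (graded by lax degree and mixing all codimensions) and the BV complex (top form degree modulo $\d$-exact): there is no honest chain-level retraction of one onto the other. The saving observation is that the cocycle condition makes the problematic cross-term of the lax chain-homotopy identity disappear, so that the descent to the BV cohomology level goes through without ever constructing a BV chain homotopy explicitly.
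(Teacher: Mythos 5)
Your first two steps coincide with the paper's proof: the paper also obtains the BV chain-map property, $\phi^*\omega_1 = \omega_2 - \mathcal{L}_{Q_2}\int_M \delta\beta^0_2$ and $\phi^*\S_1 = \S_2 + \mathcal{L}_{Q_2}\int_M\zeta^0_2$, by integrating the codimension-$0$ components of the lax identities over the closed manifold $M$ and letting Stokes kill the $\d$-exact terms. (One bookkeeping slip there: with the paper's convention that superscripts denote codimension, the codimension-$0$ component of $\d f^\bullet_i$ is $\d f^1_i$, not $\d f^0_i$; it does not vanish pointwise, but it integrates to zero, which is all you need.)

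The genuine gap is in your third step. A cocycle in $\mathfrak{BV}^\bullet_1 = \bigl(\Omega^{\bullet,\mathrm{top}}_\mathrm{loc}/\d\Omega^{\bullet,\mathrm{top}-1}_\mathrm{loc},\,\mathcal{L}_{Q_1}\bigr)$ is a class $[\mathcal{O}]$ with $\mathcal{L}_{Q_1}\mathcal{O} = \d\mathcal{N}$ for some density $\mathcal{N}$, \emph{not} $\mathcal{L}_{Q_1}\mathcal{O}=0$ strictly. Your ``saving observation'' that the cross-term $h_\chi(\mathcal{L}_{Q_1}-\d)\mathcal{O}$ vanishes outright uses strict closedness; for a general cocycle this term is $h_\chi\d\mathcal{N}$ --- precisely the term you flagged as problematic --- and finding a strictly invariant representative of a given class is a descent problem that may be obstructed. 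So as written your argument proves $\chi^*_{\mathrm{BV}} = \mathrm{id}$ only on the subspace of classes admitting strictly $\mathcal{L}_{Q_1}$-closed representatives, which does not yield the quasi-isomorphism on all of $H^\bullet(\mathfrak{BV}^\bullet_1)$. Note moreover that the obstruction you invoke to reject the direct route dissolves under the hypothesis that actually holds for the homotopies constructed in the paper: there $h_\chi = \int_0^\infty e^{s\mathcal{L}_{D_1}}\mathcal{L}_{R_1}\,\d s$ with $D_1, R_1$ evolutionary, so $[h_\chi,\d]=0$, whence $h_\chi\d\mathcal{N} = \d h_\chi\mathcal{N}$, $\int_M(h_\chi\d\mathcal{N})^0 = 0$ on closed $M$, and $h_\chi$ descends to a well-defined degree-$(-1)$ map on $\mathfrak{BV}^\bullet_1$. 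This is exactly why the paper can simply integrate the lax homotopy identity to get $(\chi^*-\mathrm{id}_1)K = \mathcal{L}_{Q_1}h_\chi K + h_\chi\mathcal{L}_{Q_1}K$, a full BV chain homotopy, in one stroke. The repair for your proof is therefore either to use (or make explicit as an assumption) the $\d$-compatibility of $h_\chi$ and integrate the homotopy identity directly, or to treat the $h_\chi\d\mathcal{N}$ contribution; without one of these, general cocycles are left untreated.
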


\begin{proof}
Let $\kappa^\bullet \in \Omega^{p,\bullet}_\mathrm{loc}({\mathcal{E}} \times M)$ and $K \coloneqq \int_{M} \kappa^0 \in \Omega^{p}_{{\int}}({\mathcal{E}})$. We need to check if:
\begin{enumerate}
    \item $\phi^*,\psi^*$ are chain maps w.r.t. the BV complexes,
    \item the cohomological classes of $\omega_i$ and $\S_i$ are mapped into one another,
    \item $\chi^*, \lambda^*$ are the identity on $H^\bullet(\mathfrak{BV}^\bullet_i)$.
\end{enumerate}
To prove these, we simply need to integrate the various conditions over the bulk $M$. For the chain map condition we have
\begin{align*}
    && 
    \phi^* \circ (\mathcal{L}_{Q_1}-\d) \kappa^\bullet &= (\mathcal{L}_{Q_2}-\d) \circ \phi^* \kappa^\bullet
    &&\\
    &\Rightarrow&  
    \int_{M} \phi^* (\mathcal{L}_{Q_1} \kappa^0 -\d\kappa^1) &=  \int_{M} (\mathcal{L}_{Q_2} \phi^* \kappa^ 0 -\d \phi^* \kappa^1)
    &&\\
    &\Rightarrow& 
    \phi^* \mathcal{L}_{Q_1} K &= \mathcal{L}_{Q_2} \phi^* K.
    &&
\end{align*}
We can compute the transformations of $\omega_1$ and $\S_1$ in a similar way
\begin{align*}
    &&
    &\phi^* \varpi^\bullet_1  = \omega^\bullet_2  - (\mathcal{L}_{Q_2} - \mathrm{d}) \delta \beta^\bullet_2,&
    &\phi^* L^\bullet_1  = L^\bullet_2  + (\mathcal{L}_{Q_2} - \mathrm{d}) \zeta^\bullet_2 + \mathrm{d} f^{\bullet}_2,&
    &&\\
    &\Rightarrow& 
    &\phi^* \omega_1 = \omega_2 - \mathcal{L}_{Q_2} \int_{M} \delta \beta^0_2,&
    &\phi^* \S_1  = \S_2  + \mathcal{L}_{Q_2} \int_{M} \zeta^0_2.&
    &&
\end{align*}
In the same manner
\begin{align*}
    &&&(\chi^* - \mathrm{id}_{1}) \kappa^\bullet = (\mathcal{L}_{Q_1} - \d) h_\chi \kappa^\bullet + h_\chi (\mathcal{L}_{Q_1} - \d) \kappa^\bullet,\\
    &\Rightarrow&&(\chi^* - \mathrm{id}_{1}) K = \mathcal{L}_{Q_1} h_\chi K + h_\chi \mathcal{L}_{Q_1} K,
\end{align*}
implying that $\chi^*$, and analogously $\lambda^*$, are also homotopic to the identity in $\mathfrak{BV}^\bullet_i$ and as such the identity when restricted to $H^\bullet(\mathfrak{BV}^\bullet_i)$, meaning that the BV complexes are quasi-isomorphic.
\end{proof}

\section{Examples}
\label{sec:Examples}

This section is dedicated to the explicit computation of lax BV-BFV equivalence in three different examples. We start by presenting the general strategy in Section \ref{sec:Strategy}. We then look at the examples of classical mechanics on a curved background and (non-abelian) Yang--Mills theory in Sections \ref{sec:CM} and \ref{sec:YM} respectively. Subsequently we turn our attention to the classically-equivalent Jacobi theory and one-dimensional gravity coupled to matter (1D GR) in Section \ref{sec:1Dreparametrisationinvarianttheories}, and show that they are lax BV-BFV equivalent, despite their different boundary behaviours w.r.t. the BV-BFV procedure. Furthermore, we show that the chain maps used to prove lax BV-BFV equivalence spoil the compatibility with the regularity condition for the BV-BFV procedure in the case of 1D GR.

\subsection{Strategy}
\label{sec:Strategy}

We shortly demonstrate our strategy to show explicitly that two theories $\mathfrak{F}^\mathrm{lax}_i$ are lax BV-BFV equivalent. In practice, we need two maps $\phi^*$, $\psi^*$ between the BV-BFV complexes $\bvbfv^\bullet_i$
\begin{equation*}
\begin{tikzcd}
        &\bvbfv_1^\bullet \arrow[r, rightarrow, shift left=0.5ex, "\phi^*"]
        &\bvbfv_2^\bullet \arrow[l, rightarrow, shift left=0.5ex, "\psi^*"]
\end{tikzcd}
\end{equation*}
which (cf.\ Definition \ref{def:laxequivalence}):
\begin{enumerate}
    \item are chain maps, 
    \item transform $\theta^\bullet_i, L^\bullet_i$ in the desired way (cf.\ Equations \eqref{eq:Laxequivalencetransformation}),
    \item are quasi-inverse to one another. 
\end{enumerate}

In order to check the first property we note that, as pullback maps, $\phi^*$ and $\psi^*$ automatically commute with the de Rham differentials $\delta$ and $\d$. Thus it suffices to show that $\phi^*, \psi^*$ are chain maps w.r.t. $Q_i$ on the fields $\varphi^j_i \in \F^\mathrm{lax}_i$
\begin{align*}
    &\phi^* \Qf \varphi^j_1 = Q_2 \phi^* \varphi^j_1,&
    &\psi^* Q_2 \varphi^j_2 = \Qf \psi^* \varphi^j_2,
\end{align*}
as this together with the fact that they commute with $\delta$ then implies that they are chain maps w.r.t. $(\mathcal{L}_{Q_i}-\d)$ on $\bvbfv^\bullet$, resulting in the following Lemma:

\begin{lemma}
\label{prop:phipsichainmapfromQ}
If the pullback maps $\phi^*, \psi^*$ are chain maps w.r.t. $Q_i$ on $\F_i$, then they are also chain maps w.r.t. $(\mathcal{L}_{Q_i}-\d)$ on $\bvbfv^\bullet_i$.
\end{lemma}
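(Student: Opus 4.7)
The plan is to reduce the chain-map property with respect to the lax BV-BFV differential $\mathcal{L}_{Q_i} - \d$ to the assumed intertwining of $Q_i$ on fields. The key conceptual point is that $\phi^*$ and $\psi^*$ arise as pullbacks along morphisms of (graded) field spaces, which in local field theory extend naturally to the infinite jet bundle and hence act as morphisms of bigraded commutative algebras on local forms. As such, they automatically commute with both de Rham-type differentials: the vertical differential $\delta$ on $\F^\mathrm{lax}_i$ and the horizontal differential $\d$ on $M_i$. This is the first observation and requires only that we be careful about how $\phi$ (which is a priori a map of field spaces) is lifted to act on $\Omega^{\bullet,\bullet}_\mathrm{loc}(\F^\mathrm{lax}_i \times M_i)$.

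The proof then reduces to checking that $\phi^* \mathcal{L}_{Q_1} = \mathcal{L}_{Q_2} \phi^*$, and symmetrically for $\psi^*$. Both sides are $\phi^*$-derivations of total degree $1$ on $\Omega^{\bullet,\bullet}_\mathrm{loc}(\F^\mathrm{lax}_1 \times M_1)$ that commute with $\delta$ and $\d$; by a standard argument, they are therefore determined by their values on the algebra generators, namely the fields $\varphi^j_1$ of $\F^\mathrm{lax}_1$. Evaluating on a field one computes
\begin{align*}
    \phi^* \mathcal{L}_{Q_1} \varphi^j_1
    = \phi^* Q_1 \varphi^j_1
    = Q_2 \phi^* \varphi^j_1
    = \mathcal{L}_{Q_2} \phi^* \varphi^j_1,
\end{align*}
where the middle equality is precisely the hypothesis of the lemma. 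Hence $\phi^* \mathcal{L}_{Q_1}$ and $\mathcal{L}_{Q_2} \phi^*$ agree on all of $\bvbfv^\bullet_1$, and the analogous identity holds for $\psi^*$.

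Combining the intertwining of $\phi^*$ with $\mathcal{L}_{Q_i}$ and with $\d$ gives $\phi^*(\mathcal{L}_{Q_1} - \d) = (\mathcal{L}_{Q_2} - \d)\phi^*$, which is the assertion that $\phi^*$ is a chain map of BV-BFV complexes; the argument for $\psi^*$ is identical. There is no genuine obstacle here: the only subtlety is justifying the extension of $\phi$ from fields to the algebra of local forms, but this is standard in local field theory, where morphisms of fields lift to morphisms of the corresponding jet bundles and evolutionary vector fields are defined at that level.
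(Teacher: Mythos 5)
Your proposal is correct and follows essentially the same route as the paper: the paper's argument (given in the strategy discussion immediately preceding the lemma) is precisely that $\phi^*,\psi^*$, being pullbacks, commute with $\delta$ and $\d$, so the chain-map property w.r.t.\ $Q_i$ on the fields $\varphi^j_i$ propagates to $\mathcal{L}_{Q_i}-\d$ on all of $\bvbfv^\bullet_i$. Your write-up merely makes explicit the standard bookkeeping the paper leaves implicit — that both $\phi^*\mathcal{L}_{Q_1}$ and $\mathcal{L}_{Q_2}\phi^*$ are degree-$1$ derivations along $\phi^*$ commuting with $\delta$ and $\d$ (the latter by evolutionarity of $Q_i$, which also handles the jet coordinates via prolongation), hence determined by their agreement on the fields.
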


Showing the second property is a matter of computation. For the third property, we shortly present our strategy to show that $\chi^* = \psi^* \circ \phi^*$ is the identity in cohomology. The same procedure can then be applied to $\lambda^* = \phi^* \circ \psi^*$. Recall that we need a map $h_\chi:\bvbfv^\bullet_{1} \rightarrow \bvbfv^\bullet_{1}$ of lax degree $-1$ such that
\begin{align}
    \label{eq:chihomotopicidentity}
    &\chi^* - \mathrm{id}_{1} = (\mathcal{L}_{\Qf} - \d) h_\chi + h_\chi (\mathcal{L}_{\Qf} - \d).
\end{align}
We start by constructing a homotopy between $\chi^*$ and $\mathrm{id}_1$, by finding an evolutionary vector field $\Rf \in\mathfrak{X}_{\text{evo}}(\F_1)$ with $\gh(\Rf) = \#(\Rf) = -1$ and defining a one-parameter family of morphisms of the form 
\begin{align*}
    \chi^*_s \coloneqq e^{s[(\mathcal{L}_{\Qf} - \d), \mathcal{L}_{\Rf}]},
\end{align*}
such that $\chi^*_{s=0} = \mathrm{id}_1$ and $\lim_{s\rightarrow \infty}\chi^*_{s} = \chi^*$. Note that choosing $s = -\ln\tau$ gives the usual definition of homotopy with $\tau \in [0,1]$, i.e.\ a continuous map $F(\tau) \coloneqq \chi^*_{(-\ln\tau)}$ satisfying $F(0) = \chi^*$ and $F(1) = \mathrm{id}_1$. We will nonetheless work with the parameter $s$ to keep the calculations cleaner, changing when necessary. Furthermore, we can simplify the term in the exponent by setting $\Df \coloneqq [\Qf,\Rf]$, since 
\begin{align*}
    [(\mathcal{L}_{\Qf} - \d), \mathcal{L}_{\Rf}]
    = [\mathcal{L}_{\Qf}, \mathcal{L}_{\Rf}]
    = \mathcal{L}_{[\Qf,\Rf]}
    = \mathcal{L}_{\Df},
\end{align*}
which results in
\begin{align*}
    \chi^*_s = e^{s\mathcal{L}_{\Df}}.
\end{align*}

\begin{lemma}
\label{prop:Dcommuteswithdifferential}
The Lie derivative $\mathcal{L}_{\Df}$ commutes with the differential $(\mathcal{L}_{\Qf} - \d)$.
\end{lemma}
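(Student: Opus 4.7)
The statement $[\mathcal{L}_{D_1}, (\mathcal{L}_{Q_1} - \d)] = 0$ splits naturally into two pieces: $[\mathcal{L}_{D_1}, \mathcal{L}_{Q_1}] = 0$ and $[\mathcal{L}_{D_1}, \d] = 0$, so the plan is to handle them separately.

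For the second piece, I would first observe that $D_1 = [Q_1, R_1]$ is a graded commutator of evolutionary vector fields. Since the space of evolutionary vector fields is closed under the graded commutator (this is essentially the defining property together with the Jacobi identity), $D_1$ is itself evolutionary, hence $[\mathcal{L}_{D_1}, \d] = 0$ by the very definition of evolutionarity used throughout the paper.

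For the first piece, the key identity to invoke is $\mathcal{L}_{[X,Y]} = [\mathcal{L}_X, \mathcal{L}_Y]$ (graded commutator of Lie derivatives on local forms), so that $\mathcal{L}_{D_1} = [\mathcal{L}_{Q_1}, \mathcal{L}_{R_1}]$. Then I would apply the graded Jacobi identity to the triple $(\mathcal{L}_{Q_1}, \mathcal{L}_{Q_1}, \mathcal{L}_{R_1})$, whose signs give
\begin{equation*}
    [\mathcal{L}_{Q_1}, [\mathcal{L}_{Q_1}, \mathcal{L}_{R_1}]]
    = \tfrac{1}{2}\bigl[[\mathcal{L}_{Q_1}, \mathcal{L}_{Q_1}], \mathcal{L}_{R_1}\bigr]
    = \tfrac{1}{2}\,\mathcal{L}_{[Q_1,Q_1]}\mathcal{L}_{R_1} \pm \ldots = 0,
\end{equation*}
since $[Q_1, Q_1] = 2 Q_1^2 = 0$ by the cohomological property of $Q_1$. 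Because $D_1$ has ghost number $0$, the graded commutator $[\mathcal{L}_{Q_1}, \mathcal{L}_{D_1}]$ coincides with the ordinary commutator, and we conclude $[\mathcal{L}_{Q_1}, \mathcal{L}_{D_1}] = 0$.

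Adding the two vanishings gives the claim. I do not expect any serious obstacle here: the only subtle point is bookkeeping of signs in the graded Jacobi identity, which one should verify carefully using the degrees $|\mathcal{L}_{Q_1}| = 1$ and $|\mathcal{L}_{R_1}| = -1$ (so $|\mathcal{L}_{D_1}| = 0$). Once those are in place, the argument is purely formal and uses nothing beyond $Q_1^2 = 0$ and the evolutionary nature of $Q_1$ and $R_1$.
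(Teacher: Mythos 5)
Your proof is correct and follows essentially the same route as the paper: the $\d$-part is handled by the evolutionarity of $Q_1$ and $R_1$, and the $\mathcal{L}_{Q_1}$-part by the graded Jacobi identity together with $[Q_1,Q_1]=0$. The only (cosmetic) difference is that you run the Jacobi argument on the Lie derivatives $\mathcal{L}_{Q_1},\mathcal{L}_{R_1}$, whereas the paper runs it on the vector fields themselves, obtaining $[D_1,Q_1]=0$ and then applying $\mathcal{L}_{[D_1,Q_1]}=[\mathcal{L}_{D_1},\mathcal{L}_{Q_1}]$ --- the two are equivalent via that same identity.
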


\begin{proof}
Since $\Rf$ and $\Qf$ are evolutionary vector fields, we directly have $[\mathcal{L}_{\Df},\d] = 0$. To see that $\mathcal{L}_{\Df}$ commutes with $\mathcal{L}_{\Qf}$ we compute
\begin{align*}
    &&&[\Df,\Qf] = [[\Qf,\Rf], \Qf] = - [[\Rf,\Qf], \Qf] -[[\Qf ,\Qf], \Rf]
    = - [\Df,\Qf]&\\
    &\Rightarrow& &[\Df, \Qf] = 0.
\end{align*}
where we have used the graded Jacobi identity and $[\Qf,\Qf] = 0$. Thus
\begin{align*}
    [\mathcal{L}_{\Df}, \mathcal{L}_{\Qf}]
    &= \mathcal{L}_{[\Df, \Qf]} = 0,
\end{align*}
proving the statement.
\end{proof}
We can then determine the map $h_\chi$ by rewriting the RHS of Equation (\ref{eq:chihomotopicidentity}) as
\begin{align*}
    \chi^* -\mathrm{id}_1 
    &= \int^\infty_0 \frac{\mathrm{d}}{ds}e^{s\mathcal{L}_{\Df}}
    = \int^\infty_0 e^{s\mathcal{L}_{\Df}} \mathcal{L}_{\Df} \d s\\
    &= \int^\infty_0 e^{s\mathcal{L}_{\Df}} [(\mathcal{L}_{\Qf}-\d) \mathcal{L}_{\Rf} + \mathcal{L}_{\Rf} (\mathcal{L}_{\Qf}-\d)] \d s\\
    &= (\mathcal{L}_{\Qf}-\d) \left(\int^\infty_0 e^{s\mathcal{L}_{\Df}} \mathcal{L}_{\Rf} \d s \right)
    + \left(\int^\infty_0 e^{s\mathcal{L}_{\Df}} \mathcal{L}_{\Rf} \d s \right) (\mathcal{L}_{\Qf}-\d),
\end{align*}
resulting in the following Lemma:
\begin{lemma}
\label{lem:hchiexplicitform}
The map $h_\chi:\bvbfv^\bullet_{1} \rightarrow \bvbfv^\bullet_{1}$ defined through
\begin{align*}
    h_\chi \kappa = \int^\infty_0 e^{s\mathcal{L}_{\Df}} \mathcal{L}_{\Rf} \kappa \, \d s
\end{align*}
satisfies Equation (\ref{eq:chihomotopicidentity}).
\end{lemma}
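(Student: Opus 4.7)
The plan is to verify that the stated $h_\chi$ makes the desired homotopy identity hold on the nose; in fact the argument is essentially carried out in the paragraphs leading up to the Lemma, and the Lemma is little more than a repackaging of that computation. First I would note that by construction $\chi^*_{s=0} = \mathrm{id}_1$ and, by the running hypothesis on the one-parameter family $\chi^*_s = e^{s\mathcal{L}_{\Df}}$, that $\lim_{s\to\infty}\chi^*_s = \chi^*$. Provided the relevant integrals converge, the fundamental theorem of calculus then gives
\begin{align*}
    \chi^* - \mathrm{id}_1 = \int_0^\infty \frac{\d}{\d s}\, e^{s\mathcal{L}_{\Df}}\, \d s = \int_0^\infty e^{s\mathcal{L}_{\Df}}\,\mathcal{L}_{\Df}\, \d s.
\end{align*}

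Next I would rewrite $\mathcal{L}_{\Df}$ using the identity $\mathcal{L}_{\Df} = [(\mathcal{L}_{\Qf}-\d),\mathcal{L}_{\Rf}]$, which follows from $\Df = [\Qf,\Rf]$ together with $[\mathcal{L}_{\Rf},\d]=0$ (since $\Rf$ is evolutionary). Because $\mathcal{L}_{\Rf}$ has odd total degree the graded commutator expands as a symmetric sum,
\begin{align*}
    \mathcal{L}_{\Df} = (\mathcal{L}_{\Qf}-\d)\mathcal{L}_{\Rf} + \mathcal{L}_{\Rf}(\mathcal{L}_{\Qf}-\d).
\end{align*}
Substituting this into the integral, invoking Lemma \ref{prop:Dcommuteswithdifferential} to slide $e^{s\mathcal{L}_{\Df}}$ past $(\mathcal{L}_{\Qf}-\d)$, and pulling the differential outside the integrals, one obtains
\begin{align*}
    \chi^* - \mathrm{id}_1 = (\mathcal{L}_{\Qf}-\d)\!\left(\int_0^\infty e^{s\mathcal{L}_{\Df}}\mathcal{L}_{\Rf}\, \d s\right) + \left(\int_0^\infty e^{s\mathcal{L}_{\Df}}\mathcal{L}_{\Rf}\, \d s\right)\!(\mathcal{L}_{\Qf}-\d),
\end{align*}
and recognising the bracketed operator as the proposed $h_\chi$ closes the argument.

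The main obstacle I foresee is not algebraic but analytic/conceptual: the expression for $h_\chi$ only makes sense if $\int_0^\infty e^{s\mathcal{L}_{\Df}}\mathcal{L}_{\Rf}\kappa\, \d s$ converges as an operator on $\bvbfv^\bullet_1$, and if $\lim_{s\to\infty}\chi^*_s = \chi^*$ is genuinely realised by the flow generated by $\Df$. In the concrete examples of Section \ref{sec:Examples} this will have to be ensured by choosing $\Rf$ so that $\mathcal{L}_{\Df}$ acts as a contraction onto the image of $\psi^*\circ\phi^*$ on each graded piece (essentially a deformation retraction onto the surviving fields after eliminating contractible pairs). Beyond that convergence subtlety, every step in the above derivation is a formal consequence of Cartan calculus together with $[\mathcal{L}_{\Df},(\mathcal{L}_{\Qf}-\d)]=0$, which is Lemma \ref{prop:Dcommuteswithdifferential}, so no further structural input is required.
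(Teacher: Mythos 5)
Your proposal is correct and matches the paper's own argument essentially line for line: the paper likewise writes $\chi^* - \mathrm{id}_1 = \int_0^\infty e^{s\mathcal{L}_{\Df}}\mathcal{L}_{\Df}\,\d s$, expands $\mathcal{L}_{\Df} = (\mathcal{L}_{\Qf}-\d)\mathcal{L}_{\Rf} + \mathcal{L}_{\Rf}(\mathcal{L}_{\Qf}-\d)$, and uses Lemma \ref{prop:Dcommuteswithdifferential} to pull $(\mathcal{L}_{\Qf}-\d)$ out of the integral. Your closing remark on convergence is also consistent with the paper, which defers that issue to Lemma \ref{prop:hchiconvergence} and verifies it example by example.
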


If $h_\chi$ converges on $\bvbfv^\bullet_1$, then $\chi^*$ will be the identity in the BV-BFV cohomology $H^\bullet(\bvbfv^\bullet_1)$ as desired. For this last step, the next Lemma will be useful:

\begin{lemma}
\label{prop:hchiconvergence}
If $h_\chi$ converges on $\F_1$, then it converges on the whole BV-BFV complex $\bvbfv^\bullet_1$.
\end{lemma}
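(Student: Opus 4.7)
The strategy is to reduce convergence of $h_\chi$ on the entire BV-BFV complex to convergence on a set of generators. As a graded-commutative algebra over $\Omega^\bullet(M)$, $\bvbfv^\bullet_1$ is generated by the coordinates $\varphi^j_1$ on $\F^\mathrm{lax}_1$ together with their jets and their $\delta$- and $\d$-variations; the hypothesis that $h_\chi$ converge on $\F_1$ amounts to the convergence of the defining integral when evaluated on each such coordinate. The key algebraic facts to exploit are: (a) $\mathcal{L}_{\Rf}$ and $\mathcal{L}_{\Df}$ are graded derivations that commute with $\delta$ and $\d$, since $\Rf$ and $\Df=[\Qf,\Rf]$ are evolutionary; and (b) the operator $e^{s\mathcal{L}_{\Df}}$, being the pullback along the flow of $\Df$, is a graded-algebra homomorphism for each $s$.

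First I would pull $\delta$ and $\d$ under the integral defining $h_\chi$, obtaining $\delta h_\chi \kappa = h_\chi \delta \kappa$ and $\d h_\chi \kappa = h_\chi \d \kappa$, so convergence on a coordinate $\varphi^j_1$ immediately yields convergence on $\delta \varphi^j_1$, $\d \varphi^j_1$, and all their iterated derivatives. Combining the derivation property of $\mathcal{L}_{\Rf}$ with the homomorphism property of $e^{s\mathcal{L}_{\Df}}$ gives, for any $\kappa,\lambda \in \bvbfv^\bullet_1$,
\begin{equation*}
h_\chi(\kappa\lambda) = \int_0^\infty \left[\bigl(e^{s\mathcal{L}_{\Df}}\mathcal{L}_{\Rf}\kappa\bigr)\cdot\bigl(e^{s\mathcal{L}_{\Df}}\lambda\bigr) + (-1)^{|\kappa|}\bigl(e^{s\mathcal{L}_{\Df}}\kappa\bigr)\cdot\bigl(e^{s\mathcal{L}_{\Df}}\mathcal{L}_{\Rf}\lambda\bigr)\right] ds,
\end{equation*}
expressing $h_\chi$ on a product in terms of integrands controlled by the hypothesis on the single factors, provided $e^{s\mathcal{L}_{\Df}}\kappa$ and $e^{s\mathcal{L}_{\Df}}\lambda$ remain uniformly bounded in $s$.

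The proof then closes by induction on polynomial degree in the generating coordinates. In the examples treated in the paper, $\Rf$ is chosen so that $\Df$ acts on the field coordinates with non-positive eigenvalues (physical fields are fixed, while auxiliary fields decay exponentially in $s$); hence $e^{s\mathcal{L}_{\Df}}\varphi^j_1$, and therefore any polynomial in the fields, is uniformly bounded in $s$. Combined with the hypothesised integrability of $e^{s\mathcal{L}_{\Df}}\mathcal{L}_{\Rf}\varphi^j_1$, the integrand above is absolutely integrable on any product of generators, and the induction extends convergence to all of $\bvbfv^\bullet_1$.

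The main obstacle is verifying the uniform boundedness in $s$ of $e^{s\mathcal{L}_{\Df}}$ on field coordinates; this is the only place where the generic algebraic argument requires analytic input, and it is secured by the fact that $\Rf$ is engineered to contract precisely the auxiliary sector of $\F^\mathrm{lax}_1$, a structural property to be checked case-by-case in the subsequent sections.
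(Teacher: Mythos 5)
Your overall strategy coincides with the paper's: both proofs exploit that $e^{s\mathcal{L}_{\Df}}$ is an algebra morphism and $\mathcal{L}_{\Rf}$ a derivation, so that $h_\chi$ evaluated on a composite object factors into the integrand already known to be integrable on coordinates, times factors that only need to remain bounded in $s$. One difference is stylistic but worth adopting: the paper substitutes $s = -\ln\tau$ and integrates over the compact interval $[0,1]$, after which the non-singular factors such as $e^{-\ln(\tau)\mathcal{L}_{\Df}}\frac{\delta f}{\delta \varphi^j_1}$ merely need to be well-defined for every $\tau \in [0,1]$ --- which is exactly the standing assumption from Section \ref{sec:Strategy} that $\chi^*_s$ is well-defined on $\bvbfv^\bullet_1$, including its $s\to\infty$ limit $\chi^*$. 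This makes your appeal to the spectral structure of $\Df$ (``non-positive eigenvalues,'' verified case by case in the examples) unnecessary: the lemma is stated and proved at a level of generality where the only analytic inputs are the convergence hypothesis on $\F_1$ and the well-definedness of the homotopy $\chi^*_s$, not any example-specific decay estimates.

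The substantive issue is your induction on polynomial degree in the generating coordinates: it does not reach all of $\bvbfv^\bullet_1$. Local functionals are arbitrary smooth --- not polynomial --- functions of the fields and their jets, and the examples this lemma serves contain factors like $g^{3/2}$, $\eta^{3/2} = (gE/T)^{3/2}$, and $\sqrt{E/T}$, none of which lie in the polynomial subalgebra generated by the coordinates. The paper sidesteps the induction entirely by using the chain rule for the evolutionary vector field, $\mathcal{L}_{\Rf} f = \Rf\varphi^j_1 \frac{\delta f}{\delta \varphi^j_1}$ for a general local functional $f$, so that after the morphism property is applied the singular density $\frac{1}{\tau} e^{-\ln(\tau)\mathcal{L}_{\Df}} \Rf\varphi^j_1$ --- integrable by hypothesis --- multiplies the bounded factor $e^{-\ln(\tau)\mathcal{L}_{\Df}} \frac{\delta f}{\delta \varphi^j_1}$; general local forms $f\,\delta\varphi_J \otimes \nu$ are then handled by precisely the Leibniz-plus-morphism manipulation you wrote down for products. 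Your product formula is correct and is indeed the key mechanism; replace the polynomial induction with this chain-rule step and your argument covers the whole complex.
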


\begin{proof}
Let $\kappa \in \bvbfv^\bullet_1$. Start by redefining $s = - \ln \tau$ with $\tau\in [0,1]$, such that we integrate over a compact interval instead of over $\mathbb{R}_{\geq 0}$. Performing this transformation results in
\begin{align*}
    h_\chi \kappa = \int^\infty_0 e^{s\mathcal{L}_{\Df}} \mathcal{L}_{\Rf} \kappa \,\mathrm{d}s
    = \int^0_1 e^{-\ln (\tau)\mathcal{L}_{\Df}} \mathcal{L}_{\Rf} \kappa\, \mathrm{d}(-\ln \tau)
    = \int^1_0 \frac{e^{-\ln (\tau)\mathcal{L}_{\Df}}}{\tau} \mathcal{L}_{\Rf} \kappa \,\mathrm{d}\tau.
\end{align*}
Assuming that 
\begin{align*}
    h_\chi \varphi^j_1 
    = \int^\infty_0 e^{s \mathcal{L}_{\Df}} \Rf \varphi^j_1 \, \d s
    = \int^1_0 \frac{e^{-\ln (\tau) \mathcal{L}_{\Df}}}{\tau} \Rf \varphi^j_1 \, \mathrm{d}\tau
    < \infty 
    \quad \forall \varphi^j_1 \in \F_1,
\end{align*}
we can show that $h_\chi \kappa$ converges on $\bvbfv^\bullet_1$. 
Let first $\kappa = f$ be a local functional. Writing $\Rf = \Rf\varphi^j_1 \frac{\delta }{\delta \varphi^j_1}$ then gives
\begin{align*}
    h_\chi f &= \int^1_0 \frac{e^{-\ln (\tau)\mathcal{L}_{\Df}}}{\tau} \Rf f \,\mathrm{d}\tau
    = \int^1_0 \frac{e^{-\ln (\tau)\mathcal{L}_{\Df}}}{\tau}  \left(\Rf\varphi^j_1\frac{\delta f}{\delta \varphi^j_1}\right) \mathrm{d}\tau\\
    &= \int^1_0 \left(\frac{e^{-\ln (\tau)\mathcal{L}_{\Df}}}{\tau}  \Rf\varphi^j_1\right) \left(e^{-\ln (\tau)\mathcal{L}_{\Df}}  \frac{\delta f}{\delta \varphi^j_1}\right)  \mathrm{d}\tau,
\end{align*}
where we used that $\chi^*_s = e^{s\mathcal{L}_{\Df}} = e^{-\ln (\tau) \mathcal{L}_{\Df}}$ is a morphism in the last equality. The integral over the first integrand is finite by assumption and the second integrand $e^{-\ln (\tau) \mathcal{L}_{\Df}} \frac{\delta f}{\delta \varphi^j_1} = e^{s \mathcal{L}_{\Df}} \frac{\delta f}{\delta \varphi^j_1}$ is nowhere divergent $\forall \tau \in [0,1]$, since we assume $\chi^*_s = e^{s\mathcal{L}_{\Df}}$ to be well-defined on $\bvbfv^\bullet_1$.

Consider now the local form $\kappa = f \delta \varphi_J \otimes \nu \in \bvbfv^\bullet_1$, where $J$ is a multiindex raging over the fields and their jets, and $\nu \in \Omega^\bullet(M)$ is a form on $M$. We then have
\begin{align*}
    h_\chi \kappa
    =& \int^1_0 \frac{e^{-\ln(\tau)\mathcal{L}_{\Df}}}{\tau} \mathcal{L}_{\Rf} [f \delta \varphi_J \otimes \nu] \d \tau\\
    =& \int^1_0 \frac{e^{-\ln(\tau)\mathcal{L}_{\Df}}}{\tau} [\mathcal{L}_{\Rf} f \delta \varphi_J \pm f \mathcal{L}_{\Rf}\delta \varphi_J] \otimes \nu \d \tau\\
    =&\int^1_0 \bigg[\left\{\frac{e^{-\ln(\tau)\mathcal{L}_{\Df}}}{\tau} \Rf f \right\}\delta \left(e^{-\ln(\tau)\mathcal{L}_{\Df}} \varphi_J\right) \\
    &\mp e^{-\ln(\tau)\mathcal{L}_{\Df}} f \delta \left\{\frac{e^{-\ln(\tau)\mathcal{L}_{\Df}}}{\tau} \Rf \varphi_J\right\}\bigg] \otimes \nu \d \tau.
\end{align*}
The terms in the brackets $\{\cdot\}$ are just the integrands of $h_\chi f$ and $h_\chi \varphi_J$, which converge. Since the other terms $e^{-\ln(\tau)\mathcal{L}_{\Df}} \varphi_J = e^{s\mathcal{L}_{\Df}} \varphi_J$, $e^{-\ln(\tau)\mathcal{L}_{\Df}} f= e^{s\mathcal{L}_{\Df}} f$ are well-defined $\forall \tau\in[0,1]$ and we are integrating over a compact interval, the integral converges and $h_\chi$ is well-defined on $\bvbfv^\bullet_1$.
\end{proof}

\subsection{Contractible pairs}
\label{sec:Cp}
The simplest example we can discuss is when the cohomologically trivial fields are nicely decoupled from the rest. This follows the procedure of \cite{Henneaux90Auxiliary}{, which we explicitly embed in our framework by constructing suitable chain maps to fit Definition \ref{def:BVequivalence} (see also \cite{barnich1995local} and \cite{BarnichGrigorievSemikhatovTipunin,BarnichGrigoriev11,GrigorievParent11})}.
Namely, we have an action of the form
\[
S_1[\tilde a,v,\tilde a^\dagger,v^\dagger]= S_2 [\tilde a,\tilde a^\dagger] + \frac12 (v,v),
\]
where $(\ ,\ )$ is some constant nondegenerate bilinear form on the space $V$ of the $v$ fields and $S_2$ is a solution of the master equation (w.r.t.\ the fields $\Tilde a,\Tilde a^\dagger$). We want to compare this theory with the one defined by $S_2 [a,a^\dagger]$, where we remove the tilde for clarity of the notation.

The cohomological vector field $Q_1$ of $S_1$ acts on the fields $\Tilde a,\Tilde a^\dagger$ as $Q_1$. In addition we have
\[
Q_1 v^\dagger = v,\qquad Q_1 v = 0.
\]
where we hve identified $V^*$ with $V$ using the bilinear fom. The fields $(v,v^\dagger)$ are called a \emph{contractible pair}.

Define maps $\phi$, $\psi$.
\begin{subequations}\label{e:simpleConPairMaps}\begin{align}
\phi^*a &= \Tilde a, &  \phi^*a^\dagger &=\Tilde a^\dagger, &
\phi^*v&=0, & \phi^*v^\dagger &=0,\\
& & \psi^*\Tilde a &= a,& \psi^*\Tilde a^\dagger & = a^\dagger.
\end{align}\end{subequations}

\begin{lemma}
	The composition map $\lambda^*=\phi^*\circ\psi^*$ is the identity, while
	the composition map $\chi^*=\psi^*\circ\phi^*$ acts as
\[
\chi^*a=a,\quad\chi^*a^\dagger=a^\dagger,\quad
\chi^*v=0,\quad\chi^*v^\dagger=0,
\]
and is homotopic to the identity.
\end{lemma}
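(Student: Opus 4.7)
The plan is to verify the explicit action of $\lambda^*$ and $\chi^*$ by a direct computation on the generators, and then to construct the required homotopy using the exponential-flow machinery of Section \ref{sec:Strategy}. Applying the formulas \eqref{e:simpleConPairMaps} successively to the coordinates $\tilde a, \tilde a^\dagger, a, a^\dagger, v, v^\dagger$ immediately yields $\lambda^* = \mathrm{id}_2$ as well as the stated behaviour of $\chi^*$: it preserves $a, a^\dagger$ and annihilates $v, v^\dagger$ (since $\phi^*$ already kills them).

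For the homotopy, I would introduce the evolutionary vector field $R \in \mathfrak{X}_\mathrm{evo}(\F_1)$ of ghost number $-1$ defined by $R v = -v^\dagger$ and vanishing on the remaining generators. A short sign-careful computation, using $Q_1 v^\dagger = v$ and $Q_1 v = 0$, shows that $D := [Q_1, R]$ acts as multiplication by $-1$ on both $v$ and $v^\dagger$ and annihilates $\tilde a, \tilde a^\dagger$. Hence the one-parameter family of morphisms $\chi^*_s := e^{s \mathcal{L}_D}$ rescales $v \mapsto e^{-s} v$ and $v^\dagger \mapsto e^{-s} v^\dagger$ while leaving the other generators fixed, so that $\chi^*_0 = \mathrm{id}_1$ and $\lim_{s \to \infty}\chi^*_s = \chi^*$ as required.

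Since $[D, Q_1] = 0$ by the graded Jacobi identity together with $[Q_1, Q_1] = 0$ (this is just the BV specialisation of Lemma \ref{prop:Dcommuteswithdifferential}, without the $\mathrm{d}$ term), the standard manipulation
\[
\chi^* - \mathrm{id}_1 = \int_0^\infty \tfrac{\d}{\d s} e^{s \mathcal{L}_D} \, \d s = \mathcal{L}_{Q_1} h_\chi + h_\chi \mathcal{L}_{Q_1},
\]
with $h_\chi := \int_0^\infty e^{s \mathcal{L}_D} \mathcal{L}_R \, \d s$, produces the desired chain homotopy. Convergence on the full BV complex reduces, along the lines of Lemma \ref{prop:hchiconvergence}, to convergence on generators: but $\mathcal{L}_R$ only produces a nonzero integrand on monomials involving $v$ or $v^\dagger$, and such contributions are damped by the factor $e^{-s}$ coming from $e^{s\mathcal{L}_D}$, so the integral is finite after the change of variable $s = -\ln\tau$ on $[0,1]$.

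The only step that requires genuine care is tracking the Koszul signs in $[Q_1, R]$ and in the homotopy manipulation, since both $Q_1$ and $R$ are odd evolutionary vector fields; all the rest is a direct application of the general framework set up in Section \ref{sec:Strategy}, specialised to the (boundary-free) BV setting in which this lemma lives.
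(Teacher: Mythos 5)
Your proposal is correct and follows essentially the same route as the paper's own proof: you choose the same homotopy generator $R$ (acting only by $Rv=-v^\dagger$), obtain the same flow $\chi^*_s=e^{s\mathcal{L}_{D}}$ with $D=[Q_1,R]$ acting as $-1$ on the contractible pair and as $0$ on $a,a^\dagger$, and use the same chain homotopy $h_\chi=\int_0^\infty e^{s\mathcal{L}_{D}}\mathcal{L}_{R}\,\d s$, whose convergence reduces to the exponentially damped generator computation $h_\chi v=-v^\dagger$ exactly as in the paper. There are no gaps; your remark about tracking Koszul signs in $[Q_1,R]$ is the only point of genuine care, and you handle it as the paper does.
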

\begin{proof}
	One can directly check that $\lambda^*$ is the identity.
	In order to show that $\chi^*$ is the identity in cohomology we define a family of maps $\chi^*_s = e^{s\mathcal{L}_{\Df}}$, where $\Df = [\Qf,\Rf]$, and show $\lim_{s\rightarrow \infty} \chi^*_s = \chi^*$. 
	
	We choose $\Rf$ to act as 
	\[
	\Rf a=0,\quad\Rf a^\dagger =0,\quad
	\Rf v=-v^\dagger,\quad \Rf v^\dagger = 0.
	\]
	We can then compute 
	\[
	D_1 v = (Q_1 R_1 + R_1 Q_1) v = Q_1 R_1 v =  - v \qquad  \Longleftrightarrow \qquad D_1^{k} v = (-1)^k v, \quad \forall k\geq 1,
	\]
	and
	\[
	D_1 v^\dag = R_1 v = -v^\dag, \qquad \Longleftrightarrow \qquad D_1^{k} v^\dag = -v^\dag,
	\]
	which yield
	\[
	\chi_s^* v = e^{s\mathcal{L}_{D_1}} v = v+ \sum_{k=1}^\infty \frac{s^k}{k!} D_1^{k} v= e^{-s}v \stackrel{s\to \infty}{\longrightarrow} \chi_\infty ^* v = 0,
	\]
	and similarly $\chi_{\infty}^*v^\dag = 0$.
	
	On the other hand,
	\[
	D_1 a = D_1 a^\dag = 0 \Longrightarrow e^{s\mathcal{L}_{D_1}}a = a, \quad e^{s\mathcal{L}_{D_1}}a^\dag = a^\dag, \quad \forall s, 
	\]
	so that $\lim_{s\to\infty}\chi_s^* \equiv \chi^*$.
	
	Furthermore the map $h_\chi$ converges on all the fields, as $h_\chi a = h_\chi a^\dagger = h_\chi v^\dagger$ trivially and
	\[
	h_\chi v = \int^\infty_0 e^{s\mathcal{L}_{\Df}} \mathcal{L}_{\Rf} v \, \d s
	= - \int^\infty_0 e^{s\mathcal{L}_{\Df}} v^\dagger \, \d s
	= - \int^\infty_0 e^{-s} v^\dagger \, \d s
	= - v^\dagger.
	\]
\end{proof}

A direct consequence of this Lemma, together with the facts that $\phi^*S_1=S_2$ and $\phi^* \omega_1= \omega_2$ for the canonical BV forms, is
\begin{theorem}
	The BV theories defined by $S_1$ and $S_2$ are BV equivalent.
\end{theorem}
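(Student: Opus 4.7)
The plan is to check the three conditions of Definition \ref{def:BVequivalence} in turn; given the preceding lemma and the stated identities $\phi^*S_1=S_2$, $\phi^*\omega_1=\omega_2$, essentially no new calculation is needed and the proof is a packaging exercise.

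First, I would verify the chain map property. Both $\phi^*$ and $\psi^*$ are pullbacks along morphisms of graded manifolds, so they automatically commute with $\delta$; it therefore suffices to check $\phi^*Q_1 = Q_2\phi^*$ and $\psi^*Q_2 = Q_1\psi^*$ on generators. Using that $Q_1$ restricted to $(\Tilde a,\Tilde a^\dagger)$ coincides with $Q_2$ on $(a,a^\dagger)$ under the identification implicit in \eqref{e:simpleConPairMaps}, and that $Q_1 v^\dagger = v$, $Q_1 v = 0$, this is a direct substitution: in particular $\phi^*(Q_1 v^\dagger) = \phi^* v = 0 = Q_2(\phi^*v^\dagger)$, and similarly for $v$.

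Second, the cohomological conditions $\phi^*[\omega_1]=[\omega_2]$ and $\phi^*[S_1]=[S_2]$ are in fact strict equalities. Since $\phi^*v = \phi^*v^\dagger = 0$, the contractible-pair contribution $\tfrac{1}{2}(v,v)$ to $S_1$ and the canonical $\delta v^\dagger\,\delta v$ piece of $\omega_1$ both pull back to zero, leaving precisely $S_2$ and $\omega_2$. So this condition comes for free, and the resulting equivalence is actually strong in the sense of the same definition.

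The remaining content is the quasi-isomorphism property of $\phi^*$, which is exactly the preceding lemma: $\lambda^* = \phi^*\circ\psi^*$ is literally the identity on $\mathfrak{BV}^\bullet_2$, while $\chi^* = \psi^*\circ\phi^*$ is homotopic to the identity on $\mathfrak{BV}^\bullet_1$ via the explicit $h_\chi$ built from $R_1$. Consequently $\phi^*$ and $\psi^*$ descend to mutually inverse isomorphisms on cohomology, and all clauses of Definition \ref{def:BVequivalence} are satisfied. The only point that deserves mild care --- and the ``hardest'' step, insofar as anything here is hard --- is to promote the homotopy, which the lemma only exhibited on generators ($h_\chi v = -v^\dagger$, zero elsewhere), to the full BV complex $\Omega^\bullet_{\mathrm{loc}}(\F_1)$; this however is precisely the content of Lemma \ref{prop:hchiconvergence}, whose hypotheses are trivially met in the contractible-pair setting since the exponential $e^{s\mathcal{L}_{D_1}}$ acts as $e^{-s}$ on $v,v^\dagger$ and as the identity on $(\Tilde a,\Tilde a^\dagger)$.
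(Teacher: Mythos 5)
Your proposal is correct and takes essentially the same route as the paper, which states the theorem as a direct consequence of the preceding lemma ($\lambda^*=\mathrm{id}$, $\chi^*$ homotopic to the identity via the explicit $h_\chi$, which converges on the fields and hence, by the argument of Lemma \ref{prop:hchiconvergence}, on all of $\Omega^\bullet_{\mathrm{loc}}(\F_1)$) together with the strict identities $\phi^*S_1=S_2$ and $\phi^*\omega_1=\omega_2$. The one inaccuracy is your parenthetical claim that the equivalence is \emph{strong}: Definition \ref{def:BVequivalence} reserves that term for the case where $\phi$ is a symplectomorphism preserving the action, whereas here $\phi$ is only a symplectic embedding onto the submanifold $\{v=v^\dagger=0\}$ --- the spaces of fields $\F_1$ and $\F_2$ are not isomorphic --- so one obtains BV equivalence but not strong BV equivalence.
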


\subsubsection{More general contractible pairs}
As in \cite{Henneaux90Auxiliary} we may consider a situation where $S_1$ depends on fields $\Tilde a,w,\Tilde a^\dagger,\Tilde w^\dagger$, where $(w,w^\dagger)$ are not a contractible pair on the nose but satisfy the condition that
\[
(Q_1 w^\dagger)|_{w^\dagger=0} = 0
\]
has a unique solution $w=w(\Tilde a,\Tilde a^\dagger)$. We can then get closer to the previously discussed case by defining\footnote{We may also think of this construction as the semiclassical approximation of a BV pushforward \cite[Section 2.2.2]{cattaneo2018perturbative} that gets rid of the $(w,w^\dagger)$ variables. Indeed, the above condition may be read as the statement that setting $w^\dagger$ to zero is a good gauge fixing.}
\[
v = Q_1 w^\dagger,\quad v^\dagger = w^\dagger.
\]
We have indeed that $Q_1 v^\dagger = v$ and $Q_1 v = 0$. Moreover, the above condition implies that the change of variables $(w,w^\dagger)\mapsto (v,v^\dagger)$ is invertible (near $w^\dagger=0$, or everywhere if $w^\dagger$ is odd) and that the submanifold defined by the constraints $v=0$ and $v^\dagger =0$ is symplectic.

The above strategy then works, in the absence of boundary, with some modifications. Namely, the fact that now $v$ and $v^\dagger$ are not Darboux coordinates requires modifying the map of Equations \eqref{e:simpleConPairMaps}. In turn, the transformation $\Rf$ will get a nontrivial action on the fields $(a,a^\dagger)$.

All the examples we discuss below belong to this class of contractible pairs. As we will see, the modifications required in the case of classical mechanics and Yang--Mills theory are minimal, whereas those required in the case of 1D parametrisation invariant theories are more consistent---due to the fact that pairing of the $v$ fields will depend on the $a$ fields. In addition, in all the examples we show how to extend this construction to lax theories in order to encompass the presence of boundaries.

\subsection{Classical mechanics on a curved background}
\label{sec:CM}
We start by discussing the example of classical mechanics on a curved background as a warm-up exercise. We take the source manifold to be a time interval $I = [a,b] \subset \mathbb{R}$ and some smooth Riemannian manifold $(M,\gf)$ as target. We will denote time derivatives with a dot and use tildes to distinguish fields between the different formulations of the theory.

We can formulate the theory by considering a \say{matter} field $\qs \in F_2 \coloneqq C^\infty(I,M)$, and the metric tensor on the target will depend on the map $\qs$. We introduce the shorthand notation $\gs \coloneqq \gs(\qs)$. The classical action functional is given by
\begin{align*}
    S_2[\qs] = \int_I \frac{1}{2} \gs (\dqs,\dqs) \dt.
\end{align*}
This is usually called the \emph{second-order formulation}.

On the other hand, we can phrase the theory in its \emph{first-order formulation}, by { considering again a map $\qf\colon I \to M$, together with an \say{auxiliary} field\footnote{Obviously the pair $(q,p)$ is a map from $I$ to $T^*M$.} $\pf \in C^\infty(I,\qf^*T^*M)$}  and the classical action functional
\begin{align*}
    S_1[\qf, \pf] = \int_I \left(\langle\pf, \dqf \rangle - \frac{1}{2} \hf(\pf,\pf)\right) \dt,
\end{align*}
where $\hf \coloneqq \gf^{-1}$ denotes the inverse of the target metric.

For ease of notation, we will introduce the \emph{musical isomorphisms} 
\begin{align*}
&\gf^\flat\colon TM \to T^*M, & & \hf^\sharp\colon T^*M \to TM \\
&\gf^\flat(v)(\cdot) = \gf(v,\cdot) & & \hf^\sharp(\alpha)(\cdot) = g^{-1}(\alpha,\cdot),
\end{align*}
and clearly $\gf^\flat\circ \hf^\sharp = \hf^\sharp\circ \gf^\flat = \mathrm{id}$.

We recall the rather obvious and well-known
\begin{proposition}
The first-order and second-order formulations of classical mechanics with a background metric are classically equivalent. 
\end{proposition}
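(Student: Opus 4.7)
The plan is to verify the two conditions of Definition \ref{def:classicalequivalence} by explicitly computing the critical loci on both sides and exhibiting an isomorphism between them, using the auxiliary-field argument sketched in the remark following Definition \ref{def:classicalequivalence}. Neither theory carries a gauge symmetry, so the distributions $D_1, D_2$ are trivial and the second condition will hold automatically once the first is established.

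First, I would compute the Euler--Lagrange equations. For $S_2$, the variation $\delta S_2 = 0$ yields the geodesic equation associated to $\gf$, i.e.\ $\nabla_{\dqs}\dqs = 0$, where $\nabla$ is the Levi--Civita connection of $\gs$. For $S_1$, independent variation with respect to $\qf$ and $\pf$ produces two equations: the $\pf$-equation reads $\dqf = \hf^\sharp(\pf)$ (equivalently $\pf = \gf^\flat(\dqf)$), and the $\qf$-equation reads $\dpf = -\tfrac{1}{2}\partial_{\qf}\hf(\pf,\pf)$. Substituting the algebraic relation $\pf = \gf^\flat(\dqf)$ into the $\qf$-equation, one recovers, after a standard computation of $\partial_q \hf$ in terms of $\partial_q \gf$, precisely the geodesic equation for $\qf$.

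Next, following the remark after Definition \ref{def:classicalequivalence}, I would introduce the subset
\begin{align*}
    C_1 \coloneqq \{(\qf,\pf) \in F_1 \mid \pf = \gf^\flat(\dqf)\} \subset F_1,
\end{align*}
defined by the $\pf$-equation of motion (one of the $\text{EL}_1=0$ equations), and define $\phi_{cl}\colon C_1 \to F_2$ by $\phi_{cl}(\qf,\pf) = \qf$ (reading $\qs$ as $\qf$). This is manifestly a bijection, with inverse $\qs \mapsto (\qs,\gf^\flat(\dqs))$. A short substitution shows
\begin{align*}
    S_1\big|_{C_1} = \int_I \left(\gf(\dqf,\dqf) - \tfrac{1}{2}\gf(\dqf,\dqf)\right)\dt = \int_I \tfrac{1}{2}\gf(\dqf,\dqf)\dt = \phi_{cl}^* S_2,
\end{align*}
so the actions agree on $C_1$, which in particular guarantees that the remaining Euler--Lagrange equations of $S_1$ match those of $S_2$ under $\phi_{cl}$. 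Hence $\phi_{cl}$ restricts to a bijection $\mathcal{EL}_1 \simeq \mathcal{EL}_2$.

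Finally, since neither theory possesses a local symmetry distribution ($D_1 = D_2 = 0$), the second clause of Definition \ref{def:classicalequivalence} is vacuous. No real obstacle is expected; the only point requiring minor care is the verification that the auxiliary field $\pf$ is genuinely algebraically eliminable (which holds because $\hf$ is invertible as a fibre metric), so that $C_1$ is indeed cut out by a nondegenerate equation and $\phi_{cl}$ is a genuine isomorphism of spaces of fields rather than merely a set-theoretic map.
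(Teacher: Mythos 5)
Your proposal is correct and follows essentially the same route as the paper: solve the $\pf$-equation $\pf = \gf^\flat(\dqf)$ to cut out $C_1$, define $\phi_{cl}\colon C_1 \to F_2$ by forgetting $\pf$, and verify $S_1\vert_{C_1} = \phi_{cl}^* S_2$, invoking the auxiliary-field criterion from the remark after Definition \ref{def:classicalequivalence}. The extra details you supply (the explicit geodesic form of the remaining equations and the observation that $D_1 = D_2 = 0$ makes the distribution condition vacuous) are consistent with, and slightly more explicit than, the paper's argument.
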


\begin{proof}
Following Definition \ref{def:classicalequivalence}, we start by solving the EL equation of the first-order theory corresponding to the auxiliary field $\pf$, we have
\begin{align*}
    \delta_{\pf} \clSf[\qf,\pf] = \int_I \langle\left(\dqf - \hf^\sharp(\pf)\right),\delta \pf\rangle \, \dt,
\end{align*}
which results in
\begin{align*}
    \pf = \gf^\flat(\dqf).
\end{align*}

Let $C_1$ be the set of such solutions and define the map $\phi_{cl}: C_1 \rightarrow F_2$ through $\phi^*_{cl} \qs = \qf$. Then the restriction of $\clSf[\qf,\pf]$ to $C_1$ coincides with the pullback of $\clSs$ via $\phi^*_{cl}$
\begin{align*}
    \clSf[\qf,\pf]\big \vert_{C_1}
    &= \clSf[\qf,\pf =  \gf^\flat(\dqf)] = \int_I \frac{1}{2} \langle\gf^\flat(\dqf), \dqf\rangle \dt 
    = \phi^*_{cl}\int_I \frac{1}{2} \gs(\dqs,\dqs) \dt
    = \phi^*_{cl} \clSs[\qs],
\end{align*}
hence the two formulations are classical equivalent.
\end{proof}

Both of these theories can be extended to the lax BV-BFV formalism. Note that, as there are no gauge symmetries in these models, there is no need to introduce ghost fields. We start with the second-order formulation: 

\begin{proposition/definition}
The data $$\mathfrak{F}^\mathrm{lax}_{2CM} = (\FsCM,\ts^\bullet,\Ls^\bullet,\Qs)$$ 
where
\begin{align*}
    \FsCM = T^*[-1]C^\infty(I,M).
\end{align*}
together with $\ts^\bullet \in \Omega^{1,\bullet}_\mathrm{loc}(\FsCM)$ and $\Ls^\bullet \in \Omega^{0,\bullet}_\mathrm{loc}(\FsCM)$, which are given by
\begin{align*}
    &\ts^\bullet
    = \langle\qs^\dagger, \delta \qs\rangle \dt + \langle\gs^\flat(\dqs), \delta \qs\rangle,\\
    &\Ls^\bullet = \frac{1}{2} \gs(\dqs,\dqs) \dt,
\end{align*}
and the cohomological vector field $\Qs \in \mathfrak{X}_{\text{evo}}(\FsCM)$
\begin{align*}
    &\Qs \qs = 0,&
    &\Qs \qs^\dagger = +\frac{1}{2}\partial\gs(\dqs,\dqs) - \frac{d}{dt}(\gs^\flat)(\dqs) - \gs^\flat(\ddqs),
\end{align*}
where\footnote{Observe that, in a local chart, we have $\partial\gf(\dqf,\dqf)=\partial_\rho \gf_{\mu\nu}\dqf^\mu \dqf^\nu$, while $(\frac{d}{dt}\gf^\flat)(\dqf)=\dqf^\rho \partial_\rho\gf_{\mu\nu}\dqf^\mu\dqf^\nu$.} $\partial \gs \coloneqq \frac{\delta \gs}{\delta \qs}$,
defines a lax BV-BFV theory.
\end{proposition/definition}

\begin{proof}
We need to check Equations \eqref{eq:LaxStructure} at codimension 0 and 1, namely
\begin{align*}
    &\iota_\Qs \vps^0 = \delta \Ls^0 + \d \ts^1,&
    &\iota_\Qs \iota_\Qs \vps^0 = 2\d \Ls^1,\\
    &\iota_\Qs \vps^1 = \delta \Ls^1,&
    &\iota_\Qs \iota_\Qs \vps^1 = 0.
\end{align*}
Note that the Lagrangian only has a top-form term $L^\bullet_1 = L^0_1$. The only non-trivial equation is $\iota_Q \vps^0 = \delta \Ls^0 + \d \ts^1$, since $\Ls^1 = \iota_\Qs \iota_\Qs \vps^0 = \iota_\Qs \vps^1 = 0$. We compute 
\begin{align*}
    \iota_\Qs \vps^0 
    &= \langle \Qs \qs^\dagger, \delta \qs \rangle \dt
    = \langle\left(\frac{1}{2}\partial\gs(\dqs,\dqs) - (\frac{d}{dt} \gs^\flat)(\dqs) - \gs^\flat(\ddqs)\right),\delta \qs \rangle \dt \\
    \delta \Ls^0 &= \frac{1}{2} \delta \gs(\dqs,\dqs) \dt + \gs(\delta \dqs,\dqs) \dt\\
    &= \frac{1}{2} \langle \partial \gs(\dqs,\dqs) , \delta \qs \rangle \dt 
    + \frac{\d}{\dt}(\gs(\dqs,\dqs)) \dt
    -\frac{\d}{\dt}(\gs)(\delta\qs,\dqs) \dt
    - \gs(\delta\qs,\ddqs)\dt\\
    &= \frac{1}{2} \langle \partial \gs(\dqs,\dqs) , \delta \qs \rangle \dt 
    - \d \langle \gs^\flat(\dqs),\delta \qs \rangle
    - \langle\frac{d}{dt} \gs^\flat(\dqs),\delta \qs\rangle \dt
    - \langle \gs^\flat(\ddqs)),\delta \qs \rangle \dt \\
    &=- \d \langle \gs^\flat(\dqs),\delta \qs \rangle
    + \left\langle \frac{1}{2} \partial \gs(\dqs,\dqs) - \frac{d}{dt} \gs^\flat(\dqs) -\gs^\flat(\ddqs)) , \delta \qs \right\rangle \dt \\
    &= - \d \ts^1 + \iota_\Qs \vps^0,
\end{align*}
where we used $|\dt| = -1$ and $|\delta \qs| = -1$.
\end{proof}

In the case of the first-order theory we have the lax BV-BFV theory:

\begin{proposition/definition}
The data $$\mathfrak{F}^\mathrm{lax}_{1CM} = (\FfCM,\tf^\bullet,\Lf^\bullet,\Qf)$$ where
\begin{align*}
    \FfCM = T^*[-1](C^\infty(I,M) \times C^\infty(I,M) ),
\end{align*}
together with $\tf^\bullet \in \Omega^{1,\bullet}_\mathrm{loc}(\FfCM)$ and $\Lf^\bullet \in \Omega^{0,\bullet}_\mathrm{loc}(\FfCM)$, which take the forms
\begin{align*}
    &\tf^\bullet
    = (\langle\qf^\dagger, \delta \qf\rangle + \langle\pf^\dagger, \delta \pf\rangle) \dt + \langle\pf, \delta \qf\rangle,\\
    &\Lf^\bullet = \left(\langle\pf, \dqf \rangle - \frac{1}{2} \hf (\pf,\pf)\right) \dt,
\end{align*}
and the cohomological vector field $\Qf \in \mathfrak{X}_{\text{evo}}(\FfCM)$
\begin{align*}
    &\Qf \qf = \Qf \pf = 0,&
    &\Qf \qf^\dagger = - \dpf - \frac{1}{2} \partial\hf(\pf,\pf),&
    &\Qf \pf^\dagger = \dqf- \hf^\sharp(\pf),
\end{align*}
with {$\partial \hf \coloneqq \delta \hf / \delta \qf$}, defines a lax BV-BFV theory.
\end{proposition/definition}

\begin{proof}
Again we need to check Equations \eqref{eq:LaxStructure} at codimension 0 and 1. Explicitly we have
\begin{align*}
    &\iota_\Qf \vpf^0 = \delta \Lf^0 + \d \tf^1,&
    &\iota_\Qf \iota_\Qf \vpf^0 = 2\d \Lf^1,\\
    &\iota_\Qf \vpf^1 = \delta \Lf^1,&
    &\iota_\Qf \iota_\Qf \vpf^1 = 0.
\end{align*}
As in the second-order theory, the Lagrangian only has a top-form component. The only non-trivial equation is $\iota_Q \vpf^0 = \delta \Lf^0 + \d \tf^1$, since $\Lf^1 = \iota_\Qf \iota_\Qf \vpf^0 = \iota_\Qf \vpf^1 = 0$. We compute
\begin{align*}
    \iota_\Qf \vpf^0 
    &= (\langle \Qf \qf^\dagger , \delta \qf \rangle+ \langle\Qf \pf^\dagger, \delta \pf\rangle) \dt
    = -\langle \dpf  + \frac{1}{2} \partial\hf (\pf,\pf), \delta \qf \rangle\dt ,
    + \langle (\dqf- \hf^\sharp(\pf)), \delta \pf \rangle \dt,\\
    \delta \Lf^0 &= 
    \left(\langle\pf ,\delta \dqf\rangle- \frac{1}{2} \langle \partial \hf (\pf,\pf), \delta q\rangle\right) \dt
    + \langle \dqf- \hf^\sharp(\pf), \delta \pf \rangle\dt\\
    &= \frac{\d}{\dt}\langle\pf, \delta \qf \rangle \dt
    -\langle\dpf  + \frac{1}{2} \partial \hf( \pf,\pf),\delta \qf \rangle  \dt 
    + \langle \dqf- \hf^\sharp(\pf), \delta \pf \rangle\dt\\\
    &= - \d \tf^1 + \iota_\Qf \vpf^0.
\end{align*}
\end{proof}

We now present the main theorem of this section, together with an outline of its proof.\footnote{{For a similar conclusion to the one presented here, see \cite[Section 3.2]{GrigorievParent11}. Our method allows to additionally construct explicit chain maps that implement the quasi-isomorphism.}} The computational details and the various lemmata are presented afterwards.

\begin{theorem}
\label{theorem:laxequivalenceCM}
The lax BV-BFV theories $\mathfrak{F}^\mathrm{lax}_{1CM}$ and $\mathfrak{F}^\mathrm{lax}_{1CM}$ of the first-order and second-order formulations of classical mechanics with a background metric are lax BV-BFV equivalent.
\end{theorem}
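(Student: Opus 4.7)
The plan is to follow the general strategy of Section~\ref{sec:Strategy}, exploiting the generalised contractible pair structure of Section~\ref{sec:Cp}. The auxiliary field $\pf$ of the first-order theory is on-shell determined by $\pf = \gf^\flat(\dqf)$, and the pair $(\pf,\pf^\dagger)$ forms a generalised contractible pair: the combination $v\coloneqq\Qf\pf^\dagger = \dqf-\hf^\sharp(\pf)$ is a local coordinate that vanishes precisely when $\pf$ takes its on-shell value. I will define explicit morphisms of graded manifolds $\phi\colon\FsCM\to\FfCM$ and $\psi\colon\FfCM\to\FsCM$ on the generators by
\begin{align*}
    \phi^*\qf &= \qs, & \phi^*\pf &= \gs^\flat(\dqs), \\
    \phi^*\qf^\dagger &= \qs^\dagger, & \phi^*\pf^\dagger &= 0, \\
    \psi^*\qs &= \qf, & \psi^*\qs^\dagger &= \qf^\dagger + C,
\end{align*}
where $C$ is an antifield correction of ghost number $-1$ to be determined.

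The chain-map property for $\phi^*$ is straightforward: on the generators the only nontrivial identity is $\phi^*\Qf\qf^\dagger=\Qs\qs^\dagger$, which reduces to the algebraic identity $\phi^*\partial\hf(\pf,\pf)=-\partial\gs(\dqs,\dqs)$ following from $\hf\gf=\mathrm{id}$. A direct calculation furthermore shows $\phi^*\tf^\bullet=\ts^\bullet$ and $\phi^*\Lf^\bullet=\Ls^\bullet$ on the nose (so all corrections $\beta^\bullet_2,\zeta^\bullet_2,f^\bullet_2$ vanish), and by Lemma~\ref{prop:phipsichainmapfromQ} $\phi^*$ is then a chain map on the BV-BFV complexes. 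The determination of $C$ so that $\psi^*$ is also a chain map is the key computation. Introducing $u\coloneqq\pf-\gf^\flat(\dqf)$ and $\mu\coloneqq\gf^\flat(\pf^\dagger)$, the identity $\Qf\mu=-u$ exhibits $u$ as $\Qf$-exact. Expanding the discrepancy $\psi^*\Qs\qs^\dagger-\Qf\qf^\dagger$ by substituting $\pf=\gf^\flat(\dqf)+u$ into $\partial\hf(\pf,\pf)$ and using the graded derivation rules for $\Qf$ on the symmetric bilinear $\partial\hf(\cdot,\cdot)$, I expect to obtain
\[
    C = -\dot\mu - \tfrac{1}{2}\partial\hf(u,\mu) - \partial\hf(\gf^\flat(\dqf),\mu),
\]
after which $\psi^*$ is a chain map and the transformations of $\ts^\bullet,\Ls^\bullet$ differ from $\tf^\bullet,\Lf^\bullet$ by $(\mathcal{L}_{\Qf}-\d)$-exact and $\delta$-exact terms; in particular $\psi^*\Ls^0-\Lf^0=\tfrac{1}{2}\gf(\Qf\pf^\dagger,\Qf\pf^\dagger)\,\dt=\mathcal{L}_{\Qf}\bigl(\tfrac{1}{2}\gf(\pf^\dagger,\Qf\pf^\dagger)\,\dt\bigr)$, so one may take $\zeta^0_1=\tfrac{1}{2}\gf(\pf^\dagger,\Qf\pf^\dagger)\,\dt$.

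To verify the quasi-inverse property, $\lambda^*=\phi^*\circ\psi^*$ equals $\mathrm{id}_2$ on the nose because $\phi^*\mu=\phi^*u=0$ force $\phi^*C=0$. For $\chi^*=\psi^*\circ\phi^*$, which differs from the identity on $\pf,\pf^\dagger$ and $\qf^\dagger$, I construct the homotopy via the evolutionary vector field $\Rf$ with $\Rf\pf=\mu$ and vanishing on all other generators. A direct calculation then yields $\Df\coloneqq[\Qf,\Rf]$ acting as $\Df\qf=0$, $\Df\pf=-u$, $\Df\pf^\dagger=-\pf^\dagger$, and $\Df\qf^\dagger$ decomposing into $\Df$-eigenvectors with eigenvalues $-1$ and $-2$; this ensures $\chi^*_s=e^{s\mathcal{L}_{\Df}}$ converges to $\chi^*$ as $s\to\infty$, and the homotopy operator $h_\chi=\int_0^\infty e^{s\mathcal{L}_{\Df}}\mathcal{L}_{\Rf}\,\d s$ reduces to the finite expression $h_\chi\pf=\int_0^\infty e^{-s}\mu\,\d s=\mu$ on the only generator it acts nontrivially on, so Lemma~\ref{prop:hchiconvergence} yields convergence on the full BV-BFV complex. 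The main technical obstacle will be the explicit computation of $C$ and the matching of the lax data corrections $\beta^\bullet_1,\zeta^\bullet_1,f^\bullet_1$: this requires careful tracking of graded signs in the derivations of $\Qf$ on $\partial\hf(\cdot,\cdot)$ and of the horizontal derivative contributions needed to separately match transformations at codimension $0$ and $1$.
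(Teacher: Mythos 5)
Your proposal is correct and takes essentially the same approach as the paper's proof (Lemmata \ref{lem:philaxCM}--\ref{lem:chiIdCohomologyCM}): your maps $\phi^*$, $\psi^*$, the correction $C$ (which, via the identity $\partial\hf(\gf^\flat\,\cdot\,,\gf^\flat\,\cdot\,)=-\partial\gf(\cdot,\cdot)$, coincides exactly with the paper's $\psi^*\qs^\dagger-\qf^\dagger$), the homotopy generator $\Rf\pf=\gf^\flat(\pf^\dagger)$ with its $\Df$-eigenvalue structure ($-1$ on $\pf^\dagger$ and on part of $\Df\qf^\dagger$, $-2$ on $\partial\gf(\pf^\dagger,\Qf\pf^\dagger)$), and $h_\chi\pf=\gf^\flat(\pf^\dagger)$ all match the paper's computations. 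The pieces you defer are precisely the paper's $\betaf^\bullet=\tfrac{1}{2}\gf(\pf^\dagger,\delta\pf^\dagger)\,\dt+\gf(\pf^\dagger,\delta\qf)$ and $\ff^\bullet=-\tfrac{1}{2}\gf(\pf^\dagger,\Qf\pf^\dagger)\,\dt$, with which your choice $\zeta^0_1=\iota_{\Qf}\betaf^0=\tfrac{1}{2}\gf(\pf^\dagger,\Qf\pf^\dagger)\,\dt$ is consistent.
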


\begin{proof}
We need to check all the conditions from Definition \ref{def:laxequivalence}. The existence of two maps $\phi$, $\psi$ with the desired properties is presented in Lemmata \ref{lem:philaxCM} and \ref{lem:psilaxCM} respectively, where we also show that the pullback maps $\phi^*$, $\psi^*$ are chain maps w.r.t. the BV-BFV complexes $\bvbfv^\bullet_i$, and that they map $(\theta^\bullet_i,L^\bullet_i)$ in the desired way. 

Furthermore, we need to show that the respective BV-BFV complexes are quasi-isormophic. The composition map $\lambda^* = \phi^* \circ \psi^*$ is shown to be the identity in Lemma \ref{lem:lambdaCM}. In Lemma \ref{lem:chiCM}, we prove that the composition map $\chi^* = \psi^* \circ \phi^*$,
is homotopic to the identity by following the strategy presented in Section \ref{sec:Strategy}.

In Lemma \ref{lem:chiIdCohomologyCM} we demonstrate that $\chi^*$ is the identity in cohomology by showing that the map
$$h_\chi \varphi^j_1 = \int_0^\infty e^{s\mathcal{L}_{\Df}} \mathcal{L}_{\Rf} \varphi^j_1 ds$$ 
satisfying $\chi^* - \mathrm{id}_{1} = (\mathcal{L}_{\Qf} - \d) h_\chi + h_\chi (\mathcal{L}_{\Qf} - \d)$ (cf.\ Lemma \ref{lem:hchiexplicitform}) converges, therefore proving that the two lax BV-BFV theories in question have isomorphic BV-BFV cohomologies
\begin{align*}
    H^\bullet(\bvbfv_{1CM}) \simeq H^\bullet(\bvbfv_{2CM})
\end{align*}
and thus that they are lax BV-BFV equivalent.
\end{proof}

Let us now look at the computations in detail. We start with the chain maps:

\begin{lemma}
\label{lem:philaxCM}
Let $\phi: \FsCM \rightarrow \FfCM$ be the map defined through
\begin{align*}
    &\phi^* \qf = \qs,&
    &\phi^* \pf = {\phi^* \pf = \gs^\flat(\dqs),}\\
    &\phi^* \qf^\dagger = \qs^\dagger,&
    &\phi^* \pf^\dagger = 0.
\end{align*}
Its pullback map $\phi^*$ is a chain map w.r.t. $(\mathcal{L}_{Q_i}-\d)$ and maps the lax BV-BFV data of the first-order theory as
\begin{align*}
    &\phi^* \tf^\bullet = \ts^\bullet,&
    &\phi^* \Lf^\bullet = \Ls^\bullet.
\end{align*}
\end{lemma}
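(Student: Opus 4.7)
The plan is to verify in turn the three assertions: (i) $\phi^*$ intertwines the cohomological vector fields, (ii) $\phi^* \tf^\bullet = \ts^\bullet$, and (iii) $\phi^* \Lf^\bullet = \Ls^\bullet$. By Lemma \ref{prop:phipsichainmapfromQ}, the chain map property with respect to $(\mathcal{L}_{Q_i}-\d)$ reduces to showing $\phi^* \Qf \varphi = \Qs \phi^* \varphi$ for each generator $\varphi \in \{\qf,\pf,\qf^\dagger,\pf^\dagger\}$; the equalities for $\d$ and $\delta$ are automatic since $\phi^*$ is a pullback.

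First I would check the four generator identities. On $\qf$ and $\pf$ both sides are zero (since $\Qf\qf=\Qf\pf=0$ and $\Qs\qs=0$, while $\Qs$ acting on any function of $\qs$ vanishes, in particular on $\gs^\flat(\dqs)$). On $\pf^\dagger$ we compute
\begin{equation*}
  \phi^* \Qf \pf^\dagger = \phi^*(\dqf - \hf^\sharp(\pf)) = \dqs - \hs^\sharp(\gs^\flat(\dqs)) = 0,
\end{equation*}
using $\hs^\sharp \circ \gs^\flat = \mathrm{id}$. The only nontrivial identity is on $\qf^\dagger$: here I need
\begin{equation*}
  \phi^*\left(-\dpf - \tfrac{1}{2}\partial\hf(\pf,\pf)\right) = \tfrac{1}{2}\partial\gs(\dqs,\dqs) - (\tfrac{d}{dt}\gs^\flat)(\dqs) - \gs^\flat(\ddqs).
\end{equation*}
The time-derivative piece follows from $\phi^*(-\dpf) = -\tfrac{d}{dt}(\gs^\flat(\dqs)) = -(\tfrac{d}{dt}\gs^\flat)(\dqs) - \gs^\flat(\ddqs)$. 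For the remaining term one uses the standard identity for the derivative of the inverse matrix, $\partial_\rho \hf^{\mu\nu} = -\hf^{\mu\alpha}\partial_\rho\gf_{\alpha\beta}\hf^{\beta\nu}$, which upon contracting with $\pf_\mu\pf_\nu$ and substituting $\pf = \gs^\flat(\dqs)$ produces $-\partial\gs(\dqs,\dqs)$, so $\phi^*(-\tfrac{1}{2}\partial\hf(\pf,\pf)) = \tfrac{1}{2}\partial\gs(\dqs,\dqs)$.

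Finally, for (ii) and (iii) I would simply substitute the definition of $\phi^*$ into $\tf^\bullet$ and $\Lf^\bullet$. The antifield term $\langle\pf^\dagger,\delta\pf\rangle\dt$ drops out because $\phi^*\pf^\dagger = 0$, and the term $\langle\pf,\delta\qf\rangle$ becomes $\langle\gs^\flat(\dqs),\delta\qs\rangle$, reproducing $\ts^\bullet$. For the Lagrangian, $\phi^*\langle\pf,\dqf\rangle = \gs(\dqs,\dqs)$ and $\phi^*\tfrac{1}{2}\hf(\pf,\pf) = \tfrac{1}{2}\hs(\gs^\flat(\dqs),\gs^\flat(\dqs)) = \tfrac{1}{2}\gs(\dqs,\dqs)$, so the difference equals $\tfrac{1}{2}\gs(\dqs,\dqs)\dt = \Ls^\bullet$.

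The only step that requires any care is the computation of $\phi^*\partial\hf(\pf,\pf)$, where one must remember that $\partial$ acts on the metric through its dependence on $\qf$, and hence the inverse-matrix identity introduces the correct sign. Everything else is a direct substitution.
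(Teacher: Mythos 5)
Your proposal is correct and follows essentially the same route as the paper's own proof: it reduces the chain-map property to the four generators via Lemma~\ref{prop:phipsichainmapfromQ}, identifies the $\qf^\dagger$ identity as the only nontrivial check, and handles it with the same Leibniz expansion of $\frac{\d}{\dt}(\gs^\flat(\dqs))$ together with the inverse-metric identity $\partial\hs(\gs^\flat(\cdot),\gs^\flat(\cdot)) = -\partial\gs(\cdot,\cdot)$, before concluding (ii) and (iii) by direct substitution exactly as in the paper. The only cosmetic difference is that you spell out the coordinate form $\partial_\rho \hf^{\mu\nu} = -\hf^{\mu\alpha}\partial_\rho\gf_{\alpha\beta}\hf^{\beta\nu}$ of that identity, which the paper uses implicitly here and states explicitly only in the proof of Lemma~\ref{lem:psilaxCM}.
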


\begin{proof}
To check that $\phi$ is a chain map we compute
\begin{align*}
    \phi^* \Qf \qf &= \phi^*(0) = 0 = \Qs \qs = \Qs \phi^* \qf,\\
    \phi^* \Qf \pf &= \phi^*(0) = 0 = \Qs (\gs^\flat(\dqs)) = \Qs \phi^* \pf ,\\
    \phi^* \Qf \qf^\dagger
    &= \phi^*\left(- \dpf - \frac{1}{2} \partial\hf(\pf,\pf)\right)
    = - \frac{\d}{\dt}(\gs^\flat(\dqs)) - \frac{1}{2} \partial\hs (\gs^\flat(\dqs),\gs^\flat(\dqs))\\
    &= - \frac{d}{dt}(\gs^\flat)(\dqs) - \gs^\flat(\ddqs) + \frac{1}{2} \partial\gs (\dqs,\dqs)
    = \Qs \qs^\dagger = \Qs \phi^* \qf^\dagger,\\
    \phi^* \Qf \pf^\dagger &= \phi^*(\dqf - \hf^\sharp(\pf)) = 0
    = \Qs \phi^* \pf^\dagger
\end{align*}
Together with Proposition \ref{prop:phipsichainmapfromQ}, this shows that $\phi^*$ is a chain map w.r.t. $(\mathcal{L}_{Q_i} - \d)$. In turn, $\phi^*$ acts on $\tf^\bullet,$ $\Lf^\bullet$ as
\begin{align*}
    \phi^* \tf^\bullet 
    &= \phi^*\left((\langle\qf^\dagger, \delta \qf\rangle + \langle\pf^\dagger, \delta \pf\rangle) \dt + \langle\pf, \delta \qf\rangle\right)
    = \langle\qs^\dagger, \delta \qs\rangle \dt + \langle\gs^\flat(\dqs), \delta \qs\rangle
    = \ts^\bullet,\\
    \phi^*\Lf^\bullet 
    &= \phi^* \left(\langle\pf, \dqf \rangle - \frac{1}{2} \hf (\pf,\pf)\right) \dt
    = \left(\gs(\dqs,\dqs) -\frac{1}{2} \hs(\gs^\flat(\dqs),\gs^\flat(\dqs))\right) \dt\\
    &= \frac{1}{2} \gs( \dqs,\dqs) \dt = \Ls^\bullet.
\end{align*}
\end{proof}

\begin{lemma}
\label{lem:psilaxCM}
Let $\psi: \FfCM \rightarrow \FsCM$ be the map defined through
\begin{align*}
    &\psi^* \qs = \qf,&
    &\psi^* \qs^\dagger 
    { = \qf^\dagger 
    - \frac{1}{2} \partial\gf(\pf^\dagger, \Qf \pf^\dagger)
    -\frac{d}{dt}(\gf^\flat(p^\dag)) + \partial \gf(\dqf,\pf^\dagger)}
\end{align*}
Its pullback map $\psi^*$ is a chain map w.r.t.\ $(\mathcal{L}_{Q_i}-\d)$ and maps the lax BV-BFV data of the second-order theory as
\begin{align*}
    &\psi^* \ts^\bullet  = \tf^\bullet  + (\mathcal{L}_{\Qf} - \d) \betaf^\bullet + \delta \ff^\bullet,&
    &\psi^* \Ls^\bullet  = \Lf^\bullet  + (\mathcal{L}_{\Qf} - \d) \iota_{\Qf} \betaf^\bullet + \d \ff^\bullet.
\end{align*}
where
\begin{align*}
    \betaf^\bullet &= \frac{1}{2} \gf (\pf^\dagger, \delta \pf^\dagger) \dt
    + \gf(\pf^\dagger, \delta \qf),&
    \ff^\bullet &= - \frac{1}{2} \gf(\pf^\dagger, \Qf \pf^\dagger) \dt.
\end{align*}
\end{lemma}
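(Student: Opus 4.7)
The plan is to verify the three claimed properties of $\psi$ in sequence, using the strategy already employed for Lemma \ref{lem:philaxCM}. First I would check that $\psi^*$ is a chain map with respect to $(\mathcal{L}_{Q_i}-\d)$. By Lemma \ref{prop:phipsichainmapfromQ}, it suffices to verify $\psi^* Q_2 \varphi = Q_1 \psi^* \varphi$ on the generators $\varphi \in \{\qs,\qs^\dagger\}$ of $\FsCM$. The equation on $\qs$ is immediate since both sides vanish. The equation on $\qs^\dagger$ is the first nontrivial calculation: on the one hand, $\psi^* Q_2 \qs^\dagger$ equals $\tfrac12 \partial\gf(\dqf,\dqf) - \tfrac{d}{dt}(\gf^\flat)(\dqf) - \gf^\flat(\ddqf)$ after substituting $\psi^*\qs=\qf$; on the other, I would expand $Q_1\psi^*\qs^\dagger$ term by term, using $Q_1\qf=0$ (so $Q_1$ commutes with all $\qf$-dependent tensors such as $\gf,\hf,\partial\gf$), $Q_1^2\pf^\dagger=0$, and the explicit formulas $Q_1\pf^\dagger = \dqf-\hf^\sharp(\pf)$, $Q_1\qf^\dagger = -\dpf - \tfrac12\partial\hf(\pf,\pf)$. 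The $\pf$-dependent terms should cancel exactly against each other thanks to the combination chosen for $\psi^*\qs^\dagger$, leaving precisely the second-order expression above. This cancellation is the technical heart of the argument and will determine why each of the four summands in $\psi^*\qs^\dagger$ is needed.

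Next I would check the transformation of $\theta^\bullet$. Starting from $\psi^*\ts^\bullet = \langle \psi^*\qs^\dagger, \delta\qf\rangle \dt + \langle \gf^\flat(\dqf), \delta\qf\rangle$, I would substitute the formula for $\psi^*\qs^\dagger$ and compare with $\tf^\bullet$. The discrepancy must be reorganised as $(\mathcal{L}_{Q_1}-\d)\betaf^\bullet + \delta \ff^\bullet$ with the specific $\betaf^\bullet,\ff^\bullet$ from the statement. The key identities to use are $\mathcal{L}_{Q_1}(\gf(\pf^\dagger,\delta\pf^\dagger)\dt) = \gf(Q_1\pf^\dagger,\delta\pf^\dagger)\dt \pm \gf(\pf^\dagger,\delta Q_1\pf^\dagger)\dt$ (since $Q_1\gf=0$), and $\d(\gf(\pf^\dagger,\delta\qf))$ which produces the $\tfrac{d}{dt}(\gf^\flat(\pf^\dagger))$ term, together with the time-derivative piece coming from $\d(\tfrac12\gf(\pf^\dagger,\delta\pf^\dagger)\dt)=0$ for the top-codimension part and $\delta$-exactness accounting for $\delta\ff^\bullet = -\tfrac12\delta(\gf(\pf^\dagger,Q_1\pf^\dagger))\dt$. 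Matching the two sides codimension by codimension should fix all signs.

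Finally, the transformation of $L^\bullet$ is essentially forced on us. By Proposition \ref{prop:LaxRedundancyTransformation}, having fixed $\betaf^\bullet$ and $\ff^\bullet$ in the transformation of $\theta^\bullet$, the piece $\zeta^\bullet_1$ in the transformation of $L^\bullet$ is determined up to $\mathcal{L}_{Q_1}$-closed terms in codimension zero, and is exactly $\iota_{Q_1}\betaf^\bullet$ in higher codimension. Thus I only need to verify the codimension-zero identity $\psi^* \Ls^0 = \Lf^0 + (\mathcal{L}_{Q_1}-\d)\iota_{Q_1}\betaf^0 + \d \ff^0$, which is a direct computation using $\psi^*\tfrac12\gs(\dqs,\dqs)\dt = \tfrac12\gf(\dqf,\dqf)\dt$ and completing the square against $\Lf^0 = (\langle\pf,\dqf\rangle - \tfrac12\hf(\pf,\pf))\dt$.

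The main obstacle I anticipate is the first step: verifying the chain map property on $\qs^\dagger$. The term $-\tfrac12\partial\gf(\pf^\dagger,Q_1\pf^\dagger)$ is tailored so that its $Q_1$-image produces exactly the cross terms between $\dqf$ and $\hf^\sharp(\pf)$ needed to cancel against $-\partial\hf(\pf,\pf)/2$ from $Q_1\qf^\dagger$ and the $\pf$-contribution from $\partial\gf(\dqf,\pf^\dagger)$; keeping track of signs coming from the graded Leibniz rule for $Q_1$, together with the identity $\partial\gf\circ\hf^\sharp = -\gf^\flat\circ\partial\hf\circ\gf^\flat$ relating $\partial\gf$ and $\partial\hf$, is where the calculation becomes delicate. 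Once this is settled, the remaining verifications are essentially bookkeeping of $\d$- and $\delta$-exact terms.
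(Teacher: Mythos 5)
Your proposal follows the paper's proof essentially verbatim: you verify the chain-map property on the generators via Lemma \ref{prop:phipsichainmapfromQ} (the only nontrivial check being $\qs^\dagger$, settled by the identities relating $\partial\gf$ and $\partial\hf$ through $\gf^\flat\circ\hf^\sharp=\mathrm{id}$), match the discrepancy $\psi^*\ts^\bullet - \tf^\bullet$ codimension by codimension against $(\mathcal{L}_{\Qf}-\d)\betaf^\bullet + \delta\ff^\bullet$, and invoke Proposition \ref{prop:LaxRedundancyTransformation} to reduce the $L^\bullet$ check to codimension zero, exactly as the paper does. The only slip is bookkeeping: in your codimension-zero identity for $L^\bullet$ the $\d$-terms must hit the codimension-one components, i.e.\ $\psi^*\Ls^0 = \Lf^0 + \mathcal{L}_{\Qf}\iota_{\Qf}\betaf^0 - \d\,\iota_{\Qf}\betaf^1 + \d\ff^1$ with $\ff^1=0$ here, rather than $(\mathcal{L}_{\Qf}-\d)\iota_{\Qf}\betaf^0 + \d\ff^0$ as written, since $\d$ annihilates top-degree densities.
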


\begin{proof}
The only non-trivial calculation that is needed to check if $\psi^*$ is a chain map w.r.t. $(\mathcal{L}_{Q_i}-\d)$ is $\Qf \psi^* \qs^\dagger = \psi^* \Qs \qs^\dagger$. We compute 
\begin{align*}
    \Qf \psi^* \qs^\dagger
    &= \Qf \qf^\dagger 
    - \frac{1}{2} \partial\gf' (\Qf \pf^\dagger,\Qf \pf^\dagger) - \gf^\flat(\Qf \dpf^\dagger) - (\dgf^{\flat})(\Qf \pf^\dag) + \partial \gf(\dqf,\Qf,\pf^\dag)\\
    &= -\dpf - \frac12 \partial \hf(\pf,\pf) - \frac12 \gf(\dqf,\dqf) - \frac12 \partial \gf(\hf^\sharp(\pf),\hf^\sharp(\pf)) \\
    & \phantom{=} +\partial \gf(\dqf,\hf^\sharp(\pf)) - \gf^\flat\left(\ddqf - \dot{\hf}^\sharp(\pf) - \hf^\sharp(\dpf)\right) \\
    & \phantom{=} - \dgf^\flat\left(\dqf - \hf^\sharp(\pf)\right) + \partial \gf(\dqf,\dqf) - \partial \gf(\dqf,\hf^\sharp(\pf))\\
    &= -\gf^\flat(\ddqf) - \dgf^\flat(\dqf) + \frac12 \partial \gf(\dqf,\dqf) = \psi^* \Qs \qs^\dagger.
\end{align*}
where we used $\dgf^\flat(\hf^\sharp(p)) = - \gf^\flat(\dot{\hf}^\sharp(p))$, in virtue of the fact that $\gf^\flat\circ\hf^\sharp = \mathrm{id}$.

Let $\Delta \ts^\bullet \coloneqq \psi^* \ts^\bullet - \tf^\bullet$ and  $\Delta \Ls^\bullet \coloneqq \psi^* \Ls^\bullet - \Lf^\bullet$. We need to check whether
\begin{align*}
    &\Delta \ts^0  = \mathcal{L}_{\Qf} \betaf^0 - \d\betaf^1  + \delta \ff^0,\\
    &\Delta \ts^1  = \mathcal{L}_{\Qf} \betaf^1 + \delta \ff^0,\\
    &\Delta \Ls^0 =  \mathcal{L}_{\Qf} \iota_{\Qf} \betaf^0 - \d \iota_{\Qf} \betaf^1 + \d \ff^1.
\end{align*}
Recall that we only need to compute $\Delta \Ls^0$, as $\Delta \Ls^k$ for $k>0$ are determined through $\Delta \ts^\bullet$ (cf.\ Proposition \ref{prop:LaxRedundancyTransformation}, Remark \ref{rem:WhatToCheck}). Furthermore, note that $\ff^1 = 0$.

\textbf{Computation of $\Delta \ts^0$:} For the first equation we compute
\begin{align*}
    \Delta \ts^0 
    &= \psi^*(\langle \qs^\dagger, \delta \qs\rangle \dt) - (\langle\qf^\dagger, \delta \qf\rangle + \langle\pf^\dagger, \delta \pf\rangle ) \dt\\
    &= \left[\langle \qf^\dagger 
    - \frac{1}{2} \partial \gf (\pf^\dagger ,\Qf \pf^\dagger)  
    - \gf^\flat(\dpf^\dagger)  
    -\frac{d}{dt}(\gf^\flat)(p^\dag) 
    + \partial \gf(\dqf,\pf^\dagger)
    - \qf^\dagger , \delta \qf\rangle 
    - \langle \pf^\dagger, \delta \pf \rangle\right] \dt\\
    &= \left[\langle 
    - \frac{1}{2} \partial \gf (\pf^\dagger ,\Qf \pf^\dagger)  
    - \gf^\flat(\dpf^\dagger)  
    -\frac{d}{dt}(\gf^\flat)(p^\dag) 
    + \partial \gf(\dqf,\pf^\dagger), \delta \qf\rangle 
    - \langle \pf^\dagger, \delta \pf \rangle\right] \dt\\
    \mathcal{L}_{\Qf} \betaf^0
    &= \mathcal{L}_{\Qf}\left(\frac{1}{2} \gf( \pf^\dagger, \delta \pf^\dagger) \dt \right)
    =\left[
    \frac{1}{2} \gf (\Qf \pf^\dagger, \delta \pf^\dagger )
    + \frac{1}{2} \gf( \pf^\dagger, \delta \Qf \pf^\dagger )
    \right] \dt, \\
    \d \betaf^1 
    &= \dt \frac{\d }{\dt}\left(\gf (\pf^\dagger ,\delta \qf) \right)
    = \left[
    \dgf( \pf^\dagger, \delta \qf)
    + \gf (\dpf^\dagger ,\delta \qf)
    + \gf (\pf^\dagger, \delta \dqf)
    \right] \dt,\\
    \delta \ff^0 
    &= \delta \left( - \frac{1}{2} \gf (\pf^\dagger, \Qf \pf^\dagger) \dt \right)\\
    &= \left[
    + \frac{1}{2} \langle \partial\gf( \pf^\dagger ,\Qf \pf^\dagger) , \delta \qf \rangle
    - \frac{1}{2} \gf (\delta \pf^\dagger, \Qf \pf^\dagger)
    + \frac{1}{2} \gf (\pf^\dagger ,\delta \Qf \pf^\dagger)
    \right] \dt.
\end{align*}

Using 
\begin{align*}
    &\delta \Qf \qf^\dagger 
    = \delta(\dqf - \hf^\sharp(\pf))
    = \delta\dqf - \langle \partial\hf^\sharp(\pf),\delta\qf\rangle 
    -\hf^\sharp(\delta \pf),\\
    &\gf(\cdot,\langle \partial\hf^\sharp(\cdot),\delta\qf\rangle) = - \langle \partial \gf (\cdot, \hf^\sharp(\cdot)),\delta \qf \rangle,
\end{align*}
the last three terms together yield
\begin{align*}
    &\mathcal{L}_{\Qf} \betaf^0 - \d \betaf^1 + \delta \ff^0 =
    \frac{1}{2} \gf (\Qf \pf^\dagger, \delta \pf^\dagger )
    + \frac{1}{2} \gf( \pf^\dagger, \delta \Qf \pf^\dagger ) \\
    =& \Big[
    \gf(\pf^\dagger, \delta \Qf \pf^\dagger)
    - \dgf( \pf^\dagger, \delta \qf)
    - \gf (\dpf^\dagger ,\delta \qf)
    - \gf (\pf^\dagger, \delta \dqf)
    + \frac{1}{2} \langle \partial\gf( \pf^\dagger ,\Qf \pf^\dagger) , \delta \qf \rangle
    \Big] \dt\\
    =& \Big[
    \cancel{\gf(\pf^\dagger, \delta\dqf)}
    - \gf(\pf^\dagger,\langle \partial\hf^\sharp(\pf),\delta\qf\rangle)
    -\gf(\pf^\dagger,\hf^\sharp(\delta \pf))
    -\dot \gf (\pf^\dagger,\delta \qf)
    - \gf(\dpf^\dagger, \delta \qf)
    -\cancel{\gf(\pf^\dagger, \delta\dqf )}
    + \frac{1}{2} \langle \partial \gf(\pf^\dagger, \Qf \pf^\dagger). \delta \qf \rangle\Big] \dt\\
    =& 
    \langle \partial \gf (\pf^\dagger, \hf^\sharp(\pf)),\delta \qf \rangle
    - \langle \pf^\dagger, \delta \pf \rangle
    -\dot \gf (\pf^\dagger,\delta \qf)
    - \gf(\dpf^\dagger, \delta \qf)
    + \frac{1}{2} \langle \partial \gf(\pf^\dagger, \Qf \pf^\dagger). \delta \qf \rangle\Big] \dt\\
    =& 
    \langle \partial g (\pf^\dagger, \dqf - \Qf \pf^\dagger)
    -\dot \gf^\flat (\pf^\dagger)
    - \gf^\flat(\dpf^\dagger) 
    + \frac{1}{2} \langle \partial \gf(\pf^\dagger, \Qf \pf^\dagger), \delta \qf \rangle
    - \langle \pf^\dagger, \delta \pf \rangle
    \Big] \dt\\
    =& 
    \langle 
    -\frac{1}{2} \langle \partial \gf(\pf^\dagger, \Qf \pf^\dagger)
    + \partial g (\pf^\dagger, \dqf)
    - \dot \gf^\flat (\pf^\dagger)
    - \gf^\flat(\dpf^\dagger) 
    , \delta \qf \rangle
    - \langle \pf^\dagger, \delta \pf \rangle
    \Big] \dt = \Delta \ts^0.
\end{align*}

\textbf{Computation of $\Delta \ts^1$:} In this case we have
\begin{align*}
    \Delta \ts^1 
    &= \psi^*(\langle \gs^\flat(\dqs), \delta \qs\rangle) - \langle\pf ,\delta \qf\rangle
    = \langle g^\flat(\dqf - h^\sharp(p)), \delta \qf\rangle
    = \gf(\Qf \pf^\dagger ,\delta \qf),\\
    \mathcal{L}_{\Qf} \betaf^1 
    &= \mathcal{L}_{\Qf}\left( \gf( \pf^\dagger, \delta \qf)\right) 
    = \gf( \Qf \pf^\dagger, \delta \qf),\\
    \delta \ff^1 & = 0,
\end{align*}
thus showing $\Delta \ts^1 = \mathcal{L}_{\Qf} \betaf^1 + \delta \ff^1$. 

\textbf{Computation of $\Delta \Ls^0$:}
For the third equality, we compute
\begin{align*}
    \Delta \Ls^0 
    &= \psi^*\left(\frac{1}{2} \gs (\dqs,\dqs) \dt \right) 
    - \left(\langle\pf, \dqf \rangle- \frac{1}{2}\hf(\pf,\pf) \right) \dt
    = \left[
    \frac{1}{2}  \gf (\dqf,\dqf)
    - \langle\pf, \dqf \rangle
    + \frac{1}{2}\hf(\pf,\pf)
    \right] \dt,\\
    \mathcal{L}_{\Qf} \iota_{\Qf} \betaf^0
    &= \mathcal{L}_{\Qf} \iota_{\Qf} \left(\frac{1}{2} \gf (\pf^\dagger, \delta \pf^\dagger) \dt \right)
    = \frac{1}{2} \gf (\Qf \pf^\dagger,\Qf \pf^\dagger) \dt
    = \frac{1}{2} \gf (\dqf - \hf^\sharp( \pf),\dqf - \hf^\sharp (\pf)), \dt \\
    &= \left[\frac{1}{2}\gf(\dqf,\dqf) -  \langle \pf,\dqf\rangle + \frac{1}{2} \hf( \pf, \pf) \right] \dt,\\
    \d \iota_{\Qf} \betaf^1 
    &= \d \left(\gf (\pf^\dagger ,\Qf \qf )\right)
    = 0,\\
    \d \ff^1
    &= 0,
\end{align*}
as desired.
\end{proof}

\begin{lemma}
\label{lem:lambdaCM}
The composition map $\lambda^* = \phi^* \circ \psi^*: \bvbfv^\bullet_2 \rightarrow \bvbfv^\bullet_2$ is the identity
\begin{align*}
    &\lambda^* \qs = \qs,&
    &\lambda^*\qs^\dagger = \qs^\dagger
\end{align*}
and as such the identity in cohomology.
\end{lemma}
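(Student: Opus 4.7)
The proof is a direct computation on generators, and the key observation is that the extra terms appearing in $\psi^*\qs^\dagger$ beyond the leading $\qf^\dagger$ all carry at least one factor of $\pf^\dagger$, which is killed by $\phi^*$. The plan is simply to evaluate $\lambda^* = \phi^*\circ\psi^*$ on the two generators $\qs,\qs^\dagger$ of $\FsCM$ and read off the result; since $\phi^*$ and $\psi^*$ are morphisms of graded manifolds, their composition is determined by its action on generators and no further work is needed.

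For $\qs$, we apply the definitions in order: $\psi^*\qs = \qf$ by Lemma \ref{lem:psilaxCM}, and then $\phi^*\qf = \qs$ by Lemma \ref{lem:philaxCM}, so $\lambda^*\qs = \qs$. For $\qs^\dagger$, we have
\begin{align*}
    \psi^*\qs^\dagger
    = \qf^\dagger
    - \tfrac{1}{2}\partial\gf(\pf^\dagger,\Qf\pf^\dagger)
    - \tfrac{d}{dt}(\gf^\flat(\pf^\dagger))
    + \partial\gf(\dqf,\pf^\dagger),
\end{align*}
and now we apply $\phi^*$ using $\phi^*\qf^\dagger = \qs^\dagger$ together with $\phi^*\pf^\dagger = 0$. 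Each of the three correction terms contains at least one factor of $\pf^\dagger$ (and no inverse of it), so after pulling back via $\phi^*$ these terms vanish identically. This leaves $\lambda^*\qs^\dagger = \qs^\dagger$.

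The main (and only) point to be careful about is that in the term $\tfrac{d}{dt}(\gf^\flat(\pf^\dagger))$ the time derivative acts on $\pf^\dagger$, producing $\dot\pf^\dagger$; since $\phi$ is a morphism of graded manifolds compatible with the jet structure, $\phi^*\dot\pf^\dagger = \tfrac{d}{dt}\phi^*\pf^\dagger = 0$, and similarly every jet of $\pf^\dagger$ pulls back to zero. This confirms that all correction terms really do vanish under $\phi^*$. Since $\lambda^*$ is the identity on the generators of $\FsCM$, it is the identity as a morphism of the full BV-BFV complex $\bvbfv^\bullet_2$, hence trivially induces the identity on cohomology; this is the statement needed to complement Lemma \ref{lem:chiCM} in establishing the lax equivalence of Theorem \ref{theorem:laxequivalenceCM}.
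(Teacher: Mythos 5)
Your proof is correct and matches the paper's own argument: both evaluate $\lambda^*=\phi^*\circ\psi^*$ on the generators and observe that every correction term in $\psi^*\qs^\dagger$ carries a factor of $\pf^\dagger$ (or its jets), which is annihilated by $\phi^*$ since $\phi^*\pf^\dagger=0$. Your extra remark that $\phi^*\dpf^\dagger=\tfrac{d}{dt}\phi^*\pf^\dagger=0$ makes explicit a point the paper leaves implicit, but it is the same computation.
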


\begin{proof}
Using $\lambda^* = \phi^* \circ \psi^*$ we have
\begin{align*}
    \lambda^* \qs 
    &= \phi^* \circ \psi^* (\qs)
    = \phi^* \qf
    = \qs,\\
    \lambda^* \qs^\dagger
    &= \phi^* \circ \psi^* (\qs^\dagger)
    =\phi^*\left(\qf^\dagger 
    - \frac{1}{2} \partial\gf(\pf^\dagger, \Qf \pf^\dagger)
    - \gf^\flat(\dpf^\dagger)-\frac{d}{dt}(\gf^\flat)(p^\dag) + \partial \gf(\dqf,\pf^\dagger))\right) = \qs^\dagger,
\end{align*}
as $\phi^* \pf^\dagger = 0$.
\end{proof}

\begin{lemma}
\label{lem:chiCM}
The composition map $\chi^* = \psi^* \circ \phi^*: \bvbfv^\bullet_1 \rightarrow \bvbfv^\bullet_1$ acts as
\begin{align*}
    &\chi^*\qf = \qf,&
    &\chi^*\pf = {\gf^\flat(\dqf)},\\
    &\chi^*\qf^\dagger { = \qf^\dagger 
    - \frac{1}{2} \partial\gf(\pf^\dagger, \Qf \pf^\dagger)
    - \gf^\flat(\dpf^\dagger) -\frac{d}{dt}(\gf^\flat)(p^\dag) + \partial \gf(\dqf,\pf^\dagger)},&
    &\chi^*\pf^\dagger = 0.
\end{align*}
and is homotopic to the identity.
\end{lemma}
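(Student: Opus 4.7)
The plan is to split the proof into two parts. First I would verify the stated action of $\chi^* = \psi^* \circ \phi^*$ on generators by direct substitution using Lemmata \ref{lem:philaxCM} and \ref{lem:psilaxCM}: one immediately gets $\chi^* \qf = \psi^* \qs = \qf$, $\chi^* \pf = \psi^*(\gs^\flat(\dqs)) = \gf^\flat(\dqf)$, $\chi^* \pf^\dagger = \psi^*(0) = 0$, while $\chi^* \qf^\dagger = \psi^* \qs^\dagger$ reproduces the stated expression verbatim after inserting $\psi^* \qs = \qf$. This is pure bookkeeping and requires no new ideas.

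For the homotopy statement I would follow the general prescription of Section \ref{sec:Strategy} combined with the contractible-pair mechanism of Section \ref{sec:Cp}, applied to the (generalised) pair $(v,v^\dagger) = (\Qf \pf^\dagger, \pf^\dagger) = (\dqf - \hf^\sharp(\pf), \pf^\dagger)$, whose pairing is $\qf$-dependent via $\gf$. Concretely, I would introduce the evolutionary vector field $\Rf \in \mathfrak{X}_{\text{evo}}(\FfCM)$ of ghost number $-1$ defined on generators by
\begin{align*}
    &\Rf \qf = 0,&
    &\Rf \pf = \gf^\flat(\pf^\dagger),&
    &\Rf \qf^\dagger = 0,&
    &\Rf \pf^\dagger = 0,
\end{align*}
and set $\Df = [\Qf, \Rf]$. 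A graded-Leibniz computation then yields $\Df \qf = 0$, $\Df \pf = \gf^\flat(\dqf) - \pf$, $\Df \pf^\dagger = -\pf^\dagger$, together with
$\Df \qf^\dagger = -\tfrac{d}{dt}(\gf^\flat(\pf^\dagger)) + \partial\gf(\dqf,\pf^\dagger) - \partial\gf(\pf^\dagger,\Qf\pf^\dagger),$
where the last identity follows from $\Rf(\partial\hf(\pf,\pf)) = -2\,\partial\gf(\hf^\sharp(\pf),\pf^\dagger)$ and $\hf^\sharp(\pf) = \dqf - \Qf \pf^\dagger$.

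Next I would consider the one-parameter family $\chi^*_s = e^{s\mathcal{L}_{\Df}}$ and verify $\chi^*_0 = \mathrm{id}_1$ and $\lim_{s\to\infty}\chi^*_s = \chi^*$ by integrating the linear ODEs on generators. Explicitly $\chi^*_s \pf = \gf^\flat(\dqf) + e^{-s}(\pf - \gf^\flat(\dqf))$ and $\chi^*_s \pf^\dagger = e^{-s}\pf^\dagger$, from which $\chi^*_s(\Qf\pf^\dagger) = e^{-s}\Qf\pf^\dagger$ as well. The main obstacle and key computation is the $\qf^\dagger$-component: pushed through the flow, the term $\partial\gf(\pf^\dagger,\Qf\pf^\dagger)$ in $\Df\qf^\dagger$ picks up two factors of $e^{-s}$ and integrates on $[0,\infty)$ with coefficient $\tfrac{1}{2}$, while the other terms carry a single $e^{-s}$ and integrate with coefficient $1$; this asymmetry produces precisely the $\tfrac{1}{2}$ in the stated formula for $\chi^*\qf^\dagger$. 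By Lemma \ref{prop:Dcommuteswithdifferential}, $\mathcal{L}_{\Df}$ commutes with $(\mathcal{L}_{\Qf}-\d)$, and Lemma \ref{lem:hchiexplicitform} then produces the explicit homotopy $h_\chi = \int_0^\infty e^{s\mathcal{L}_{\Df}}\mathcal{L}_{\Rf}\,\d s$ realising $\chi^* - \mathrm{id}_1 = (\mathcal{L}_{\Qf}-\d)h_\chi + h_\chi(\mathcal{L}_{\Qf}-\d)$, which concludes the proof that $\chi^*$ is homotopic to the identity. A separate convergence argument, invoking the criterion of Lemma \ref{prop:hchiconvergence} on the four generators (each of which gives a finite integral by the exponential decay computed above), will then establish that $h_\chi$ is well-defined on the entire BV-BFV complex.
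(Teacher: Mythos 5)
Your proposal is correct and follows essentially the same route as the paper: the same direct computation of $\chi^*$ from Lemmata \ref{lem:philaxCM} and \ref{lem:psilaxCM}, the same choice of $\Rf$ (nonzero only on $\pf$, with $\Rf\pf = \gf^\flat(\pf^\dagger)$), the same $\Df = [\Qf,\Rf]$ with identical action on generators including the key identity for $\Df\qf^\dagger$, and the same homotopy $\chi^*_s = e^{s\mathcal{L}_{\Df}}$ with convergence of $h_\chi$ checked on generators via Lemma \ref{prop:hchiconvergence}. The only cosmetic difference is that you evaluate $e^{s\mathcal{L}_{\Df}}$ by integrating the flow ODE (obtaining the factor $\tfrac12$ from $\int_0^\infty e^{-2s}\,\d s$), whereas the paper computes the eigenvector recursion $\Df^k\qf^\dagger$ and sums the power series, producing the same coefficient from $\tfrac12(-2)^k$.
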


\begin{proof}
By definition $\chi^* = \psi^* \circ \phi^*$. Then
\begin{align*}
    &\chi^*\qf 
    = \psi^* \circ \phi^*(\qf)
    = \psi^* \qs,\\
    &\chi^*\pf 
    = \psi^* \circ \phi^*(\pf)
    = \psi^* (\gs \dqs)
    = { = \psi^* (\gs^\flat( \dqs))
    = \gf^\flat( \dqf)},\\
    &\chi^*\qf^\dagger 
    = \psi^* \circ \phi^*(\qf^\dagger)
    = \psi^* \qs^\dagger
    { = \qf^\dagger 
    - \frac{1}{2} \partial\gf(\pf^\dagger, \Qf \pf^\dagger)
    - \gf^\flat(\dpf^\dagger) -\frac{d}{dt}(\gf^\flat)(p^\dag) + \partial \gf(\dqf,\pf^\dagger)},\\
    &\chi^*\pf^\dagger 
    = \psi^* \circ \phi^*(\pf^\dagger)
    = \psi^* (0)
    = 0.
\end{align*}
In order to prove that $\chi^*$ is homotopic to the identity we first compute $\chi^*_s = e^{s\mathcal{L}_{\Df}}$, where $\Df = [\Qf,\Rf]$, and show $\lim_{s\rightarrow \infty} \chi^*_s = \chi^*$. We choose $\Rf$ to act as 
\begin{align*}
    &\Rf\qf = 0,&
    &\Rf\pf = {\gf^\flat(\pf^\dagger)},&
    &\Rf\qf^\dagger =0,&
    &\Rf\pf^\dagger =0.
\end{align*}
For $\qf$ we have
\begin{align*}
    && \Df \qf &= [\Qf,\Rf] \qf = 0, && \\
    &\Rightarrow& 
    e^{s\mathcal{L}_{\Df}} \qf &= \qf, &&\\
    &\Rightarrow& 
    \lim_{s\rightarrow \infty} e^{s\mathcal{L}_{\Df}} \qf &= \qf = \chi^* \qf. &&
\end{align*}
For $\pf$:
\begin{align*}
    && \Df\pf &= \Qf \Rf \pf = \Qf(\gf^\flat(\pf^\dagger)) 
    = \gf^\flat(\dqf)- \pf,&&\\
    &\Rightarrow&
    \Df^2 \pf &= \Df(\gf^\flat(\dqf) - \pf) 
    = - \Df \pf = - (\gf^\flat(\dqf) - \pf),&&\\
    &\Rightarrow&
    \Df^k \pf &= -(-1)^{k} (\gf^\flat(\dqf) - \pf) \qquad \text{for } k \geq 1,&&\\
    &\Rightarrow&
    e^{s\mathcal{L}_{\Df}} \pf 
    &= \pf + \sum^\infty_{k=1} \frac{s^k}{k!} \Df^k \pf
    = \pf - \sum^\infty_{k=1} \frac{(-s)^k}{k!} (\gf^\flat(\dqf) - \pf)\\
    &&&= \pf - (e^{-s} - 1) (\gf^\flat(\dqf) - \pf)&&\\
    &\Rightarrow& 
    \lim_{s\rightarrow \infty} e^{s\mathcal{L}_{\Df}} \pf 
    &= \gf^\flat(\dpf)
    = \chi^* \pf. &&
\end{align*}
For $\pf^\dagger$:
\begin{align*}
    && \Df\pf^\dagger 
    &= \Rf\Qf \pf^\dagger 
    {= \Rf (\dqf - \hf^\sharp(\pf))
    = -\hf^\sharp(\gf^\flat (\pf^\dagger))}
    = -\pf^\dagger, &&\\
    &\Rightarrow& \Df^k \pf^\dagger &= (-1)^{k} \pf^\dagger
    \qquad \text{for } k \geq 0,&&\\
    &\Rightarrow&
    e^{s\mathcal{L}_{\Df}} \pf^\dagger
    &= \pf^\dagger + \sum^\infty_{k=1} \frac{s^k}{k!} \Df^k \pf^\dagger
    = \pf + \sum^\infty_{k=1} \frac{(-s)^k}{k!} \pf^\dagger
    = e^{-s} \pf^\dagger&&\\
    &\Rightarrow& 
    \lim_{s\rightarrow \infty} e^{s\mathcal{L}_{\Df}} \pf^\dagger 
    &= 0
    = \chi^* \pf^\dagger.&&
\end{align*}
For $\qf^\dagger$:
\begin{align*}
    \Df \qf^\dagger &= \Rf\Qf\qf^\dagger
    = -\Rf\left(\dpf + \frac{1}{2} \partial\hf(\pf,\pf)\right)
    = -\frac{\d}{\dt}(g^\flat(\pf^\dagger)) - \partial\hf(\gf^\flat(\pf^\dagger), \pf)\\
    &= -\dgf^\flat(\pf^\dagger) - \gf^\flat(\dpf^\dagger) - \partial \hf(\pf,\gf^\flat(\pf^\dagger)) \\
    &= -\dgf^\flat(\pf^\dagger) - \gf^\flat(\dpf^\dagger) - \partial \gf(\Qf\pf^\dagger,\pf^\dagger) + \partial \gf(\dqf,\pf^\dagger),
\end{align*}
where we used $p= \gf^\flat(  \dqf - Q \pf^\dagger)$, and $\partial \hf(\gf^\flat(\cdot),\gf^\flat(\cdot)) = -\partial\gf(\cdot,\cdot)$.

We see that
\begin{align*}
    \Df(\dgf^\flat(\pf^\dagger) + \gf^\flat(\dpf^\dagger) - \partial \gf(\dqf,\pf^\dagger)) 
    &= (\dgf^\flat(\Df\pf^\dagger) + \gf^\flat(\frac{d}{dt}(\Df\pf^\dagger)) - \partial \gf(\dqf,\Df\pf^\dagger)) \\
    &= - (\dgf^\flat(\pf^\dagger) + \gf^\flat(\dpf^\dagger) - \partial \gf(\dqf,\pf^\dagger)),
\end{align*}
so that, for all $k\geq0$
\begin{align*}
    \Df^k(\dgf^\flat(\pf^\dagger) + \gf^\flat(\dpf^\dagger) - \partial \gf(\dqf,\pf^\dagger)) &= (-1)^k (\dgf^\flat(\pf^\dagger) + \gf^\flat(\dpf^\dagger) - \partial \gf(\dqf,\pf^\dagger)) 
\end{align*}
and (recalling that $[\Df,\Qf] = 0$)
\begin{align*}
    \Df(\partial\gf(\pf^\dagger, \Qf \pf^\dagger))
    &= \partial\gf(\Df \pf^\dagger, \Qf \pf^\dagger)
    + \partial\gf(\pf^\dagger, \Qf \Df \pf^\dagger)
    = - 2 \partial\gf(\pf^\dagger, \Qf \pf^\dagger),
\end{align*}
we have
\begin{align*}
    \Df^k(\partial\gf (\pf^\dagger, \Qf \pf^\dagger))
    &= (-2)^k \partial\gf(\pf^\dagger, \Qf \pf^\dagger)
\end{align*}
for $k\geq 0$, which ultimately yields
\begin{align*}
\Df^k \qf^\dagger
    &= - \Df^{k-1}(\dgf^\flat(\pf^\dagger) + \gf^\flat(\dpf^\dagger) - \partial \gf(\dqf,\pf^\dagger))
    - \Df^{k-1}(\partial\gf (\pf^\dagger, \Qf \pf^\dagger))\\
    &= - (-1)^{k-1} (\dgf^\flat(\pf^\dagger) + \gf^\flat(\dpf^\dagger) - \partial \gf(\dqf,\pf^\dagger))
    - (-2)^{k-1} (\partial\gf (\pf^\dagger, \Qf \pf^\dagger))\\
    &= (-1)^k (\dgf^\flat(\pf^\dagger) + \gf^\flat(\dpf^\dagger) - \partial \gf(\dqf,\pf^\dagger))
    + \frac{1}{2} (-2)^{k} (\partial\gf (\pf^\dagger, \Qf \pf^\dagger))
\end{align*}
for all $k\geq 0$, so that 
\begin{align*}
    e^{s\mathcal{L}_{\Df}} \qf^\dagger &= \qf^\dagger 
    + \frac{1}{2} (e^{-2s}-1) (\partial\gf (\pf^\dagger, \Qf \pf^\dagger))
    + (e^{-s}-1)(\dgf^\flat(\pf^\dagger) - \gf^\flat(\dpf^\dagger) + \partial \gf(\dqf,\pf^\dagger))
\end{align*}
\begin{align*}
    \lim_{s\rightarrow \infty} e^{s\mathcal{L}_{\Df}} \qf^\dagger 
    &= \qf^\dagger 
    - \frac{1}{2} (\partial\gf (\pf^\dagger, \Qf \pf^\dagger)) - (\dgf^\flat(\pf^\dagger) + \gf^\flat(\dpf^\dagger) - \partial \gf(\dqf,\pf^\dagger))
    = \chi^* \pf^\dagger.&&
\end{align*}
All in all, the homotopy $\chi^*_s$ takes the form
\begin{align*}
    &\chi^*_s \qf = \qf,\\
    &\chi^*_s \pf 
    = e^{-s}\pf - (e^{-s} - 1) {\gf^\flat(\dqf)},\\
    &\chi^*_s \qf^\dagger = \qf^\dagger 
    + {\frac{1}{2} (e^{-2s}-1) \partial\gf (\pf^\dagger ,\Qf \pf^\dagger)
    + (e^{-s}-1) (\dot \gf^\flat( \pf^\dagger) 
    - \gf^\flat(\dot\pf^\dagger)
    +\partial \gf (\dqf,\pf^\dagger))},\\
    &\chi^*_s \pf^\dagger
    = e^{-s} \pf^\dagger, 
\end{align*}
and clearly satisfies $\lim_{s\rightarrow \infty} \chi^*_s = \chi^*$.
\end{proof}

\begin{lemma}
\label{lem:chiIdCohomologyCM}
The map $\chi^*$ is the identity in cohomology.
\end{lemma}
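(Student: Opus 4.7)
The plan is to directly invoke the general machinery developed in Section \ref{sec:Strategy}. By Lemma \ref{lem:hchiexplicitform}, the formula
$$h_\chi \kappa = \int^\infty_0 e^{s\mathcal{L}_{\Df}} \mathcal{L}_{\Rf} \kappa \, \d s$$
provides a chain homotopy from $\mathrm{id}_1$ to $\chi^*$ on $\bvbfv^\bullet_1$ whenever it converges, and by Lemma \ref{prop:hchiconvergence} convergence on the full BV-BFV complex follows from convergence on the generators of $\FfCM$. The only thing left to verify, therefore, is that $h_\chi \varphi^j_1$ is finite for each $\varphi^j_1 \in \{\qf, \pf, \qf^\dagger, \pf^\dagger\}$.

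Since $\Rf$ vanishes on $\qf$, $\qf^\dagger$, and $\pf^\dagger$, the integrand is identically zero in these three cases and $h_\chi$ trivially converges there. For the remaining generator, $\Rf \pf = \gf^\flat(\pf^\dagger)$; combining the explicit formulas $e^{s\mathcal{L}_{\Df}} \qf = \qf$ and $e^{s\mathcal{L}_{\Df}} \pf^\dagger = e^{-s} \pf^\dagger$ from the proof of Lemma \ref{lem:chiCM} (together with the fact that $\gf$ depends only on $\qf$), one obtains $e^{s\mathcal{L}_{\Df}} \gf^\flat(\pf^\dagger) = e^{-s} \gf^\flat(\pf^\dagger)$, so that
$$h_\chi \pf = \int_0^\infty e^{-s} \gf^\flat(\pf^\dagger) \, \d s = \gf^\flat(\pf^\dagger),$$
which is manifestly finite.

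The main potential obstacle --- namely, that the infinite-$s$ integral defining $h_\chi$ could diverge --- is thus controlled entirely by the exponential damping of the spectrum of $\mathcal{L}_{\Df}$ already exhibited in Lemma \ref{lem:chiCM}, where every field is mapped by $e^{s\mathcal{L}_{\Df}}$ to a combination of constants and the decaying exponentials $e^{-s}, e^{-2s}$. No additional estimates are needed, and the conclusion $[\chi^*] = [\mathrm{id}_1]$ in $H^\bullet(\bvbfv^\bullet_1)$ follows immediately from Lemma \ref{prop:hchiconvergence}.
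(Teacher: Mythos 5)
Your proof is correct and follows essentially the same route as the paper's: you verify convergence of $h_\chi$ on the generators of $\FfCM$, noting that $\qf,\qf^\dagger,\pf^\dagger \in \ker \Rf$ and computing $h_\chi \pf = \gf^\flat(\pf^\dagger)$ from the exponential decay $e^{s\mathcal{L}_{\Df}}\pf^\dagger = e^{-s}\pf^\dagger$, then conclude via Lemma \ref{prop:hchiconvergence}. Your explicit justification that $e^{s\mathcal{L}_{\Df}}\gf^\flat(\pf^\dagger) = e^{-s}\gf^\flat(\pf^\dagger)$ --- using the morphism property together with $e^{s\mathcal{L}_{\Df}}\qf = \qf$ and the fact that $\gf$ depends only on $\qf$ --- is a small step the paper leaves implicit, but the argument is the same.
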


\begin{proof}
We have to check if the map $h_\chi$ converges on $\FfCM$, namely
\begin{align*}
    h_\chi \varphi^j_1 = \int^\infty_0 e^{s \Df} \Rf \varphi^j_1 \, \d s < \infty 
    \quad \forall \varphi^j_1 \in \FfCM.
\end{align*}
As $\{\qf,\qf^\dagger, \pf^\dagger\} \in \ker \Rf$ we have
\begin{align*}
    &h_\chi \qf = 0,&
    &h_\chi \qf^\dagger = 0,&
    &h_\chi \pf^\dagger  = 0.
\end{align*}
For $h_\chi \pf^\dagger$ we compute
\begin{align*}
    h_\chi \pf
    = \int^\infty_0 e^{s\mathcal{L}_{\Df}} \Rf \pf \,\d s
    = {\int^\infty_0 e^{s\mathcal{L}_{\Df}} (\gf^\flat( \pf^\dagger)) \d s
    = \gf^\flat( \pf^\dagger) \int^\infty_0 e^{-s} \d s
    = \gf^\flat( \pf^\dagger)},
\end{align*}
thus, by Proposition \ref{prop:hchiconvergence}, $h_\chi$ converges on $\bvbfv^\bullet_1$ and $\chi^*$ is the identity in cohomology.
\end{proof}

\subsection{Yang--Mills theory}
\label{sec:YM}
We now look at the example of (non-abelian) Yang--Mills theory. Let $(M,g)$ be a $d$-dimensional (pseudo-)Riemanian manifold and $G$ a connected Lie group with Lie algebra $(\mathfrak{g},[\cdot,\cdot])$, endowed with an ad-invariant inner product, which for ease of notation will be denoted by means of an invariant trace operation\footnote{For a better nonperturbative behavior one usually requires $G$ to be compact, in which case one uses the Killing form as the invariant inner product. This is the motivation for using the trace notation.} $\Tr[\cdot]$. As we consider two formulations of Yang--Mills theory, we will use tildes to distinguish the fields between the two. We point out that an alternative proof of the equivalence of first- and second-order formulations of Yang--Mills theory has been given in \cite{RocekHomotopy18} using homotopy transfer of $A_\infty$-structures. We will give here an argument that is different on the surface, but which is compatible to their results. However, we stress that our analysis also includes a comparison of the boundary data of the first- and second-order formulations.

We can phrase the theory by considering connection 1-forms $\As \in \Omega^1(M,\mathfrak{g})$, with curvature $\FAs$, and the classical action functional 
\begin{align*}
    \clSs[\As] = \int_M \Tr \left[\frac{1}{2} \FAs \star \FAs \right].
\end{align*}
This is often known as the \emph{second-order formulation}.

Alternatively, one can phrase the theory in its \emph{first-order formulation}, by considering an additional ``auxiliary'' field $\Bf \in \Omega^{d-2}(M,\mathfrak{g})$ and the classical action functional
\begin{align*}
    \clSf[\Af,\Bf] 
    = \int_M \Tr \left[
    \Bf\FAf 
    - \frac{\es}{2} \Bf \star \Bf
    \right],
\end{align*}
where $\es=\pm 1$ denotes the signature of $g$.

\begin{proposition}
The first- and second-order formulation of Yang--Mills theory are classically equivalent. 
\end{proposition}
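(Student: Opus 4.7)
The plan is to follow the same strategy as in the classical-mechanics case (Section~\ref{sec:CM}): identify a subset $C_1 \subset F_1$ cut out by \emph{some} of the Euler--Lagrange equations---namely those obtained by varying the auxiliary field $\Bf$---together with an isomorphism $\phi_{cl}\colon C_1 \to F_2$ under which $\clSf\vert_{C_1} = \phi^*_{cl}\clSs$ and the gauge distributions agree. Definition~\ref{def:classicalequivalence} and the subsequent remark then yield the result.

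First, I would compute $\delta_\Bf \clSf = \int_M \Tr\bigl[\delta \Bf\,(\FAf - \es \star \Bf)\bigr]$, so that the associated Euler--Lagrange equation fixes $\Bf$ algebraically in terms of $\FAf$ (schematically $\Bf = \star \FAf$, up to the signature factor produced by $\star\star$ on $2$-forms). I would then set $C_1 \subset F_1$ to be the subspace cut out by this relation, and define $\phi_{cl}\colon C_1 \to F_2$ by $\phi^*_{cl}\As = \Af$; this is a bijection, since $\Bf$ is entirely determined by $\Af$ on $C_1$.

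Next, I would substitute this expression back into $\clSf$. Using $\es^2 = 1$, the linear term $\Tr[\Bf\,\FAf]$ and the quadratic term $-\tfrac{\es}{2}\Tr[\Bf \star \Bf]$ combine on $C_1$ to produce exactly $\int_M \Tr\bigl[\tfrac{1}{2}\FAf \star \FAf\bigr]$, i.e.\ $\clSf\vert_{C_1} = \phi^*_{cl}\clSs$.

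Finally, I would verify agreement of the gauge distributions. Both formulations carry the infinitesimal $\Omega^0(M,\mathfrak{g})$-action $\Af \mapsto \dAf \lambda$ (supplemented in the first-order theory by $\Bf \mapsto [\lambda,\Bf]$). Since $\FAf$ transforms in the adjoint representation, $\star\FAf$ and $\Bf$ transform identically, so $C_1$ is preserved and $D_1\vert_{C_1} \stackrel{\phi_{cl}}{\simeq} D_2$. The only mildly technical point throughout is tracking the signature sign $\es$ from $\star\star$ on forms of different degree; there is no conceptual obstacle, and the argument is structurally identical to the classical-mechanics proof.
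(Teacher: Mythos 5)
Your proposal is correct and follows essentially the same route as the paper's proof: vary $\clSf$ in $\Bf$ to get $\FAf = \es \star \Bf$, note that the sign from $\star^2$ on $(d-2)$-forms cancels against $\es$ so that $C_1$ is exactly $\{\Bf = \star \FAf\}$, pull back along $\phi^*_{cl}\As = \Af$, and recover $\clSs$ using $\es^2 = 1$. Your explicit verification that the adjoint gauge action preserves $C_1$ and identifies the distributions $D_1\vert_{C_1} \simeq D_2$ is left implicit in the paper, so this is a minor completion rather than a different argument.
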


\begin{proof}
Solving the EL equations of the first-order theory w.r.t. the auxiliary field $\Bf$ gives
\begin{align*}
    \delta_{\Bf} \clSf[\Af,\Bf] 
    &= \int_M \Tr \left[
    \delta \Bf\FAf 
    - \frac{\es}{2} \delta \Bf \star \Bf
    - \frac{\es}{2} \Bf \star \delta \Bf
    \right]\\
    &=\int_M \Tr \left[
    \delta \Bf\left(\FAf 
    - \es \star \Bf\right)
    \right].
\end{align*}
Let $C_1$ be the set of solutions of $\FAf = \es \star \Bf$, or equivalently
\begin{align*}
    \star \FAf = \es \star^2 \Bf
    = \es^2 (-1)^{2(d-2)} \Bf = \Bf,
\end{align*}
where we used $\star^2 = \es(-1)^{k(d-k)}$ when acting on $k$-forms, and let $\phi_{cl}: C_1 \rightarrow F_2$ be the map defined through $\phi^*_{cl}\As = \Af$. Then
\begin{align*}
    \clSf[\Af,\Bf] \Big\vert_{C_1} 
    &= \int_M \Tr \left[ \star \FAf \FAf - \frac{\es^2}{2} \star \FAf \FAf \right]
    = \int_M \Tr \left[\frac{1}{2} \FAf \star \FAf \right]\\
    &= \phi^*_{cl}\int_M \Tr \left[\frac{1}{2} \FAs \star \FAs \right]
    = \phi^*_{cl} \clSs[\As],
\end{align*}
showing that the two theories are classically equivalent.
\end{proof}

Both first- and second-order formulations of Yang--Mills theory can be extended to lax BV-BFV theories as follows. As the symmetries of the theory are given by a Lie algebra $\mathfrak{g}$, we can follow the construction of Example \ref{ex:BRSTcase}.

\begin{proposition/definition}[\cite{cattaneo2014classical,mnev2020towards}]
The data 
$$
\mathfrak{F}^\mathrm{lax}_{2YM} = (\FsYM,\ts^\bullet,\Ls^\bullet,\Qs),
$$ 
where 
\begin{align*}
    \FsYM = T^*[-1](\Omega^{1}(M,\mathfrak{g}) \oplus \Omega^{0}(M,\mathfrak{g})[1]),
\end{align*}
together with $\ts^\bullet \in \Omega^{1,\bullet}_\mathrm{loc}(\FsYM\MS{\times M})$ and $\Ls^\bullet \in \Omega^{0,\bullet}_\mathrm{loc}(\FsYM\MS{\times M})$, which are given by
\begin{align*}
    &\ts^\bullet
    = \Tr \left[
    \As^\dagger \delta \As
    + \cs^\dagger \delta \cs
    + \delta \As \star \FAs
    + \As^\dagger \delta \cs 
    + \star \FAs \delta \cs
    \right],\\
    &\Ls^\bullet =\Tr \left[
    \frac{1}{2} \FAs \star \FAs
    + \As^\dagger \dAs \cs
    + \frac{1}{2} \cs^\dagger[\cs,\cs]
    + \star \FAs \dAs \cs
    + \frac{1}{2} \As^\dagger [\cs,\cs]
    + \frac{1}{2} \star \FAs [\cs,\cs]
    \right],
\end{align*}
and the cohomological vector field $\Qs \in \mathfrak{X}_{\text{evo}}(\FsYM)$
\begin{align*}
    &\Qs \As = \dAs \cs,&
    &\Qs \cs = \frac{1}{2} [\cs,\cs],\\
    &\Qs \As^\dagger = \dAs \star \FAs + [\cs, \As^\dagger],&
    &\Qs \cs^\dagger = \dAs \As^\dagger + [\cs,\cs^\dagger],
\end{align*}
defines a lax BV-BFV theory.
\end{proposition/definition}

In the first-order formulation we have:
\begin{proposition/definition}[\cite{cattaneo2014classical,mnev2020towards}]
The data 
$$
\mathfrak{F}^\mathrm{lax}_{1YM} = (\FfYM,\tf^\bullet,\Lf^\bullet,\Qf),
$$
where
\begin{align*}
    \FfYM = T^*[-1](\Omega^{1}(M,\mathfrak{g})\oplus \Omega^{d-2}(M,\mathfrak{g}) \oplus \Omega^{0}(M,\mathfrak{g})[1]).
\end{align*}
together with $\tf^\bullet \in \Omega^{1,\bullet}_\mathrm{loc}(\FfYM)$ and $\Lf^\bullet \in \Omega^{0,\bullet}_\mathrm{loc}(\FfYM)$, which take the form
\begin{align*}
    \tf^\bullet
    = \Tr &\left[
    \Af^\dagger \delta \Af
    + \Bf^\dagger \delta \Bf
    + \cf^\dagger \delta \cf
    + \Bf \delta \Af
    + \Af^\dagger \delta \cf 
    + \Bf \delta \cf
    \right],\\
    \Lf^\bullet = \Tr &\left[ 
    \Bf \FAf
    - \frac{\es}{2} \Bf \star \Bf
    + \Af^\dagger \dAf  \cf
    + \Bf^\dagger [\cf,\Bf]
    + \frac{1}{2} \cf^\dagger [\cf,\cf]\right.\\
    &+ \left.\Bf \dAf \cf
    + \frac{1}{2} \Af^\dagger [\cf,\cf]
    + \frac{1}{2} \Bf [\cf,\cf]
    \right]
\end{align*}
and the cohomological vector field $\Qf \in \mathfrak{X}_{\text{evo}}(\FfYM)$
\begin{align*}
    &\Qf \Af = \dAf \cf,&
    &\Qf \Bf = [\cf,\Bf],&
    &\Qf \cf = \frac{1}{2} [\cf,\cf],\\
    &\Qf \Af^\dagger = \dAf \Bf + [\cf,\Af^\dagger],&
    &\Qf \Bf^\dagger = \FAf - \es \star \Bf + [\cf,\Bf^\dagger],&
    &\Qf \cf^\dagger = \dAf \Af^\dagger + [\cf,\cf^\dagger] + [\Bf^\dagger,\Bf],
\end{align*}
defines a lax BV-BFV theory.
\end{proposition/definition}

We now present the main theorem of this section, together with an outline of the proof. The computational details and the various required Lemmata are presented in the Appendix \ref{sec:LengthyCalcYM}.

\begin{theorem}
\label{theorem:laxequivalenceYM}
The lax BV-BFV theories $\mathfrak{F}^\mathrm{lax}_{1YM}$ and $\mathfrak{F}^\mathrm{lax}_{2YM}$ of the first- and second-order formulations of Non-Abelian Yang--Mills theory are lax BV-BFV equivalent.
\end{theorem}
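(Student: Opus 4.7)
The plan is to mimic the strategy of Theorem \ref{theorem:laxequivalenceCM}, organizing the argument around the four checkpoints of Definition \ref{def:laxequivalence} as laid out in Section \ref{sec:Strategy}. The pair $(\Bf,\Bf^\dagger)$ plays the role of $(\pf,\pf^\dagger)$ in classical mechanics: it is a generalized contractible pair in the sense of Section \ref{sec:Cp}, since $(\Qf \Bf^\dagger)|_{\Bf^\dagger=0} = \FAf - \es\star\Bf$ has the unique solution $\Bf = \es^{-1}\star\FAf = \es\star\FAf$.

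First I would define the two morphisms. For $\phi\colon\FsYM\to\FfYM$ take
\begin{align*}
    &\phi^*\Af = \As, && \phi^*\cf = \cs, && \phi^*\Bf = \es\star\FAs,\\
    &\phi^*\Af^\dagger = \As^\dagger, && \phi^*\cf^\dagger = \cs^\dagger, && \phi^*\Bf^\dagger = 0,
\end{align*}
which is the direct transcription of the $\phi$ of Lemma \ref{lem:philaxCM}. For $\psi\colon\FfYM\to\FsYM$ take $\psi^*\As = \Af$, $\psi^*\cs = \cf$, and
\begin{align*}
    \psi^*\As^\dagger &= \Af^\dagger + (\text{corrections quadratic and linear in }\Bf^\dagger),\\
    \psi^*\cs^\dagger &= \cf^\dagger + (\text{corrections involving }\Bf^\dagger),
\end{align*}
where the corrections are fixed by requiring $\psi^*$ to commute with $Q_i$ on the fields, in parallel with the formula for $\psi^*\qs^\dagger$ in Lemma \ref{lem:psilaxCM}. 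The rough ansatz is $\psi^*\As^\dagger = \Af^\dagger - \dAf\Bf^\dagger - [\cf,\Bf^\dagger]$-type terms together with the analogue of the $\tfrac12\partial\gf(\pf^\dagger,\Qf\pf^\dagger)$ piece, now built from $\Bf^\dagger$ and $\Qf\Bf^\dagger = \FAf - \es\star\Bf + [\cf,\Bf^\dagger]$. The precise coefficients are fixed by the chain-map requirement $\Qs\psi^*\As^\dagger = \psi^*\Qs\As^\dagger$ and its analogue for $\cs^\dagger$; by Lemma \ref{prop:phipsichainmapfromQ} this is enough to get chain maps on the full BV-BFV complex.

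Next I would check the transformation rules (\ref{eq:Laxequivalencetransformation}) for $(\theta^\bullet_i,L^\bullet_i)$. By Remark \ref{rem:WhatToCheck} it suffices to verify the relations for $\theta^\bullet_i$ at all codimensions and for $L^0_i$. For $\phi^*$ one shows that $\phi^*\theta^\bullet_1 = \theta^\bullet_2$ and $\phi^*L^\bullet_1 = L^\bullet_2$ on the nose (so $\beta^\bullet_2 = 0$, $\zeta^\bullet_2 = 0$, $f^\bullet_2 = 0$), using the replacement $\Bf\mapsto\es\star\FAs$ and $\Bf^\dagger\mapsto 0$; this is the YM analogue of the clean computation in Lemma \ref{lem:philaxCM}. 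For $\psi^*$ the natural guess for the correction data is
\begin{align*}
    \betaf^\bullet &= \tfrac{\es}{2}\Tr[\Bf^\dagger\star\delta\Bf^\dagger]\,\mathrm{vol} + \Tr[\Bf^\dagger\,\delta\Af] + \Tr[\Bf^\dagger\,\delta\cf],\\
    \ff^\bullet &= -\tfrac{\es}{2}\Tr[\Bf^\dagger\star\Qf\Bf^\dagger]\,\mathrm{vol},\qquad \zeta^\bullet = \iota_{\Qf}\betaf^\bullet,
\end{align*}
(with appropriate form-degree bookkeeping) and one verifies the required identities directly, expanding $\Qf\Bf^\dagger$ and using the ad-invariance of $\Tr$. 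The codimension-$0$ identity for $L_2^0$ reduces to the algebraic identity $\tfrac{1}{2}\FAf\star\FAf = \Bf\FAf - \tfrac{\es}{2}\Bf\star\Bf + \tfrac{\es}{2}(\FAf - \es\star\Bf)\star\es^{-1}(\FAf - \es\star\Bf)$, which is the YM counterpart of completing the square in the CM Hamiltonian.

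Finally I would treat the homotopy. The composition $\lambda^* = \phi^*\circ\psi^*$ is immediately the identity because $\phi^*\Bf^\dagger = 0$ kills the correction terms in $\psi^*\As^\dagger$ and $\psi^*\cs^\dagger$. For $\chi^* = \psi^*\circ\phi^*$ I would set
\begin{align*}
    &\Rf\Af = 0, && \Rf\cf = 0, && \Rf\Bf = \es\star\Bf^\dagger,\\
    &\Rf\Af^\dagger = 0, && \Rf\cf^\dagger = 0, && \Rf\Bf^\dagger = 0,
\end{align*}
and define $\Df = [\Qf,\Rf]$ and $\chi^*_s = e^{s\mathcal{L}_{\Df}}$ as in Section \ref{sec:Strategy}. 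One then computes that $\Df$ acts diagonally up to a nilpotent piece on $(\Bf,\Bf^\dagger)$ and on the correction terms appearing in $\chi^*\Af^\dagger$, $\chi^*\cf^\dagger$, giving exponential decay $e^{-s}$ (and some $e^{-2s}$) in the directions that need to be killed, exactly as in Lemma \ref{lem:chiCM}. Convergence of $h_\chi$ on all generators then reduces, by Lemma \ref{prop:hchiconvergence}, to checking it on fields, where the only nontrivial integrand is $h_\chi\Bf = \es\star\Bf^\dagger\int_0^\infty e^{-s}\,\mathrm{d}s = \es\star\Bf^\dagger$. The remaining checkpoints follow as in Lemmata \ref{lem:lambdaCM}--\ref{lem:chiIdCohomologyCM}, yielding $H^\bullet(\bvbfv^\bullet_{1YM}) \simeq H^\bullet(\bvbfv^\bullet_{2YM})$ and hence the stated lax BV-BFV equivalence.

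The main obstacle I anticipate is the bookkeeping for $\psi^*\As^\dagger$ and $\psi^*\cs^\dagger$: unlike the CM case, the antifield $\cf^\dagger$ receives a correction through the $[\Bf^\dagger,\Bf]$ term in $\Qf\cf^\dagger$, and the non-abelian brackets in $\Qf\Af^\dagger$, $\Qf\Bf^\dagger$ produce cross-terms whose cancellation requires repeated use of ad-invariance of $\Tr$ and the Jacobi identity. Once the correct ansatz is pinned down by the chain-map equation, every subsequent identity (transformation of $\theta^\bullet$, transformation of $L^0$, action of $\Df$ on $\Af^\dagger$ and $\cf^\dagger$) is a mechanical, if lengthy, verification, which I would defer to an appendix analogous to Appendix \ref{sec:LengthyCalcYM}.
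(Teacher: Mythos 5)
Your architecture is exactly the paper's: the same $\phi$ and $\psi$, the contractible-pair reading of $(\Bf,\Bf^\dagger)$ in the sense of Section \ref{sec:Cp}, the same $\Rf$, $\Df = [\Qf,\Rf]$, $\chi^*_s = e^{s\mathcal{L}_{\Df}}$, and the same convergence check for $h_\chi$, with the lengthy verifications deferred to an appendix just as in Appendix \ref{sec:LengthyCalcYM}. However, the one formula you commit to explicitly contains a genuine error: in solving $(\Qf \Bf^\dagger)\vert_{\Bf^\dagger=0} = \FAf - \es\star\Bf = 0$ you treated $\star$ as an involution, whereas $\star^2 = \es(-1)^{k(d-k)} = \es$ on $(d-2)$-forms, so the unique solution is $\Bf = \star\FAf$ with \emph{no} factor of $\es$ (this is precisely the computation in the paper's classical-equivalence proposition). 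With your $\phi^*\Bf = \es\star\FAs$ the proof breaks in Lorentzian signature $\es = -1$ at three separate places: $\phi^*$ fails to be a chain map, since
\begin{align*}
    \phi^*(\Qf\Bf^\dagger) = \FAs - \es\star(\es\star\FAs) = (1-\es)\,\FAs \neq 0 = \Qs\,\phi^*\Bf^\dagger;
\end{align*}
the action pulls back to $\phi^*L^0_1 = (\es - \tfrac12)\Tr[\FAs\star\FAs] \neq L^0_2$; and the propagated sign in $\Rf\Bf = \es\star\Bf^\dagger$ gives $\Df\Bf^\dagger = \Rf(-\es\star\Bf) = -\es\star(\es\star\Bf^\dagger) = -\es\,\Bf^\dagger$, hence $\chi^*_s\Bf^\dagger = e^{-\es s}\Bf^\dagger$ (and an analogous $e^{\,s}$-growth for $\chi^*_s\Bf$), which \emph{diverges} as $s\to\infty$ when $\es = -1$, so neither the homotopy $\chi^*_s$ nor $h_\chi$ exists. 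Dropping the $\es$ everywhere, i.e.\ $\phi^*\Bf = \star\FAs$, $\Rf\Bf = \star\Bf^\dagger$, $h_\chi\Bf = \star\Bf^\dagger$, restores the paper's proof verbatim.

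Beyond this, the parts you leave ``to be fixed by the chain-map equation'' are exactly where the paper does its nontrivial work, and your ansatz points slightly astray. The actual maps are $\psi^*\As^\dagger = \Af^\dagger - \dAf\star\Bf^\dagger$ and $\psi^*\cs^\dagger = \cf^\dagger - \tfrac12[\Bf^\dagger,\star\Bf^\dagger]$: there is no $[\cf,\Bf^\dagger]$-type correction to $\As^\dagger$ (the $[\cf,\cdot]$ contributions cancel on their own in the chain-map check), and, unlike in classical mechanics, no analogue of the $\tfrac{1}{2}\partial\gf(\pf^\dagger,\Qf\pf^\dagger)$ term appears, because that term originated in the $\qf$-dependence of the pairing $\gf$, while the Yang--Mills pairing $\Tr[\,\cdot\wedge\star\,\cdot\,]$ is field-independent; the quadratic correction instead lands on $\cs^\dagger$ via the $[\Bf^\dagger,\Bf]$ term in $\Qf\cf^\dagger$, as you correctly anticipate. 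The Hodge stars you dropped in $\betaf^\bullet$ are not optional bookkeeping: the paper's $\betaf^\bullet = \Tr[\tfrac12\Bf^\dagger\star\delta\Bf^\dagger + \star\Bf^\dagger\delta\Af + \star\Bf^\dagger\delta\cf]$ has the correct codimension-$0$, $1$, $2$ form degrees only with the stars in place (and needs no extra volume factor, $\Bf^\dagger\star\delta\Bf^\dagger$ being already top degree). Finally, your $\ff^\bullet = -\tfrac{\es}{2}\Tr[\Bf^\dagger\star\Qf\Bf^\dagger]$ differs from the paper's $\ff^\bullet = \tfrac12\Tr[\Bf^\dagger(\Bf - \star\FAf)]$ not only by the spurious $\es$ but also by the generally nonvanishing ghost term $\tfrac12\Tr\big[\Bf^\dagger[\cf,\star\Bf^\dagger]\big]$, which the blind transcription from classical mechanics does not see because there $\Qf\pf^\dagger$ had no ghost part. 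None of this changes the strategy, which is sound and is the paper's; but as written the proposal is only correct in Riemannian signature, and the key formulas that make the strategy work still have to be derived.
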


\begin{proof}
We need to check all the conditions from Definition \ref{def:laxequivalence}. The existence of two maps $\phi$, $\psi$ with the desired properties is presented in Lemmata \ref{lem:philaxYM} and \ref{lem:psilaxYM} respectively, where we also show that the pullback maps $\phi^*$, $\psi^*$ are chain maps w.r.t. the BV-BFV complexes $\bvbfv^\bullet_i$, and that they map $(\theta^\bullet_i,L^\bullet_i)$ in the desired way. Specifically, $\phi: \FsYM \rightarrow \FfYM$ is defined through
\begin{align*}
    &\phi^* \Af = \As,&
    &\phi^* \Bf = \star \FAs,&
    &\phi^* \cf = \cs,\\
    &\phi^* \Af^\dagger = \As^\dagger,&
    &\phi^* \Bf^\dagger = 0,&
    &\phi^* \cf^\dagger = \cs^\dagger.
\end{align*}
and maps the lax BV-BFV data of the first-order theory as
\begin{align*}
    &\phi^* \tf^\bullet = \ts^\bullet,&
    &\phi^* \Lf^\bullet = \Ls^\bullet,
\end{align*}
whereas $\psi: \FfYM \rightarrow \FsYM$ is given by
\begin{align*}
    &\psi^* \As = \Af,&
    &\psi^* \cs = \cf,\\
    &\psi^* \As^\dagger = \Af^\dagger -  \dAf \star \Bf^\dagger,&
    &\psi^* \cs^\dagger = \cf^\dagger - \frac{1}{2} [\Bf^\dagger,\star \Bf^\dagger],
\end{align*}
and maps the lax BV-BFV data of the second-order theory as
\begin{align*}
    &\psi^* \ts^\bullet  = \tf^\bullet  + (\mathcal{L}_{\Qf} - \d) \betaf^\bullet + \delta \ff^\bullet,&
    &\psi^* \Ls^\bullet  = \Lf^\bullet  + (\mathcal{L}_{\Qf} - \d) \iota_{\Qf} \betaf^\bullet + \d \ff^\bullet,
\end{align*}
where
\begin{align*}
    \betaf^\bullet &= \Tr \left[
    \frac{1}{2} \Bf^\dagger \star \delta \Bf^\dagger 
    + \star \Bf^\dagger \delta \Af
    + \star \Bf^\dagger \delta \cf
    \right],&
    \ff^\bullet &= \Tr \left[
    \frac{1}{2} \Bf^\dagger (\Bf -\star \FAf ) \right]
\end{align*}
in accordance with our notion of lax BV-BFV equivalence. Note that $\ff^1=\ff^2=0$.

Furthermore, we need to show that the respective BV-BFV complexes are quasi-isormophic. The composition map $\lambda^* = \phi^* \circ \psi^*$ is shown to be the identity in Lemma \ref{lem:lambdaYM}, which follows directly from $\phi^* B^\dagger = 0$. In Lemma \ref{lem:chiYM}, we prove that the composition map $\chi^* = \psi^* \circ \phi^*$, which has the explicit form
\begin{align*}
    &\chi^* \Af = \Af,&
    &\chi^* \Af^\dagger = \Af^\dagger - \dAf \star \Bf^\dagger,\\
    &\chi^* \Bf = \star \FAf,&
    &\chi^* \Bf^\dagger = 0,\\
    &\chi^* \cf = \cf,&
    &\chi^* \cf^\dagger = \cs^\dagger
    - \frac{1}{2} [\Bf^\dagger,\star \Bf^\dagger],
\end{align*}
is homotopic to the identity by constructing the morphism $\chi^*_s = e^{s\mathcal{L}_{\Df}}$ with $\Df = [\Rf,\Qf]$, where $\Rf$ is chosen to act as
\begin{align*}
    &\Rf \Af = 0,&
    &\Rf \Af^\dagger = 0,&\\
    &\Rf \Bf = \star \Bf^\dagger,&
    &\Rf \Bf^\dagger = 0,&\\
    &\Rf \cf = 0,&
    &\Rf \cf^\dagger = 0.&
\end{align*}
The homotopy is explicitly given by
\begin{align*}
    &\chi^*_s \Af = \Af,&
    &\chi^*_s \Af^\dagger = \Af^\dagger + (e^{-s}-1) \dAf \star \Bf^\dagger,\\
    &\chi^*_s \Bf = e^{-s} \Bf - (e^{-s}-1) \star \FAf,&
    &\chi^*_s \Bf^\dagger = e^{-s} \Bf^\dagger,\\
    &\chi^*_s \cf = \cf,&
    &\chi^*_s \cf^\dagger = \cf^\dagger + \frac{1}{2} (e^{-2s} - 1) [ \Bf^\dagger, \star \Bf^\dagger].
\end{align*}
and fulfils $\lim_{s\rightarrow \infty} \chi^*_s = \chi^*$. In Lemma \ref{lem:chiIdCohomologyYM} we demonstrate that $\chi^*$ is the identity in cohomology by showing that the map
$$h_\chi \varphi^j_1 = \int_0^\infty e^{s\mathcal{L}_{\Df}} \mathcal{L}_{\Rf} \varphi^j_1 \d s \qquad \quad \varphi^j_1 \in \FfYM,$$ 
satisfying $\chi^* - \mathrm{id}_{1} = (\mathcal{L}_{\Qf} - \d) h_\chi + h_\chi (\mathcal{L}_{\Qf} - \d)$ (cf.\ Lemma \ref{lem:hchiexplicitform}) converges to
\begin{align*}
    &h_\chi \Af =
    h_\chi \Af^\dagger =
    h_\chi \Bf^\dagger =
    h_\chi \cf =
    h_\chi \cf^\dagger = 0,\\
    &h_\chi \Bf
    = \star \Bf^\dagger,
\end{align*}
therefore proving that the two lax BV-BFV theories in question have isomorphic BV-BFV cohomologies
\begin{align*}
    H^\bullet(\bvbfv_{1YM}) \simeq H^\bullet(\bvbfv_{2YM})
\end{align*}
and thus that they are lax BV-BFV equivalent.
\end{proof}

\subsection{1D reparametrisation invariant theories}
\label{sec:1Dreparametrisationinvarianttheories}
In this section we compare two one-dimensional reparamentrisation invariant theories, namely Jacobi theory, which one can think of as classical mechanics at constant energies, and one-dimensional gravity coupled to matter (1D GR). For an in-depth discussion of these theories we refer to \cite{cattaneo2017time}. We recall that the motivation to investigate the equivalence of these two theories is that, even though they are classically equivalent, 1D GR is compatible with the BV-BFV procedure while the Jacobi theory is not and yields a singular boundary structure. Firstly, this raises the question whether this boundary discrepancy is reflected at a cohomological level. Secondly, this discrepancy in the boundary behaviour is also present in the classically equivalent Einstein-Hilbert gravity and Palatini-Cartan gravity in (3+1) dimensions, where the latter is incompatible with the BV-BFV procedure. Our hope is that the comparison and analysis of these toy models might shed light in the question of equivalence of the (3+1) dimensional theories.

We take the base manifold to be a closed interval on the real line $M = I = [a,b] \subset \mathbb{R}$ with coordinate $t$ for both theories, which should be interpreted as a finite time interval.

In the case of Jacobi theory, we consider a matter field $\qj \in \Gamma(\mathbb{R}^n \times I) = C^\infty(I,\mathbb{R}^n)$ with mass $m$. The kinetic energy is taken to be $\Tj(\dqj) = \frac{m}{2}\|\dqj\|^2$ where $\|\cdot\|$ is the Euclidean norm on $\mathbb{R}^n$ and $\dqj = \partial_t \qj$ is the time derivative of $\qj$. Let $E$ denote a parameter and $V(\qj)$ a potential term. We do not assume $E = \Tj(\dqj) + V(\qj)$. The Jacobi action functional takes the form
\begin{align*}
    S_J[\qj] = \int_I 2\sqrt{(E-V)\Tj} \,\d t.
\end{align*}
To see that $S_J$ is parameterisation invariant, note that writing 
\begin{align*}
    \d s^2 = 2m(E-V)\,\d \qj^2
\end{align*}
lets us interpret the Jacobi action functional as the length of a path in the target space $\mathbb{R}^n$ with metric $\d s^2$. As such the symmetry group of Jacobi theory is the diffeomorphism group of the interval $\text{Diff}(I)$, i.e.\ the reparameterisations of $I$. The critical locus of $S_J$ is then given by the geodesics of the metric $\d s^2$, which are the trajectories of classical mechanics with an arbitrary parameterisation \cite{cattaneo2017time}. Imposing $E =\Tj(\dqj) + V(\qj)$ allows us to recover the standard parameterisation. We set $V(\qj)=0$ for the rest of the discussion. The EL equations can be shown to have the form
\begin{align*}
    \partial_t\left(\sqrt{\frac{E}{\Tj}}m\dqj\right) = 0,
\end{align*}
which are singular for $\dqj = 0$. As such, the space of fields for Jacobi theory is not $C^\infty(I,\mathbb{R}^n)$ but rather
\begin{align*}
    {\mathcal{F}_J} = \left\{\qj\in C^\infty(I,\mathbb{R}^n)\,\,|\,\, \dqs(t)\not=0\ \forall t \in I\right\}.
\end{align*}
We can then interpret Jacobi theory as classical mechanics at constant energies where the solutions do not have turning points, i.e.\ points in which the first derivative vanishes.

In the case of 1D GR \cite{cattaneo2017time} we also consider a metric field $\mathsf{g} \in \Gamma(S^2_+T^*I)$ as a non-vanishing section of the bundle of symmetric non-degenerate rank-$(0,2)$ tensors over $I$. For simplicity, we write $\mathsf{g} = g \, \d t^2$ and work with the component $g\in C^\infty(I,\mathbb{R}_{>0})$. The space of fields is given by
\begin{align*}
    {\mathcal{F}_{GR}}
    ={\mathcal{F}_J} \oplus C^\infty(I,\mathbb{R}_{>0}).
\end{align*}
The condition $\dot q \neq 0$ in ${\mathcal{F}_J}$ is strictly speaking not necessary in the 1D GR case, but we are ultimately interested in comparing 1D GR with the Jacobi theory and therefore impose it for consistency. In this picture, we can interpret 1D GR as an extension of Jacobi theory. We consider the action functional
\begin{align*}
    S_{GR}[q,g] = \int_I \left( \frac{T}{g} + E\right)\sqrt{g}\,\d t
    = \int_I \left( \frac{T}{\sqrt{g}} + \sqrt{g}E\right)\,\d t.
\end{align*}
Note that the Ricci tensor vanishes in 1D and hence the Einstein-Hilbert term is absent. The first term in $S_{GR}$ is simply the matter Lagrangian for vanishing potential in the presence of a metric field and the second is a cosmological term. As such we interpret the parameter $E$ as a cosmological constant. Since we are integrating over the Riemannian density $\sqrt{g}\,\d t$ of the metric $\d s^2 = g \d t^2$, the symmetry group is again $\text{Diff}(I)$.

\begin{proposition}[\cite{cattaneo2017time}]
Jacobi theory and 1D GR are classically equivalent. 
\end{proposition}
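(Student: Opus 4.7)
The plan is to instantiate the auxiliary-field criterion sketched in the remark after Definition \ref{def:classicalequivalence}: I will identify a subspace $C_{GR} \subset F_{GR}$ cut out by \emph{some} of the Euler--Lagrange equations of 1D GR (namely those obtained by varying the auxiliary metric component $g$), and then exhibit an isomorphism $\phi_{cl}\colon C_{GR} \to F_J$ that intertwines the two action functionals on the nose and identifies the symmetry distributions.

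First I would compute the variation of $S_{GR}$ with respect to $g$:
$$\delta_g S_{GR} = \int_I \left(-\frac{T}{2g^{3/2}} + \frac{E}{2\sqrt{g}}\right)\delta g\,\mathrm{d}t,$$
whose critical locus is the algebraic constraint $g = T/E$. Since $\dot q \neq 0$ on $F_J$ guarantees $T > 0$, and $E$ is taken positive, this solution defines a bona fide element of $C^\infty(I,\mathbb{R}_{>0})$. I would then set
$$C_{GR} := \{(q,g) \in F_{GR} \,:\, g = T(\dot q)/E\},$$
and define $\phi_{cl}\colon C_{GR} \to F_J$ as the projection $(q,g) \mapsto q$, with inverse $q \mapsto (q, T(\dot q)/E)$. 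Substituting $g = T/E$ into $S_{GR}$ one immediately verifies
$$S_{GR}\big|_{C_{GR}} = \int_I \left(\frac{T}{\sqrt{T/E}} + \sqrt{T/E}\,E\right)\mathrm{d}t = \int_I 2\sqrt{TE}\,\mathrm{d}t = \phi_{cl}^* S_J,$$
so that the residual Euler--Lagrange equations on $C_{GR}$ coincide with those of $S_J$, yielding $\mathcal{EL}_{GR} \simeq \mathcal{EL}_J$.

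It remains to match the symmetry distributions. Both theories are infinitesimally acted upon by $\mathfrak{X}(I)$: a vector field $X = X^0 \partial_t$ generates $\delta_X q = X^0 \dot q$ in both theories, and additionally $\delta_X g = X^0 \dot g + 2 g\, \dot X^0$ in the 1D GR case. I would check that this action preserves the constraint $g = T/E$, so that $D_{GR}$ restricts to a distribution on $C_{GR}$ which $\phi_{cl}$ sends isomorphically onto $D_J$; this is essentially the statement that the Lie derivative of both sides of $g=T/E$ agree, which reduces to a direct computation using $\delta_X T = X^0 \dot T + 2 T\, \dot X^0$ under the $\dot q$-dependence of $T$. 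The principal subtlety to watch for is compatibility of the positivity conditions ($\dot q \neq 0$, $g > 0$) under the diffeomorphism action, and the fact that the gauge parameter on $C_{GR}$ is the same as on $F_J$ with no residual direction lost; once these are verified, classical equivalence in the sense of Definition \ref{def:classicalequivalence} follows directly from the auxiliary-field criterion.
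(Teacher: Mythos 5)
Your proof is correct and follows exactly the route the paper takes for this kind of statement: the paper itself only cites \cite{cattaneo2017time} for this proposition, but your elimination of the auxiliary field via $\mathrm{EL}_g = \frac{E}{2\sqrt{g}} - \frac{T}{2g^{3/2}} = 0$, i.e.\ $g = T/E$, with $\phi_{cl}(q,g)=q$ and $S_{GR}\vert_{C_{GR}} = \phi_{cl}^* S_J$, is precisely the same auxiliary-field argument the paper uses for the first-/second-order formulations of classical mechanics and Yang--Mills, and it is the relation underlying the chain map $\phi$ of Lemma \ref{lem:philaxJacobi1DGR}, where $\phi^* g = T/E$. Your check that the $\mathrm{Diff}(I)$ action preserves the constraint (via $\delta_X T = X^0\dot T + 2T\dot X^0$ matching $\delta_X g = X^0\dot g + 2g\dot X^0$, consistent with the tensor-number bookkeeping $t(T)=t(g)=2$ of Appendix \ref{app:tensornumber}) correctly settles the identification of the symmetry distributions.
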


Let us now turn to the lax BV-BFV formulation of Jacobi theory. We first need to introduce the ghost field, which in case of diffeomorphims invariance is chosen to be $\xij \partial_t \in \mathfrak{X}(I)[1]$ \cite{piguet2000ghost}. In this setting, the Chevalley-Eilenberg operator acts on the fields as
the Lie derivative $\gamma_J = \mathcal{L}_{\xij \partial_t}$ and on the ghost as the Lie bracket of vector fields (cf.\ Example \ref{ex:BRSTcase}). We work with the component $\xij \in C^\infty(I,\mathbb{R})[1]$ for simplicity.

\begin{proposition/definition}
\label{propdef:laxJacobi}
The data $$\mathfrak{F}^\mathrm{lax}_{J} = (\Fj,\tj^\bullet,\Lj^\bullet,\Qj)$$
where
\begin{align}
    \label{eq:laxspaceoffieldsJacobi}
    \Fj = T^*[-1]({\mathcal{F}_J} \oplus C^\infty(I,\mathbb{R})[1]).
\end{align}
together with $\tj^\bullet \in \Omega^{1,\bullet}_\mathrm{loc}(\Fj{\times I})$ and $\Lj^\bullet \in \Omega^{0,\bullet}_\mathrm{loc}(\Fj{\times I})$, which are given by 
\begin{align*}
    &\tj^\bullet = \left[\qj^+ \cdot \delta \qj + \xij^+ \delta \xij\right] \d t + \sqrt{\frac{E}{\Tj}} m \dot{ \qj} \cdot \delta \qj + \qj^+ \xij \cdot \delta \qj - \xij^+ \xij \delta \xij,\\
    &\Lj^\bullet = \left[2\sqrt{E \Tj} + \qj^+ \cdot \xij \dqj + \xij^+ \xij \dot{\xij}\right] \d t.
\end{align*}
and the cohomological vector field $\Qj \in \mathfrak{X}_{\text{evo}}(\Fj)$
\begin{align*}
    &\Qj \qj = \xij \dqj,&
    &\Qj \qj^+ = -\partial_t\left(\sqrt{\frac{E}{\Tj}} m \dqj + \qj^+ \xij\right),\\
    &\Qj \xij = \xij \dxij,&
    &\Qj \xij^+ = - \qj^+ \cdot \dqj + \xij \dxij^+ + 2 \dxij \xij^+.
\end{align*}
defines a lax BV-BFV theory.
\end{proposition/definition}

\begin{proof}
It is a matter of a straightforward calculation to check that the formulas above satisfy the axioms of Definition \ref{def:LaxBVBFV}. 
\end{proof}

\begin{remark}
\label{rem:Chevalley-EilenbergJacobi}
Note that we can explicitly decompose the cohomological vector field $\Qj $ into its Chevalley-Eilenberg and Koszul-Tate parts as $\Qj  = \gamma_J + \delta_J$ by using Equation (\ref{eq:KoszulTate}) and setting $\gamma_J = \Qj  - \delta_J$ on $\{\qj^+,\xij^+\}$. We have:
\begin{align*}
    &\gamma_J \qj^+ = \xij \dqj^+ + \dot{\xij} q^+,&
    &\delta_J \qj^+ = -\partial_t\left(\sqrt{\frac{E}{\Tj}} m \dqj \xij\right),\\
    &\gamma_J \xij^+ = \xij \dot{\xij}^+ + 2 \dot{\xij} \xij^+,&
    &\delta_J \xij^+ = - \qj^+ \cdot \dqj.
\end{align*}
As $\qj$ is a function and $\xij$ is the component of a vector field, defining $\qj^+$ and $\xij^+$ through Equation \eqref{eq:laxspaceoffieldsJacobi} lets us interpret them as components of tensor fields in $\Omega^{\text{top}}(I) \otimes C^\infty(I,\mathbb{R}^n)[-1]$ and $\Omega^{\text{top}}(I) \otimes \Omega^{\text{top}}(I)[-2]$ respectively, or rather as components of a rank-$(0,1)$ and a rank-$(0,2)$ tensors over $I$. As such we see that the Chevalley-Eilenberg differential also acts as $\gamma_J = \mathcal{L}_{\xij \partial_t}$ on $\{\qj^+,\xij^+\}$.
\end{remark}

In the case of 1D GR we have
\begin{proposition}
The data $$\mathfrak{F}^\mathrm{lax}_{GR} = (\Fgr,\tgr^\bullet,\Lgr^\bullet,\Qgr)$$ 
where
\begin{align*}
    \Fgr = T^*[1]({\mathcal{F}_J} 
    \oplus C^\infty(\mathbb{R}_{>0}) 
    \oplus C^\infty(I)[1]).
\end{align*}
together with $\tgr^\bullet \in \Omega^{1,\bullet}_\mathrm{loc}(\Fgr{\times I})$ and $\Lgr^\bullet \in \Omega^{0,\bullet}_\mathrm{loc}(\Fgr{\times I})$, which are given by 
\begin{align*}
    & \tgr^\bullet = \left[q^+ \cdot \delta q +  \xi^+ \delta \xi + g^+ \delta g\right] \d t
    + \frac{m\dot q}{\sqrt{g}} \cdot \delta q + q^+\xi \cdot \delta q+ g^+\xi \delta g - (2g^+ g + \xi^+ \xi) \delta \xi,\\
    & \Lgr^\bullet = \left[\frac{T}{\sqrt{g}} + \sqrt{g} E + q^+ \cdot \xi \dot{q} + g^+ (\xi \dot g + 2 g \dot \xi) + \xi^+ \xi \dot{\xi}\right] \d t
    + \left(\frac{T}{g} - E \right) \sqrt{g}\xi.
\end{align*}
and the cohomological vector field $\Qgr \in \mathfrak{X}_{\text{evo}}(\Fgr)$
\begin{align*}
    &\Qgr q = \xi \dot q,&
    &\Qgr q^+ = -\partial_t\left(\frac{m\dot q}{\sqrt{g}} + q^+\xi \right),\\
    &\Qgr g = \xi \dot g + 2 \dot \xi g,&
    &\Qgr g^+ = \text{EL}_g + \xi \dot g^+ - \dot \xi g^+,\\
    &\Qgr \xi = \xi \dot q,&
    &\Qgr \xi^+ = - q^+ \cdot \dot q + g^+ \dot g + 2 \dot g^+ g + \xi \dot \xi^+ + 2 \dot \xi \xi^+.
\end{align*}
with
\begin{align*}
    \text{EL}_g \coloneqq \frac{\delta S_{GR}}{\delta g} = \frac{E}{2\sqrt{g}} - \frac{T}{2g^{3/2}},
\end{align*}
defines a lax BV-BFV theory.
\end{proposition}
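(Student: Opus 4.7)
The statement reduces to the direct verification of the axioms of Definition \ref{def:LaxBVBFV} for the given quadruple. The plan is to check, in order: (i) $\Qgr$ is evolutionary and cohomological, (ii) the structural identities \eqref{eq:LaxStructure1} and \eqref{eq:LaxStructure2} hold at each codimension, noting that since $\dim I = 1$ the only nontrivial components are at codimensions $0$ and $1$, and there is no constraint at codimension $\geq 2$.

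First I would check $[\Qgr,\Qgr]=0$ field by field. On the base fields $\{q,g,\xi\}$ the action of $\Qgr$ reduces to the Chevalley--Eilenberg differential $\gamma_{GR}=\mathcal{L}_{\xi\partial_t}$ associated with the $\mathrm{Diff}(I)$-action, extended trivially on $q$ and by the natural tensor action on $g$ (cf.\ Remark \ref{rem:Chevalley-EilenbergJacobi}), so nilpotency is the usual one for the CE differential of vector fields on $I$. On the antifields one decomposes $\Qgr=\gamma_{GR}+\delta_{KT,GR}$, where $\delta_{KT,GR}\Phi^+=\delta S_{GR}/\delta \Phi$ (yielding in particular the term $\mathrm{EL}_g$ in $\Qgr g^+$ and the momentum term in $\Qgr q^+$) and $\gamma_{GR}$ acts as the natural Lie derivative on the tensorial densities; nilpotency then follows from $\gamma_{GR}^2=0$, $\delta_{KT,GR}^2=0$ and the classical $\gamma_{GR}$-invariance of $S_{GR}$.

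Next, one computes $\vpgr^\bullet=\delta\tgr^\bullet$ and splits $\tgr^\bullet,\Lgr^\bullet,\vpgr^\bullet$ into their $\dt$-part (codimension $0$) and their pure-form part (codimension $1$). The only nontrivial identities to verify are
\begin{align*}
    \iota_{\Qgr}\vpgr^0 &= \delta \Lgr^0 + \d\,\tgr^1, \\
    \iota_{\Qgr}\vpgr^1 &= \delta \Lgr^1, \\
    \iota_{\Qgr}\iota_{\Qgr}\vpgr^0 &= 2\d\,\Lgr^1.
\end{align*}
The first is obtained by a standard variation-of-the-action calculation: $\delta\Lgr^0$ produces an Euler--Lagrange one-form (matching $\iota_{\Qgr}\vpgr^0$ on $\{q^+,g^+,\xi^+\}$) and a total $\d$-derivative that reproduces the boundary piece $\tgr^1$. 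The ghost-dependent terms contribute the gauge part of the equation, which fixes the $\gamma_{GR}$-pieces of $\Qgr$ on the antifields. Equation \eqref{eq:LkLax} then recovers the codimension-$1$ identities automatically from the codimension-$0$ one.

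The main technical nuisance is the non-polynomial dependence on $g$ through $\sqrt{g}$ and $g^{-1}$: one must carefully track Leibniz rules and signs when commuting $\delta$ with $\partial_t$, and when verifying that $\Qgr g^+$ correctly reproduces $\mathrm{EL}_g=\tfrac{E}{2\sqrt{g}}-\tfrac{T}{2g^{3/2}}$ together with the tensorial Lie-derivative contribution $\xi \dot g^+ - \dot\xi g^+$ (here the minus sign reflects that $g^+$ transforms as a density of the opposite weight to $g$, cf.\ Remark \ref{rem:Chevalley-EilenbergJacobi}). Once these bookkeeping issues are handled, each of the three identities above collapses to an algebraic check and the verification of $[\mathcal{L}_{\Qgr},\d]=0$ follows from the evolutionary character of $\Qgr$.
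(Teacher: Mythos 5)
Your overall route --- direct verification of the axioms of Definition \ref{def:LaxBVBFV}, the decomposition $\Qgr = \gamma_{GR} + \delta_{GR}$ for nilpotency, and a variation-of-the-action computation for the codimension-$0$ identity --- is exactly what the paper intends: its own proof of this proposition is just the assertion that the check is straightforward. However, one step in your plan is logically backwards. You claim that Equation \eqref{eq:LkLax} ``recovers the codimension-$1$ identities automatically from the codimension-$0$ one,'' but \eqref{eq:LkLax} is \emph{derived} by applying $\iota_E$ to \eqref{eq:LaxStructure1}, i.e.\ it presupposes the axiom at all codimensions; it only determines what $\Lgr^1$ must be \emph{if} the data form a lax theory, not that $\iota_{\Qgr}\vpgr^1 = \delta \Lgr^1$ actually holds. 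That identity is an independent condition, and it is precisely where 1D GR differs from the warm-up examples: here $\Lgr^1 = \left(\tfrac{T}{g}-E\right)\sqrt{g}\,\xi = -2g\,\text{EL}_g\,\xi \neq 0$, whereas in classical mechanics and Jacobi theory $L^1 = 0$ made the codimension-$1$ equations vacuous. So you must verify directly that contracting $\Qgr$ into $\vpgr^1 = \delta\tgr^1$ (which involves the terms $g^+\xi\,\delta g$, $q^+\xi\cdot\delta q$ and $(2g^+g+\xi^+\xi)\,\delta\xi$) reproduces $\delta\Lgr^1$; the computation is short but not free.

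Relatedly, your list of ``the only nontrivial identities'' omits the codimension-$1$ component of \eqref{eq:LaxStructure2}, namely $\iota_{\Qgr}\iota_{\Qgr}\vpgr^1 = 0$. Granting $\iota_{\Qgr}\vpgr^1 = \delta\Lgr^1$, this is equivalent to $\mathcal{L}_{\Qgr}\Lgr^1 = 0$, which holds but is not automatic since $\Lgr^1\neq 0$: writing $\Lgr^1 = f\xi$ with $f = T/\sqrt{g} - E\sqrt{g}$ of tensor number $1$, one finds
\begin{align*}
    \Qgr(f\xi) = (\xi\dot f + \dot\xi f)\xi + f\,\xi\dot\xi = -f\,\xi\dot\xi + f\,\xi\dot\xi = 0,
\end{align*}
using $\xi^2=0$ and $\dot\xi\,\xi = -\xi\dot\xi$. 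With these two items checked directly rather than deferred to \eqref{eq:LkLax}, your verification is complete and coincides with the paper's (unwritten) computation; the rest of your proposal, including the nilpotency argument via $\gamma_{GR}^2 = 0$, $\delta_{GR}^2 = 0$ and $[\gamma_{GR},\delta_{GR}]=0$ from the $\gamma_{GR}$-invariance of $S_{GR}$, is sound.
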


\begin{proof}
It is straightforward to show that these formulas satisfy the axioms of Definition \ref{def:LaxBVBFV}.
\end{proof}

\begin{remark}
\label{rem:Chevalley-Eilenberg1DGR}
As in Jacobi theory we can decompose the cohomological vector field as $\Qgr  = \gamma_{GR} + \delta_{GR}$. We have
\begin{align*}
    &\gamma_{GR} q^+ = \xi \dot q^+ + \dot \xi q^+,&
    &\delta_{GR} q^+ = -\partial_t\left(\frac{m \dot q}{\sqrt{g}}\right),\\
    &\gamma_{GR} g^+ = \xi \dot g^+ - \dot \xi g^+&
    &\delta_{GR} g^+ = \text{EL}_g,\\
    &\gamma_{GR} \xi^+ = \xi \dot \xi^+ + 2 \dot \xi \xi^+,&
    &\delta_{GR} \xi^+ = - q^+ \cdot \dot q + g^+ \dot g + 2 \dot g^+ g.
\end{align*}
Similarly to the Jacobi case, $q^+$, $g^+$ and $\xi^+$ are components of tensors in $\Omega^{\text{top}}(I) \otimes C^\infty(I,\mathbb{R}^n)[-1]$, $\Gamma[-1](S^2_+TI) \otimes \Omega^{\text{top}}(I)$ and $\Omega^{\text{top}}(I) \otimes \Omega^{\text{top}}(I)[-2]$ respectively, or rather components of a rank-$(0,1)$, a rank-$(2,1)$ and a rank-$(0,2)$ tensors over $I$. Therefore we again have $\gamma_{GR} = \mathcal{L}_{\xi \partial_t}$ on $\{q^+, g^+, \xi^+\}$.
\end{remark}

Before presenting the main theorem of this section we need to introduce some useful notation. Let $v \in C^\infty(I,\mathbb{R}^n)$ be a $\mathbb{R}^n$-valued field and let $u = \dot q / \|\dot q\|$ denote the normalised velocity of $q$. Note that $u$ is always well-defined because we assume $\dot q \neq 0$. We can then decompose $v = v_\parallel + v_\perp$ into its parallel $v_\parallel$ and perpendicular $v_\perp$ components with respect to $u$ as 
\begin{align}
\label{eq:ParallelPerp}
\begin{split}
    v_\parallel 
    &= (u \cdot v) u
    = (\dot q \cdot v) \frac{\dot q}{\|\dot q\|^2}
    = (\dot q \cdot v) \frac{m\dot q}{2T},\\
    v_\perp &= v - (u \cdot v) u = v - (\dot q \cdot v) \frac{m\dot q}{2T},
\end{split}
\end{align}
where we used that $T = \frac{m}{2}\|\dot q\|^2$.

We now present the main theorem of this section, together with an outline of the proof. The computational details and the various required Lemmata are presented in the Appendix \ref{app:LengthyCalcJac1DGR}.

\begin{theorem}
\label{theorem:laxequivalenceJacobi1DGR}
The lax BV-BFV theories $\mathfrak{F}^\mathrm{lax}_{GR}$ and $\mathfrak{F}^\mathrm{lax}_{J}$ of 1D GR and Jacobi theory are lax BV-BFV equivalent.
\end{theorem}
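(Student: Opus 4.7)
The plan mirrors the proofs of Theorems \ref{theorem:laxequivalenceCM} and \ref{theorem:laxequivalenceYM}, treating 1D GR as the ``first-order'' theory $\mathfrak{F}^\mathrm{lax}_1 = \mathfrak{F}^\mathrm{lax}_{GR}$ and Jacobi theory as the ``second-order'' theory $\mathfrak{F}^\mathrm{lax}_2 = \mathfrak{F}^\mathrm{lax}_{J}$. The metric component $g$ plays the role of an auxiliary field: its equation of motion $\text{EL}_g = \frac{E}{2\sqrt g} - \frac{T}{2g^{3/2}} = 0$ can be solved on the nose for $g = T/E$, which is well-defined on $F_J$ precisely because $\dot q \neq 0$ forces $T>0$. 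Thus $(g,g^+)$ is a generalized contractible pair in the sense of Section \ref{sec:Cp}, and the strategy of Section \ref{sec:Strategy} applies.

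First I would define $\phi\colon \Fj \to \Fgr$ by $\phi^* q = \qj$, $\phi^* \xi = \xij$, $\phi^* g = T_J/E$, $\phi^* q^+ = \qj^+$, $\phi^* \xi^+ = \xij^+$ and $\phi^* g^+ = 0$, where $T_J = \tfrac{m}{2}\|\dqj\|^2$, and then check the chain-map condition $\phi^* \Qgr = \Qj \phi^*$ on generators; by Lemma \ref{prop:phipsichainmapfromQ} this extends to a chain map on $\bvbfv^\bullet$. A direct substitution should give $\phi^* \tgr^\bullet = \tj^\bullet$ and $\phi^* \Lgr^\bullet = \Lj^\bullet$: on the locus $g=T/E$ one has $T/\sqrt g + \sqrt g\,E = 2\sqrt{ET}$, the codimension-$1$ boundary term $(T/g-E)\sqrt g\,\xi$ vanishes identically, and every term involving $g^+$ collapses because $\phi^* g^+ = 0$.

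Next, $\psi\colon \Fgr \to \Fj$ is defined by $\psi^* \qj = q$ and $\psi^* \xij = \xi$, with $\psi^* \qj^+$ and $\psi^* \xij^+$ equal to $q^+$ and $\xi^+$ plus $g^+$-dependent correction terms whose explicit form is uniquely dictated (as in Lemma \ref{lem:psilaxCM}) by the cocycle conditions $\psi^* \Qj \qj^+ = \Qgr \psi^* \qj^+$ and $\psi^* \Qj \xij^+ = \Qgr \psi^* \xij^+$, which force the corrections to reproduce the $\text{EL}_g$-dependent sector of $\Qgr$. One then identifies $\betagr^\bullet$ and $\fgr^\bullet$ such that $\psi^* \tj^\bullet = \tgr^\bullet + (\mathcal{L}_{\Qgr} - \d)\betagr^\bullet + \delta \fgr^\bullet$ and analogously for $\Lj^\bullet$; by Remark \ref{rem:WhatToCheck} only the transformations of $\theta^\bullet$ and $L^0$ need to be verified explicitly. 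The composition $\lambda^* = \phi^* \circ \psi^*$ is the identity on the nose because $\phi^* g^+ = 0$ annihilates every correction term in $\psi^* \qj^+$ and $\psi^* \xij^+$.

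The main obstacle is the proof that $\chi^* = \psi^* \circ \phi^*$ is homotopic to the identity and that the induced $h_\chi$ converges. Following Section \ref{sec:Strategy} I set $\chi^*_s = e^{s\mathcal{L}_{\Dgr}}$ with $\Dgr = [\Qgr, \Rgr]$, and choose $\Rgr$ to act nontrivially only on $g$ and on $\xi^+$ (via the $g^+$-dependence of $\Qgr \xi^+$) by a $g^+$-dependent expression that linearises the flow on the pair $(g,g^+)$, modelled on the choice $\Rf \pf = \gf^\flat(\pf^\dagger)$ from Lemma \ref{lem:chiCM}. Unlike in the CM case, where the eigenvalues of $\Dgr$ on the auxiliary sector were constant $-1$, here the coefficient of $\text{EL}_g$ in $\Qgr g^+$ depends nontrivially on $q$ through $T$, so the hard step is to select $\Rgr$ such that the exponential flow resums into a closed form with $\lim_{s\to\infty}\chi^*_s = \chi^*$ and such that $h_\chi \varphi^j_1 = \int_0^\infty e^{s\mathcal{L}_{\Dgr}} \mathcal{L}_{\Rgr}\varphi^j_1\,\d s$ converges on every generator of $\Fgr$. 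Once this is achieved, Lemma \ref{prop:hchiconvergence} extends convergence to the whole BV-BFV complex, yielding the isomorphism $H^\bullet(\bvbfv_{GR}^\bullet) \simeq H^\bullet(\bvbfv_J^\bullet)$ and establishing the lax equivalence.
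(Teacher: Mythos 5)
Your architecture is the same as the paper's: the same $\phi$ (Lemma \ref{lem:philaxJacobi1DGR}), the reading of $(g,g^+)$ as a generalized contractible pair in the sense of Section \ref{sec:Cp}, and the homotopy strategy of Section \ref{sec:Strategy}. However, where your plan commits to specifics it misfires, and the genuinely hard content of the theorem is exactly what you leave open. First, the corrections in $\psi^*$ are \emph{not} merely additive $g^+$-dependent terms: matching the Koszul--Tate differentials (which carry $\sqrt{E/T}$ on the Jacobi side versus $1/\sqrt{g}$ on the GR side) forces a multiplicative rescaling by $\eta^{3/2}=(gE/T)^{3/2}$ and a decomposition of $q^+$ into components parallel and perpendicular to $\dot q$, e.g.\ $\psi^*\xij^+=\eta^{3/2}\bigl(\xi^+ + \tfrac{g^{3/2}}{E}\dot g^+ g^+\bigr)$ (Lemma \ref{lem:psilaxJacobi1DGR}). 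Consequently your argument that $\lambda^*=\mathrm{id}$ ``because $\phi^*g^+=0$ annihilates every correction'' is incomplete: the terms proportional to $(\eta^{3/2}-1)$ contain no $g^+$ at all, and you additionally need $\phi^*\eta=1$, which holds only because $\phi^*g=T/E$.

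Second, your choice of $\Rgr$ is misplaced: in the paper $\Rgr\xi^+=0$, and the required motion of $\xi^+$ under the flow is generated indirectly via $\Dgr\xi^+=\Rgr\,\delta_{GR}\xi^+$, since $\delta_{GR}\xi^+$ contains $q^+\cdot\dot q$ and $g^+\dot g+2\dot g^+ g$. What is actually needed --- and this is the point of the closing remark of Section \ref{sec:Cp}, that the pairing of the contractible pair depends on the remaining fields through $T$ --- is a nontrivial action of $\Rgr$ on $q^+_\parallel$ and $q^+_\perp$, with coefficients proportional to $\text{EL}_g\,\xi^+\tfrac{m\dot q}{2T}$ and to $g^+q^+_\perp$, tuned so that the recursion for $\Dgrk$ closes. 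Even then, $\Dgr$ has field-dependent coefficients and the exponential does not resum directly on the antifields; the paper's device (Lemma \ref{lem:chiJacobi1DGR}) is to flow the combinations $g^{3/2}\varphi^+$, on which $\Dgr$ acts with constant eigenvalues $-1$ and $-2$, and then recover $\chi^*_s\varphi^+$ by dividing by $(\chi^*_s g)^{3/2}$, using that $\chi^*_s$ is an algebra morphism and that $\chi^*_s g=e^{-s}g+(1-e^{-s})T/E$ is nowhere vanishing. Likewise, convergence of $h_\chi$ is not automatic from the CM template: it requires evaluating integrals such as $\int_0^\infty e^{-s}\bigl[e^{-s}g+(1-e^{-s})T/E\bigr]^{-5/2}\,\d s$ (Lemma \ref{lem:chiIdCohomologyJacobi1DGR}), whose finiteness again rests on the nonvanishing of $\chi^*_s g$, hence ultimately on $\dot q\neq 0$. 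Since you explicitly defer the selection of $\Rgr$, the resummation, and the convergence check, and your one concrete guess about $\Rgr$ points at the wrong generators, the proposal has a genuine gap at its central step.
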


\begin{proof}
We need to check all the conditions from Definition \ref{def:laxequivalence}. The existence of two maps $\phi$, $\psi$ with the desired properties is presented in Lemmata \ref{lem:philaxJacobi1DGR} and \ref{lem:psilaxJacobi1DGR} respectively, where we also show that the pullback maps $\phi^*$, $\psi^*$ are chain maps w.r.t. the BV-BFV complexes $\bvbfv^\bullet_i$, and that they map $(\theta^\bullet_i,L^\bullet_i)$ in the desired way. Specifically, $\phi: \Fj \rightarrow \Fgr$ is defined through
\begin{align*}
    &\phi^*q = \qj,&
    &\phi^*g = \frac{\Tj}{E},&
    &\phi^*\xi = \xij,\\
    &\phi^*q^+ = \qj^+,&
    &\phi^*g^+ = 0,&
    &\phi^*\xi^+ = \xij^+.
\end{align*}
and maps the lax BV-BFV data of 1D GR as
\begin{align*}
    &\phi^* \tgr^\bullet = \tj^\bullet,&
    &\phi^* \Lgr^\bullet = \Lj^\bullet.
\end{align*}
On the other hand $\psi: \Fgr \rightarrow \Fj$ is given by
\begin{align*}
    &\psi^*\qj = q,\\
    &\psi^*\xij = \xi,\\
    &\psi^* \qj^+_\parallel = \eta^{3/2} \left(q^+_\parallel 
        - \left[g^+\dot g + 2\dot g^+ g \right] \frac{m\dot q}{2T}
        -\frac{g^{3/2}}{E} \left[\dot{\text{EL}_g}g^+ - \text{EL}_g \dot g^+ \right]\frac{m\dot q}{2T}\right),\\
    &\psi^*\qj^+_\perp = \eta^{3/2} \left(q^+_\perp 
    +\frac{2m}{E} g^+ \ddot q_\perp \right),\\
    &\psi^*\xij^+ =  \eta^{3/2}\left(\xi^+ + \frac{g^{3/2}}{E} \dot g^+ g^+\right),
\end{align*}
where $\eta \coloneqq \frac{gE}{T}$, and maps the lax BV-BFV data of the Jacobi theory as
\begin{align*}
    &\psi^* \tj^\bullet  = \tgr^\bullet  + (\mathcal{L}_{\Qgr} - \d) \betagr^\bullet + \delta \fgr^\bullet,\\
    &\psi^* \Lj^\bullet  = \Lgr^\bullet  + (\mathcal{L}_{\Qgr} - \d) \iota_{\Qgr} \betagr^\bullet + \d \fgr^\bullet,
\end{align*}
where (writing $\kappa^\bullet = \kappa^0 \dt + \kappa^1$)
\begin{align*}
    \betagr^0 =& 
    -\frac{4g^{7/2}}{\Omega^2}Tg^+\delta g^+
    +\left(\frac{2g^2}{\Omega} + \eta^{3/2}\frac{2\sqrt{g}}{E}\right)g^+q^+_\perp \cdot \delta q\\
    &+\left(\frac{4g^{7/2}}{\Omega^2}T - \eta^{3/2}\frac{g^{3/2}}{E}\right)\dot g^+ g^+ \frac{m\dot q}{2T} \cdot \delta q 
    -(\eta^{3/2}-1)\xi^+ \frac{m\dot q}{2T} \cdot \delta q,\\
    \betagr^1 =& \,\,\xi \betagr^0 + \frac{2g^{3/2}}{\Omega}g^+m\dot q \cdot \delta q,\\
    \fgr^0 =& \,\,2 g^+\left(g - \frac{2g^{3/2}}{\Omega} T\right),\\
    \fgr^1 =& \,\,\xi \fgr^0,
\end{align*}
with $\Omega = \sqrt{g}T + g \sqrt{TE}$, in accordance with our notion of lax BV-BFV equivalence. 

Furthermore, we need to show that the respective BV-BFV complexes are quasi-isormophic. The composition map $\lambda^* = \phi^* \circ \psi^*$ is shown to be the identity in Lemma \ref{lem:lambdaJacobi1DGR}. In Lemma \ref{lem:chiJacobi1DGR}, we prove that the composition map $\chi^* = \psi^* \circ \phi^*$, which has explicitly form
\begin{align*}
    &\chi^*q = q  ,\\
    &\chi^*\xi = \xi,\\
    &\chi^*g = \frac{T}{E},\\
    &\chi^* q^+_\parallel = \eta^{3/2} \left(q^+_\parallel 
        - \left[g^+\dot g + 2\dot g^+ g \right] \frac{m\dot q}{2T}
        -\frac{g^{3/2}}{E} \left[\dot{\text{EL}_g}g^+ - \text{EL}_g \dot g^+ \right]\frac{m\dot q}{2T}\right), \\
    &\chi^* q^+_\perp = \eta^{3/2} \left(q^+_\perp 
    +\frac{2m}{E} g^+ \ddot q_\perp \right),\\
    &\chi^* \xi^+ = \eta^{3/2}\left(\xi^+ + \frac{g^{3/2}}{E} \dot g^+ g^+\right),\\
    &\chi^* g^+ = 0,
\end{align*}
is homotopic to the identity by constructing the morphism $\chi^*_s = e^{s\mathcal{L}_{\Dgr}}$ with $\Dgr = [\Rgr,\Qgr]$, where $\Rgr$ is chosen to act as
\begin{align*}
    &\Rgr q = 0,&
    &\Rgr\xi = 0,&
    &\Rgr g= \frac{-2g^{3/2}}{E} g^+,&\nonumber\\
    &\Rgr q^+_\parallel = -\frac{3\sqrt{g}}{E} \text{EL}_g \xi^+ \frac{m\dot q}{2T},&
    &\Rgr \xi^+ =0,&
    &\Rgr g^+ = 0,\nonumber\\
    &\Rgr q^+_\perp = \frac{3\sqrt{g}}{E} g^+ \dot q^+_\perp.
\end{align*}
The homotopy is given by
\begin{align*}
    &\chi^*_sq = q,\\
    &\chi^*_s\xi = \xi,\\
    &\chi^*_s g 
    = e^{-s}g + (1-e^{-s}) \frac{T}{E},\\
    &\chi^*_s q^+_\parallel
    = \left(\frac{g}{\chi^*_s g}\right)^{3/2}
    \left(q^+_\parallel
    + (e^{-s} - 1) 2g^{-3/2}  \sigma\left(\frac{T}{E}\right)\frac{m\dot q}{2T}
    + (e^{-2s} - 1) \frac{3}{E} g^{-3/2} \sigma(g^{3/2}EL_g)\frac{m\dot q}{2T},\right)\\
    &\chi^*_s q^+_\perp = 
    \left(\frac{g}{\chi^*_s g}\right)^{3/2}
    \left( q^+_\perp - (e^{-s}-1)  \frac{2m}{E}\ddot q_\perp  g^+\right),\\
    &\chi^*_s \xi^+ = \left(\frac{g}{\chi^*_s g}\right)^{3/2}
    \left( \xi^+ -(e^{-2s} - 1)\frac{g^{3/2}}{E}\dot g^+ g^+\right),\\
    &\chi^*_s g^+ = \left(\frac{g}{\chi^*_s g}\right)^{3/2}e^{-s} g^+.
\end{align*}
where $\sigma(\varphi) = \varphi \partial_t{(g^{3/2}g^+)} - \dot \varphi g^{3/2}g^+$, and fulfils $\lim_{s\rightarrow \infty} \chi^*_s = \chi^*$. There are some steps that are important to highlight in this case. First, in Lemma \ref{lem:[R,gamma]} we show that $\Rgr$ commutes with the Chevalley-Eilenberg differential $\gamma_{GR}$\footnote{Note that this is also the case in the Yang--Mills example.}
\begin{align*}
    [\Rgr,\gamma_{GR}] = 0,
\end{align*}
by using general arguments and $\gamma_{GR} \sim \mathcal{L}_{\xi \partial_t}$, but it can also be checked through straightforward calculations. We do this as it greatly simplifies the computations since $\Dgr = [\Rgr,\Qgr] = [\Rgr,\delta_{GR}]$. 

Furthermore, while the computations for the action of $\chi^*_s = e^{s\mathcal{L}_\Dgr}$ on the fields $\varphi \in\{q,\xi,g\}$ is analogous to the ones presented for the other examples, it turns out that in this case of the antifields $\varphi^+ \in \{q^+,\xi^+,g^+\}$ finding a recursive formula for $\Dgrk\varphi^+$ is quite challenging. Instead, it is easier to take a slight detour: we first compute $\chi^*_s(g^{3/2}\varphi^+)$ through $\Dgrk(g^{3/2}\varphi^+)$ and then use the property that $\chi
^*_s = e^{s\mathcal{L}_{\Dgr}}$ is a morphism in order to recover $\chi^*_s\varphi^+$
\begin{align}
    &\chi^*_s(g^{3/2}\varphi^+) = \left(\chi^*_sg\right)^{3/2}\chi^*_s\varphi^+ \nonumber\\
    \Leftrightarrow \quad&\chi^*_s\varphi^+ = \frac{\chi^*_s(g^{3/2}\varphi^+)}{(\chi^*_sg)^{3/2}} = \frac{\chi^*_s(g^{3/2}\varphi^+)}{\left[g + (e^{-s} - 1) \left(g-\frac{T}{E}\right)\right]^{3/2}}.
    \label{eq:chi^*_svarphi^+}
\end{align}
The limit $s\rightarrow \infty$ then reads
\begin{align}
    \label{eq:limchi^*_sPhi^+}
    \lim_{s\rightarrow\infty} \chi^*_s \varphi^+
    = \left(\frac{E}{T}\right)^{3/2}
    \lim_{s\rightarrow\infty} \chi^*_s (g^{3/2}\varphi^+).
\end{align}
Since $\chi^*_sg$ is nowhere vanishing for any $s\in \mathbb{R}_{\geq 0}$, this expression is well-defined $\forall s\in\mathbb{R}_{\geq 0}$ iff $\chi^*_s(g^{3/2}\varphi^+)$ is well-defined $\forall s\in\mathbb{R}_{\geq 0}$ as well.

We exemplify this procedure with the computation of $\chi^*_s g^+$. In order to see where the aforementioned problem arises we compute
\begin{align*}
    \Dgr  g^+ 
    &= \left(\delta_{GR} \Rgr + \Rgr \delta_{GR}\right) g^+
    = \Rgr (\text{EL}_g)
    = \frac{\delta \text{EL}_g}{\delta g} \Rgr g\\
    &= \left(-\frac{E}{4g^{3/2}} + \frac{3T}{4g^{5/2}}\right) \frac{-2g^{3/2}}{E} g^+
    = \left(\frac{1}{2} - \frac{3}{2}\frac{T}{Eg}\right)g^+.
\end{align*}
One can then proceed with the calculation of $\Dgrk g^+$ for higher $k$'s and notice that the expressions become quite lengthy as $\Dgr g = T/E-g$ (Equation \eqref{eq:Dgrg}). The idea to avoid this complication by considering $\Dgrk (g^{3/2} g^+)$, where a recursive formula becomes apparent. We have
\begin{align*}
    \Dgr (g^{3/2} g^+) 
    &= \frac{3}{2} g^{1/2} \Dgr g g^+ + g^{3/2} \Dgr g^+ \\
    &= \frac{3}{2} g^{1/2}\left(\frac{T}{E}-g\right) g^+ 
    + g^{3/2} \left(\frac{1}{2} - \frac{3}{2}\frac{T}{Eg}\right)g^+ \\
    &= - g^{3/2} g^+.
\end{align*}
It is then straightforward to see that
\begin{align*}
    && \Dgrk (g^{3/2} g^+) &= (-1)^k g^{3/2}g^+  \quad \text{for } k \geq 0,&&\\
    &\Rightarrow& 
    e^{s\mathcal{L}_{\Dgr}}(g^{3/2}g^+)
    &= e^{-s} g^{3/2}g^+.&&
\end{align*}
Using Eq. \eqref{eq:chi^*_svarphi^+} we then have
\begin{align*}
    e^{s\mathcal{L}_{\Dgr}} g^+ 
    &= \left(\frac{g}{\chi^*_s g}\right)^{3/2} e^{-s} g^+,\\
    \Rightarrow 
    \lim_{s\rightarrow \infty} e^{s\mathcal{L}_{\Dgr}} g^+ &= 0 = \chi^* g^+.
\end{align*}
The computations for $q^+,\xi^+$ are lengthier and can be found in the Appendix \ref{app:LemmataforTheoremJacobi1DGR}. 

In Lemma \ref{lem:chiIdCohomologyJacobi1DGR} we demonstrate that $\chi^*$ is the identity in cohomology by showing that the map
$$h_\chi \varphi^j = \int_0^\infty e^{s\mathcal{L}_{\Dgr}} \mathcal{L}_{\Rgr} \varphi^j \d s, \quad \quad \varphi^j \in \Fgr,$$ 
satisfying $\chi^* - \mathrm{id}_{GR} = (\mathcal{L}_{\Qgr} - \d) h_\chi + h_\chi (\mathcal{L}_{\Qgr} - \d)$ (cf.\ Lemma \ref{lem:hchiexplicitform}) converges to 
\begin{align*}
    &h_\chi q = 
    h_\chi \xi = 
    h_\chi \xi^+ = 
    h_\chi g^+ = 0,\\
    &h_\chi g = -\frac{2g^{3/2}}{E} g^+,\\
    &h_\chi q^+_\parallel =  (1-\eta^{3/2}) \left[\xi^+ + \frac{g^{3/2}}{E}\dot g^+ g^+\right] \frac{m\dot q}{2T}
    +  \frac{3 \eta - 2 \sqrt{\eta} - 1}{(\sqrt{\eta}+1)^2} \frac{g^{3/2}}{E}\dot g^+ g^+ \frac{m\dot q}{2T}, \\
    &h_\chi q^+_\perp = \frac{2}{T} \left(\frac{\eta}{\sqrt{\eta}+1} + 1\right) g^{3/2} g^+q^+_\perp.
\end{align*}
therefore proving that the two lax BV-BFV theories in question have isomorphic BV-BFV cohomologies
\begin{align*}
    H^\bullet(\bvbfv_{J}) \simeq H^\bullet(\bvbfv_{GR})
\end{align*}
and thus that they are lax BV-BFV equivalent.
\end{proof}

In this example, we are also interested on how the the composition maps $\lambda^*$, $\chi^*$ affect the boundary structure, namely the strict BV-BFV structure of the Jacobi theory and 1D GR. More specifically, we want to investigate how they change the kernel of the pre-boundary forms $\check \omega$ and as such the quotient $\F^\partial = \check \F^\partial/\ker \check \omega$ (cf.\ Equation \eqref{eq:quotientboundary}). 

In the case of $\lambda^*$ this is trivial since it is the identity. Regarding $\chi^*$, we argue that $\ker \chi^* \check \omega_{GR}$ has a singular behaviour and that we cannot construct a BV-BFV theory from the data $\chi^*\mathfrak{F}^\mathrm{lax}_{GR}\coloneqq (\F^\mathrm{lax}_{GR},\chi^*\theta^\bullet_{GR},\chi^*L^\bullet_{GR},\Qgr )$. Thus, although $\chi^*$ is the identity in the BV-BFV cohomology $H^\bullet(\bvbfv_{GR})$, it spoils the BV-BFV structure of 1D GR.

\begin{theorem}
\label{theorem:chispoilsboundary}
The lax BV-BFV theory $\chi^* \mathfrak{F}^\mathrm{lax}_{GR} \coloneqq (\F^\mathrm{lax}_{GR},\chi^*\theta^\bullet_{GR},\chi^*L^\bullet_{GR},\Qgr)$ does not yield a BV-BFV theory.
\end{theorem}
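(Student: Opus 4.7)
The plan is to compute the pre-boundary $2$-form induced by $\chi^*\mathfrak{F}^\mathrm{lax}_{GR}$ and show that its kernel has non-locally-constant rank on $\check\F^\partial_{GR}$, so that the symplectic reduction \eqref{eq:quotientboundary} fails to produce a smooth manifold; by Remark \ref{rem:strictification}, this is precisely the obstruction to strictification.

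First I would exploit the factorisation $\chi^*=\psi^*\circ\phi^*$ together with $\phi^*\tgr^\bullet=\tj^\bullet$ (Lemma \ref{lem:philaxJacobi1DGR}) to write $\chi^*\tgr^\bullet = \psi^*\tj^\bullet$. Reading off the codimension-$1$ component and substituting $\psi^*$ field-wise yields
\[
\chi^*\tgr^1 \;=\; \sqrt{\tfrac{E}{T}}\, m\dot q\cdot\delta q + (\chi^* q^+)\,\xi\cdot\delta q - (\chi^*\xi^+)\,\xi\,\delta\xi,
\]
with $\chi^*q^+$ and $\chi^*\xi^+$ as in the proof of Theorem \ref{theorem:laxequivalenceJacobi1DGR}. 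Two structural features jump out immediately: neither $\delta g$ nor $\delta g^+$ appears, because $\chi^*g=T/E$ depends only on $q$ while $\chi^*g^+=0$.

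I would then take the vertical differential and restrict to a boundary point to obtain $\chi^*\check\omega_{GR}$. The absence of $\delta g^+$ forces $\partial/\partial g^+\in\ker\chi^*\check\omega_{GR}$, and the absence of $\delta g$ analogously forces $\partial/\partial g$ into the kernel on the sublocus where the $g$-dependence of $\chi^*q^+$ and $\chi^*\xi^+$ is suppressed (in particular at $\xi=0$). In sharp contrast, the strict $\check\omega_{GR}$ pairs $g$ with $g^+$ nontrivially through the ghost-sector term $g^+\xi\,\delta g$ of $\tgr^1$, so these new kernel directions genuinely enlarge the kernel with respect to the strict GR case.

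The main obstacle is to rule out that these additional kernel directions conspire with the matter sector to produce, nevertheless, a constant-rank distribution. To address this I would restrict $\chi^*\check\omega_{GR}$ to the slice $\xi=0$ with antifields turned off: the matter block reduces, using the decomposition \eqref{eq:ParallelPerp}, to $\sqrt{E/T}\,m\,\delta\dot q_\perp\wedge\delta q$, which is precisely the Jacobi pre-boundary pairing whose kernel was shown in \cite{cattaneo2017time} to be of non-locally-constant rank on $\check\F^\partial_J$ (the jump being controlled by degenerations of $\dot q$ relative to the antifield content). Re-instating the antifield-dependent contributions and the $(g,g^+)$ kernel directions identified above only enlarges the kernel in a further configuration-dependent way, preserving the discontinuity. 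Hence $\ker\chi^*\check\omega_{GR}$ is not a smooth subbundle of $T\check\F^\partial_{GR}$, the quotient \eqref{eq:quotientboundary} is not smooth, and $\chi^*\mathfrak{F}^\mathrm{lax}_{GR}$ admits no strict BV-BFV extension.
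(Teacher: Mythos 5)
Your opening move is the same as the paper's: factor $\chi^*=\psi^*\circ\phi^*$, use $\phi^*\tgr^\bullet=\tj^\bullet$ to get $\chi^*\tgr^\bullet=\psi^*\tj^\bullet$, and reduce the question to the boundary behaviour of Jacobi theory. But your execution of the kernel analysis contains a genuine error. You claim that because $\delta g^+$ (resp.\ $\delta g$) does not appear in $\chi^*\tgr^1$, the direction $\partial/\partial g^+$ (resp.\ $\partial/\partial g$) lies in $\ker\chi^*\check\omega_{GR}$. The kernel, however, is computed from $\check\omega=\delta\check\alpha$, not from $\check\alpha$ itself: applying $\delta$ to the terms $(\chi^*q^+)\,\xi\cdot\delta q$ and $(\chi^*\xi^+)\,\xi\,\delta\xi$ produces $\delta g^+$ and $\delta\dot g^+$ contributions, since $\chi^*q^+$ and $\chi^*\xi^+$ depend explicitly on $g^+,\dot g^+$ (e.g.\ $\chi^*\xi^+=\eta^{3/2}\bigl(\xi^+ + \tfrac{g^{3/2}}{E}\dot g^+ g^+\bigr)$), and likewise $\delta g$ contributions through $\eta^{3/2}$ and the $g^{3/2}$ factors. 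So $\partial/\partial g^+$ pairs nontrivially with $\delta q$- and $\delta\xi$-directions at generic points with $\xi\neq 0$, and your claimed kernel vectors are not kernel vectors there. Your second step has a related structural gap: you restrict (pull back) $\chi^*\check\omega_{GR}$ to the slice $\xi=0$ with antifields switched off, find a degenerate pairing there, and then assert that re-instating the remaining terms \emph{only enlarges} the kernel. Neither half of this is valid: rank jumping of the pullback of a two-form to a slice does not imply rank jumping of the full form at points of that slice, and adding further terms to a two-form can just as well shrink the kernel as enlarge it --- there is no monotonicity to appeal to.

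The paper avoids all of this by never computing a kernel for the pulled-back theory at all. It observes that $\chi^*\tgr^\bullet=\psi^*\tj^\bullet$ and $\chi^*\Lgr^\bullet=\psi^*\Lj^\bullet$ are literally the lax data of Jacobi theory written in the composite fields $\{\psi^*\qj,\psi^*\qj^+,\psi^*\xij,\psi^*\xij^+\}$, which serve as coordinates on a submanifold of $\Fgr$; hence $\chi^*\check\omega_{GR}=\psi^*\check\omega_J=\check\omega_J[\psi^*\qj,\psi^*\qj^+,\psi^*\xij,\psi^*\xij^+]$ has, in these coordinates, exactly the singular kernel behaviour already established for $\check\omega_J$ in \cite{cattaneo2017time}. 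If you want to keep your more hands-on route, you would need to redo the kernel computation honestly for the full $\chi^*\check\omega_{GR}$ (including all $\delta g$, $\delta g^+$, $\delta\dot g^+$ terms generated by the antifield substitutions) and exhibit the rank jump directly --- at which point the paper's change-of-coordinates argument is both shorter and strictly stronger.
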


\begin{proof}
Recall that pulling back $(\theta^\bullet_{GR},L^\bullet_{GR})$ with $\phi^*$ gives $\phi^*\theta^\bullet_{GR} = \theta^\bullet_{J}$ and $\phi^*L^\bullet_{GR} = L^\bullet_{J}$. Applying the map $\chi^* = \psi^*\circ\phi^*$ to $(\theta^\bullet_{GR},L^\bullet_{GR})$ then yields
\begin{align*}
    &\chi^* \theta^\bullet_{GR} 
    = (\psi^* \circ \phi^*) \theta^\bullet_{GR} 
    = \psi^* \theta^\bullet_{J} = \theta^\bullet_{J}[\psi^*\qj,\psi^*\qj^+,\psi^*\xij,\psi^*\xij^+],\\
    &\chi^* L^\bullet_{GR} 
    = (\psi^* \circ \phi^*) L^\bullet_{GR} 
    = \psi^* L^\bullet_{J} = L^\bullet_{J}[\psi^*\qj,\psi^*\qj^+,\psi^*\xij,\psi^*\xij^+].
\end{align*}
Thus, the lax BV-BFV data of $\chi^* \mathfrak{F}^\mathrm{lax}_{GR}$ has the same form as the lax BV-BFV data for Jacobi theory presented in the Proposition/Definition \ref{propdef:laxJacobi} on the submanifold of $\F^\mathrm{lax}_{GR}$ with local coordinates $\{\psi^*\qj,\psi^*\qj^+,\psi^*\xij,\psi^*\xij^+\}$. Furthermore, applying $\chi^*$ to Equations (\ref{eq:LaxStructure}) for the lax BV-BFV formulation of 1D GR yields
\begin{align*}
    &\iota_{\Qgr } \psi^* \varpi^\bullet_J 
    = \delta \psi^*L^\bullet_J + \d \psi^*\theta^\bullet_J,\\
    &\iota_{\Qgr } \iota_{\Qgr } \psi^*\varpi^\bullet_J = 2\,\d \psi^* L^\bullet_J.
\end{align*}
This means that the theory $\chi^* \mathfrak{F}^\mathrm{lax}_{GR}$ is just a version of Jacobi theory which is defined on a submanifold of $\mathcal{F}^\mathrm{lax}_{GR}$ with local coordinates $\{\psi^*\qj,\psi^*\qj^+,\psi^*\xij,\psi^*\xij^+\}$. This theory will have the same behaviour as the original Jacobi theory and as such the kernel of the pre-boundary 2-form
\begin{align*}
    \chi^* \check \omega_{GR}
    &= \int_{\partial I} \delta \chi^* \theta^1_{GR} \d t
    = \int_{\partial I} \delta \psi^* \theta^1_{J} \d t
    = \psi^* \check \omega_{J}\\
    &= \check \omega_{J}[\psi^*\qj,\psi^*\qj^+,\psi^*\xij,\psi^*\xij^+],
\end{align*}
will be singular, just as the kernel of the pre-boundary 2-form $\check \omega_{J}$ of Jacobi theory \cite{cattaneo2017time}. As such, the data $\chi^* \mathfrak{F}^\mathrm{lax}_{GR}$ does not yield a BV-BFV theory.
\end{proof}

\begin{remark}
\label{rem:Quasi-isoDoesNotPreserveBFV}
It is clear that the theories $\mathfrak{F}^\mathrm{lax}_{GR}$ and $\chi^*\mathfrak{F}^\mathrm{lax}_{GR}$ are also lax BV-BFV equivalent (see Remark \ref{rem:Pairwiselaxequivalent}). We have thus presented two pairs of theories, $(\mathfrak{F}^\mathrm{lax}_{J}, \mathfrak{F}^\mathrm{lax}_{GR})$ and $ (\mathfrak{F}^\mathrm{lax}_{GR}, \chi^*\mathfrak{F}^\mathrm{lax}_{GR})$, which have isomorphic BV-BFV cohomologies, but differ in terms of their compatibility with the BV-BFV axioms, a behaviour which is not present in the examples of classical mechanics on a curved background and Yang--Mills theory which we considered in Sections \ref{sec:CM} and \ref{sec:YM}. Indeed, a remarkable feature of the classical equivalence between Jacobi theory and 1d GR is that it can actually be promoted to a quasi-isomorphism of their BV-BFV complexes (lax equivalence), which in particular implies BV equivalence in the sense of Definition \ref{def:BVequivalence}. This is compatible with the process of removal of auxiliary fields outlined in \cite{barnich1995local}. However, the request that two lax-equivalent theories both admit a strictification in the sense of Remark \ref{rem:strictification} is a genuine refinement of the notion of BV (and lax) equivalence of field theories. Since the BV-BFV quantization program requires a strict theory, this obstruction marks a roadblock for non-strictifiable lax BV-BFV theories.
\end{remark}

\appendix
\section{Lengthy calculation for Yang--Mills}
\label{sec:LengthyCalcYM}

\subsection{Lemmata used in Theorem \ref{theorem:laxequivalenceYM}}

This Appendix we present the lemmata used in Theorem \ref{theorem:laxequivalenceYM} and the respective detailed proofs and calculations.

\begin{lemma}
\label{lem:philaxYM}
Let $\phi: \FsYM \rightarrow \FfYM$ be defined through
\begin{align*}
    &\phi^* \Af = \As,&
    &\phi^* \Bf = \star \FAs,&
    &\phi^* \cf = \cs,\\
    &\phi^* \Af^\dagger = \As^\dagger,&
    &\phi^* \Bf^\dagger = 0,&
    &\phi^* \cf^\dagger = \cs^\dagger.
\end{align*}
Its pullback map $\phi^*$ is a chain map w.r.t. $(\mathcal{L}_{Q_i}-\d)$ and maps the lax BV-BFV data of the first-order theory as
\begin{align*}
    &\phi^* \tf^\bullet = \ts^\bullet,&
    &\phi^* \Lf^\bullet = \Ls^\bullet.
\end{align*}
\end{lemma}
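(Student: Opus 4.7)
The plan is to verify the three required properties in turn: the chain map condition on generators (which lifts to the full BV-BFV complex by Lemma \ref{prop:phipsichainmapfromQ}), the identity $\phi^*\tf^\bullet = \ts^\bullet$, and the identity $\phi^*\Lf^\bullet = \Ls^\bullet$. Since $\phi^*$ is a pullback, it automatically commutes with both $\delta$ and $\d$, so showing it intertwines $\Qf$ and $\Qs$ on each generator of $\FsYM$ will suffice to conclude it is a chain map with respect to $(\mathcal{L}_{\Qf}-\d)$ on the BV-BFV complex.

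The chain-map verification is largely a matter of unpacking definitions on each of the six generators $\tilde A, \tilde c, \tilde A^\dagger, \tilde c^\dagger$. The only nontrivial ingredient is the behaviour of $\phi^*\Bf = \star\FAs$ under $\Qs$: here one uses the Bianchi-type identity $\Qs \FAs = \dAs\dAs\tilde c = [\FAs,\tilde c]$, from which $\Qs(\star\FAs) = [\star\FAs,\tilde c]$ follows because $\tilde c$ is a $0$-form and the Hodge star therefore commutes with the adjoint action. This matches $\phi^*\Qf \Bf = \phi^*[\cf,\Bf] = [\tilde c,\star\FAs]$. The only other subtle identity is on $\Bf^\dagger$, where $\phi^*\Qf \Bf^\dagger = \FAs - \es\star\star\FAs$ must vanish: this follows from $\star^2 = \es(-1)^{2(d-2)} = \es$ applied to the $2$-form $\FAs$. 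All remaining generators reduce to direct substitution.

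For the forms, the statement $\phi^*\tf^\bullet = \ts^\bullet$ follows by substitution once one uses $\phi^*\Bf^\dagger = 0$ (which kills the $\Bf^\dagger\delta\Bf$ term), and rewrites $\star\FAs\,\delta\tilde A$ as $\delta\tilde A\,\star\FAs$ inside the trace — permitted because $\star\FAs$ has form degree $d-2$ and $\delta\tilde A$ form degree $1$, so the graded symmetry under the wedge picks up an even sign, and the ad-invariance of $\Tr$ handles the cyclic reordering. For $\phi^*\Lf^\bullet = \Ls^\bullet$, the only nonobvious combination is
\[
  \Tr\!\left[\star\FAs\,\FAs - \tfrac{\es}{2}\star\FAs\,\star\star\FAs\right]
  = \Tr\!\left[\star\FAs\,\FAs\right] - \tfrac{\es^2}{2}\Tr\!\left[\star\FAs\,\FAs\right]
  = \tfrac12\Tr[\FAs\,\star\FAs],
\]
again using $\star^2 = \es$ on $2$-forms; every other term matches term-by-term, with $\phi^*\Bf^\dagger[\cf,\Bf]$ vanishing by $\phi^*\Bf^\dagger=0$.

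The main obstacle I anticipate is purely bookkeeping: tracking the graded signs coming from commuting forms of various degrees through the wedge product and the Hodge star, and being careful that $\Tr$-cyclicity is used only where the ad-invariance of the pairing licenses it. Since every generator maps linearly to a polynomial in second-order fields of the same total degree, no convergence or topological issue arises, and the verification reduces to a finite and explicit algebraic check on the six pairs of formulas.
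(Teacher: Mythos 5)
Your proposal is correct and takes essentially the same route as the paper: the paper's proof merely remarks that the computations parallel those of Lemma \ref{lem:philaxCM} for classical mechanics, the one identity to keep in mind being $\Qs \star \FAs = [\cs, \star \FAs]$, which is exactly the Bianchi-type identity you isolate (together with $\star^2 = \es$ on the relevant forms, used both for $\phi^*\Qf\Bf^\dagger = 0$ and for recombining the Lagrangian). Your generator-by-generator chain-map check via Lemma \ref{prop:phipsichainmapfromQ}, followed by direct substitution into $\tf^\bullet$ and $\Lf^\bullet$ with $\phi^*\Bf^\dagger = 0$ killing the auxiliary terms, is precisely the verification the paper leaves implicit.
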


\begin{proof}
The computations are in the same line as the ones presented in the example of Classical Mechanics on a curved background (cf.\ Lemma \ref{lem:philaxCM}). One should keep in mind that $\Qs \star \FAs= [\cs,\star \FAs]$.
\end{proof}

\begin{lemma}
\label{lem:psilaxYM}
Let $\psi: \FfYM \rightarrow \FsYM$ be the map defined through
\begin{align*}
    &\psi^* \As = \Af,&
    &\psi^* \cs = \cf,\\
    &\psi^* \As^\dagger = \Af^\dagger -  \dAf \star \Bf^\dagger,&
    &\psi^* \cs^\dagger = \cf^\dagger - \frac{1}{2} [\Bf^\dagger,\star \Bf^\dagger].
\end{align*}
Its pullback map $\psi^*$ is a chain map w.r.t. $(\mathcal{L}_{Q_i}-\d)$ and maps the lax BV-BFV data of the second-order theory as
\begin{align*}
    &\psi^* \ts^\bullet  = \tf^\bullet  + (\mathcal{L}_{\Qf} - \d) \betaf^\bullet + \delta \ff^\bullet,&
    &\psi^* \Ls^\bullet  = \Lf^\bullet  + (\mathcal{L}_{\Qf} - \d) \iota_{\Qf} \betaf^\bullet + \d \ff^\bullet,
\end{align*}
where
\begin{align*}
    \betaf^\bullet &= \Tr \left[
    \frac{1}{2} \Bf^\dagger \star \delta \Bf^\dagger 
    + \star \Bf^\dagger \delta \Af
    + \star \Bf^\dagger \delta \cf
    \right],&
    \ff^\bullet &= \Tr \left[
    \frac{1}{2} \Bf^\dagger (\Bf -\star \FAf ) \right].
\end{align*}
\end{lemma}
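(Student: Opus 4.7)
The plan is to break the proof into two stages: first verify that $\psi^*$ is a chain map with respect to the cohomological vector fields, and then verify the prescribed transformation formulas for $\theta^\bullet$ and $L^\bullet$. For the first stage, by Lemma \ref{prop:phipsichainmapfromQ} it is enough to check $\psi^* \Qs = \Qf \psi^*$ on the generating fields $\As, \cs, \As^\dagger, \cs^\dagger$. The checks on $\As$ and $\cs$ are immediate: both sides reduce to $\dAf \cf$ and $\frac{1}{2}[\cf,\cf]$ respectively, since $\psi^*$ acts trivially on these.

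The nontrivial chain-map checks are on the antifields, and the pattern will mirror Lemma \ref{lem:psilaxCM}. For $\As^\dagger$, one must show that the $\Qf$-failure of the added term $-\dAf \star \Bf^\dagger$ exactly accounts for the mismatch between $\psi^*(\dAs \star \FAs)$ and $\dAf \Bf$. The pivotal ingredient is
\[
\Qf(\star \Bf^\dagger) = \star\FAf - \Bf + [\cf, \star\Bf^\dagger],
\]
which follows from $\Qf \Bf^\dagger = \FAf - \es \star \Bf + [\cf, \Bf^\dagger]$ combined with $\es \star^2 \Bf = \Bf$ on the $(d-2)$-form $\Bf$. Applying $\dAf$ to this identity, using the graded Leibniz rule, the evolutionary compatibility $[\mathcal{L}_{\Qf},\d] = 0$, and the cancellation of cross-terms provided by $\Qf \Af = \dAf \cf$, yields the desired equality up to a $[\cf, \dAf \star \Bf^\dagger]$ term matching the contribution from $[\cf, \psi^* \As^\dagger]$. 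The check on $\cs^\dagger$ is analogous but additionally invokes the curvature identity $\dAf^2 X = [\FAf, X]$ to process the iterated $\dAf$ appearing in $\dAf \psi^* \As^\dagger$, together with the graded Jacobi identity to handle $\Qf[\Bf^\dagger, \star \Bf^\dagger]$.

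For the second stage, by Proposition \ref{prop:LaxRedundancyTransformation} and Remark \ref{rem:WhatToCheck} it suffices to verify the full transformation of $\theta^\bullet$ and only the codimension-zero piece $\Delta L^0 := \psi^* \Ls^0 - \Lf^0$; the higher-codimension components of $\Delta L^\bullet$ are then forced by the chain-map property of the first stage. Computing $\Delta \theta^k$ degree-by-degree for $k=0,1,2$, the discrepancy localises on terms quadratic in $\Bf^\dagger$ (coming from the $\frac{1}{2} \Bf^\dagger \star \delta \Bf^\dagger$ summand of $\betaf^\bullet$) and linear in $\Bf^\dagger$ (coming from $\star \Bf^\dagger \delta \Af$ and $\star \Bf^\dagger \delta \cf$); each such term is produced by applying $(\mathcal{L}_{\Qf} - \d)$ to $\betaf^\bullet$ or $\delta$ to $\ff^\bullet$, again via the identity $\Qf \star \Bf^\dagger = \star \FAf - \Bf + [\cf, \star \Bf^\dagger]$. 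For $\Delta L^0$, expanding $\psi^* \Ls^0$ produces the second-order Yang--Mills Lagrangian $\frac{1}{2}\Tr[\FAf \star \FAf]$, which differs from $\Lf^0 = \Tr[\Bf \FAf - \frac{\es}{2} \Bf \star \Bf]$ precisely by a $\d$-exact term of the form $\d \ff^0$ plus contributions that are absorbed into $(\mathcal{L}_{\Qf} - \d)\iota_{\Qf}\betaf^\bullet$.

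The main obstacle will be the careful bookkeeping of graded Koszul signs and $\es$-factors as $\Qf$ interacts with the horizontal differential $\d$ and with the odd-total-degree field $\Bf^\dagger$; the algebraic content is not deep but is combinatorially delicate. The only place where a genuine structural input enters is the interplay of $\es \star^2 = \mathrm{id}$ on $(d-2)$-forms with the explicit form of $\Qf \Bf^\dagger$---this is precisely what makes the ansatz $\psi^* \As^\dagger = \Af^\dagger - \dAf \star \Bf^\dagger$ the canonical choice for absorbing the auxiliary field $\Bf$, and accordingly pins down the form of $\betaf^\bullet$ and $\ff^\bullet$.
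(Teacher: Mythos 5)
Your plan matches the paper's proof essentially step for step: the same two-stage structure (chain-map check on the generators via Lemma \ref{prop:phipsichainmapfromQ}, then the transformation checks reduced to all of $\theta^\bullet$ plus only the codimension-zero $\Delta L^0$ via Proposition \ref{prop:LaxRedundancyTransformation} and Remark \ref{rem:WhatToCheck}), driven by the same key identity $\Qf \star \Bf^\dagger = \star\FAf - \Bf + [\cf,\star\Bf^\dagger]$ (from $\es\star^2\Bf = \Bf$), the curvature identity $\dAf^2 = [\FAf,\,\cdot\,]$, and the graded Jacobi identity, exactly as in the paper's Appendix computation. One small correction to your final-stage bookkeeping: $\ff^\bullet$ here is purely of codimension $0$ (so $\ff^1 = \ff^2 = 0$ and $\d\ff^0$ vanishes as $\ff^0$ is a top form), hence $\d\ff^\bullet$ contributes nothing to $\Delta L^0$, and the discrepancy there is absorbed entirely by $\mathcal{L}_{\Qf}\iota_{\Qf}\betaf^0 - \d\,\iota_{\Qf}\betaf^1$ rather than by a term of the form $\d\ff^0$ as you state; the role of $\ff^\bullet$ is confined to the $\delta\ff^0$ contribution in $\Delta\theta^0$.
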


\begin{proof}
The chain map conditions in the case of $\As$ and $\cs$ are straightforward to check. In the case of $\As^\dagger$, we first note that
\begin{align*}
    \Qf \dAf \star \Bf^\dagger
    &= -\dAf \Qf \star \Bf^\dagger + [\dAf \cf,\star \Bf^\dagger]
    = -\dAf (\star \FAf - \Bf) - \dAf [\cf,\star \Bf^\dagger]+ [\dAf \cf,\star \Bf^\dagger]\\
    &= -\dAf (\star \FAf - \Bf) + [\cf,\dAf \star \Bf^\dagger],
\end{align*}
as such we have
\begin{align*}
    \Qf \psi^* \As^\dagger 
    &= \Qf(\Af^\dagger -  \dAf \star \Bf^\dagger)
    = \dAf \Bf + [\cf,\Af^\dagger] - \Qf \dAf \star \Bf^\dagger\\
    &= \dAf \star \FAf + [\cf,\Af^\dagger] - [\cf,\dAf \star \Bf^\dagger]
    = \psi^*(\dAs \star \FAs + [\cs,\As^\dagger]) = \psi^*\Qs \As.
\end{align*}
Before addressing the case of $\cf^\dagger$, we compute
\begin{align*}
    \Qf [\Bf^\dagger,\star\Bf^\dagger] 
    = &[\FAf,\star\Bf^\dagger] - \es [\star \Bf,\star\Bf^\dagger] + [[\cf,\Bf^\dagger],\star\Bf^\dagger]\\
    & \quad- [\Bf^\dagger,\FAf] - [\Bf^\dagger, \Bf] + [\Bf^\dagger,[\cf,\star \Bf^\dagger]].
\end{align*}
Using $[\alpha,\star \beta] = - [\beta,\star \alpha]$ for $\alpha,\beta \in \Omega^\bullet(M,\mathfrak{g})$, the graded Jacobi identity for $\cf,\Bf$, $\star \Bf$ and $[\FAf,\star\Bf^\dagger] = \dAf^2 \star\Bf^\dagger$, this yields
\begin{align*}
    \Qf [\Bf^\dagger,\star\Bf^\dagger] 
    = 2 \dAf^2 \star\Bf^\dagger
    + 2 [\Bf^\dagger,\Bf]
    + [\cf,[\Bf^\dagger,\star\Bf^\dagger]].
\end{align*}
With this in hand we have
\begin{align*}
    \Qf \psi^* \cs^\dagger &= \dAf \Af^\dagger + [\cf,\cf^\dagger] + [\Bf^\dagger,\Bf] - \frac{1}{2}\Qf[\Bf^\dagger,\star\Bf^\dagger]\\
    &= \dAf (\Af^\dagger - \dAf \star \Bf^\dagger) + \left[\cf,\cf^\dagger-\frac{1}{2}[\Bf^\dagger,\star\Bf^\dagger]\right]
    = \psi^*(\dAs \As^\dagger + [\cs,\cs^\dagger])
    = \psi^* \Qs \cs^\dagger.
\end{align*}
Let now $\Delta \ts^\bullet \coloneqq \psi^* \ts^\bullet - \tf^\bullet$ and  $\Delta \Ls^\bullet \coloneqq \psi^* \Ls^\bullet - \Lf^\bullet$. We need to check whether
\begin{align*}
    &\Delta \ts^0  = \mathcal{L}_{\Qf} \betaf^0 - \d\betaf^1  + \delta \ff^0,\\
    &\Delta \ts^1 = \mathcal{L}_{\Qf} \betaf^1 - \d\betaf^2  + \delta \ff^2,\\
    &\Delta \ts^2  = \mathcal{L}_{\Qf} \betaf^2  + \delta \ff^2,\\
    &\Delta \Ls^0 =  \mathcal{L}_{\Qf} \iota_{\Qf} \betaf^0 - \d \iota_{\Qf} \betaf^1 + \d \ff^1.
\end{align*}
Note that $\ff^1 = \ff^2 = 0$. Recall that we only need to compute $\Delta \Ls^0$, as $\Delta \Ls^k$ for $k>0$ is determined by $\Delta \ts^\bullet$, as shown in Proposition \ref{prop:LaxRedundancyTransformation}.

\textbf{Computation of $\Delta \ts^0$}:
Explicitly computing $\Delta \ts^0 = \psi^* \ts^0 - \tf^0$ yields
\begin{align*}
    \Delta \ts^0 = \Tr\left[-  \dAf \star \Bf^\dagger \delta A - \frac{1}{2} [\Bf^\dagger,\star \Bf^\dagger] \delta c - \Bf^\dagger \delta \Bf \right].
\end{align*}
Before tackling $\mathcal{L}_{\Qf} \betaf^0 = \Tr[ \frac{1}{2} \mathcal{L}_{\Qf}(\Bf^\dagger \star \delta \Bf^\dagger)]$ we note that
\begin{align*}
    &\Bf^\dagger \delta [\cf, \star \Bf^\dagger]
    = \Bf^\dagger[\delta \cf, \star \Bf^\dagger] - [\cf,\delta \star \Bf^\dagger]
    = -[\Bf^\dagger, \star \Bf^\dagger] \delta \cf - [\cf,\delta \star \Bf^\dagger]\\
    \Rightarrow & \Bf^\dagger \delta [\cf, \star \Bf^\dagger] + [\cf,\delta \star \Bf^\dagger] = -[\Bf^\dagger, \star \Bf^\dagger] \delta \cf,
\end{align*}
where we ignored the term $[\Bf^\dagger\delta \cf, \star \Bf^\dagger]$ since $\Tr\big[[\alpha, \beta \gamma]\big] = 0$ for $\alpha,\beta,\gamma \in \Omega^\bullet(M,\mathfrak{g})$. As such
\begin{align*}
    \mathcal{L}_{\Qf}(\Bf^\dagger \star \delta \Bf^\dagger)
    &= (\FAf - \es \star \Bf + [\cf, \Bf^\dagger]) \star \delta \Bf^\dagger
    + \Bf^\dagger \star \delta (\FAf - \es \star \Bf + [\cf, \Bf^\dagger])\\
    &= \delta \Bf^\dagger (\star \FAf - \Bf) + [\cf, \Bf^\dagger] \star \delta \Bf^\dagger
    + \Bf^\dagger \delta (\star \FAf - \Bf) + \Bf^\dagger \delta [\cf, \Bf^\dagger]\\
    &= \delta \Bf^\dagger (\star \FAf - \Bf)
    + \Bf^\dagger \delta (\star \FAf - \Bf) -[\Bf^\dagger, \star \Bf^\dagger] \delta \cf
\end{align*}
Therefore, keeping in mind that $\delta \FAf = -\dAf \delta \Af$,
\begin{align*}
    &\mathcal{L}_{\Qf} \betaf^0 - \d\betaf^1  + \delta \ff^0
    = \Tr\left[\mathcal{L}_{\Qf} \left(\frac{1}{2} \Bf^\dagger \star \delta \Bf^\dagger \right) - \d(\star \Bf^\dagger \delta \Af)  + \delta \left(\frac{1}{2}\Bf^\dagger (\Bf - \star \FAf)\right)\right]\\
    &= \Tr\left[\frac{1}{2}\delta \Bf^\dagger (\star \FAf - \Bf)
    + \frac{1}{2} \Bf^\dagger \delta (\star \FAf - \Bf) -\frac{1}{2} [\Bf^\dagger, \star \Bf^\dagger] \delta \cf - \dAf + \frac{1}{2} \delta(\Bf^\dagger(\Bf - \FAf))\right]\\
    &= \Tr\left[\Tr[\frac{1}{2} \delta \Bf^\dagger \star \FAf 
    - \frac{1}{2} \delta \Bf^\dagger \Bf 
    + \frac{1}{2} \Bf^\dagger \delta \star \FAf 
    - \frac{1}{2} \Bf^\dagger \delta \Bf 
    - \frac{1}{2} [\Bf^\dagger, \star \Bf^\dagger] \delta \cf\right.\\
    &\quad\left.- \dAf \star \Bf^\dagger \delta A 
    - \Bf^\dagger \star \delta \FAf 
    + \frac{1}{2} \delta \Bf^\dagger \Bf 
    - \frac{1}{2} \delta \Bf^\dagger \star \FAf 
    - \frac{1}{2} \Bf^\dagger  \delta \Bf 
    + \frac{1}{2} \Bf^\dagger \star \delta \FAf\right]\\
    &= \Tr\left[-  \dAf \star \Bf^\dagger \delta A - \frac{1}{2} [\Bf^\dagger,\star \Bf^\dagger] \delta c - \Bf^\dagger \delta \Bf \right]= \Delta \ts^0.
\end{align*}

\textbf{Computation of $\Delta \ts^1$}:
We have $\Delta \ts^1 = \psi^* \ts^1 - \tf^1$ and as such
\begin{align*}
    \Delta \ts^1 = 
    \Tr\left[\delta \Af \star \FAf 
    - \dAf \star \Bf^\dagger \delta \cf 
    - \Bf \delta \Af \right].
\end{align*}
Noting the identities 
\begin{align*}
    \delta \dAf \cf &= - \dAf \delta \cf + [\delta A,\cf],\\
    [\cf,\star \Bf^\dagger \delta \Af] &= - \star \Bf^\dagger [\cf,\delta \Af] 
    + [\cf,\star \Bf^\dagger]  \delta \Af
    = \star \Bf^\dagger [\delta \Af,\cf] 
    + [\cf,\star \Bf^\dagger]  \delta \Af,
\end{align*}
we see that
\begin{align*}
    &\mathcal{L}_\Qf \betaf^1 -\d \betaf^2 + \delta \ff^1
    = \Tr\left[\mathcal{L}_\Qf(\star \Bf^\dagger \delta \Af) - \d(\star \Bf^\dagger \delta \cf)\right]\\
    &= \Tr\left[(\star \FAf - \Bf + [\cf, \star \Bf^\dagger])\delta \Af
    + \star \Bf^\dagger \delta \dAf \cf 
    - \dAf \star \Bf^\dagger \delta \cf 
    + \star \Bf^\dagger \dAf \delta \cf\right]\\
    &= \Tr\left[(\star \FAf - \Bf) \delta \Af  
    - \dAf \star \Bf^\dagger \delta \cf 
    + [\cf,\star \Bf^\dagger] \delta \Af + \star \Bf^\dagger [\delta \Af,\cf]\right]
    = \Delta \ts^1
\end{align*}

\textbf{Computation of $\Delta \ts^2$:}
$\Delta \ts^2 = \psi^* \ts^2 - \tf^2$ takes the form
\begin{align*}
    \Delta \ts^2 = (\star \FAf - \Bf) \delta \cf.
\end{align*}
Furthermore since
$
    [c, \star \Bf^\dagger \delta c]
    =    [c, \star \Bf^\dagger] \delta c -  \star \Bf^\dagger [c,\delta c], 
$
we have
\begin{align*}
    \mathcal{L}_\Qf \betaf^2 + \delta \ff^2 
    &= \mathcal{L}_\Qf(\star \Bf^\dagger \delta c)
    = (\star \FAf - \Bf + [c,\star \Bf^\dagger])\delta c
    + \frac{1}{2} \star \Bf^\dagger \delta [c,c]\\
    &= \Delta \ts^2 
    + [c,\star \Bf^\dagger] \delta c
    - \star \Bf^\dagger [c,\delta c]
    = \Delta \ts^2 
\end{align*}

\textbf{Computation of $\Delta \Ls^0$}:
Explicitly $\Delta \Ls^0 = \psi^* \Ls^0 - \Lf^0$ yields
\begin{align*}
    \Delta \Ls^0 
    = \Tr\left[\frac{1}{2} \FAf \star \FAf 
    - \Bf \FAf 
    + \frac{\es}{2} \Bf \star \Bf
    - \dAf \star \Bf^\dagger \dAf \cf
    -\frac{1}{4} [\Bf^\dagger,\star \Bf^\dagger][\cf,\cf]
    - \Bf^\dagger [\cf,\Bf] \right].
\end{align*}
Before computing $\mathcal{L}_\Qf \iota_{\Qf} \betaf^0 - \d \iota_{\Qf} \betaf^1 + \d \ff^1$ we note that
\begin{align*}
    [\star \Bf^\dagger \FAf, \cf] 
    &= \star \Bf^\dagger  [\FAf, \cf]
    + [\star \Bf^\dagger , \cf]\FAf
    = \star \Bf^\dagger  [\FAf, \cf]
    + \FAf [\cf, \star \Bf^\dagger]\\
    [\cf, \Bf^\dagger \Bf] 
    &= - \Bf^\dagger [\cf, \Bf]
    + [\cf, \Bf^\dagger] \Bf
\end{align*}
and
\begin{align*}
    && &[c,\Bf^\dagger][c,\star \Bf^\dagger] 
    = [\cf,\Bf^\dagger [\cf, \star \Bf^\dagger]]
    + \Bf^\dagger [\cf,[\cf,\star\Bf^\dagger]]\\
    && &= [\cf,\Bf^\dagger [\cf, \star \Bf^\dagger]]
    - \Bf^\dagger[\cf,[\star\Bf^\dagger,\cf]]
    - \Bf^\dagger[\star\Bf^\dagger[\cf,\cf]]\\
    && &= [\cf,\Bf^\dagger [\cf, \star \Bf^\dagger]]
    + [\cf,\Bf^\dagger [\star \Bf^\dagger,\cf]]
    - [c,\Bf^\dagger][c,\star \Bf^\dagger]  + [\star\Bf^\dagger,\Bf^\dagger[\cf,\cf]]
    - [\star \Bf^\dagger,\Bf^\dagger][\cf,\cf]\\
    &\Rightarrow& &\Tr\bigg[[c,\Bf^\dagger][c,\star \Bf^\dagger]\bigg]
    = \Tr\bigg[- \frac{1}{2} [\Bf^\dagger,\star \Bf^\dagger][\cf,\cf] \bigg].
\end{align*}
Since $\ff^1 = 0$ we therefore have
\begin{align*}
    &\mathcal{L}_\Qf \iota_{\Qf} \betaf^0 - \d \iota_{\Qf} \betaf^1 + \d \ff^1
    = \Tr\bigg[\frac{1}{2} \Qf\Bf^\dagger \star \Qf\Bf^\dagger
    - \d (\star \Bf^\dagger \Qf \Af) \bigg]\\
    &= \Tr\bigg[ \frac{1}{2} (\FAf - \es \star \Bf + [c,\Bf^\dagger])
    (\star \FAf - \Bf + [c,\star \Bf^\dagger])
    - \dAf (\star \Bf^\dagger \dAf \cf)\bigg]\\
    &=  \Tr\bigg[\frac{1}{2} \Big(
    \FAf \star \FAf 
    - \FAf \Bf  
    + \FAf [c,\star \Bf^\dagger]
    - \es \star \Bf \star \FAf 
    + \es \star \Bf \Bf
    - \es \star \Bf [c,\star \Bf^\dagger]\\
    &\qquad
    + [c,\Bf^\dagger] \star \FAf 
    - [c,\Bf^\dagger] \Bf
    + [c,\Bf^\dagger] [c,\star \Bf^\dagger]\Big)
    -\dAf \star \Bf^\dagger \dAf \cf 
    + \star \Bf^\dagger \dAf^2 \cf\bigg]\\
    &= \Tr\bigg[
    \frac{1}{2} \FAf \star \FAf - \Bf \FAf + \frac{\es}{2} \Bf \star \Bf - \dAf \star \Bf^\dagger\\
    &\qquad 
    + \FAf[\cf,\star \Bf^\dagger]
    + \star \Bf^\dagger [\FAf,\cf]
    - [\cf , \Bf^\dagger] \Bf
    + \frac{1}{2} [c,\Bf^\dagger][c,\star \Bf^\dagger] \bigg]\\
    &= \Tr\bigg[\frac{1}{2} \FAf \star \FAf - \Bf \FAf + \frac{\es}{2} \Bf \star \Bf - \dAf \star \Bf^\dagger - [\cf,\Bf^\dagger] \Bf - \frac{1}{4} [\Bf^\dagger,\star \Bf^\dagger][\cf,\cf]\bigg]\\
    &= \Delta \Ls^0.
\end{align*}
\end{proof}

\begin{lemma}
\label{lem:lambdaYM}
The composition map $\lambda^* = \phi^* \circ \psi^*: \bvbfv^\bullet_2 \rightarrow \bvbfv^\bullet_2$  is the identity
\begin{align*}
    &\lambda^* \As = \As,&
    &\lambda^* \cs = \cs,\\
    &\lambda^* \As^\dagger = \As^\dagger,&
    &\lambda^* \cs^\dagger = \cs^\dagger,
\end{align*}
and as such the identity in cohomology.
\end{lemma}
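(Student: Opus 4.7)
The proof is a straightforward direct computation, exploiting the key observation that $\phi^*\Bf^\dagger = 0$ by definition of $\phi$. The plan is therefore to evaluate $\lambda^* = \phi^* \circ \psi^*$ on each of the four generators $\As, \cs, \As^\dagger, \cs^\dagger$ of $\FsYM$ and check that each correction term introduced by $\psi^*$ lies in the kernel of $\phi^*$.

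For the fields of degree $\leq 0$, namely $\As$ and $\cs$, the computation is immediate: one simply composes $\psi^*\As = \Af$ with $\phi^*\Af = \As$, and similarly for $\cs$, so that $\lambda^*\As = \As$ and $\lambda^*\cs = \cs$. For the antifields the situation is slightly less trivial because $\psi^*$ introduces corrections: recall that
\begin{align*}
    \psi^*\As^\dagger &= \Af^\dagger - \dAf \star \Bf^\dagger, \\
    \psi^*\cs^\dagger &= \cf^\dagger - \tfrac{1}{2}[\Bf^\dagger, \star\Bf^\dagger].
\end{align*}
The crucial point is that both correction terms depend on $\Bf^\dagger$, either linearly or quadratically. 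Applying $\phi^*$ and using $\phi^*\Bf^\dagger = 0$, together with the fact that $\phi^*$ is a morphism of graded algebras (so it distributes over products and brackets), kills both corrections and leaves $\phi^*\Af^\dagger = \As^\dagger$ and $\phi^*\cf^\dagger = \cs^\dagger$. This establishes the component-by-component identities listed in the statement.

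Once $\lambda^* = \mathrm{id}_{\FsYM}$ on generators, it follows automatically that $\lambda^*$ is the identity on the whole BV-BFV complex $\bvbfv^\bullet_2$, since pullback maps are determined by their action on the coordinate functions and commute with $\delta$ and $\d$. Being the literal identity on cochains, $\lambda^*$ is \emph{a fortiori} the identity on cohomology $H^\bullet(\bvbfv^\bullet_2)$, which is all that is needed for lax equivalence. There is no real obstacle here: the entire content of the lemma is the clever design of $\phi$ (in particular $\phi^*\Bf^\dagger = 0$) which ensures that the cohomologically nontrivial information carried by the $(\Bf^\dagger, \Bf)$ contractible pair is erased by $\phi^*$, making the composition trivial in one direction. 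The nontrivial direction of the quasi-isomorphism, namely $\chi^* = \psi^* \circ \phi^*$, is the one requiring a genuine homotopy, handled separately in Lemma \ref{lem:chiYM}.
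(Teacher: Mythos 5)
Your proof is correct and matches the paper's argument, which likewise reduces the lemma to the observation that $\phi^*\Bf^\dagger = 0$ annihilates the $\Bf^\dagger$-dependent correction terms in $\psi^*\As^\dagger$ and $\psi^*\cs^\dagger$, leaving a straightforward generator-by-generator check. Your extra remarks on $\phi^*$ being an algebra morphism and on $\lambda^*=\mathrm{id}$ extending from generators to the whole complex are consistent with, and slightly more explicit than, the paper's one-line proof.
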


\begin{proof}
Keeping in mind that $\phi^* B^\dagger=0$, this is a straightforward calculation.
\end{proof}

\begin{lemma}
\label{lem:chiYM}
The composition map $\chi^* = \psi^* \circ \phi^*: \bvbfv^\bullet_1 \rightarrow \bvbfv^\bullet_1$ acts as
\begin{align*}
    &\chi^* \Af = \Af,&
    &\chi^* \Bf = \star \FAf,&
    &\chi^* \cf = \cf,\\
    &\chi^* \Af^\dagger = \Af^\dagger - \dAf \star \Bf^\dagger,&
    &\chi^* \Bf^\dagger = 0,&
    &\chi^* \cf^\dagger = \cs^\dagger
    - \frac{1}{2} [\Bf^\dagger,\star \Bf^\dagger],
\end{align*}
and is homotopic to the identity.
\end{lemma}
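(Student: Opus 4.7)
The plan is to mimic closely the proof of Lemma \ref{lem:chiCM}, adapting it to the more involved algebraic structure of Yang--Mills theory. First, I would compute $\chi^* = \psi^* \circ \phi^*$ on each generator $\varphi_1^j \in \{\Af,\Bf,\cf,\Af^\dagger,\Bf^\dagger,\cf^\dagger\}$ directly from Lemmata \ref{lem:philaxYM} and \ref{lem:psilaxYM}. For the unprimed fields and for $\cf$ this is immediate because $\phi^* \Bf^\dagger = 0$, so that the contribution proportional to $\Bf^\dagger$ in $\psi^*\As^\dagger$ and in $\psi^*\cs^\dagger$ survives only after composition with the identity-like pieces. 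This yields the explicit formulas for $\chi^*$ displayed in the statement.

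For the homotopy part, I would take the evolutionary vector field $\Rf \in \mathfrak{X}_\mathrm{evo}(\FfYM)$ of ghost number $-1$ specified in the statement (nontrivial only on $\Bf$, with $\Rf\Bf = \star \Bf^\dagger$), set $\Df = [\Qf,\Rf]$, and define $\chi^*_s \coloneqq e^{s\mathcal{L}_\Df}$. By Lemma \ref{prop:Dcommuteswithdifferential}, $\mathcal{L}_\Df$ commutes with $(\mathcal{L}_\Qf - \d)$, so $\chi^*_s$ is automatically a family of chain maps. The bulk of the work is then to compute $\Df^k \varphi_1^j$ in closed form for every generator. Using $\star^2 = \es (-1)^{k(d-k)}$ on $k$-forms, a short calculation gives $\Df \Af = 0$, $\Df \cf = 0$, $\Df \Af^\dagger = -\dAf \star \Bf^\dagger$, $\Df \Bf^\dagger = -\Bf^\dagger$, $\Df \Bf = \star\FAf - \Bf$, and $\Df \cf^\dagger = -[\Bf^\dagger,\star\Bf^\dagger]$; moreover one finds $\Df^2$ equals $-\Df$ on $\Af^\dagger, \Bf, \Bf^\dagger$ and $-2\Df$ on $\cf^\dagger$. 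Inductively, $\Df^k$ acts diagonally with eigenvalues in $\{(-1)^k, (-2)^k\}$, from which the exponential series sums to the displayed expressions for $\chi^*_s$, and passing to $s\to\infty$ recovers $\chi^*$.

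The hardest step will be the verification of the recursion for $\Df^k \cf^\dagger$: because $\Qf\cf^\dagger = \dAf \Af^\dagger + [\cf,\cf^\dagger] + [\Bf^\dagger,\Bf]$ contains the mixed term $[\Bf^\dagger,\Bf]$ where both factors are in the image of $\Rf$-acted fields, one must carefully track the graded signs and use the ad-invariance of $\Tr$ to simplify $\Qf[\Bf^\dagger,\star\Bf^\dagger]$, much as in the proof of Lemma \ref{lem:psilaxYM}. Once this is in place, the eigenvalue structure follows cleanly.

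Finally, by Lemma \ref{lem:hchiexplicitform} the map $h_\chi \kappa = \int_0^\infty e^{s\mathcal{L}_\Df}\mathcal{L}_\Rf \kappa\, \d s$ provides the chain homotopy $\chi^* - \mathrm{id}_1 = (\mathcal{L}_\Qf - \d) h_\chi + h_\chi(\mathcal{L}_\Qf - \d)$, so $\chi^*$ is homotopic to the identity. The convergence of $h_\chi$ on all of $\bvbfv^\bullet_1$ will be addressed separately (in Lemma \ref{lem:chiIdCohomologyYM}) via Lemma \ref{prop:hchiconvergence}, after checking that $h_\chi\Bf = \star\Bf^\dagger$ converges on fields and all other $h_\chi \varphi_1^j$ vanish.
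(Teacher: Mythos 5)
Your proposal is correct and takes essentially the same route as the paper's proof: the same choice of $\Rf$ (nontrivial only on $\Bf$, with $\Rf\Bf = \star\Bf^\dagger$), the same closed-form computation of $\Df^k$ on each generator with eigenvalues $(-1)^k$ and $(-2)^k$, the same summation of the exponential series for $\chi^*_s = e^{s\mathcal{L}_{\Df}}$ with the limit $s\to\infty$ recovering $\chi^*$, and the same deferral of the convergence of $h_\chi$ to Lemma \ref{lem:chiIdCohomologyYM}. One minor note: the step you flag as hardest is actually simpler than you anticipate, since $\Rf\cf^\dagger = 0$ gives $\Df\cf^\dagger = \Rf\Qf\cf^\dagger = -[\Bf^\dagger,\star\Bf^\dagger]$ directly, and $\Df^2\cf^\dagger = 2[\Bf^\dagger,\star\Bf^\dagger]$ follows by the Leibniz rule from $\Df\Bf^\dagger = -\Bf^\dagger$ alone---the simplification of $\Qf[\Bf^\dagger,\star\Bf^\dagger]$ via ad-invariance of $\Tr$ is needed only in the chain-map verification of Lemma \ref{lem:psilaxYM}, not in this recursion.
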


\begin{proof}
The explicit computation for $\chi^*$ is again straightforward. To show that it is indeed homotopic to the identity, we choose the vector field $\Rf\in \mathfrak{X}_{\text{evo}}(\FfYM)[-1]$ to act as
\begin{align*}
    &\Rf \Af = 0,&
    &\Rf \Bf = \star \Bf^\dagger,&
    &\Rf \cf = 0,\\
    &\Rf \Af^\dagger = 0,&
    &\Rf \Bf^\dagger = 0,&
    &\Rf \cf^\dagger = 0.
\end{align*}
We now want to compute $\chi^*_s = e^{s\mathcal{L}_{\Df}}$, with $\Df = [\Rf,\Qf]$, and show that $\lim_{s\rightarrow \infty} \chi^*_s = \chi^*$.\\
\textbf{Computation for $\Af$ and $\cf$:}
\begin{align*}
    && \Df \Af &= [\Qf,\Rf] \Af = \Rf \dAf \cf = 0&&\\ 
    &\Rightarrow& e^{s\mathcal{L}_{\Df}} \Af &= \Af&&\\
    &\Rightarrow& \lim_{s\rightarrow \infty} e^{s\mathcal{L}_{\Df}} \Af &= \Af = \chi^* \Af,&&\\
    && \Df \cf &= [\Qf,\Rf] \cf = \frac{1}{2} \Rf [\cf,\cf]= 0&&\\ 
    &\Rightarrow& e^{s\mathcal{L}_{\Df}} \cf &= \cf&&\\
    &\Rightarrow& \lim_{s\rightarrow \infty} e^{s\mathcal{L}_{\Df}} \cf &= \cf = \chi^* \cf.&&
\end{align*}
\textbf{Computation for $B$:}
\begin{align*}
    \Df \Bf 
    = [\Qf,\Rf] \Bf 
    = \Qf \star \Bf^\dagger + \Rf [\cf,\Bf]
    = \star \FAf - \Bf + [\cf,\star \Bf^\dagger] - [\cf,\star \Bf^\dagger]
    = \star \FAf - \Bf.
\end{align*}
Noting that $\Df \Af = 0$ implies $\Df \FAf = 0$, we have
\begin{align*}
    && \Df^2 \Bf &= - \Df \Bf = - (\star \FAf - \Bf) &&\\
    &\Rightarrow& 
    \Df^k \Bf &=  (-1)^k (\Bf - \star \FAf) \qquad \text{ for } k \geq 1&&\\
    &\Rightarrow& 
    e^{s\mathcal{L}_{\Df}} \Bf &= \Bf 
    + \sum^\infty_{k=1} \frac{s^k}{k!} (-1)^k(\Bf - \star \FAf)&&\\
    && &= \Bf + (e^{-s} - 1) (\Bf - \star \FAf)&&\\
    &\Rightarrow& \lim_{s\rightarrow \infty} e^{s\mathcal{L}_{\Df}} \Bf &= \star \FAf = \chi^* \Bf.&&
\end{align*}
\textbf{Computation for $\Bf^\dagger$:}
\begin{align*}
    && \Df \Bf^\dagger &= \Rf(\FAf - \es \star \Bf + [\cf,\Bf^\dagger])
    = - \Bf^\dagger&&\\
    &\Rightarrow&
    \Df^k \Bf^\dagger &= (-1)^k \Bf^\dagger \qquad \text{ for } k \geq 0&&\\
    &\Rightarrow&
    e^{s\mathcal{L}_{\Df}} \Bf^\dagger &= e^{-s} \Bf^\dagger&&\\
    &\Rightarrow& \lim_{s\rightarrow \infty} e^{s\mathcal{L}_{\Df}} \Bf^\dagger &= 0 = \chi^* \Bf^\dagger.&&
\end{align*}
\textbf{Computation for $\Af^\dagger$:}
\begin{align*}
    && \Df \Af^\dagger &= \Rf (\dAf \Bf + [\cf,\Af^\dagger])
    = - \dAf \star \Bf^\dagger&&\\
    &\Rightarrow& 
    \Df^k \Af^\dagger 
    &= - \dAf \star \Df^{k-1} \Bf^\dagger
    = (-1)^k \dAf \star \Bf^\dagger \qquad \text{ for } k \geq 1&&\\
    &\Rightarrow&
    e^{s\mathcal{L}_{\Df}} \Af^\dagger &= \Af^\dagger + (e^{-s}-1) \dAf \star \Bf^\dagger&&\\
    &\Rightarrow& \lim_{s\rightarrow \infty} e^{s\mathcal{L}_{\Df}} \Af^\dagger &= \Af^\dagger - \dAf \star \Bf^\dagger = \chi^* \Af^\dagger.&&
\end{align*}
\textbf{Computation for $\cf^\dagger$:}
\begin{align*}
    && \Df \cf^\dagger &= \Rf (\dAf \Af^\dagger + [\cf,\cf^\dagger] + [\Bf^\dagger,\Bf])
    = -[\Bf^\dagger, \star \Bf^\dagger]&&\\
    &\Rightarrow&
    \Df^2 \cf^\dagger &= -[\Df \Bf^\dagger, \star \Bf^\dagger] - [\Bf^\dagger, \star \Df \Bf^\dagger]
    = 2 [ \Bf^\dagger, \star \Bf^\dagger] 
    = - 2 \Df \cf^\dagger&&\\
    &\Rightarrow&
    \Df^k \cf^\dagger &= -(-2)^{k-1} [ \Bf^\dagger, \star \Bf^\dagger]
    = \frac{1}{2} (-2)^{k} [ \Bf^\dagger, \star \Bf^\dagger] \qquad \text{ for } k \geq 1 &&\\
    &\Rightarrow&
    e^{s\mathcal{L}_{\Df}} \cf^\dagger &= \cf^\dagger + \frac{1}{2} (e^{-2s} - 1) [ \Bf^\dagger, \star \Bf^\dagger]&&\\
    &\Rightarrow& \lim_{s\rightarrow \infty} e^{s\mathcal{L}_{\Df}} \cf^\dagger &= \cf^\dagger - \frac{1}{2} [ \Bf^\dagger, \star \Bf^\dagger] = \chi^* \cf^\dagger.&&
\end{align*}
Thus we have shown that $\chi^*$ is homotopic to the identity.
\end{proof}

\begin{lemma}
\label{lem:chiIdCohomologyYM}
The map $\chi^*$ is the identity in cohomology.
\end{lemma}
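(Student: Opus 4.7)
The plan is to mirror the strategy already successfully deployed for classical mechanics in Lemma \ref{lem:chiIdCohomologyCM}: invoke Proposition \ref{prop:hchiconvergence}, which reduces the problem from showing convergence of $h_\chi$ on the whole BV-BFV complex $\bvbfv^\bullet_1$ to the much easier task of showing convergence on the generating fields in $\FfYM$. Since $\chi^*$ and $\mathrm{id}_1$ are already known (Lemma \ref{lem:chiYM}) to be joined by the one-parameter family $\chi^*_s = e^{s\mathcal{L}_{\Df}}$ with $\lim_{s\to\infty}\chi^*_s = \chi^*$, Lemma \ref{lem:hchiexplicitform} hands us the explicit homotopy $h_\chi \varphi = \int_0^\infty e^{s\mathcal{L}_{\Df}} \mathcal{L}_{\Rf} \varphi \, \d s$; we only need to verify convergence field by field.

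First, I would exploit the fact that the chosen $\Rf$ annihilates all fields except $\Bf$:
\begin{align*}
\Rf \Af = \Rf \cf = \Rf \Af^\dagger = \Rf \Bf^\dagger = \Rf \cf^\dagger = 0,
\end{align*}
so that $h_\chi$ trivially vanishes on these five generators. The single nontrivial case is $h_\chi \Bf$, and for this I would substitute $\Rf \Bf = \star \Bf^\dagger$ into the integral formula and use the computation of $e^{s\mathcal{L}_{\Df}}\Bf^\dagger = e^{-s}\Bf^\dagger$ already obtained in Lemma \ref{lem:chiYM}. Since the Hodge star is independent of the BV fields, it commutes through $e^{s\mathcal{L}_\Df}$, giving
\begin{align*}
    h_\chi \Bf = \int_0^\infty e^{s\mathcal{L}_{\Df}} (\star \Bf^\dagger) \, \d s = \star \Bf^\dagger \int_0^\infty e^{-s} \, \d s = \star \Bf^\dagger,
\end{align*}
which is manifestly finite.

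Having exhibited $h_\chi$ on all generators of $\FfYM$, Proposition \ref{prop:hchiconvergence} immediately extends convergence to the whole BV-BFV complex $\bvbfv^\bullet_1$, and Lemma \ref{lem:hchiexplicitform} then guarantees $\chi^* - \mathrm{id}_1 = (\mathcal{L}_{\Qf} - \d)h_\chi + h_\chi(\mathcal{L}_{\Qf} - \d)$, i.e.\ $\chi^*$ is the identity on $H^\bullet(\bvbfv^\bullet_1)$. An entirely parallel argument for $\lambda^*$ is unnecessary, since Lemma \ref{lem:lambdaYM} already shows $\lambda^* = \mathrm{id}_2$ on the nose. I expect no real obstacle here: the structure of $\Rf$ was designed precisely so that only one integral needs to be evaluated, and the exponential decay of $\chi^*_s \Bf^\dagger$ makes that integral trivially convergent. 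The only minor care required is to ensure that $\star$ and the exponentiated Lie derivative commute, which is immediate because $\star$ depends only on the background metric on $M$ and not on any BV field.
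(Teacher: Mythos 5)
Your proposal is correct and follows essentially the same route as the paper's proof: both reduce to convergence on the generators of $\FfYM$ via Proposition \ref{prop:hchiconvergence}, note that $\Rf$ annihilates all fields except $\Bf$, and evaluate the single integral $h_\chi \Bf = \int_0^\infty e^{-s}\star\Bf^\dagger\,\d s = \star\Bf^\dagger$ using $e^{s\mathcal{L}_{\Df}}\Bf^\dagger = e^{-s}\Bf^\dagger$. Your added remark that $\star$ commutes with $e^{s\mathcal{L}_{\Df}}$ because it depends only on the background metric is a worthwhile explicit justification of a step the paper leaves implicit.
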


\begin{proof}
We have to show that the map $h_\chi$ converges on $\FfYM$, namely
\begin{align*}
    h_\chi \varphi_i = \int^\infty_0 e^{s \Df} \Rf \varphi_i \, \d s < \infty 
    \quad \forall \varphi_i \in \FfYM.
\end{align*}
As $\Rf \Af = \Rf\cf= \Rf\Af^\dagger= \Rf\Bf^\dagger= \Rf\cf^\dagger= 0$ we have
\begin{align*}
    h_\chi \Af =
    h_\chi \Af^\dagger =
    h_\chi \Bf^\dagger =
    h_\chi \cf =
    h_\chi \cf^\dagger = 0.
\end{align*}
In the case of $\Bf$ we compute
\begin{align*}
    h_\chi \Bf = \int^\infty_0 e^{s \Df} \Rf \Bf \d s
    = \int^\infty_0 e^{- s} \star \Bf^\dagger \d s
    = \star \Bf^\dagger.
\end{align*}
As such $h_\chi$ converges and $\chi^*$ is the identity in cohomology.
\end{proof}

\section{Lengthy Calculations for the Jacobi theory/1D GR case}
\label{app:LengthyCalcJac1DGR}
\subsection{Preliminaries for calculations - tensor number}
\label{app:tensornumber}

This appendix has two purposes. It serves as a preliminary for the computations, by presenting a straightforward way to compute the action of the Chevalley-Eilenberg differentials $\gamma_J$, $\gamma_{GR}$, and it provides an explaination for why they act as $\mathcal{L}_{\xi \partial_t}$ on the antifields and antighosts (see Remarks \ref{rem:Chevalley-EilenbergJacobi} and \ref{rem:Chevalley-Eilenberg1DGR}). We will be using the 1D GR theory in this discussion but all considerations hold for the Jacobi theory as well.

Let $M$ be a manifold of arbitrary dimension and $X = X^\sigma \partial_\sigma \in \mathfrak{X}(M)$. Recall that the Lie derivative $\mathcal{L}_X$ acts on the components of a tensor field $\mathbf{A} \in \mathcal{T}^n_m(M)$ of rank-$(n,m)$ as
\begin{align}
\begin{split}
    \label{eq:LieDerivativeTensorM}
    \mathcal{L}_X \mathbf{A}^{\mu_1\dots \mu_n}_{\nu_1\dots \nu_m}
    = X^\sigma \partial_\sigma \mathbf{A}^{\mu_1\dots \mu_n}_{\nu_1\dots \nu_m}
    &- \partial_\sigma X^{\mu_1}\mathbf{A}^{\sigma \dots \mu_n}_{\nu_1\dots \nu_m} 
    - \dots
    - \partial_\sigma X^{\mu_n}\mathbf{A}^{\mu_1 \dots \sigma}_{\nu_1\dots \nu_m}\\ 
    &+ \partial_{\nu_1} X^\sigma \mathbf{A}^{\mu_1\dots \mu_n}_{\sigma\dots \nu_m} 
    + \dots
    + \partial_{\nu_n} X^\sigma \mathbf{A}^{\mu_1\dots \mu_n}_{\nu_1\dots \sigma}.
\end{split}
\end{align}
Let now $M=I \subset \mathbb{R}$ denote an interval and $X = \xi \partial_t \in \mathfrak{X}(I)[I]$ be the ghost field. In this setting Equation (\ref{eq:LieDerivativeTensorM}) is greatly simplified since $\mu_i = \nu_i = t$, where $t$ is the coordinate on $I$. Let $\mathbf{A} \in \mathcal{T}^n_m(I)$ and denote its component by $A$. We define the \emph{tensor number} as $t(A) = (m-n)$. We then have 
\begin{align*}
\begin{split}
    \mathcal{L}_{\xi \partial_t} A
    &= \xi \partial_t A
    - n\partial_t \xi A + m\partial_t \xi A\\
    &=\xi \dot A + t(A) \dot \xi A.
\end{split}
\end{align*}
As an example we list the tensor number for the fields, ghosts, antifields and antighosts of the 1D GR theory
\begin{align}
    \label{eq:tensornumberFieldsAntifields}
    &t(q) = 0-0 = 0,&
    &t(g) = 2-0 = 2,&
    &t(\xi) = 0-1 = -1, \nonumber\\
    &t(q^+) = 1-0 = 1,&
    &t(g^+) = 1-2 = -1,&
    &t(\xi^+) = 2-0 = 2,
\end{align}
which explains why we claimed that the Chevalley-Eilenberg differential $\gamma_{GR}$ acts as $\mathcal{L}_{\xi \partial_t}$ on the antifields and antighosts in Remark \ref{rem:Chevalley-Eilenberg1DGR}. As such we have $\gamma_{GR} = \mathcal{L}_{\xi \partial_t}$ on all the functions on $\{q,g,q^+,g^+,\xi^+\}$ and $\gamma_{GR} = \frac{1}{2}\mathcal{L}_{\xi \partial_t}$ on the ghost. Since the ghost is a special case, we assume that the tensor fields only depend on $\{q,g,q^+,g^+,\xi^+\}$ for the rest of the discussion. When computing $\gamma_{GR}(\cdot)$, we then consider the parts with ghosts and without separately. 

The discussion until now only holds for tensor fields that only depend on the 0th-jets of $\{q,g,q^+,g^+,\xi^+\}$. The action of $\gamma_{GR}$ is then naturally extended to all jets since we assume that $\Qgr$, and as such $\gamma_{GR}$, is evolutionary, i.e.\ $[\mathcal{L}_{\gamma_{GR}},\d] = 0$. For example, if $A$ only depends on 0th-jets then
\begin{align}
\begin{split}  
    \label{eq:gammaderivative}
    \gamma_{GR} \dot A
    &= \partial_t \gamma_{GR} A
    = \partial_t [\xi \dot A + t(A) \dot \xi A]\\
    &= \xi \ddot A 
    + [1+t(A)] \dot \xi \dot A
    + t(A) \ddot \xi A.
\end{split}
\end{align}
For a general tensor field $\mathcal{A}$ which depends on arbitrary jets of the fields we have 
\begin{align}
    \label{eq:gammaA}
    \gamma_{GR} \mathcal{A}
    = \xi \dot{\mathcal{A}}
    + \sum_{n \geq 1} t_n(\mathcal{A}) \partial^n_t \xi a_n,
\end{align}
for some real scalars $t_n(\mathcal{A})$ and some functions $a_n$ that depend on the jets of $\Phi,\Phi^+ \in \F_{GR}$. In order to extend the notion of tensor number to such objects we define

\begin{definition}
Let $\mathcal{A}$ be a tensor field that depends on arbitrary jets of $\Phi,\Phi^+ \in \F_{GR}$. The tensor number $t(\mathcal{A})$ of $\mathcal{A}$ is defined as the scalar $t_1(\mathcal{A})$ in Equation (\ref{eq:gammaA}). 
\end{definition}

Note that in order to compute $\gamma_{GR}$ we only have to find out what the $t_n(\mathcal{A})$ are. For most of the computations we are only going to encounter tensor fields that depend on the $0$-th jets, and they will atmost include 2nd-jets. As such we want to find a pragmatic way of computing $t(\mathcal{A})$. If necessary, we then look at higher $t_n(\mathcal{A})$, for example by following Equation (\ref{eq:gammaderivative}). We list some useful properties of $t(\cdot)$, since they immensely simplify the explicit computations of $\gamma_{GR}(\mathcal{A})$.

\begin{proposition}
Let $\mathcal{A},\mathcal{B}$ be two tensor fields that depend on an arbitrary number of jets of $\Phi,\Phi^+ \in \F_{GR}$. The tensor number has the following properties:
\begin{enumerate}
    \item $t(\mathcal{A}\mathcal{B}) = t(\mathcal{A}) + t(\mathcal{B})$,
    \item $t(\mathcal{A}^n) = n t(\mathcal{A})$,
    \item $t(\dot{\mathcal{A}}) = 1 + t(\mathcal{A})$.
\end{enumerate}
\end{proposition}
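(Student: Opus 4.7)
The plan is to read all three identities directly off the defining expansion $\gamma_{GR}\mathcal{A} = \xi\dot{\mathcal{A}} + \sum_{n\geq 1} t_n(\mathcal{A})\,\partial^n_t\xi\, a_n$ from Equation (\ref{eq:gammaA}), using only two structural features of the Chevalley--Eilenberg differential: that $\gamma_{GR}$ is a graded derivation (so Leibniz applies), and that $\gamma_{GR}\circ\partial_t = \partial_t\circ\gamma_{GR}$ because $Q_{GR}$ is evolutionary. By definition, $t(\mathcal{A}) = t_1(\mathcal{A})$ is the coefficient of $\dot\xi\,\mathcal{A}$ in $\gamma_{GR}\mathcal{A}$, i.e.\ the unique monomial of the form $\dot\xi\cdot(\text{same jet content as }\mathcal{A})$; since the derivatives $\partial^n_t\xi$ for $n\geq 2$ are functionally independent of $\dot\xi$, extracting this coefficient is unambiguous.

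For property 1, I would expand $\gamma_{GR}(\mathcal{A}\mathcal{B}) = (\gamma_{GR}\mathcal{A})\mathcal{B} + \mathcal{A}(\gamma_{GR}\mathcal{B})$ (no sign, since both factors are bosonic) and substitute (\ref{eq:gammaA}) into each summand. The $\xi$-contributions combine to $\xi(\dot{\mathcal{A}}\mathcal{B} + \mathcal{A}\dot{\mathcal{B}}) = \xi\,\partial_t(\mathcal{A}\mathcal{B})$, which is the expected leading piece. The monomials proportional to $\dot\xi\,\mathcal{A}\mathcal{B}$ come only from $t(\mathcal{A})\dot\xi\mathcal{A}\cdot\mathcal{B}$ and $\mathcal{A}\cdot t(\mathcal{B})\dot\xi\mathcal{B}$, so matching coefficients yields $t(\mathcal{A}\mathcal{B}) = t(\mathcal{A}) + t(\mathcal{B})$. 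Property 2 then follows by an immediate induction on $n$ from property 1.

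For property 3, I would apply $\partial_t$ to both sides of the defining expansion and use $\partial_t\gamma_{GR}\mathcal{A} = \gamma_{GR}\dot{\mathcal{A}}$. The right-hand side produces $\xi\ddot{\mathcal{A}} + \dot\xi\dot{\mathcal{A}}$ from the leading term and $t(\mathcal{A})(\ddot\xi\mathcal{A} + \dot\xi\dot{\mathcal{A}})$ from the next, with all remaining contributions proportional to $\partial^n_t\xi$ for $n\geq 2$. Isolating the $\dot\xi\,\dot{\mathcal{A}}$ coefficient on both sides then gives $t(\dot{\mathcal{A}}) = 1 + t(\mathcal{A})$, exactly matching the worked instance (\ref{eq:gammaderivative}).

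The only genuinely delicate point, and thus the main thing to verify carefully, is that $a_1 = \mathcal{A}$ in the expansion (\ref{eq:gammaA}) for \emph{every} jet-dependent tensor field, not only for the zero-jet base fields where this is built in. This can be established by induction on jet order, using the same $\partial_t$-commutation argument as in property 3 to pass from $\partial^k_t\Phi$ to $\partial^{k+1}_t\Phi$, together with the chain rule applied to $\gamma_{GR}\mathcal{A}$ viewed as a function on the infinite jet bundle. Once this is secured, the three identities reduce to straightforward coefficient bookkeeping on the expansion, with no analytic obstruction.
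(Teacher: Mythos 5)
Your proposal is correct and follows essentially the same route as the paper's proof: Leibniz rule plus extraction of the $\dot\xi$-coefficient for (1), induction from (1) for (2), and commuting $\partial_t$ with $\gamma_{GR}$ (the computation behind Equation (\ref{eq:gammaderivative})) for (3) --- the paper simply carries this out for fields depending on $0$th-jets and asserts the extension to higher jets that you make explicit via induction on jet order. One small correction: the parenthetical ``no sign, since both factors are bosonic'' is unjustified, since tensor fields built from $\Phi^+$ (e.g.\ $g^+$ or $\dot g^+ g^+$) are odd; the Leibniz rule then carries the factor $(-1)^{\gh(\mathcal{A})}$ appearing in the paper's computation, and this sign cancels against the one produced by commuting the odd generators $\xi$, $\dot\xi$ past $\mathcal{A}$, so the coefficient identity $t(\mathcal{A}\mathcal{B})=t(\mathcal{A})+t(\mathcal{B})$ survives unchanged.
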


\begin{proof}
Note that the only two terms from Equation (\ref{eq:gammaA}) that can contribute to these properties are the first two. Therefore we will only show the computations for two tensor fields $A,B$ that only depend on the 0th-jets, but they extend to the general case in a straightforward way.
\begin{enumerate}
    \item Let $gh(A) = a$. We compute
        \begin{align*}
            &\gamma_{GR} (AB) 
            = (\mathcal{L}_{\xi\partial_t} A) B + (-1)^{a} A(\mathcal{L}_{\xi\partial_t} B)\\
            &= (\xi \dot A + t(A) \dot \xi A) B
            + (-1)^{a}A (\xi \dot B + t(B) \dot \xi B)
            = \xi \partial_t(AB) + [t(A)+t(B)] \dot \xi(AB).
        \end{align*}
    \item Using that $\gamma_{GR}$ is a derivative we see that
    \begin{align*}
    \gamma_{GR} A^n = n A^{n-1} \gamma_{GR} A
    = n A^{n-1} [\xi \dot A + t(A)\dot \xi A]
    = \xi \partial_t A^n + nt(A) \dot \xi A^n.
\end{align*}
    \item This equality follows directly from Equation (\ref{eq:gammaderivative}).
\end{enumerate}
\end{proof}

We finish this section by presenting the action of $\gamma_{GR}$ tensor numbers for some relevant quantities
\begin{align}
\begin{split}
    \label{eq:tensornumberlist}
    &t(\dot q) = 1 + t(q) = 1,\\
    &t(T) = t(\|\dot q\|^2) = 2 t(\dot q) = 2,\\
    &t(u) = t\left(\frac{\dot q}{\|q\|}\right) 
    = t(\dot q) - t(\|\dot q\|) = 0,\\
    &t(\text{EL}_g) = t\left(\frac{E}{2\sqrt{g}} - \frac{T}{2g^{3/2}}\right) 
    = t\left(\frac{1}{\sqrt{g}}\right) = -\frac{1}{2} t(g) = -1.
\end{split}
\end{align}
We exemplify this method of calculating $\gamma_{GR}(\mathcal{A})$ with the computation of $\mathcal{A} = T = \frac{m}{2}\|\dot q\|^2$. Recall that $\gamma_{GR}q = \xi \dot q$ and as such $\gamma_{GR} \dot q = \xi \ddot q + \dot \xi \dot q$. $\gamma_{GR} T$ could potentially have terms proportional to $\ddot \xi$ since it depends on the derivative $\dot q$, but as there are no such terms in $\gamma_{GR} \dot q$ there won't be any in $\gamma_{GR}T$. As such we have
\begin{align*}
    \gamma_{GR} T = \xi \dot T + t(T) \dot \xi T 
    =  \xi \dot T + 2 \dot \xi T.
\end{align*}

\subsection{Lemmata used in Theorem \ref{theorem:laxequivalenceJacobi1DGR}}
\label{app:LemmataforTheoremJacobi1DGR}

In this subsection we explicitly present the lemmata used in Theorem \ref{theorem:laxequivalenceJacobi1DGR} and the detailed calculations.

\begin{lemma}
\label{lem:philaxJacobi1DGR}
Let $\phi: \Fj \rightarrow \Fgr$ be defined through
\begin{align*}
    &\phi^*q = \qj,&
    &\phi^*\xi = \xij,&
    &\phi^*g = \frac{\Tj}{E},\\
    &\phi^*q^+ = \qj^+,&
    &\phi^*\xi^+ = \xij^+,&
    &\phi^*g^+ = 0.
\end{align*}
Its pullback map $\phi^*$ is a chain map w.r.t. $(\mathcal{L}_{Q_i}-\d)$ and maps the lax BV-BFV data of the first-order theory as
\begin{align*}
    &\phi^* \tgr^\bullet = \tj^\bullet,&
    &\phi^* \Lgr^\bullet = \Lj^\bullet.
\end{align*}
\end{lemma}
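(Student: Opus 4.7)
The plan is to verify the two assertions by direct substitution, using Lemma~\ref{prop:phipsichainmapfromQ} to reduce the chain-map condition on the full BV-BFV complex to a check on the generators of $\Fgr$. So I would first tabulate $\phi^* \Qgr \varphi$ and compare it with $\Qj \phi^* \varphi$ for each coordinate $\varphi \in \{q,g,\xi,q^+,g^+,\xi^+\}$. The cases $\varphi \in \{q,\xi,\xi^+\}$ are essentially trivial since $\phi^*$ is the identity on these generators (modulo the tilde) and neither $\Qgr$ nor $\Qj$ on these coordinates involves $g$ or $g^+$.

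The crucial algebraic observation driving all the nontrivial checks is that $\phi^*g = \Tj/E$ is precisely the on-shell value of $g$ obtained by solving $\mathrm{EL}_g = 0$; equivalently, $\phi^*\sqrt{g} = \sqrt{\Tj/E}$. From this I would compute $\phi^* \mathrm{EL}_g = \frac{E}{2}\sqrt{E/\Tj} - \frac{\Tj}{2}(E/\Tj)^{3/2} = 0$, which together with $\phi^*g^+ = 0$ immediately gives $\phi^* \Qgr g^+ = 0 = \Qj \phi^* g^+$. For $\Qgr g = \xi \dot g + 2\dot\xi g$, a one-line calculation using $\Qj \Tj = \xij \dot\Tj + 2 \dot\xij \Tj$ (which itself follows from $\Qj \dqj = \partial_t(\xij \dqj)$) yields $\phi^* \Qgr g = \Qj(\Tj/E)$. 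For $\Qgr q^+ = -\partial_t(m\dot q/\sqrt{g} + q^+ \xi)$ the substitution $\sqrt{g} \mapsto \sqrt{\Tj/E}$ identifies the result with $\Qj \qj^+ = -\partial_t(\sqrt{E/\Tj}\,m\dqj + \qj^+ \xij)$, and the remaining case $\Qgr \xi^+$ loses its $g^+, \dot g^+$ contributions under $\phi^*$, leaving exactly $\Qj \xij^+$.

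For the second claim I would substitute coordinate-by-coordinate into $\tgr^\bullet$ and $\Lgr^\bullet$. The top-form part of $\Lgr$ collapses via $\Tj/\sqrt{\phi^*g} + \sqrt{\phi^*g}\,E = \sqrt{E\Tj} + \sqrt{E\Tj} = 2\sqrt{E\Tj}$, matching the Jacobi Lagrangian density, while the terms with $g^+$ drop out; the codimension-one piece $(\Tj/g - E)\sqrt{g}\,\xi$ vanishes under $\phi^*$ because $\Tj/(\Tj/E) - E = 0$, consistent with $\Lj^1 = 0$. A similar substitution in $\tgr^\bullet$ converts the $m\dot q/\sqrt{g}$ coefficient into $\sqrt{E/\Tj}\,m\dqj$ and kills the $g^+$-dependent terms.

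No real obstacle is anticipated; all checks reduce to elementary identities involving $\sqrt{\Tj/E}$, the vanishing of $\phi^*\mathrm{EL}_g$, and the Leibniz rule for $\partial_t$. The main bookkeeping task is ensuring signs and $\xi$-grading are handled consistently, in particular when verifying $\Qgr \xi^+$ pulls back correctly despite the presence of $g^+ \dot g + 2 \dot g^+ g$, which is annihilated term-by-term by $\phi^* g^+ = 0$ (the $\dot g^+$ term vanishes because $\phi^*$ commutes with $\partial_t$ acting on the coordinate $g^+$, so $\phi^* \dot g^+ = \partial_t \phi^* g^+ = 0$).
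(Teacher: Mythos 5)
Your proposal is correct and follows essentially the same route as the paper's proof: check $\phi^*\Qgr\varphi = \Qj\phi^*\varphi$ generator by generator (the nontrivial cases all reducing to $\phi^*g = \Tj/E$, which forces $\phi^*\text{EL}_g = 0$, combined with $\phi^*g^+ = \phi^*\dot g^+ = 0$), invoke Lemma \ref{prop:phipsichainmapfromQ} to upgrade this to the chain-map property for $(\mathcal{L}_{Q_i}-\d)$, and then verify $\phi^*\tgr^\bullet = \tj^\bullet$, $\phi^*\Lgr^\bullet = \Lj^\bullet$ by direct substitution. Your framing of $\phi^*g$ as the on-shell value solving $\text{EL}_g = 0$ is a pleasant conceptual gloss on why every check closes, but the computations coincide with the paper's.
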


\begin{proof}
The proof for the chain map condition $\phi^* \circ \Qgr = \Qj  \circ \phi^*$ is a matter of straightforward computations
\begin{align*}
    \phi^*\Qgr q 
    &= \phi^*(\xi \dot q)
    = \xij \dqj 
    = \Qj  \qj 
    = \Qj  \phi^* q,\\
    \phi^*\Qgr \xi 
    &= \phi^*(\xi \dot \xi) 
    = \xij \dot{\xij} 
    = \Qj  \xij 
    = \Qj  \phi^* \xi,\\
    \phi^*\Qgr g 
    &= \phi^*(\xij \dot g + 2 \dot{\xij} g )
    = \xij \frac{\dot{\Tj}}{E} + 2 \dot{\xij} \frac{\Tj}{E}
    = \Qj  \frac{\Tj}{E}
    = \Qj  \phi^* g,\\
    \phi^*\Qgr q^+
    &= \phi^*\left(-\partial_t\left( \frac{m\dot q}{\sqrt{g}}\right) + \xi \dot q^+ + \dot \xi q^+\right)
    = - \partial_t \left(\sqrt{\frac{E}{\Tj}} m \dqj\right) + \xij \dqj^+ + \dot{\xij} \qj^+\\
    &= \Qj  \qj^+ 
    = \Qj  \phi^* q^+ ,\\
    \phi^*\Qgr \xi^+
    &= \phi^*(- q^+ \cdot \dot q + g^+ \dot g  +2 \dot g^+ g  + \xi \dot \xi^+ + 2 \dot \xi \xi^+) 
    = - \qj^+ \cdot \dqj +  \xij\dot{\xij}^+ + 2 \dot{\xij} \xij^+\\ 
    &= \Qj  \xij^+
    = \Qj  \phi^* \xi^+,\\
    \phi^*\Qgr g^+
    &= \phi^*\left(\frac{1}{2\sqrt{g}} \left(E-\frac{T}{g}\right) + \xi \dot g^+  - \dot \xi g^+ \right)
    = 0
    = \Qj  0
    = \Qj  \phi^*g^+.
\end{align*}
$\phi^*$ is then a chain map w.r.t. $\mathcal{L}_{Q_i} - \d$ due to Lemma \ref{prop:phipsichainmapfromQ}. Applying $\phi^*$ to $(\tgr^\bullet,\Lgr^\bullet)$ gives
\begin{align*}
    \phi^*\tgr^0 &= \phi^*\left(q^+ \cdot \delta q +  \xi^+ \delta \xi + g^+ \delta g\right)
    = \qj^+ \cdot \delta \qj +  \xij^+ \delta \xij =  \tj^0,\\
    \phi^*\tgr^1  &= \phi^*\left(\frac{m\dot q}{\sqrt{g}} \cdot \delta q + q^+\xi \delta q+ g^+\xi \delta g - (2g^+ g + \xi^+ \xi) \delta \xi\right)\\
    &= \sqrt{\frac{E}{\Tj}} m \dqj \cdot \delta\qj + \qj^+\xij \delta\qj+ - \xij^+ \xij \delta \xij = \tj^1,\\
    \phi^* \Lgr^0 
    &= \phi^*\left( \frac{T}{\sqrt{g}} + \sqrt{g} E + q^+ \cdot \xi \dot{q} + g^+ (\xi \dot g + 2 g \dot \xi) + \xi^+ \xi \dot{\xi}\right)\\
    &= 2\sqrt{E\Tj} + \qj^+ \cdot \xij \dqj + \xij^+ \xij \dot{\xij} = \Lj^0,\\
    \phi^*\Lgr^1 &= \phi^*\left(\left(\frac{T}{g} - E \right) \sqrt{g}\xi\right) = 0 = \Lj^1.
\end{align*}
\end{proof}

\begin{lemma}
\label{lem:psilaxJacobi1DGR}
Let $\psi: \Fgr \rightarrow \Fj$ be the map defined through
\begin{align*}
    &\psi^*\qj = q,\\
    &\psi^*\xij = \xi,\\
    &\psi^* \qj^+_\parallel = \eta^{3/2} \left(q^+_\parallel 
        - \left[g^+\dot g + 2\dot g^+ g \right] \frac{m\dot q}{2T}
        -\frac{g^{3/2}}{E} \left[\dot{\text{EL}_g}g^+ - \text{EL}_g \dot g^+ \right]\frac{m\dot q}{2T}\right),\\
    &\psi^*\qj^+_\perp = \eta^{3/2} \left(q^+_\perp 
    +\frac{2m}{E} g^+ \ddot q_\perp \right),\\
    &\psi^*\xij^+ =  \eta^{3/2}\left(\xi^+ + \frac{g^{3/2}}{E} \dot g^+ g^+\right),
\end{align*}
where $\eta \coloneqq gE/T$ and the $\qj^+_\parallel,\qj^+_\perp$ notation works as in Equation \ref{eq:ParallelPerp}. Its pullback map $\psi^*$ is a chain map w.r.t. $(\mathcal{L}_{Q_i}-\d)$ and maps the lax BV-BFV data of the first-order theory as
\begin{align*}
    &\psi^* \tj^\bullet  = \tgr^\bullet  + (\mathcal{L}_{\Qgr} - \d) \betagr^\bullet + \delta \fgr^\bullet,\\
    &\psi^* \Lj^\bullet  = \Lgr^\bullet  + (\mathcal{L}_{\Qgr} - \d) \iota_{\Qgr} \betagr^\bullet + \d \fgr^\bullet,
\end{align*}
where
\begin{align*}
    \betagr^0 =& 
    -\frac{4g^{7/2}}{\Omega^2}Tg^+\delta g^+
    +\left(\frac{2g^2}{\Omega} + \eta^{3/2}\frac{2\sqrt{g}}{E}\right)g^+q^+_\perp \cdot \delta q\\
    &+\left(\frac{4g^{7/2}}{\Omega^2}T - \eta^{3/2}\frac{g^{3/2}}{E}\right)\dot g^+ g^+ \frac{m\dot q}{2T} \cdot \delta q 
    -(\eta^{3/2}-1)\xi^+ \frac{m\dot q}{2T} \cdot \delta q,\\
    \betagr^1 =& \,\,\xi \betagr^0 + \frac{2g^{3/2}}{\Omega}g^+m\dot q \cdot \delta q,\\
    \fgr^0 =& \,\,2 g^+\left(g - \frac{2g^{3/2}}{\Omega} T\right),\\
    \fgr^1 =& \,\,\xi \fgr^0,
\end{align*}
with $\Omega = \sqrt{g}T + g \sqrt{TE}$.
\end{lemma}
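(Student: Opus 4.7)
By Lemma \ref{prop:phipsichainmapfromQ}, checking that $\psi^*$ intertwines the BV--BFV differentials reduces to verifying the identity $\psi^*\circ \Qj = \Qgr \circ \psi^*$ on the coordinate functions of $\Fj$. The cases of the fields $\qj$ and $\xij$ are immediate, since $\psi^*\qj = q$ and $\psi^*\xij = \xi$ and both sides of the identity reduce to $\xi\dot q$ and $\xi\dot\xi$ respectively. The substance is in the antifield sector, where one has to show
\[
\Qgr\psi^*\qj^+ = \psi^*\Qj\qj^+, \qquad \Qgr\psi^*\xij^+ = \psi^*\Qj\xij^+.
\]
The first of these splits naturally into parallel and perpendicular components with respect to $u = \dot q/\|\dot q\|$, exactly as $\psi^*$ itself is defined.

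The plan for the antifield computation is to decompose $\Qgr = \gamma_{GR} + \delta_{GR}$ as in Remark \ref{rem:Chevalley-Eilenberg1DGR}. The Chevalley--Eilenberg piece is handled uniformly by noting that each factor entering $\psi^*\qj^+_\parallel$, $\psi^*\qj^+_\perp$, and $\psi^*\xij^+$ carries a definite tensor weight under $\gamma_{GR}=\mathcal L_{\xi\partial_t}$ (cf.\ Appendix \ref{app:tensornumber}), and the product of weights agrees with the tensor weight of $\qj^+$, $\xij^+$; in particular $\eta^{3/2}$ is $\gamma_{GR}$-closed because $t(\eta)=t(g)+t(E)-t(T)=0$. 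It is the Koszul--Tate piece that does the real work: the identity $\delta_{GR} g^+ = \text{EL}_g = \frac{E}{2\sqrt{g}} - \frac{T}{2g^{3/2}}$ is precisely what is needed so that the terms proportional to $g^+\dot g + 2\dot g^+ g$ and to $\dot{\text{EL}_g}g^+ - \text{EL}_g\dot g^+$ in $\psi^*\qj^+_\parallel$ combine under $\delta_{GR}$ to reproduce $-\partial_t(\sqrt{E/T}\,m\dot q)$ after multiplication by $\eta^{3/2}$. The exponent $3/2$ is dictated by the tensor weight of $\qj^+$. An analogous (and slightly shorter) calculation does the job for $\qj^+_\perp$ and $\xij^+$, where the key cancellation involves the identity $\ddot q_\perp = \ddot q - (u\cdot\ddot q)u$ together with $\delta_{GR}(\eta^{3/2}\tfrac{g^{3/2}}{E}\dot g^+ g^+)$.

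For the transformation of the lax data, Proposition \ref{prop:LaxRedundancyTransformation} together with Remark \ref{rem:WhatToCheck} shows that it suffices to verify the identities
\[
\Delta\theta^0 = \mathcal L_{\Qgr}\betagr^0 - \d\betagr^1 + \delta \fgr^0,\qquad
\Delta\theta^1 = \mathcal L_{\Qgr}\betagr^1 + \delta \fgr^1,\qquad
\Delta L^0 = \mathcal L_{\Qgr}\iota_{\Qgr}\betagr^0 - \d\iota_{\Qgr}\betagr^1 + \d \fgr^1,
\]
where $\Delta\theta^k \coloneqq \psi^*\tj^k - \tgr^k$ and $\Delta L^0 \coloneqq \psi^*\Lj^0 - \Lgr^0$; the corresponding equation at codimension one for $L^\bullet$ is automatic. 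I would organize the comparison by powers of $g^+$. The $(g^+)^0$ piece of $\psi^*\tj^\bullet$ reproduces the expression for $\phi^*\tgr^\bullet$ on the locus $g^+=0$ (modulo the $(\eta^{3/2}-1)$ contribution to $\psi^*\qj^+_\parallel$ and to $\psi^*\xij^+$), and is matched against $\delta\fgr^\bullet$; the pieces linear and quadratic in $g^+$, together with those containing $\dot g^+$, are matched term by term against $(\mathcal L_{\Qgr}-\d)\betagr^\bullet$ after using $\Omega = \sqrt{g}T + g\sqrt{TE}$ and the identity $\eta^{3/2} - 1 = (gE-T)(\eta + \sqrt\eta + 1)/(T(\sqrt\eta + 1)^2)$ to recognize the $\text{EL}_g$ factor hidden in the coefficients of $\betagr^\bullet$.

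The principal obstacle is purely bookkeeping: the prefactor $\eta^{3/2}$ depends on both $g$ and $T$, so the action of $\mathcal L_{\Qgr}$ on $\betagr^\bullet$ produces a large number of terms involving $\text{EL}_g$, $g^+$, $\dot g^+$, $q^+_\parallel$ and $q^+_\perp$ which must be grouped carefully before they can be recognized as pieces of $\psi^*\tj^\bullet$. The correct organizing principle is the parallel/perpendicular decomposition of $\mathbb{R}^n$-valued forms along $u$, used in tandem with the weight arguments of Appendix \ref{app:tensornumber} to keep $\gamma_{GR}$-exact terms manifest and isolate the nontrivial $\delta_{GR}$-cancellations involving $\text{EL}_g$. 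Once the identities are verified for $\theta^\bullet$ and $L^0$, the conclusion of the Lemma follows.
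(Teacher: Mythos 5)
Your overall architecture coincides with the paper's proof: reduce the chain-map property to generators via Lemma \ref{prop:phipsichainmapfromQ}, split $\Qgr = \gamma_{GR} + \delta_{GR}$, control the Chevalley--Eilenberg part with the tensor-number calculus of Appendix \ref{app:tensornumber}, and verify the three identities for $\Delta\theta^0$, $\Delta\theta^1$, $\Delta L^0$ using identities expressing $\sqrt{E/T}-1/\sqrt{g}$ and $2\sqrt{ET}-T/\sqrt{g}-\sqrt{g}E$ through $\text{EL}_g$ and $\Omega$. However, your Koszul--Tate step is mis-located and would fail as written. The parallel sector does \emph{not} ``reproduce $-\partial_t(\sqrt{E/T}\,m\dot q)$'': that vector is purely perpendicular, since $\partial_t\bigl(\sqrt{E/T}\,m\dot q\bigr)\cdot\dot q = \sqrt{E}\bigl(\dot T/\sqrt{T} - \dot T/\sqrt{T}\bigr) = 0$, so its parallel projection vanishes identically. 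Correspondingly, both sides of the KT condition in the parallel sector vanish for structural reasons: $\psi^*\delta_J\qj^+_\parallel \propto \psi^*(\delta_J^2\xij^+) = 0$, while $\psi^*\qj^+_\parallel = -\eta^{3/2}\,\delta_{GR}\bigl(\xi^+ + \tfrac{g^{3/2}}{E}\dot g^+ g^+\bigr)\tfrac{m\dot q}{2T}$ with $\eta^{3/2}$ and $\tfrac{m\dot q}{2T}$ being $\delta_{GR}$-closed, whence $\delta_{GR}\psi^*\qj^+_\parallel = -\eta^{3/2}\,\delta_{GR}^2(\cdots)\tfrac{m\dot q}{2T} = 0$. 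The genuinely nontrivial matching of the EL term happens in the perpendicular sector, which you dismiss as ``slightly shorter'': there one must expand $\tfrac{2m}{E}\ddot q_\perp\text{EL}_g$ and check $\delta_{GR}\psi^*\qj^+_\perp = \bigl(\tfrac{E}{T}\bigr)^{3/2}\bigl(\tfrac{\dot T}{2E}m\dot q - \tfrac{T}{E}m\ddot q\bigr) = \psi^*\delta_J\qj^+_\perp$. Executed as planned, your parallel-sector computation produces $0$ where you expect the EL term, and your sketch omits the computation carrying all the content.

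Your CE-sector argument is also incomplete on one point, and contains a false side claim. Matching tensor numbers only controls the coefficient of $\dot\xi$ in $\gamma_{GR}\psi^*\tilde\Phi^+$; because $\psi^*\qj^+$ contains derivative terms ($\dot g$, $\dot g^+$, $\dot{\text{EL}_g}$, $\ddot q_\perp$), the Lie derivative $\mathcal{L}_{\xi\partial_t}$ additionally produces $\ddot\xi$ (and higher) contributions whose cancellation must be checked separately --- for $\qj^+_\perp$ it hinges on $\dot q - u(u\cdot\dot q) = 0$, and for $\qj^+_\parallel$ on an explicit cancellation among the $g^+$ terms; this is the $a_n = 0$ step in the paper's proof, absent from your plan. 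Relatedly, $\eta^{3/2}$ is not $\gamma_{GR}$-closed: $t(\eta)=0$ gives $\gamma_{GR}\,\eta^{3/2} = \xi\,\partial_t(\eta^{3/2}) \neq 0$; what you need, and what suffices, is that it has weight zero and generates no higher-derivative ghost terms. Finally, a small algebra slip in your organizing identity: $\eta^{3/2}-1 = (gE-T)(\eta+\sqrt{\eta}+1)/\bigl(T(\sqrt{\eta}+1)\bigr)$, with the first power of $(\sqrt{\eta}+1)$, not the square.
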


\begin{proof}
We start with the chain map condition $\psi^*\circ \Qj = \Qgr \circ \psi^*$. In the case of the fields $\{\qj,\xij\}$ we simply compute
\begin{align*}
    &\psi^*Q_J \qj 
    = \psi^*(\xij \dqj)
    = \xi \dot q = \Qgr q = \Qgr \psi^* \qj,\\
    &\psi^*Q_J \xij 
    = \psi^*(\xij \dxij)
    = \xi \dot \xi = \Qgr \xi = \Qgr \psi^* \xij.
\end{align*}

When dealing with the antifields $\{\qj^+,\xij^+\}$ it is useful to first show that $\psi^*$ is a chain map w.r.t. to the Chevalley-Eilenberg differentials and then proceed to show that is also fulfills this condition w.r.t. the Koszul-Tate differentials.

In the case of the Chevalley-Eilenberg differentials $\gamma_J,\gamma_{GR}$ it is sufficient to investigate how $\psi^*$ changes the tensorial properties of the fields, i.e.\ to analyse the tensor number introduced Section \ref{app:tensornumber}. Indeed using $\psi^* \xij = \xi$ we can compute
\begin{align}   
    \label{eq:psichainmapside1}
    \psi^*\gamma_J \tilde \Phi^+ 
    =\psi^*(\xij \dot{\tilde \Phi}^+ + t(\tilde \Phi^+) \dxij \tilde \Phi^+) = \xi \partial_t(\psi^* \tilde \Phi^+) + t(\tilde \Phi^+) \dot \xi \psi^* \tilde \Phi^+.  
\end{align} 
The most general form of the other side of the chain map condition $\gamma_{GR}\psi^*\tilde \Phi^+$ is given by
\begin{align}
    \label{eq:psichainmapside2}
    \gamma_{GR} (\psi^* \tilde \Phi^+)
    = \xi \partial_t(\psi^* \tilde \Phi^+) + t(\psi^*\tilde \Phi^+) \dot \xi \psi^* \tilde \Phi^+
    + \sum_{n \geq 2} \partial^n_t \xi a_n,
\end{align}
where the field dependent coefficients $a_n$ do not vanish trivially since the expressions for $\psi^*\tilde \Phi^+$ depend on derivative terms such as $\dot g, \dot g^+$, and $\dot{\text{EL}_g}$. In order to show that the two sides of the chain map condition given in Equations (\ref{eq:psichainmapside1}) and (\ref{eq:psichainmapside2}) are equal we need prove that $\psi^*$ preserves the tensor number $t(\cdot)$ and that the coefficients $a_n$ vanish
\begin{align*}
    &t(\tilde \Phi^+) =  t(\psi^*\tilde \Phi^+),&
    &a_n = 0.
\end{align*}
Recalling that $t(g) = t(T) = 2$, we see that the rescaling factor $\eta^{3/2}$ has vanishing tensor number
\begin{align*}
    t(\eta) = t\left(\frac{g}{T}\right) = t(g) - t(T) = 0.
\end{align*}
Furthermore, since $t_{n\geq2}(g) = t_{n\geq2}(T) = 0$, we have $\gamma_{GR}\eta = \xi \dot \eta$ and thus it can be ignored, since it neither changes $t(\cdot)$ nor $t_{n\geq2}(\cdot)$.

We start by showing $\psi^* \circ \gamma_J = \gamma_{GR} \circ \psi^*$ on the antifield $\qj^+$. Using Equations (\ref{eq:tensornumberFieldsAntifields}), (\ref{eq:gammaderivative}) and (\ref{eq:tensornumberlist}) it then follows that all the terms in $\psi^*\qj^+_\parallel$ have tensor number 1
\begin{align*}
    &t(q^+_\parallel) = t(u (u\cdot q^+))
    = t(q^+) + 2t(u)= 1,\\
    &t\left(g^+\dot g\frac{m\dot q}{2T}\right) 
    = -1 + (2 + 1) + (0+1-2) = 1,\\
    &t\left(g^+\frac{g^{3/2}}{E}\dot{\text{EL}_g} \frac{m\dot q}{2T}\right)
    = -1 + \frac{3}{2}\cdot 2 + (-1+1) + (0+1-2) = 1,\\
    &t\left(\dot g^+g\frac{m\dot q}{2T}\right) 
    = (-1+1) + 2 + (0+1-2) = 1,\\
    &t\left(\dot g^+\frac{g^{3/2}}{E}\text{EL}_g \frac{m\dot q}{2T}\right)
    = (-1+1) + \frac{3}{2} \cdot 2 - 1 + (0+1-2) = 1,
\end{align*}
showing that $t(\psi^* \qj^+_\parallel) = 1 = t(\qj^+_\parallel)$. We still need to check what happens with the terms in $\gamma_{GR} \psi^*\qj^+_\parallel$ that are proportional to $\ddot \xi$. Using Equation (\ref{eq:gammaderivative}) we can see that they take the form
\begin{align*}
    \ddot \xi \left(-\left[g^+(2g)-2g^+g\right]\frac{m\dot q}{2T}-\frac{g^{3/2}}{E}\left[-\text{EL}_gg^+ + \text{EL}_gg^+\right]\right)= 0,
\end{align*}
and thus $\psi^*  \gamma_{J} \qj^+_\parallel = \gamma_{GR} \psi^*\qj^+_\parallel$. 

In the case of $\qj^+_\perp$ we have $t(q^+_\perp) = t(q^+) + 2t(u) = t(q^+)$ since $t(u) = 0$. The same reasoning applies to $\ddot q_\perp$, but here we need to consider terms which are proportional to higher derivatives of the ghost since
\begin{align*}
    &\gamma_{GR} \ddot q 
    = \partial^2_t (\gamma_{GR} q)
    = \partial^2_t (\xi \dot q)
    = \xi \dddot q + 2 \dot \xi \ddot q + \ddot \xi \dot q.
\end{align*}
The terms proportional to $\ddot \xi$ in $\gamma_{GR}\psi^* \qj_\perp$ come from $\gamma_{GR}\ddot q$ and $u(u\cdot \gamma_{GR}\ddot q)$. As such they take the form
\begin{align*}
    \ddot \xi \left(\dot q - u(\dot q\cdot u)\right)
    = \ddot \xi \left(\dot q - \dot q \right)
    = 0,
\end{align*}
hence showing that there are no terms proportional to $\ddot \xi$ in $\gamma_{GR} \psi^*\qj^+_\perp$. Furthermore, $t(g^+\ddot q_\perp) = -1 + 2 = 1$ and as such $\psi^*\gamma_{J} \qj^+_\perp = \gamma_{GR}\psi^*\qj^+_\perp$. In order to check the chain map condition for $\xij^+$ first note that
\begin{align*}
    \gamma_{GR} \dot g^+ 
    = \partial_t(\xi \dot g^+ - \dot \xi g^+) = \xi \ddot g^+ - \ddot \xi g^+, 
\end{align*}
since $\gamma_{GR} g^+ = \xi \dot g^+ - \dot \xi g^+$. This in turn implies that
\begin{align*}
    \gamma_{GR} (\dot g^+ g^+) 
    &= \gamma_{GR} \dot g^+ g^+ -  \dot g^+ \gamma_{GR} g^+ 
    = \xi \ddot g^+ g^+ - \dot \xi \dot g^+ g^+\\
    &= \xi \partial_t(\dot g^+ g^+)
    - \dot \xi \dot g^+ g^+.
\end{align*}
As such $t(\dot g^+ g^+) = -1$ and $t(g^{3/2}\dot g^+ g^+) = \frac{3}{2} \cdot 2 - 1 = 2$. Furthermore, $t(\xi^+) = 2$ then means that $t(\psi^*\xij^+) = t(\xij^+) = 2$ and since there are no other derivative terms in $\psi^* \xij^+$, we have $a_n = 0$, which completes the proof for
\begin{align*}
    \psi^* \circ \gamma_{J} = \gamma_{GR} \circ \psi^*.
\end{align*}

We now move to the Koszul-Tate differentials. In the case of $\qj^+_\parallel$ we first note that
\begin{align*}
    \psi^*\delta_J \qj^+_\parallel 
    = \psi^*\left(\delta_J(\qj^+ \cdot \dqj) \frac{m\dqj}{2\Tj} \right)
    = -\psi^*\left(\delta^2_J \xij^+ \frac{m\dqj}{2\Tj} \right) = \psi^*(0) = 0,
\end{align*}
since $\delta_J \xij^+ = -\qj^+\cdot \dqj$ and $\delta^2_J = 0$. The term $\delta_{GR} \psi^*\qj^+_\parallel$ vanishes for a similar reason
\begin{align*}
    \delta_{GR}\psi^*(\qj^+_\parallel) 
    &=\delta_{GR}\left\{\eta^{3/2} \left(q^+_\parallel 
        - \left[g^+\dot g + 2\dot g^+ g \right] \frac{m\dot q}{2T}
        -\frac{g^{3/2}}{E} \left[\dot{\text{EL}_g}g^+ - \text{EL}_g \dot g^+ \right]\frac{m\dot q}{2T}\right)\right\}\\
    &= -\eta^{3/2} \delta^2_{GR}\left\{\xi^+ + \frac{g^{3/2}}{E} \dot g^+ g^+\right\} \frac{m\dot q}{2T} = 0,
\end{align*}
where we used that $\delta_{GR} \xi^+ = -q^+\cdot \dot q + g^+\dot g + 2\dot g^+ g$ and $\delta_{GR}(\dot g^+ g^+) = \dot{\text{EL}_g}g^+ - \dot g^+ \text{EL}_g$. 

The computations for the perpendicular part of $\qj^+$ go as follows
\begin{align*}
    &\psi^* \delta_J \qj^+_\perp 
    = \psi^*\left(- \partial_t \left(\sqrt{\frac{E}{\Tj}} m \dqj\right) 
    + \tilde u \partial_t \left(\sqrt{\frac{E}{\Tj}} m \dqj\right) \cdot \tilde u \right) \\
    &=\psi^*\left( \frac{1}{2} \left(\frac{E}{\Tj}\right)^{3/2} \frac{\dot{\Tj}}{E} m \dqj 
    - \sqrt{\frac{E}{\Tj}} m \ddqj 
    + \tilde u \sqrt{2mE} \underbrace{\dot{\tilde u} \cdot \tilde u}_{=0}
    \right)
    = \left(\frac{E}{ T}\right)^{3/2} \left(\frac{\dot{T}}{2E} m \dot{q} - \frac{T}{E}m \ddot{ q} \right),
\end{align*}
the other side of the equation reads
\begin{align*}
    &\delta_{GR} \psi^* \qj^+_\perp
    = \eta^{3/2} \delta_{GR} \left( q^+_\perp + \frac{2m}{E}\ddot q_\perp g^+\right)
    =  \eta^{3/2} \left(
    -\partial_t\left( \frac{m\dot q}{\sqrt{g}}\right) 
    + u \partial_t\left( \frac{m\dot q}{\sqrt{g}}\right) \cdot u
    + \frac{2m}{E}\ddot q_\perp \text{EL}_g\right)\\
    &=\eta^{3/2} \left(-\frac{m\ddot q}{\sqrt{g}} 
    - \cancel{\partial_t\left( \frac{1}{\sqrt{g}}\right) m \dot q}
    + u \frac{m \ddot q \cdot u}{\sqrt{g}}
    + \cancel{\partial_t\left( \frac{1}{\sqrt{g}}\right) m \dot q}
    + \frac{2m}{E}\ddot q_\perp \text{EL}_g\right)\\
    &= \eta^{3/2} \left(-\frac{m\ddot q}{\sqrt{g}} 
    + \frac{m\dot q}{2T} \frac{m \ddot q \cdot \dot q}{\sqrt{g}} 
    + \frac{2m}{E}\ddot q_\perp \text{EL}_g\right)
    =\eta^{3/2} \left(-\frac{m\ddot q}{\sqrt{g}} 
    + \frac{\dot T}{2T\sqrt{g}} m\dot q 
    + \frac{2m}{E}\ddot q_\perp \text{EL}_g\right),
\end{align*}
where we have used that $T = m\|\dot q\|/2$ and $\dot T = m \ddot q \cdot \dot q$. The last term can be expanded to give
\begin{align*}
    \frac{2m}{E}\ddot q_\perp \text{EL}_g
    &= \frac{2m}{E}\ddot q \left(\frac{E}{2\sqrt{g}} - \frac{T}{2g^{3/2}} \right)
    - \frac{2m}{E} \dot q \frac{\ddot q \cdot \dot q}{\|\dot q\|^2}
    \left(\frac{E}{2\sqrt{g}} - \frac{T}{2g^{3/2}} \right)\\
    &= \frac{m\ddot q}{\sqrt{g}} - \frac{mT}{Eg^{3/2}} \ddot q
    - \frac{\dot T}{2T\sqrt{g}} m \dot q 
    + \frac{\dot T}{2Eg^{3/2}} m \dot q.
\end{align*}
Putting everything together results in
\begin{align*}
    \delta_{GR} \psi^* \qj^+_\perp
    = \eta^{3/2}\left(\frac{\dot T}{2Eg^{3/2}} m \dot q - \frac{T}{Eg^{3/2}} m\ddot q\right)
    = \left(\frac{E}{T}\right)^{3/2} \left(\frac{\dot{T}}{2E} m \dot{q} - \frac{T}{E}m \ddot{ q} \right),
\end{align*}
which shows that $\psi^* \delta_{J} \qj^+_\perp = \delta_{GR} \psi^* \qj^+_\perp$. Finally we show that $\psi^*$ acts as a chain map w.r.t. $\delta_{KT}$ on $\xi^+$. We have
\begin{align*}
    \psi^* \delta_{J} \xij^+ 
    &= \psi^*(-\qj^+ \cdot \dqj)
    = \psi^*(-\qj^+) \cdot \dot q\\
    &= \eta^{3/2} \left(-q^+ \cdot \dot q 
        + \left[g^+\dot g + 2\dot g^+ g \right]
        + \frac{g^{3/2}}{E} \left[\dot{\text{EL}_g}g^+ - \text{EL}_g \dot g^+ \right]\right),\\
    \delta_{GR} \psi^* \xij^+ &= \eta^{3/2} 
    \delta_{GR}\left(\xi^+ + \frac{g^{3/2}}{E} \dot g^+ g^+\right)\\
    &= \eta^{3/2} \left(-q^+\cdot \dot q + g^+ \dot g + 2 \dot g^+ g + \frac{g^{3/2}}{E} \dot{\text{EL}_g} g^+ - \frac{g^{3/2}}{E} \dot g^+ \text{EL}_g \right) \\
    &= \eta^{3/2} \left(-q^+ \cdot \dot q 
        + \left[g^+\dot g + 2\dot g^+ g \right]
        + \frac{g^{3/2}}{E} \left[\dot{\text{EL}_g}g^+ - \text{EL}_g \dot g^+ \right]\right),
\end{align*}
finally showing that 
\begin{align*}
    \psi^* \circ \delta_{J} = \delta_{GR} \circ \psi^*,
\end{align*}
which show that $\psi^*$ is indeed a chain map. 

We now present the calculations for $(\psi^*\theta^\bullet_J,\psi^*L^\bullet_J )$. Recall that $\theta^\bullet_J = \theta^0_J \,\d t + \theta^1_J$, $L^\bullet_J = L^0_J \,\d t + L^1_J$ with 
\begin{align*}
    &\theta^0_J = \qj^+ \cdot \delta \qj + \xij^+ \delta \xij,&
    &\theta^1_J = \sqrt{\frac{E}{\Tj}} m \dqj \cdot \delta \qj + \qj^+ \xij \delta \qj - \xij^+ \xij \delta \xij,\\
    &L^0_J = 2\sqrt{E \Tj} + \qj^+ \cdot \xij \dqj + \xij^+ \xij \dxij,&
    &L^1_J = 0.
\end{align*}
as in Proposition/Definition \ref{propdef:laxJacobi}.
Specifically we want compute
\begin{align*}
    &\Delta \theta^0 \mathrm{d}t
    = \mathcal{L}_{\Qgr} \beta^0 \mathrm{d}t- \mathrm{d}\beta^1 + \delta f^0 \mathrm{d}t,\\
    &\Delta \theta^1 
    = \mathcal{L}_{\Qgr} \beta^1 + \delta f^1 \mathrm{d}t,\\
    &\Delta L^0 \mathrm{d}t
    = \mathcal{L}_{\Qgr} \iota_{\Qgr} \beta^0 \mathrm{d}t - \mathrm{d}\iota_{\Qgr} \beta^1 + \mathrm{d}f^1,
\end{align*}
where we skip $\Delta L^1$ as it is determined by $\Delta \theta^\bullet$. We will from now on drop the label $GR$ in order to keep the calculations cleaner, we will denote the Koszul-Tate operator by $\delta_{KT}$ to avoid confusion with the de Rham differential $\delta$. The terms on the left hand sides can be computed explicitly by using the form of $\psi^*$. We get:
\begin{align*}
    \Delta \theta^0
    =&\,\,(\eta^{3/2} - 1) q^+ \cdot \delta q 
    + (\eta^{3/2} - 1) \xi^+ \delta \xi 
    - g^+ \delta g\\
    &- \eta^{3/2} \left[g^+\dot g + 2\dot g^+ g \right] \frac{m\dot q}{2T} \cdot \delta q
    -\eta^{3/2}\frac{g^{3/2}}{E} \left[\dot{\text{EL}_g}g^+ - \text{EL}_g \dot g^+ \right]\frac{m\dot q}{2T}\cdot \delta q\\
    &+ \eta^{3/2} \frac{2m}{E} g^+ \ddot q_\perp \cdot \delta q
    + \eta^{3/2} \frac{g^{3/2}}{E} \dot g^+ g^+ \delta \xi,\\
    \Delta \theta^1 
    = &\,\,\left[\sqrt{\frac{E}{T}}m\dot q - \frac{m\dot q}{\sqrt{g}} \right] \cdot \delta q\\
    &+ (\eta^{3/2}-1) q^+\xi \delta q 
    - (\eta^{3/2}-1) \xi^+\xi \delta \xi 
    - g^+\xi \delta g + 2g^+ g \delta \xi \\
    &- \eta^{3/2} \left[g^+\dot g + 2\dot g^+ g \right] \frac{m\dot q}{2T} \cdot \xi \delta q
    -\eta^{3/2}\frac{g^{3/2}}{E} \left[\dot{\text{EL}_g}g^+ - \text{EL}_g \dot g^+ \right]\frac{m\dot q}{2T}\cdot \xi \delta q\\
    &+\eta^{3/2}\frac{2m}{E} g^+ \xi \ddot q_\perp \cdot \delta q
    + \eta^{3/2} \frac{g^{3/2}}{E} \dot g^+ g^+ \xi \delta \xi\\
    =&\,\, \left[\sqrt{\frac{E}{T}}m\dot q - \frac{m\dot q}{\sqrt{g}} \right] \cdot \delta q + 2g^+g\delta \xi - \xi\Delta \theta^1,\\
    \Delta L^0
    =&\,\,2\sqrt{ET} -\frac{T}{\sqrt{g}} - \sqrt{g}E\\
    &+(\eta^{3/2} - 1) q^+ \cdot \xi \dot q 
    + (\eta^{3/2} - 1) \xi^+ \xi\dot \xi 
    - g^+ (\xi \dot g + 2\dot \xi g)\\
    &- \eta^{3/2} \left[g^+\dot g + 2\dot g^+ g \right] \xi
    -\eta^{3/2}\frac{g^{3/2}}{E} \left[\dot{\text{EL}_g}g^+ - \text{EL}_g \dot g^+ \right]\xi\\
    &+ \eta^{3/2} \frac{g^{3/2}}{E} \dot g^+ g^+ \xi \dot \xi.
\end{align*}
The following identities are going to be used throughout the calculations:
\begin{align}
    \label{eq:usefulidentity1}
    \sqrt{\frac{E}{T}} - \frac{1}{\sqrt{g}} &= \frac{2g^{3/2}}{\Omega} \text{EL}_g,\\
    \label{eq:usefulidentity2}
    2\sqrt{ET} - \frac{T}{\sqrt{g}} - \sqrt{g}E &= - \frac{4g^{7/2}}{\Omega^2}T \text{EL}_g^2,
\end{align}
where $\Omega = \sqrt{g}T + g\sqrt{TE}$ with $t(\Omega) = 3$. To see that the first Equation (\ref{eq:usefulidentity1}) holds we compute
\begin{align*}
    \sqrt{\frac{E}{T}} - \frac{1}{\sqrt{g}}
    &=\frac{1}{\sqrt{gT}} \left(\sqrt{gE}-\sqrt{T}\right)
    =\frac{gE-T}{\sqrt{gT}\left[\sqrt{T} + \sqrt{gE}\right]}\\
    &= \frac{2g^{3/2}}{\Omega} \left(\frac{E}{2\sqrt{g}} - \frac{T}{2g^{3/2}}\right)
    = \frac{2g^{3/2}}{\Omega} \text{EL}_g.
\end{align*}
For the second one (\ref{eq:usefulidentity2}) we have
\begin{align*}
    2\sqrt{ET} - \frac{T}{\sqrt{g}} - \sqrt{g}E= -\left(g^{1/4} \sqrt{E}- \frac{\sqrt{T}}{g^{1/4}}\right)^2.
\end{align*}
The term in the brackets can be changed to
\begin{align*}
    g^{1/4} \sqrt{E} - \frac{\sqrt{T}}{g^{1/4}} 
    &= g^{1/4}\left(\sqrt{E}- \sqrt{\frac{T}{g}}\right)
    = \frac{g^{1/4}}{\sqrt{E} + \sqrt{\frac{T}{g}}} \left(E-\frac{T}{g}\right)\\
    &= \frac{g^{5/4}\sqrt{T}}{\sqrt{g}T + g\sqrt{TE}}(2\sqrt{g} \text{EL}_g)
    = \frac{2g^{7/4}\sqrt{T}}{\Omega}\text{EL}_g,
\end{align*}
and as such
\begin{align*}
     2\sqrt{ET} - \frac{T}{\sqrt{g}} - \sqrt{g}E = - \frac{4g^{7/2}}{\Omega^2}T \,\text{EL}_g^2.
\end{align*}

\textbf{Computation of $\Delta \theta^0$:} We want to show 
\begin{align*}
    \Delta \theta^0 \mathrm{d}t
    = \mathcal{L}_{Q} \beta^0 \mathrm{d}t- \mathrm{d}\beta^1 + \delta f^0 \mathrm{d}t.
\end{align*}
In order to compute $\mathcal{L}_Q\beta^0$ we decompose the cohomological vector field as $Q = \gamma + \delta_{KT}$. Starting with $\mathcal{L}_\gamma$, we write $\beta^0 = A_1 \delta g^+ + A_\perp \cdot \delta q + A_2 \cdot \delta q$ where 
\begin{align}
\begin{split}
    \label{eq:Asforbeta^1}
    &A_1 = -\frac{4g^{7/2}}{\Omega^2}Tg^+,\\
    &A_\perp = \left(\frac{2g^2}{\Omega} + \eta^{3/2}\frac{2\sqrt{g}}{E}\right)g^+q^+_\perp,\\
    &A_2 = \left(\frac{4g^{7/2}}{\Omega^2}T - \eta^{3/2}\frac{g^{3/2}}{E}\right)\dot g^+ g^+ \frac{m\dot q}{2T} 
    -(\eta^{3/2}-1)\xi^+ \frac{m\dot q}{2T},
\end{split}
\end{align}
with the following tensor and ghost numbers:
\begin{align*}
    &t(A_1) = \frac{7}{2} \cdot 2 - 2\cdot3 + 2 -1 = 2,&
    &|A_1| = -1,\\
    &t(A_\perp) = 2\cdot 2 - 3 - 1 + 1 = 1,&
    &|A_\perp| = 1,\\
    &t(A_2) = \frac{7}{2} \cdot 2 - 2\cdot3 + 2 - 1 + 1 - 2 = 1,&
    &|A_2| = -2.
\end{align*}
We then have
\begin{align*}
    \mathcal{L}_\gamma (A_1 \delta g^+) 
    &= \gamma A_1 \delta g^+ + A_1 \delta \gamma g^+
    = \xi \dot A_1 \delta g^+ + 2 \dot \xi A_1 \delta g^+ + A_1 \delta(\xi \dot g^+ - \dot \xi g^+)\\
    &= \xi \dot A_1 \delta g^+ + \cancel{2} \dot \xi A_1 \delta g^+ + A_1 \delta\xi \dot g^+ - A_1 \xi \delta \dot g^+ + \cancel{A_1 \dot \xi \delta g^+}\\
    &= \partial_t \left(\xi A_1\delta g^+\right) + A_1 \dot g^+ \delta \xi,\\
    \mathcal{L}_\gamma (A_\perp \delta q) 
    &= \gamma A_\perp \cdot \delta q - A_\perp \cdot \delta \gamma q
    = \partial_t(\xi A_\perp) \cdot \delta q - A_\perp \cdot \delta (\xi \dot q)\\
    &= \partial_t(\xi A_\perp) \cdot \delta q - \cancel{A_\perp \cdot \dot q} \delta \xi + A_\perp \xi \delta \dot q\\
    &= \partial_t(\xi A_\perp \cdot \delta q)\\
    \mathcal{L}_\gamma (A_2 \cdot \delta q)
    &= \gamma A_2 \cdot \delta q - A_2 \cdot \delta \gamma q
    = \partial_t(\xi A_2) \cdot \delta q - A_2 \cdot \delta (\xi \dot q)\\
    &= \partial_t(\xi A_2) \cdot \delta q - A_2 \cdot \dot q \delta \xi + A_2 \xi \delta \dot q\\
    &= \partial_t(\xi A_2 \cdot \delta q) - A_2 \cdot \dot q \delta \xi,
\end{align*}
which shows
\begin{align*}
    \mathcal{L}_\gamma \beta^0 &= \partial_t(\xi \beta^0) 
    - \frac{4g^{7/2}}{\Omega^2}Tg^+ \dot g^+ 
    - \left(\frac{4g^{7/2}}{\Omega^2}T - \eta^{3/2}\frac{g^{3/2}}{E}\right)\dot g^+ g^+ \delta \xi 
    + (\eta^{3/2}-1)\xi^+ \delta \xi\\
    &=\partial_t(\xi \beta^0) 
    + \eta^{3/2}\frac{g^{3/2}}{E} \dot g^+ g^+ \delta \xi 
    + (\eta^{3/2}-1)\xi^+ \delta \xi.
\end{align*}
Before addressing the computation of $\mathcal{L}_{\delta_{KT}}\beta^0$ we note that
\begin{align*}
    \delta \text{EL}_g &= \left(-\frac{E}{4g^{3/2}} + \frac{3T}{4g^{5/2}}\right) \delta g
    - \frac{\delta T}{2g^{3/2}},\\
    \delta \Omega &= \delta\left(\sqrt{g}T +g\sqrt{ET}\right)
    = \left(\frac{T}{2\sqrt{g}} + \sqrt{ET}\right) \delta g
    + \left(\sqrt{g} + \frac{g\sqrt{E}}{2\sqrt{T}}\right) \delta T\\
    &= \left(\Omega - \frac{\sqrt{g}T}{2} \right)\frac{\delta g}{g}
    + \left(\Omega - \frac{g\sqrt{ET}}{2}\right)\frac{\delta T}{T},
    \\
    \Omega^2 &= gT^2 + 2 g^{3/2} T^{3/2} \sqrt{E} + g^2 ET,\\
    \frac{4g^{7/2}}{\Omega^2}T \text{EL}_g
    &= \frac{2g^2}{\Omega} T \left(\sqrt{\frac{E}{T}} - \frac{1}{\sqrt{g}}\right)
    = \frac{2}{\Omega} \left(g^2\sqrt{ET} - g^{3/2}T\right)\\
    &= \frac{2}{\Omega} \left(g\Omega - 2g^{3/2} T\right)
    = 2g- \frac{4g^{3/2}}{\Omega} T,\\
    \delta\left(\frac{g^{3/2}}{\Omega}\right) 
    &= \frac{3\sqrt{g}}{2\Omega}\delta g 
    - \frac{g^{3/2}}{\Omega^2}\left(\Omega - \frac{\sqrt{g}T}{2} \right)\frac{\delta g}{g}
    - \frac{g^{3/2}}{\Omega^2}\left(\Omega - \frac{g\sqrt{ET}}{2}\right)\frac{\delta T}{T}\\
    &= \frac{1}{2}\left(\frac{\sqrt{g}}{\Omega} + \frac{gT}{\Omega^2}\right) \delta g
    - \left(\frac{g^{3/2}}{T\Omega} - \frac{g^{5/2}\sqrt{E}}{2\sqrt{T}\Omega^2} \right) \delta T.
\end{align*}
The calculation for the first term in $\mathcal{L}_{\delta_{KT}}\beta^0$ goes as follows
\begin{align*}
    \mathcal{L}_{\delta_{KT}}&\left(-\frac{4g^{7/2}}{\Omega^2}Tg^+ \delta g^+\right)
    = -\frac{4g^{7/2}}{\Omega^2}T \text{EL}_g \delta g^+ 
    -\frac{4g^{7/2}}{\Omega^2}Tg^+\delta \text{EL}_g\\
    = &-\delta\left(\frac{4g^{7/2}}{\Omega^2}T \text{EL}_g g^+ \right)
    + \delta\left(\frac{4g^{7/2}}{\Omega^2}T \text{EL}_g\right) g^+ -\frac{4g^{7/2}}{\Omega^2}Tg^+\delta \text{EL}_g \\
    =&\,\, \delta\left(\frac{4g^{3/2}}{\Omega} T g^+ - 2gg^+\right)
    + 2\delta g g^+ 
    - \delta\left(\frac{4g^{3/2}}{\Omega}\right) T g^+ 
    -\frac{4g^{3/2}}{\Omega} \delta T g^+ \\
    &-\frac{4g^{7/2}}{\Omega^2}Tg^+\left(-\frac{E}{4g^{3/2}} + \frac{3T}{4g^{5/2}}\right) \delta g
    +\frac{4g^{7/2}}{\Omega^2}Tg^+\frac{\delta T}{2g^{3/2}}.
\end{align*}
Gathering everything in front of $\delta g$ we have
\begin{align*}
    &g^+\left[-2
    +\left(\frac{2\sqrt{g}}{\Omega} 
    + \frac{2gT}{\Omega^2}\right)T
    +\frac{g^2ET-3gT^2}{\Omega^2} \right] \delta g\\
    &= g^+\left[-2
    +\frac{2\sqrt{g}T}{\Omega} 
    +\frac{g^2ET-gT^2}{\Omega^2} \right] \delta g\\
    &= g^+\left[-2
    +\frac{2\Omega - 2g\sqrt{ET}}{\Omega} 
    +\frac{\Omega^2 - 2 g^{3/2}T^{3/2}\sqrt{E} - 2gT^2}{\Omega^2} \right] \delta g\\
    &= g^+\left[1 
    - \frac{2g\sqrt{ET}\Omega +2 g^{3/2}T^{3/2}\sqrt{E}+2gT^2}{\Omega^2} \right]\delta g\\
    &= g^+\left[1 
    - \frac{2g^2ET +4 g^{3/2}T^{3/2}\sqrt{E}+2gT^2}{\Omega^2} \right]\delta g\\
    &= g^+\left[1 
    - \frac{2\Omega^2}{\Omega^2} \right]\delta g = -g^+ \delta g.
\end{align*}
The terms in front of $\delta T$ simplify to
\begin{align*}
    g^+&\left[- \left(\cancel{\frac{4g^{3/2}}{\Omega} }
    - \frac{2g^{5/2}\sqrt{ET}}{\Omega^2} \right) 
    + \cancel{\frac{4g^{3/2}}{\Omega}}
    + \frac{2g^2T}{\Omega^2}\right]\delta T\\
    &=g^+\frac{2g^{3/2}}{\Omega}\left[\frac{g\sqrt{ET}+\sqrt{g}T}{\Omega}\right]\delta T
    = g^+\frac{2g^{3/2}}{\Omega}\delta T,
\end{align*}
and as such:
\begin{align*}
    \mathcal{L}_{\delta_{KT}}\left(-\frac{4g^{7/2}}{\Omega^2}Tg^+ \delta g^+\right)
    = \delta\left(\frac{4g^{3/2}}{\Omega} T g^+ - 2gg^+\right)
    + g^+ \left(\frac{2g^{3/2}}{\Omega}\delta T - \delta g\right).
\end{align*}
The rest of  $\mathcal{L}_{\delta_{KT}}\beta^0$ yields
\begin{align*}
    \mathcal{L}_{\delta_{KT}}\left(\frac{2g^2}{\Omega} g^+q^+_\perp \cdot \delta q, \right)
    &=\frac{2g^2}{\Omega}
    \left(\text{EL}_gq^+_\perp + g^+\frac{m \ddot q_\perp}{\sqrt{g}}\right) \cdot \delta q\\
    &= \sqrt{g}\left(\sqrt{\frac{E}{T}} - \frac{1}{\sqrt{g}}\right) q^+_\perp \cdot \delta q
    +g^+ \frac{2g^{3/2}}{\Omega} m \ddot q_\perp \cdot \delta q\\
    &= (\eta^{1/2} - 1)q^+_\perp \cdot \delta q 
    +g^+ \frac{2g^{3/2}}{\Omega} m \ddot q_\perp \cdot \delta q,\\
    \mathcal{L}_{\delta_{KT}}\left(\eta^{3/2}\frac{2\sqrt{g}}{E} g^+q^+_\perp \cdot \delta q \right)
    &= \eta^{3/2}\frac{2\sqrt{g}}{E}\left[\left(\frac{E}{2\sqrt{g}}-\frac{T}{2g^{3/2}}\right)q^+_\perp + g^+\frac{m \ddot q_\perp}{\sqrt{g}}\right] \cdot \delta q\\
    &= (\eta^{3/2}-\eta^{1/2}) q^+_\perp \cdot \delta q
    + \eta^{3/2}\frac{2}{E} g^+ m\ddot q_\perp \cdot \delta q,\\
    \mathcal{L}_{\delta_{KT}}\left(\frac{4g^{7/2}}{\Omega^2}T \dot g^+ g^+ \frac{m\dot q}{2T} \cdot \delta q \right)
    &= \frac{4g^{7/2}}{\Omega^2}T \left[\dot{\text{EL}_g} g^+ - \dot g^+ \text{EL}_g\right] \frac{m\dot q}{2T} \cdot \delta q, \\
    \mathcal{L}_{\delta_{KT}}\left(- \eta^{3/2}\frac{g^{3/2}}{E}\dot g^+ g^+ \frac{m\dot q}{2T} \cdot \delta q \right)
    &= - \eta^{3/2}\frac{g^{3/2}}{E}\left[\dot{\text{EL}_g} g^+ - \dot g^+ \text{EL}_g\right] \frac{m\dot q}{2T} \cdot \delta q, \\
    \mathcal{L}_{\delta_{KT}}\left(-(\eta^{3/2}-1)\xi^+ \frac{m\dot q}{2T} \cdot \delta q\right)
    &= (\eta^{3/2}-1)\left[q^+\cdot \dot q - g^+ \dot g - 2\dot g^+ g\right] \frac{m\dot q}{2T} \cdot \delta q\\
    &=(\eta^{3/2}-1)q^+_\parallel \cdot \delta q
    -(\eta^{3/2}-1)\left[g^+ \dot g + 2\dot g^+ g\right] \frac{m\dot q}{2T} \cdot \delta q.
\end{align*}
Gathering everything gives
\begin{align*}
    \mathcal{L}_Q\beta^0 =&\,\, \partial_t(\xi \beta^0)
    + \eta^{3/2}\frac{g^{3/2}}{E}\dot g^+ g^+ \delta \xi 
    + (\eta^{3/2}-1)\xi^+ \delta \xi\\
    &+\delta\left(\frac{4g^{3/2}}{\Omega} T g^+ - 2gg^+\right)
    + g^+ \left(\frac{2g^{3/2}}{\Omega}\delta T - \delta g\right)\\
    &+(\eta^{1/2} - 1)q^+_\perp \cdot \delta q 
    +g^+ \frac{2g^{3/2}}{\Omega} m \ddot q_\perp \cdot \delta q\\
    &+(\eta^{3/2}-\eta^{1/2}) q^+_\perp \cdot \delta q
    + \eta^{3/2}\frac{2}{E} g^+ m\ddot q_\perp \cdot \delta q\\
    &+\left(\frac{4g^{7/2}}{\Omega^2}T - \eta^{3/2}\frac{g^{3/2}}{E}\right)\left[\dot{\text{EL}_g} g^+ - \dot g^+ \text{EL}_g\right] \frac{m\dot q}{2T} \cdot \delta q \\
    &+(\eta^{3/2}-1)q^+_\parallel \cdot \delta q
    -(\eta^{3/2}-1)\left[g^+ \dot g + 2\dot g^+ g\right] \frac{m\dot q}{2T} \cdot \delta q\\
    = &\,\,\Delta \theta^0 + \partial_t(\xi \beta^0)
    -\delta f^0
    +  \frac{2g^{3/2}}{\Omega}g^+ \delta T 
    +g^+ \frac{2g^{3/2}}{\Omega} m \ddot q_\perp \cdot \delta q\\
    &+\frac{4g^{7/2}}{\Omega^2}T \left[\dot{\text{EL}_g} g^+ - \dot g^+ \text{EL}_g\right] \frac{m\dot q}{2T} \cdot \delta q
    +\left[g^+ \dot g + 2\dot g^+ g\right] \frac{m\dot q}{2T} \cdot \delta q,
\end{align*}
which in turn implies
\begin{align}
    \label{eq:IncompleteDeltatheta^1}
    \Delta \theta^0
    = &\,\,\mathcal{L}_Q \beta^0 - \partial_t\left(\xi \beta^0+\frac{2g^{3/2}}{\Omega}g^+m\dot q \cdot \delta q\right)
    + \delta f^0
    +\partial_t\left(\frac{2g^{3/2}}{\Omega}g^+m\dot q \cdot \delta q\right)\nonumber \\
    &-  \frac{2g^{3/2}}{\Omega}g^+ \delta T 
    -g^+ \frac{2g^{3/2}}{\Omega} m \ddot q_\perp \cdot \delta q\nonumber \\
    &-\frac{4g^{7/2}}{\Omega^2}T \left[\dot{\text{EL}_g} g^+ - \dot g^+ \text{EL}_g\right] \frac{m\dot q}{2T} \cdot \delta q 
    -\left[g^+ \dot g + 2\dot g^+ g\right] \frac{m\dot q}{2T} \cdot \delta q\nonumber \\
    =&\,\,\mathcal{L}_Q \beta^0 - \partial_t\beta^1
    + \delta f^0\nonumber \\
    &+g^+\left[
    \partial_t\left(\frac{2g^{3/2}}{\Omega}\right) m\dot q \cdot \delta q
    +\frac{2g^{3/2}}{\Omega} m\ddot q \cdot \delta q 
    + \frac{2g^{3/2}}{\Omega} m\dot q \cdot \delta \dot q 
    - \frac{2g^{3/2}}{\Omega} \delta T \right.\nonumber \\
    &- \frac{2g^{3/2}}{\Omega} m \ddot q_\perp \cdot \delta q 
    - \left.\frac{4g^{7/2}}{\Omega^2}T \dot{\text{EL}_g}\frac{m\dot q}{2T} \cdot \delta q - \dot g \frac{m\dot q}{2T} \cdot \delta q 
    \right]\nonumber\\
    &+\dot g^+\left[
    \left(\frac{4g^{3/2}}{\Omega}T - 2g\right)
    +\frac{4g^{7/2}}{\Omega^2}T \text{EL}_g
    \right]\frac{m\dot q}{2T} \cdot \delta q.
\end{align}
The last term vanishes due to Equation (\ref{eq:usefulidentity2}). In order to show that the term proportional to $g^+$ vanishes as well we note:
\begin{align*}
    \delta T &= m\dot q \cdot \delta \dot q,\\
    \ddot q_\perp  &= \ddot q - \dot q\frac{\dot q \cdot \ddot q}{\|\dot q\|^2}
    = \ddot q - \dot q \frac{\dot T}{2T},\\
    \dot{\text{EL}_g} &= \left(-\frac{E}{4g^{3/2}} + \frac{3T}{4g^{5/2}}\right) \dot g
    - \frac{\dot T}{2g^{3/2}},\\
    \partial_t\left(\frac{2g^{3/2}}{\Omega}\right) 
    &= \left(\frac{\sqrt{g}}{\Omega} + \frac{gT}{\Omega^2}\right) \dot g
    - \left(\frac{2g^{3/2}}{T\Omega} - \frac{g^{5/2}\sqrt{E}}{\sqrt{T}\Omega^2} \right) \dot T.
\end{align*}
Then
\begin{align*}
    &\,\,g^+\left[
    \partial_t\left(\frac{2g^{3/2}}{\Omega}\right) m\dot q \cdot \delta q
    +\frac{2g^{3/2}}{\Omega} m\ddot q \cdot \delta q 
    + \cancel{\frac{2g^{3/2}}{\Omega} m\dot q \cdot \delta \dot q }
    - \cancel{\frac{2g^{3/2}}{\Omega} \delta T }\right.\\
    &- \frac{2g^{3/2}}{\Omega} m \ddot q_\perp \cdot \delta q 
    - \left. \frac{4g^{7/2}}{\Omega^2}T \dot{\text{EL}_g}\frac{m\dot q}{2T} \cdot \delta q - \dot g \frac{m\dot q}{2T} \cdot \delta q 
    \right]\\
    =&\,\,g^+\left[
    \left(\frac{\sqrt{g}}{\Omega} + \frac{gT}{\Omega^2}\right) \dot g m \dot q
    - \left(\frac{2g^{3/2}}{T\Omega} - \frac{g^{5/2}\sqrt{E}}{\sqrt{T}\Omega^2} \right) \dot T m\dot q \right.\\
    &+\cancel{\frac{2g^{3/2}}{\Omega} m\ddot q \cdot \delta q} - \frac{2g^{3/2}}{\Omega} \left(\cancel{m\ddot q} - m\dot q \frac{\dot T}{2T}\right) \\
    &- \left.\frac{4g^{7/2}}{\Omega^2}T \left(\left(-\frac{E}{4g^{3/2}} + \frac{3T}{4g^{5/2}}\right) \dot g
    - \frac{\dot T}{2g^{3/2}}\right)\frac{m\dot q}{2T} - \dot g \frac{m\dot q}{2T}
    \right] \cdot \delta q\\
    =&\,\,g^+ \frac{\dot g }{\Omega^2}
    \left[ 
    2\sqrt{g}T\Omega + 2gT^2
    + g^{2}TE - 3gT^2 - \Omega^2
    \right] \frac{m\dot q}{2T}\cdot \delta q\\
    &+g^+   \frac{\dot T}{\Omega^2}
    \left[ 
    - 4g^{3/2}\Omega +2g^{5/2}\sqrt{ET}
    +2g^{3/2}\Omega
    + 2g^2T
    \right]\frac{m\dot q}{2T}\cdot \delta q\\
    =&\,\,g^+ \frac{\dot g }{\Omega^2}
    \left[ 
    \cancel{2}gT^2 + 2g^{3/2}T^{3/2} \sqrt{E}
    + g^{2}TE - \cancel{gT^2} - \Omega^2
    \right] \frac{m\dot q}{2T}\cdot \delta q\\
    &+g^+   \frac{\dot T}{\Omega^2}
    \left[ 
    - 2g^{3/2}\Omega +
    2g^{3/2}(\sqrt{g}T + g\sqrt{ET})
    \right]\frac{m\dot q}{2T}\cdot \delta q\\
    =&\,\,0.
\end{align*}
Taking this into account and introducing $\d t$ in Equation (\ref{eq:IncompleteDeltatheta^1}) yields
\begin{align}
    \label{eq:Deltatheta^1endresult}
    \Delta \theta^0 \mathrm{d}t = \mathcal{L}_Q \beta^0 \mathrm{d}t - \partial_t\beta^1 \mathrm{d}t + \delta f^0 \mathrm{d}t 
    = \mathcal{L}_Q \beta^0 \mathrm{d}t - \mathrm{d}\beta^1 + \delta f^0 \mathrm{d}t,
\end{align}
since $|\beta^1|=0$.

\textbf{Computation of $\Delta \theta^1$:} We want to compute
\begin{align*}
    \Delta \theta^1 = \mathcal{L}_Q \beta^1 + \delta f^1.
\end{align*}
First note that Equation (\ref{eq:Deltatheta^1endresult}) implies
$\mathcal{L}_Q \beta^0 = \Delta \theta^0 + \partial_t \beta^1- \delta f^0$, which we use to compute $\mathcal{L}_Q \beta^1$. Since 
\begin{align*}
   \beta^1 = \xi \beta^0 + \frac{2g^{3/2}}{\Omega}g^+m\dot q \cdot \delta q,
\end{align*}
we have
\begin{align*}
    \mathcal{L}_Q \beta^1 = \xi \dot \xi \beta^0 - \xi \left(\Delta \theta^0 + \partial_t \beta^1 - \delta f^0\right)+ \mathcal{L}_Q\left(\frac{2g^{3/2}}{\Omega}g^+m\dot q \cdot \delta q\right).
\end{align*}
Keeping in mind that \[t\left(\frac{2g^{3/2}}{\Omega}g^+m\dot q\right) = \frac{3}{2}\cdot 2 - 3 -1 +1 = 0,\] the last term reads
\begin{align*}
    \mathcal{L}_\gamma \left(\frac{2g^{3/2}}{\Omega}g^+m\dot q \cdot \delta q\right)
    =&\,\, \xi \partial_t\left(\frac{2g^{3/2}}{\Omega}g^+m\dot q \right)\cdot \delta q
    + \frac{2g^{3/2}}{\Omega}g^+m\dot q \cdot \delta (\xi \dot q)\\
    =&\,\,\xi \partial_t\left(\frac{2g^{3/2}}{\Omega}g^+m\dot q \right)\cdot \delta q
    + \frac{4g^{3/2}}{\Omega}g^+ T \delta \xi\\
    &+ \xi \frac{2g^{3/2}}{\Omega}g^+ m\dot q \cdot \delta \dot q,\\
    \mathcal{L}_{\delta_{KT}} \left(\frac{2g^{3/2}}{\Omega}g^+m\dot q \cdot \delta q\right)
    =&\,\,\frac{2g^{3/2}}{\Omega}\text{EL}_gm\dot q \cdot \delta q
    =\left[\sqrt{\frac{E}{T}} - \frac{1}{\sqrt{g}}\right]m\dot q \cdot \delta q,
\end{align*}
which results in
\begin{align*}
    \mathcal{L}_Q \beta^1 
    =&\,\, \xi \dot \xi \beta^0
    - \xi \Delta \theta^0
    - \xi \partial_t \left(\xi \beta^0 +\cancel{\frac{2g^{3/2}}{\Omega}g^+m\dot q \cdot \delta q}\right)
    + \xi \delta f^0 \\
    &+\cancel{\xi \partial_t\left(\frac{2g^{3/2}}{\Omega}g^+m\dot q\right)\cdot \delta q}
    + \frac{4g^{3/2}}{\Omega}g^+ T \delta \xi\\
    &+ \cancel{\xi \frac{2g^{3/2}}{\Omega}g^+ m\dot q \cdot \delta \dot q} + \left[\sqrt{\frac{E}{T}} - \frac{1}{\sqrt{g}}\right]m\dot q \cdot \delta q\\
    =&\,\,\left[\sqrt{\frac{E}{T}} - \frac{1}{\sqrt{g}}\right]m\dot q \cdot \delta q
    - \xi \Delta \theta^0 
    + \cancel{\xi \dot \xi \beta^0} - \cancel{\xi \dot \xi \beta^0}\\
    &- \delta(\xi f^0) + \delta \xi 2 g^+\left(g - \cancel{\frac{2g^{3/2}}{\Omega} T}\right) + \cancel{\frac{4g^{3/2}}{\Omega} g^+ T\delta \xi} \\
    = &\,\, \Delta \theta^1 - \delta f^1.
\end{align*}
Showing $\Delta \theta^1 = \mathcal{L}_Q \beta^1 + \delta f^1$ as desired.

\textbf{Computation of $\Delta L^0$:} We want to show  \[\Delta L^0 = \mathcal{L}_Q \iota_Q \beta^0 - \mathrm{d}\iota_Q \beta^1 + \mathrm{d}f^1.\] 
Note that $\beta^0$ is of the form $\beta^0 = a_i \delta b_i$, where we sum over $i$. We have
\begin{align*}
    \mathcal{L}_Q \iota_Q(a_i \delta b_i) 
    &= \mathcal{L}_Q (a_i Q b_i)
    = Q a_i Qb_i\\
    &= \gamma a_i \gamma b_i 
    + \delta_{KT} a_i \gamma b_i
    + \gamma a_i \delta_{KT} b_i
    + \delta_{KT} a_i \delta_{KT} b_i.
\end{align*}
Starting with the term $A_1 \delta g^+$ (see Equation (\ref{eq:Asforbeta^1})) we have
\begin{align*}
    \gamma \left(-\frac{4g^{7/2}}{\Omega^2}Tg^+\right) \gamma g^+
    &= \left[\xi\partial_t\left(-\frac{4g^{7/2}}{\Omega^2}Tg^+\right) - 2 \dot \xi \frac{4g^{7/2}}{\Omega^2}Tg^+\right] \cdot \left[\xi \dot g^+ - \dot \xi g^+\right]\\
    &= \xi\frac{4g^{7/2}}{\Omega^2}T \dot g^+ \dot \xi g^+
    -2 \dot \xi  \frac{4g^{7/2}}{\Omega^2}Tg^+ \xi \dot g^+\\
    &= \frac{4g^{7/2}}{\Omega^2}T \dot g^+ g^+ \xi \dot \xi,\\
    \delta_{KT} \left(-\frac{4g^{7/2}}{\Omega^2}Tg^+\right) \gamma g^+ &= -\frac{4g^{7/2}}{\Omega^2}T \text{EL}_g (\xi \dot g^+ - \dot \xi g^+),\\
    \gamma \left(-\frac{4g^{7/2}}{\Omega^2}Tg^+\right) \delta_{KT} g^+ &= \left[\xi\partial_t\left(-\frac{4g^{7/2}}{\Omega^2}Tg^+\right) - 2 \dot \xi \frac{4g^{7/2}}{\Omega^2}Tg^+\right] \text{EL}_g,\\
    \delta_{KT} \left(-\frac{4g^{7/2}}{\Omega^2}Tg^+\right) \delta_{KT} g^+ &= -\frac{4g^{7/2}}{\Omega^2}T EL^0_g 
    = 2\sqrt{ET} - \frac{T}{\sqrt{g}} - \sqrt{g},
\end{align*}
where we used Equation (\ref{eq:usefulidentity2}). As such
\begin{align*}
    &\mathcal{L}_Q \iota_Q \left(-\frac{4g^{7/2}}{\Omega^2}Tg^+ \delta g^+\right)\\ 
    &= \frac{4g^{7/2}}{\Omega^2}T \dot g^+ g^+ \xi \dot \xi 
    - \partial_t \left(\xi \frac{4g^{7/2}}{\Omega^2}T  g^+\right) \text{EL}_g
    - \frac{4g^{7/2}}{\Omega^2}T \text{EL}_g \xi \dot g^+
    +2\sqrt{ET} - \frac{T}{\sqrt{g}} - \sqrt{g}\\
    & =\frac{4g^{7/2}}{\Omega^2}T \dot g^+ g^+ \xi \dot \xi 
    - \partial_t \left(\xi \frac{4g^{7/2}}{\Omega^2}T  g^+ \text{EL}_g\right)
    +\xi \frac{4g^{7/2}}{\Omega^2}T  g^+ \dot{\text{EL}_g}
    - \frac{4g^{7/2}}{\Omega^2}T \text{EL}_g \xi \dot g^+
    \\
    &\quad+2\sqrt{ET} - \frac{T}{\sqrt{g}} - \sqrt{g}.
\end{align*}
For the term $A_\perp \cdot \delta q$ we have
\begin{align*}
    \mathcal{L}_Q \iota_Q (A_\perp \delta q)
    &= QA_\perp Qq = \gamma A_\perp \cdot \xi \dot q + \delta_{KT}(A_\perp) \cdot \xi \dot q
    = \dot \xi A_\perp \cdot \xi \dot q + \delta_{KT}(A_\perp\cdot \xi \dot q)
    = 0.
\end{align*}
For the computation w.r.t. the term \[\mathcal{L}_Q\iota_Q \left[\left(\frac{4g^{7/2}}{\Omega^2}T - \eta^{3/2}\frac{g^{3/2}}{E}\right)\frac{m\dot q}{2T} \dot g^+g^+\cdot \delta q\right]\] 
first define \[B \coloneqq \left(\frac{4g^{7/2}}{\Omega^2}T - \eta^{3/2}\frac{g^{3/2}}{E}\right)\frac{m\dot q}{2T},\] 
with $t(B) = \frac{7}{2}\cdot 2- 2\cdot 3 + 2 + 1 - 2 = 2$ and note 
\begin{align*}
    t(B\dot g^+ g^+) =2 + 1 + t(g^+) + t(g^+) = 1. 
\end{align*}
As such
\begin{align*}
    &\mathcal{L}_Q\iota_Q \left[B\dot g^+ g^+ \cdot \delta q \right]
    = Q\left[B\dot g^+ g^+\right] \xi \dot q\\
    &= \partial_t\left[\xi B\dot g^+ g^+\right] \cdot \xi \dot q + B \left[\dot{\text{EL}_g}g^+ - \text{EL}_g \dot g^+ \right]\cdot \xi \dot q\\
    &= \dot \xi B\dot g^+ g^+ \cdot \xi \dot q
    + B \left[\dot{\text{EL}_g}g^+ - \text{EL}_g \dot g^+ \right]\cdot \xi \dot q\\
    &= \left(\frac{4g^{7/2}}{\Omega^2}T - \eta^{3/2}\frac{g^{3/2}}{E}\right) \dot g^+ g^+ \dot \xi \xi 
    + \left(\frac{4g^{7/2}}{\Omega^2}T - \eta^{3/2}\frac{g^{3/2}}{E}\right)\left[\dot{\text{EL}_g}g^+ - \text{EL}_g \dot g^+ \right] \xi.
\end{align*}
For the last term in $\beta^0$ 
\[-(\eta^{3/2}-1)\xi^+ \frac{m\dot q}{2T} \cdot \delta q,\]
we have
\begin{align*}
    \mathcal{L}_Q \iota_Q \left[-(\eta^{3/2}-1)\xi^+ \frac{m\dot q}{2T} \cdot \delta q\right] 
    = &-\gamma\left[(\eta^{3/2}-1)\xi^+ \frac{m\dot q}{2T}\right] \cdot \xi \dot q\\
    &- (\eta^{3/2}-1)\left[- q^+\cdot \dot q + g^+\dot g + 2\dot g^+ g\right] \frac{m\dot q}{2T} \cdot \xi \dot q\\
    =&(\eta^{3/2}-1)\xi^+ \xi \dot \xi - (\eta^{3/2}-1)\left[- q^+\cdot \dot q + g^+\dot g + 2\dot g^+ g\right] \xi.
\end{align*}
All in all the expression for $\mathcal{L}_Q \iota_Q \beta^0$ is
\begin{align}
    \label{eq:L_Qiota_Qbeta^1=DeltaL^0+totder}
    \mathcal{L}_Q \iota_Q \beta^0
    =&\,\,\cancel{\frac{4g^{7/2}}{\Omega^2}T \dot g^+ g^+ \xi \dot \xi}
    - \partial_t \left(\xi \frac{4g^{7/2}}{\Omega^2}T  g^+ \text{EL}_g\right)
    +\cancel{\xi \frac{4g^{7/2}}{\Omega^2}T  g^+ \dot{\text{EL}_g}}\nonumber\\
    &- \cancel{\frac{4g^{7/2}}{\Omega^2}T \text{EL}_g \xi \dot g^+}
    +2\sqrt{ET} - \frac{T}{\sqrt{g}} - \sqrt{g}\nonumber\\
    &+\left(\cancel{\frac{4g^{7/2}}{\Omega^2}T} - \eta^{3/2}\frac{g^{3/2}}{E}\right) \dot g^+ g^+ \dot \xi \xi 
    + \left(\cancel{\frac{4g^{7/2}}{\Omega^2}T} - \eta^{3/2}\frac{g^{3/2}}{E}\right) \dot{\text{EL}_g}g^+ \xi \nonumber\\
    &- \left(\cancel{\frac{4g^{7/2}}{\Omega^2}T} - \eta^{3/2}\frac{g^{3/2}}{E}\right) \text{EL}_g \dot g^+  \xi \nonumber\\
    &+(\eta^{3/2}-1)\xi^+ \xi \dot \xi - (\eta^{3/2}-1)\left[- q^+\cdot \dot q + g^+\dot g + 2\dot g^+ g\right] \xi \nonumber\\
    = &\,\,2\sqrt{ET} -\frac{T}{\sqrt{g}} - \sqrt{g}E\nonumber\\
    &+(\eta^{3/2} - 1) q^+ \cdot \xi \dot q 
    + (\eta^{3/2} - 1) \xi^+ \xi\dot \xi 
    - g^+ (\xi \dot g + 2\dot \xi g)\nonumber\\
    &- \eta^{3/2} \left[g^+\dot g + 2\dot g^+ g \right] \xi
    -\eta^{3/2}\frac{g^{3/2}}{E} \left[\dot{\text{EL}_g}g^+ - \text{EL}_g \dot g^+ \right]\xi\nonumber\\
    &- \partial_t \left(\xi \frac{4g^{7/2}}{\Omega^2}T  g^+ \text{EL}_g\right) + \partial_t\left(2g^+g\xi\right)\nonumber\\
    =&\,\, \Delta L^0 + \partial_t\left(2g^+g\xi + \frac{4g^{7/2}}{\Omega^2}T \text{EL}_g g^+ \xi \right),
\end{align}
where we used that 
\begin{align*}
    g^+ \dot g \xi + 2\dot g^+ g \xi &=g^+ \dot g \xi+ \partial_t\left(2g^+g\xi\right) - 2 g^+ \dot g \xi - 2 g^+ \dot g \xi \\
    &= \partial_t\left(2g^+g\xi\right) - 2 g^+ \dot g \xi - g^+ \dot g \xi.
\end{align*}
Recall that we are want to show $\Delta L^0 \,\d t = \mathcal{L}_Q \iota_Q \beta^0 \,\d t- \mathrm{d}\iota_Q \beta^1 + \mathrm{d} f^1$. The last two terms read
\begin{align*}
    \mathrm{d}\iota_Q \beta^1 - \mathrm{d} f^1 
    = \partial_t\left[\xi \iota_Q \beta^0 + \frac{2g^{3/2}}{\Omega}g^+m\dot q \cdot \iota_Q \delta q - 2 \xi g^+\left(g - \frac{2g^{3/2}}{\Omega} T\right) \right] \mathrm{d}t.
\end{align*}
which can be simplified by noting that
\begin{align}
    \label{eq:xiiota_Qbeta^1}
    \xi \iota_Q \beta^0 = &-\xi \frac{4g^{7/2}}{\Omega^2}Tg^+ Qg^+
    +\xi\left(\frac{2g^2}{\Omega} + \eta^{3/2}\frac{2\sqrt{g}}{E}\right)g^+q^+_\perp \cdot Q q,  \nonumber\\
    &+\xi \left(\frac{4g^{7/2}}{\Omega^2}T - \eta^{3/2}\frac{g^{3/2}}{E}\right)\dot g^+ g^+ \frac{m\dot q}{2T} \cdot Q q 
    -\xi (\eta^{3/2}-1)\xi^+ \frac{m\dot q}{2T} \cdot Q q\nonumber \\
    =&\,\,\frac{4g^{7/2}}{\Omega^2}T \text{EL}_g g^+ \xi,
\end{align}
and
\begin{align}
    \label{eq:estousemideias}
    \frac{2g^{3/2}}{\Omega}g^+m\dot q \cdot \iota_Q \delta q = \frac{4g^{3/2}}{\Omega}T g^+ \xi.
\end{align}
The resulting expression for $\mathrm{d}\iota_Q \beta^1 - \mathrm{d} f^1$ is then
\begin{align*}
    \mathrm{d}\iota_Q \beta^1 - \mathrm{d} f^1 
    &= \partial_t\left[\frac{4g^{7/2}}{\Omega^2}T \text{EL}_g g^+ \xi + \frac{4g^{3/2}}{\Omega}T g^+ \xi - 2 \xi g^+\left(g - \frac{2g^{3/2}}{\Omega} T\right) \right] \mathrm{d}t\\
    &= \partial_t\left(2g^+g\xi + \frac{4g^{7/2}}{\Omega^2}T \text{EL}_g g^+ \xi \right) \mathrm{d}t,
\end{align*}
which is exactly the total derivative in Equation (\ref{eq:L_Qiota_Qbeta^1=DeltaL^0+totder}). Hence
\begin{align*}
    &\mathcal{L}_Q \iota_Q \beta^0 \mathrm{d}t 
    = \Delta L^0 \mathrm{d}t +\mathrm{d}\iota_Q \beta^1 - \mathrm{d} f^1 \\
    \Rightarrow &\Delta L^0 \mathrm{d}t 
    = \mathcal{L}_Q \iota_Q \beta^0 \mathrm{d}t
    -\mathrm{d}\iota_Q \beta^1 + \mathrm{d} f^1.
\end{align*}
finishing the proof.
\end{proof}

\begin{lemma}
\label{lem:lambdaJacobi1DGR}
The composition map $\lambda^*$  is the identity
\begin{align*}
    &\lambda^*\qj = \qj,&
    &\lambda^*\xij = \xij,\\
    &\lambda^*\qj^+ = \qj^+,&
    &\lambda^*\xij^+ = \xij^+.
\end{align*}
and as such the identity in cohomology.
\end{lemma}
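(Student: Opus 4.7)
The plan is to compute $\lambda^*=\phi^*\circ\psi^*$ on each of the four generators $\qj,\xij,\qj^+,\xij^+$ of $\Fj$ directly from the explicit formulas for $\phi^*$ (Lemma \ref{lem:philaxJacobi1DGR}) and $\psi^*$ (Lemma \ref{lem:psilaxJacobi1DGR}). The whole argument rests on two structural observations: first, $\phi^*g^+=0$, so every term in $\psi^*\qj^+$ and $\psi^*\xij^+$ proportional to $g^+$ or $\dot g^+$ is annihilated after applying $\phi^*$; second, $\phi^*g=\Tj/E$, hence
\[
\phi^*\eta \;=\; \phi^*\!\left(\frac{gE}{T}\right) \;=\; \frac{(\Tj/E)\,E}{\Tj} \;=\; 1,
\qquad \phi^*\eta^{3/2}=1.
\]

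For the degree-$0$ generators the computation is immediate: $\lambda^*\qj=\phi^*(\psi^*\qj)=\phi^*q=\qj$ and $\lambda^*\xij=\phi^*(\psi^*\xij)=\phi^*\xi=\xij$. For $\qj^+$, I would split into parallel and perpendicular components. Applying $\phi^*$ to $\psi^*\qj^+_\parallel$, the terms $g^+\dot g$, $\dot g^+ g$, $\dot{\mathrm{EL}}_g g^+$ and $\mathrm{EL}_g\dot g^+$ all vanish because each contains a factor of $g^+$ or $\dot g^+$, leaving only $\phi^*(\eta^{3/2} q^+_\parallel)$. Using $\phi^*\eta^{3/2}=1$ and the fact that the parallel projector is built from $\dot q$, which under $\phi^*$ becomes $\dqj$ with $\phi^*(m\dot q/2T)=m\dqj/2\Tj$, this reduces to $\qj^+_\parallel$. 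An identical argument, using that $\ddot q_\perp$ gets multiplied by $g^+$ and therefore vanishes under $\phi^*$, yields $\lambda^*\qj^+_\perp=\qj^+_\perp$, and summing gives $\lambda^*\qj^+=\qj^+$. The same mechanism handles $\xij^+$: in $\psi^*\xij^+=\eta^{3/2}(\xi^++\tfrac{g^{3/2}}{E}\dot g^+g^+)$ the correction $\dot g^+g^+$ is killed by $\phi^*$ and the prefactor $\eta^{3/2}$ collapses to $1$, giving $\xij^+$.

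There is essentially no obstacle: the lemma is a direct verification, and the statement that $\lambda^*$ is the identity in cohomology follows at once since an identity chain map is trivially homotopic to itself (take $h_\lambda=0$). The only point worth double-checking is the consistency of the tilde/no-tilde notation under $\phi^*$, which is transparent from the definitions. Thus the proof will be a short paragraph of direct substitution, concluded by the observation that an identity map induces the identity on the BV-BFV cohomology $H^\bullet(\bvbfv^\bullet_J)$.
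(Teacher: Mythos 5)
Your proposal is correct and matches the paper's own proof essentially verbatim: both rest on the two observations $\phi^*g^+=0$ (which, together with $\phi^*\dot g^+=\partial_t\phi^*g^+=0$, kills every correction term in $\psi^*\qj^+$ and $\psi^*\xij^+$) and $\phi^*\eta^{3/2}=1$, followed by direct substitution on the generators. The concluding remark that the identity chain map is trivially the identity in cohomology (with $h_\lambda=0$) is likewise exactly how the paper disposes of the cohomological claim.
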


\begin{proof}
On the matter and ghost fields $\{\qj, \xij\}$ this is trivial, since at this level both $\psi^*$ and $\phi^*$ simply interchange $\qj$ with $q$ and $\xij$ with $\xi$
\begin{align*}
    &\lambda^*\qj = (\phi^* \circ \psi^*) \qj
    = \phi^* q = \qj,\\
    &\lambda^*\xij = (\phi^* \circ \psi^*) \xij
    = \phi^* \xi = \xij.
\end{align*}

In order to compute the action of $\lambda^*$ on the antifield and antighost, first recall that $\phi^* g^+ = 0$ and note that $\phi^*g = \Tj / E$ implies $\phi^*(\eta^{3/2}) = (\phi^*(g)E/\Tj)^{3/2} = 1$. We then have
\begin{align*}
    \lambda^* \qj^+_\parallel &= (\phi^* \circ \psi^*) \qj^+_\parallel\\
    &= \phi^*(\eta^{3/2}) \phi^*\left(q^+_\parallel 
    - \left[g^+\dot g + 2\dot g^+ g \right] \frac{m\dot q}{2T}
    -\frac{g^{3/2}}{E} \left[\dot{\text{EL}_g}g^+ - \text{EL}_g \dot g^+
    \right]\frac{m\dot q}{2T}\right)\\
    &= \phi^* q^+_\parallel = \qj^+_\parallel,\\
    \lambda^* \qj^+_\perp &= (\phi^* \circ \psi^*) \qj^+_\perp
    = \phi^*(\eta^{3/2}) \phi^*\left(q^+_\perp 
    +\frac{2m}{E} g^+ \ddot q_\perp \right) 
    = \phi^*q^+_\perp = \qj^+_\perp,\\
    \lambda^* \xij^+ &= (\phi^* \circ \psi^*) \xij^+
    =  \phi^*(\eta^{3/2})\phi^*\left(\xi^+ + \frac{g^{3/2}}{E} \dot g^+ g^+\right) = \phi^* \xi^+ = \xij^+,
\end{align*}
thus showing $\lambda = \mathrm{id}_J$.
\end{proof}


We now prove that $\Rgr$ commutes with the Chevalley-Eilenberg differential $\gamma_{GR}$ if $\Rgr \xi = 0$. Effectively, this means that we can ignore the Chevalley-Eilenberg part of $\Qgr$ in $\Dgr = [\Qgr,\Rgr]$ and only have to regard the Koszul-Tate differential when explicitly computing the action of $\Dgr $. Recall that the Chevalley-Eilenberg differential acts as $\gamma_{GR} = \mathcal{L}_{\xi\partial_t}$ on $\{q,g,q^+,g^+,\xi^+\}$ and as $\gamma_{GR} = \frac{1}{2}\mathcal{L}_{\xi\partial_t}$ on $\{\xi\}$.
\begin{lemma}
\label{lem:[R,gamma]}
Let $\Rgr\in \mathfrak{X}_{\mathrm{evo}}(\F_{GR})$ be an evolutionary vector field on $\F_{GR}$ with the following properties
\begin{itemize}
    \item $\Rgr$ vanishes on $\mathfrak{X}[1](I)$,
    \item $\Rgr$ preserves the tensor rank on $I$,
\end{itemize}
and let $\gamma_{GR}$ be the Chevalley-Eilenberg differential of the 1D GR theory. Then $[\Rgr,\gamma_{GR}] = 0$.
\end{lemma}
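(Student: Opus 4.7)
The plan is to verify $[\Rgr, \gamma_{GR}] = 0$ directly on the field generators and then extend to the whole algebra of local forms by evolutionarity. Since both $\Rgr$ and $\gamma_{GR}$ are evolutionary vector fields (the latter being the Chevalley--Eilenberg part of $\Qgr$), each commutes with $\partial_t$, and hence so does their graded commutator; the full algebra of local forms is generated by $\varphi \in \{q, g, \xi, q^+, g^+, \xi^+\}$ together with their $t$-jets, so it suffices to verify the vanishing on the zeroth-jet generators.

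First I would dispose of the ghost: from Remark \ref{rem:Chevalley-Eilenberg1DGR} we have $\gamma_{GR} \xi = \xi \dot \xi$, while $\Rgr \xi = 0$ and $\Rgr \dot \xi = \partial_t \Rgr \xi = 0$ by evolutionarity, so the graded Leibniz rule immediately yields $\Rgr(\xi \dot \xi) = 0$, and thus $[\Rgr, \gamma_{GR}] \xi = 0$. For the remaining generators, Appendix \ref{app:tensornumber} presents $\gamma_{GR}$ as the Lie derivative $\mathcal{L}_{\xi \partial_t}$, so that on a generator $\varphi$ of tensor number $t(\varphi)$ one has $\gamma_{GR} \varphi = \xi \dot \varphi + t(\varphi) \dot \xi \varphi$. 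Applying $\Rgr$ via the graded Leibniz rule (recalling that $\Rgr$, $\xi$ and $\dot \xi$ are all odd) and using $\Rgr \xi = \Rgr \dot \xi = 0$ with $\Rgr \dot \varphi = \partial_t \Rgr \varphi$, I get
\begin{align*}
  \Rgr \gamma_{GR} \varphi = -\xi\, \partial_t (\Rgr \varphi) - t(\varphi)\, \dot \xi\, \Rgr \varphi,
\end{align*}
while the rank-preserving hypothesis $t(\Rgr \varphi) = t(\varphi)$ gives
\begin{align*}
  \gamma_{GR} \Rgr \varphi = \xi\, \partial_t (\Rgr \varphi) + t(\varphi)\, \dot \xi\, \Rgr \varphi.
\end{align*}
Since the graded commutator of two odd operators is the anticommutator, summing these yields $[\Rgr, \gamma_{GR}] \varphi = 0$.

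The main subtlety will be controlling higher $\partial^n_t \xi$ contributions (with $n \geq 2$) that can appear when $\Rgr \varphi$ contains jets of other generators: acting by $\gamma_{GR}$ on such jets produces the coefficients $t_n(\cdot)$ of \eqref{eq:gammaA} via formula \eqref{eq:gammaderivative}, potentially spoiling the matching above. I would handle this either conceptually, by reinterpreting \emph{preserving the tensor rank} as the statement that $\Rgr$ is $\mathrm{Diff}(I)$-equivariant on $\F_{GR}$---which makes commutation with $\gamma_{GR} = \mathcal{L}_{\xi \partial_t}$ a special case of the naturality of the Lie derivative under tensorially covariant maps that commute with $\partial_t$---or, for the applications within Theorem \ref{theorem:laxequivalenceJacobi1DGR}, by a direct case-by-case verification that the higher coefficients $t_n$ match on both sides of $[\Rgr,\gamma_{GR}]$ thanks to the explicit form of $\Rgr$.
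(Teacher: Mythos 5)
Your proof is correct, but it takes a genuinely different route from the paper's. You work generator-by-generator in components: using the tensor-number formula $\gamma_{GR}\varphi = \xi\dot\varphi + t(\varphi)\dot\xi\,\varphi$, the Koszul signs from the odd operators $\Rgr$, $\xi$, $\dot\xi$, and the hypothesis in the form $t(\Rgr\varphi)=t(\varphi)$, you get exact cancellation in the anticommutator, then extend to jets and to the whole algebra by evolutionarity and the derivation property. The paper instead argues intrinsically: writing $\gamma_{GR}\sim\mathcal{L}_{\xi\partial_t}$, it uses Cartan's formula $\mathcal{L}_{\xi\partial_t}=[\iota_{\xi\partial_t},\mathrm{d}]$ together with the graded Jacobi identity and $[\mathrm{d},\Rgr]=0$ to reduce $[\Rgr,\mathcal{L}_{\xi\partial_t}]$ on $\Omega^\bullet(I)$ to $[\mathrm{d},[\Rgr,\iota_{\xi\partial_t}]]$, checks that $[\Rgr,\iota_{\xi\partial_t}]$ annihilates $\Omega^0(I)$ and $\Omega^1(I)=\Omega^{\mathrm{top}}(I)$ (where $\mathrm{d}\varpi=0$ closes the argument), handles vector fields $X=f\partial_t$ by a direct bracket computation, and then invokes tensoriality to cover all ranks. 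Both arguments consume exactly the same three inputs---$\Rgr\xi=0$, evolutionarity, and rank preservation---but the Cartan-calculus route is insensitive to the jet dependence of the coefficient functions, so the higher coefficients $t_n$ of Equation \eqref{eq:gammaA} never enter, whereas your more elementary component computation forces them into view. On the subtlety you flag: it is dissolved by the hypothesis itself, so neither of your proposed fixes is needed for the lemma. ``Preserving the tensor rank'' means precisely that $\Rgr\varphi$ is again the component of a genuine rank-$(n,m)$ tensor over $I$, and for genuine tensor components $\mathcal{L}_{\xi\partial_t}$ produces only $\xi$- and $\dot\xi$-terms; the $\partial^n_t\xi$ contributions with $n\geq 2$ in \eqref{eq:gammaA} arise only for non-tensorial jet expressions such as $\ddot q$. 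Your first, conceptual fix is thus a correct restatement of the hypothesis (and is in spirit what the paper's tensorial proof does), while the case-by-case verification belongs not to the lemma but to checking that the specific $\Rgr$ of Theorem \ref{theorem:laxequivalenceJacobi1DGR} satisfies its hypotheses---e.g.\ that expressions like $m\dot q/2T$ and $\ddot q_\perp$ are genuinely tensorial, as the cancellation of $\ddot\xi$-terms verified in the proof of Lemma \ref{lem:psilaxJacobi1DGR} confirms.
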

\begin{proof}
Recall that all the fields we are considering are components of tensor fields over $I$. In particular, note that the property that $\Rgr$ vanishes on $\mathfrak{X}[1](I)$ implies $\Rgr\xi = 0$, since $\xi \partial_t \in \mathfrak{X}[1](I)$. As $\gamma_{GR} \sim \mathcal{L}_{\xi\partial_t}$, it suffices to show that $[\Rgr ,\mathcal{L}_{\xi\partial_t}] = 0$ on functions, 1-forms and vector fields over $I$. We assume that all objects have an internal grading throughout the proof in order to account for the ghost number.

We will first show $[\Rgr ,\gamma_{GR}]=0$ for functions and 1-forms. Since $\mathcal{L}_{\xi \partial_t} = [\iota_{\xi \partial_t}, \mathrm{d}]$ on $\Omega^\bullet(I)$ we have
\begin{align*}
    [\Rgr ,\mathcal{L}_{\xi \partial_t}] = [\Rgr ,[\iota_{\xi \partial_t}, \mathrm{d}]] = [\mathrm{d}, [\Rgr ,\iota_{\xi \partial_t}]] + [\iota_{\xi \partial_t}, [\mathrm{d},\Rgr ]]
    = [\mathrm{d}, [\Rgr ,\iota_{\xi \partial_t}]],
\end{align*}
where we used that $\Rgr $ is evolutionary. As such it is sufficient to show that $\Omega^\bullet(I) \subset \ker [\Rgr ,\iota_{\xi \partial_t}]$. By definition all functions on $I$ are in the kernel of $\iota_{\xi \partial t}$: $C^\infty(I) = \Omega^0(I) \subset \ker \iota_{\xi \partial_t}$. Let $f\in \Omega^0(I)$. Since we assume that $\Rgr $ preserves the tensor rank we also have $\Rgr  f\in \Omega^0(I)$, then
\begin{align*}
    &\Rgr  \,\iota_{\xi \partial_t} f = 0,&
    &\iota_{\xi \partial_t} \Rgr  f = 0,
\end{align*}
and as such $[\Rgr ,\iota_{\xi \partial_t}]f = 0$. Let now $\varpi = f \mathrm{d}t \in \Omega^1(I)$ be a 1-form. Taking into account that $|\mathrm{d}t| = 1$, we have
\begin{align*}
    & \Rgr  \,\iota_{\xi \partial_t} \varpi 
    = \Rgr  \,\iota_{\xi \partial_t} (f \mathrm{d}t) 
    = \Rgr (f \mathrm{d}t (\xi \partial_t)) 
    = -\Rgr (f\xi) = - \Rgr f \xi,\\
    & \iota_{\xi \partial_t} \Rgr  \varpi = \iota_{\xi \partial_t} \Rgr (f \mathrm{d}t)
    = \iota_{\xi \partial_t} [\Rgr f \mathrm{d}t]
    = \Rgr f \mathrm{d}t (\xi \partial_t)
    = - \Rgr f \xi,\\
    \Rightarrow &[\Rgr,\iota_{\xi \partial_t}] \varpi = \Rgr\iota_{\xi \partial_t}\varpi - \iota_{\xi \partial_t} \Rgr\varpi = -\Rgr f \xi + \Rgr f \xi = 0,
\end{align*}
thus showing that $\Omega^0(I) \times \Omega^1(I) \subset \ker [\Rgr ,\iota_{\xi \partial_t}]$. This implies
\begin{align*}
    &[\mathrm{d},[\Rgr ,\iota_{\xi\partial_t}]] f = 0,&
    &[\mathrm{d} ,[\Rgr ,\iota_{\xi\partial_t}]] \varpi = \mathrm{d} [\Rgr ,\iota_{\xi\partial_t}] \varpi = 0,
\end{align*}
where we used $f \in \Omega^0(I) \subset \ker [\Rgr ,\iota_{\xi \partial_t}]$, $\d f \in \Omega^1(I) \subset \ker [\Rgr ,\iota_{\xi \partial_t}]$ and $\mathrm{d}\varpi = 0$, since $\Omega^1(I)=\Omega^\text{top}(I)$. 

Consider now a vector field $X = f\partial_t \in \mathfrak{X}(I)$ of degree $n$. In this case we have:
\begin{align*}
    &\Rgr  \mathcal{L}_{\xi \partial_t} X 
    =\Rgr [\xi \partial_t,f \partial_t]  
    =\Rgr (\xi \dot f - (-1)^nf\dot \xi) \partial_t
    =-(\xi \Rgr  \dot f + (-1)^n \Rgr f \dot \xi) \partial_t,\\
    &\mathcal{L}_{\xi \partial_t} \Rgr X 
    = [\xi \partial_t,\Rgr f \partial_t]
    = (\xi \Rgr \dot f \partial_t - (-1)^{n-1} \Rgr f \dot \xi) \partial_t,\\
    \Rightarrow &[\Rgr ,\mathcal{L}_{\xi \partial_t}]X = \Rgr \mathcal{L}_{\xi \partial_t}X + \mathcal{L}_{\xi \partial_t}\Rgr X = 0,
\end{align*}
where we used $\partial_t(\Rgr f) = \Rgr  \dot f$. Since $[\Rgr ,\mathcal{L}_{\xi \partial_t}] = 0$ on functions, 1-forms and vector fields it holds for all tensors. As such we have $[\Rgr ,\gamma_{GR}] = 0$ on $\F_{GR}$.
\end{proof}

A direct implication of Lemma \ref{lem:[R,gamma]} is that the vector field $\Dgr $ reduces to $$\Dgr  = [\Qgr ,\Rgr ] = [\delta_{GR},\Rgr ].$$ 

\begin{lemma}
\label{lem:chiJacobi1DGR}
The composition map $\chi^*$ acts as
\begin{subequations}
\label{eq:chiJacobi1DGR}
\begin{align}
\begin{split}
    &\chi^*q = q  ,\\
    &\chi^*\xi = \xi,\\
    &\chi^*g = \frac{T}{E},\\
    &\chi^* q^+_\parallel = \eta^{3/2} \left(q^+_\parallel 
        - \left[g^+\dot g + 2\dot g^+ g \right] \frac{m\dot q}{2T}
        -\frac{g^{3/2}}{E} \left[\dot{\text{EL}_g}g^+ - \text{EL}_g \dot g^+ \right]\frac{m\dot q}{2T}\right), \\
    &\chi^* q^+_\perp = \eta^{3/2} \left(q^+_\perp 
    +\frac{2m}{E} g^+ \ddot q_\perp \right),\\
    &\chi^* \xi^+ = \eta^{3/2}\left(\xi^+ + \frac{g^{3/2}}{E} \dot g^+ g^+\right),\\
    &\chi^* g^+ = 0.
\end{split}
\end{align}\end{subequations}
and is homotopic to the identity.
\end{lemma}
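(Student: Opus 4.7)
The first part of the statement, the explicit form of $\chi^* = \psi^* \circ \phi^*$, follows by a direct computation: apply $\phi^*$ (from Lemma \ref{lem:philaxJacobi1DGR}) to each coordinate of $\Fj$, then pull back via $\psi^*$ (from Lemma \ref{lem:psilaxJacobi1DGR}). On $\qj, \xij$ this gives the identity composition; on $\qj^+, \xij^+$ the result is simply the expression for $\psi^*$ written in terms of the 1D GR coordinates, since the pullback first maps $\Fj$-coordinates into their tilded $\Fgr$-images and then back. This reproduces the formulas in Equations \eqref{eq:chiJacobi1DGR}.

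For the homotopy, the plan is to follow the strategy of Section \ref{sec:Strategy}: define $\chi^*_s = e^{s\mathcal{L}_{\Dgr}}$ with $\Dgr = [\Qgr, \Rgr]$ and $\Rgr$ as specified in the outline of Theorem \ref{theorem:laxequivalenceJacobi1DGR}. The first observation is that since $\Rgr \xi = 0$ and $\Rgr$ preserves tensor ranks on $I$, Lemma \ref{lem:[R,gamma]} applies and $[\Rgr, \gamma_{GR}] = 0$, so $\Dgr = [\delta_{GR}, \Rgr]$. This collapses all computations to the Koszul--Tate part, which is considerably more tractable. I then compute $\Dgr \varphi$ for each field and antifield and establish, in each case, a closed recursion for $\Dgrk \varphi$ so that the exponential series can be summed.

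For the fields $q, \xi$ the action is trivial or essentially scalar, and one gets $\chi^*_s q = q$, $\chi^*_s \xi = \xi$, $\chi^*_s g = e^{-s} g + (1 - e^{-s})\tfrac{T}{E}$, directly by iterating $\Dgr g = T/E - g$. For the antifields $q^+, \xi^+, g^+$, a direct recursion on $\Dgrk \varphi^+$ becomes algebraically unwieldy because $g$ is itself moved by $\Dgr$ and appears nonlinearly through $\eta^{3/2}$. The key trick, as highlighted in the proof outline of Theorem \ref{theorem:laxequivalenceJacobi1DGR}, is to first compute $\Dgrk (g^{3/2} \varphi^+)$; for $\varphi^+ = g^+$ one finds $\Dgr(g^{3/2} g^+) = -g^{3/2} g^+$, giving a clean geometric series, and analogous (though more involved) recursions hold for $g^{3/2} q^+_\parallel$, $g^{3/2} q^+_\perp$, and $g^{3/2} \xi^+$. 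The original $\chi^*_s \varphi^+$ is then recovered by dividing by $(\chi^*_s g)^{3/2}$ via Equation \eqref{eq:chi^*_svarphi^+}, which is well-defined for all $s \in \mathbb{R}_{\geq 0}$ since $\chi^*_s g$ never vanishes.

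The main obstacle will be the bookkeeping for $q^+_\parallel, q^+_\perp, \xi^+$: $\Rgr$ acts nontrivially and mixes antifield components with derivatives of $g^+$ and with $\text{EL}_g$, so the intermediate expressions for $\Dgrk$ contain cross-terms whose parallel/perpendicular decomposition (in the sense of Equation \eqref{eq:ParallelPerp}) must be tracked carefully. Once the closed form of each $\chi^*_s \varphi^+$ is obtained, taking $s \to \infty$ reproduces the formulas in Equations \eqref{eq:chiJacobi1DGR} for $\chi^*$, using $\eta^{3/2} = \lim_{s\to\infty} (g/\chi^*_s g)^{3/2}$, completing the proof that $\chi^*_s$ is a homotopy from $\mathrm{id}_{GR}$ to $\chi^*$.
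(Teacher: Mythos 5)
Your proposal is correct and follows essentially the same route as the paper's proof: direct computation of the composition for the explicit form of $\chi^*$, reduction of $\Dgr=[\Qgr,\Rgr]$ to $[\delta_{GR},\Rgr]$ via Lemma \ref{lem:[R,gamma]}, the rescaling trick of computing recursions for $\Dgrk(g^{3/2}\varphi^+)$ and recovering $\chi^*_s\varphi^+$ through Equation \eqref{eq:chi^*_svarphi^+}, and the limit $s\to\infty$ producing the $\eta^{3/2}$ prefactors. You correctly anticipate that the only real labour lies in the bookkeeping for $q^+_\parallel$, $q^+_\perp$, $\xi^+$ (which the paper handles with the auxiliary quantity $\sigma(\varphi)=\varphi\,\partial_t(g^{3/2}g^+)-\dot\varphi\, g^{3/2}g^+$ and induction on $k$), so there is no gap in the plan.
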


\begin{proof}
Recall that $\chi^* = \psi^* \circ \phi^*$ and let $\varphi_i \in \{q,q^+,\xi,\xi^+\}$. In this case we have $\phi^*\varphi_i = \tilde \varphi_i$ and as such $\chi^* \varphi_i = \psi^* \tilde \varphi_i$, which reproduces the expressions above due to the explicit form of $\psi^*$. For $\{g,g^+\}$ we compute
\begin{align*}
    \chi^* g &= (\psi^* \circ \phi^*) g = \psi^*\left(\frac{\Tj}{E}\right) = \frac{T}{E},\\
    \chi^* g^+ &= (\psi^* \circ \phi^*) g^+
    = \psi^*(0) = 0.
\end{align*}
In order to show that $\chi^*$ is homotopic to the identity we choose the vector field $\Rgr$ to act as 
\begin{align}
\label{eq:Roperator}
    &\Rgr q = 0,&
    &\Rgr \xi = 0,&
    &\Rgr  g= \frac{-2g^{3/2}}{E} g^+,&\nonumber\\
    &\Rgr  q^+_\parallel = -\frac{3\sqrt{g}}{E} \text{EL}_g \xi^+ \frac{m\dot q}{2T},&
    &\Rgr  \xi^+ =0,&
    &\Rgr  g^+ = 0,\nonumber\\
    &\Rgr  q^+_\perp = \frac{3\sqrt{g}}{E} g^+ q^+_\perp,
\end{align}
Since $\Rgr  \xi = 0$ we can use Lemma \ref{lem:[R,gamma]}. We start with the computation for $q$ and $\xi$. Recalling that they are both in the kernel of $\delta_{GR}$ (cf.\ Remark \ref{rem:Chevalley-Eilenberg1DGR}) we have
\begin{align*}
    && \Dgr  q 
    &= (\delta_{GR} \Rgr  + \Rgr  \delta_{GR}) q 
    = R(\xi \dot q) = 0,&&\\
    &\Rightarrow&
    e^{s\mathcal{L}_{\Dgr}} q &= q,&&\\
    &\Rightarrow& 
    \lim_{s\rightarrow \infty} e^{s\mathcal{L}_{\Dgr}} q &= q = \chi^* q, &&\\
    && \Dgr \xi 
    &= (\delta_{GR} \Rgr + \Rgr \delta_{GR}) \xi
    = 0,&&\\
    &\Rightarrow&
    e^{s\mathcal{L}_{\Dgr}} \xi &= \xi,\\
    &\Rightarrow& 
    \lim_{s\rightarrow \infty} e^{s\mathcal{L}_{\Dgr}} \xi &= \xi = \chi^* \xi. &&
\end{align*}

Keeping in mind that $\delta_{GR}g = 0$ and 
\begin{align*}
    \delta_{GR} g^+ = \text{EL}_g 
    = \frac{E}{2\sqrt{g}}-\frac{T}{2g^{3/2}},
\end{align*}
(cf.\ Remark \ref{rem:Chevalley-Eilenberg1DGR}), we compute $\Dgr g$ to be
\begin{align}
    \Dgr  g 
    &= \left(\delta_{GR} \Rgr + \Rgr \delta_{GR}\right) g
    = \delta_{GR} \left(\frac{-2g^{3/2}}{E}g^+\right)\nonumber\\
    &= \frac{-2g^{3/2}}{E} \left(\frac{E}{2\sqrt{g}}-\frac{T}{2g^{3/2}} \right)
    = -\left(g - \frac{T}{E}\right).
    \label{eq:Dgrg}
\end{align}
We can then show 
\begin{align}
    \label{eq:Dkg}
    \Dgrk  g = (-1)^k \left(g-\frac{T}{E}\right) \quad \text{for } k \geq 1,
\end{align}
using induction. As we have computed, this holds for $k=1$. Assuming that Equation \eqref{eq:Dkg} holds for an arbitrary $k$, we then have the following for $k+1$
\begin{align*}
    \Dgr ^{k+1} g 
    = (-1)^k \left(\Dgr g - \Dgr \frac{T}{E}\right)
    = (-1)^{k+1}\left(g-\frac{T}{E}\right),
\end{align*}
where we used $\Dgr  T = m\dot q \Dgr \dot q = m\dot q \partial_t(\Dgr q) = 0$. This results in
\begin{align*}
    && e^{s\mathcal{L}_{\Dgr}} g 
    &= g + \sum_{k \geq 1} \frac{s^k}{k!} (-1)^k \left(g-\frac{T}{E}\right) = g + (e^{-s} - 1) \left(g-\frac{T}{E}\right)&&\\
    &\Rightarrow& 
    \lim_{s\rightarrow \infty} e^{s\mathcal{L}_{\Dgr}} g &= \frac{T}{E} = \chi^* g. &&
\end{align*}

We now compute $\chi^*_s\xi^+$ and its $s\rightarrow \infty$ limit, which follows the same strategy as the analogous computations for $g^+$ presented in the main proof. We start with $\Dgr \xi^+$
\begin{align*}
    \Dgr \xi^+ 
    &= \Rgr \delta_{GR} \xi^+
    = \Rgr \left(-q^+\cdot \dot q +  g^+ \dot g + 2 \dot g^+ g \right)\\
    &= -\Rgr (q^+\cdot \dot q) - g^+\partial_t\left(\frac{-2g^{3/2}}{E}g^+\right)  
    -\dot g^+ 2\frac{(-2)g^{3/2}}{E}g^+\\
    &= -\Rgr (q^+\cdot \dot q) + \frac{2g^{3/2}}{E} g^+ \dot g^+  +\frac{4g^{3/2}}{E} \dot g^+g^+ \\
    &= -\Rgr (q^+\cdot \dot q) + \frac{2g^{3/2}}{E} \dot g^+ g^+,
\end{align*}
where we used that $g^+ g^+ = 0$ in the second line and $g^+ \dot g^+ = -\dot g^+ g^+$ in the third. With this in hand we can proceed with the calculation of $\Dgr (g^{3/2}\xi^+)$
\begin{align*}
    &\Dgr (g^{3/2} \xi^+) 
    = \frac{3}{2} \sqrt{g} \Dgr g \xi^+ + g^{3/2} \Dgr \xi^+\\
    &= - \frac{3g^2}{E} \text{EL}_g \xi^+ -g^{3/2} \Rgr (q^+\cdot \dot q) + \frac{2g^{3}}{E} \dot g^+ g^+
    = \frac{2}{E} \partial_t(g^{3/2}g^+)g^{3/2}g^+.
\end{align*}
It should now be clear why we chose the specific form for $\Rgr q^+\cdot \dot q$: the first two terms in the second line cancel and we are left with a term for which we know how to compute $\Dgrk $. Using induction we can then prove
\begin{align*}
    \Dgrk  (g^{3/2} \xi^+) = -\frac{(-2)^k}{E} \partial_t(g^{3/2}g^+)g^{3/2}g^+ \quad \text{for } k\geq 1.
\end{align*}
We have already shown that it holds for $k=1$. Assuming that it is true for $k$, the expression for $k+1$ reads:
\begin{align*}
    \Dgr ^{k+1} (g^{3/2} \xi^+) 
    &= -\frac{(-2)^k}{E} \partial_t(\Dgr g^{3/2}g^+)g^{3/2}g^+
    -\frac{(-2)^k}{E} \partial_t(g^{3/2}g^+)\Dgr g^{3/2}g^+\\
    &= 2 \frac{(-2)^k}{E} \partial_t(g^{3/2}g^+)g^{3/2}g^+
    = - \frac{(-2)^{k+1}}{E} \partial_t(g^{3/2}g^+)g^{3/2}g^+.
\end{align*}
Since $\partial_t(g^{3/2}g^+)g^{3/2}g^+ = g^3 \dot g^+ g^+$ due to $g^+g^+ = 0$, we then have
\begin{align*}
    &\chi^*_s(g^{3/2} \xi^+) = e^{s\mathcal{L}_{\Dgr}}(g^{3/2} \xi^+) 
    = \sum_{k\geq0} \frac{s^k}{k!} \Dgrk (g^{3/2} \xi^+)\\
    &= g^{3/2} \xi^+ - \left(\sum_{k\geq1} \frac{(-2s)^k}{k!}\right) \frac{g^3}{E} \dot g^+ g^+
    = g^{3/2} \xi^+ -(e^{-2s} - 1)\frac{g^{3}}{E}\dot g^+ g^+,
\end{align*}
and
\begin{align*}
    \lim_{s\rightarrow\infty} \chi^*_s \xi^+
    &= \left(\frac{E}{T}\right)^{3/2}
    \lim_{s\rightarrow\infty} \chi^*_s (g^{3/2}\xi^+)\\
    &= \left(\frac{E}{T}\right)^{3/2}
    \lim_{s\rightarrow\infty} \left(g^{3/2} \xi^+ -(e^{-2s} - 1)\frac{g^{3}}{E}\dot g^+ g^+\right)\\
    &= \eta^{3/2}\left(\xi^+ +\frac{g^{3/2}}{E}\dot g^+ g^+\right)
    = \chi^*\xi^+.\\
\end{align*}

The strategy in the case of $q^+$ is the same as for $g^+$ and $\xi^+$. Due to $$\Dgr\left(\frac{m \dot q}{2T}\right) = 0,$$ we have
\begin{align*}
     \Dgr(g^{3/2} q^+_\parallel) =  \Dgr\left(g^{3/2} q^+\cdot \frac{m \dot q}{2T}\right) = \Dgr(g^{3/2}q^+ \cdot \dot q) \frac{m \dot q}{2T}.
\end{align*}
We will therefore omit the term $m \dot q/(2T)$ from the computations in order to keep them cleaner. We start by calculating
\begin{align*}
    \Dgr(q^+ \cdot \dot q) 
    &= \left(\delta_{GR} \Rgr  + \Rgr  \delta_{GR}\right) q^+ \cdot \dot q\\
    &= - \delta_{GR} \left(\frac{3\sqrt{g}}{E} \text{EL}_g \xi^+\right)
    - \Rgr \left(\partial_t\left(\frac{m\dot q}{\sqrt{g}}\right)\right) \cdot \dot q\\
    &= \frac{3\sqrt{g}}{E} \text{EL}_g (q^+ \cdot \dot q - g^+ \dot g - 2\dot g^+ g) - \partial_t\left(\frac{m\dot q}{2g^{3/2}} \frac{2g^{3/2}}{E}g^+\right) \cdot \dot q\\
    &=\frac{3}{2}\left(1 - \frac{T}{Eg}\right) q^+ \cdot \dot q
    - \frac{3\sqrt{g}}{E} \text{EL}_g(g^+ \dot g + 2 \dot g^+ g) 
    - \frac{m \ddot q \cdot q}{E}g^+
    - \frac{m\|\dot q\|^2}{E} \dot g^+\\
    &=\frac{3}{2}\left(1 - \frac{T}{Eg}\right) q^+ \cdot \dot q
    - \frac{3\sqrt{g}}{E} \text{EL}_g(g^+ \dot g + 2 \dot g^+ g) 
    - \frac{\dot T}{E}g^+
    - \frac{2T}{E} \dot g^+.
\end{align*}
Let $\sigma(\varphi) = \varphi \partial_t{(g^{3/2}g^+)} - \dot \varphi g^{3/2}g^+$. With the result for $\Dgr(q^+ \cdot \dot q)$ we compute the following
\begin{align*}
    \Dgr(g^{3/2} q^+ \cdot q) 
    &= \frac{3\sqrt{g}}{2} \Dgr g q^+ \cdot \dot q 
    + g^{3/2} \Dgr (q^+\cdot \dot q)\\
    &= \frac{3\sqrt{g}}{2} \left(\frac{T}{E}-g\right)q^+ \cdot \dot q
    +g^{3/2} \Dgr (q^+\cdot \dot q)\\
    &= - \frac{3g^2}{E} \text{EL}_g(g^+ \dot g + 2 \dot g^+ g) 
    - g^{3/2}\frac{\dot T}{E}g^+
    - g^{3/2}\frac{2T}{E} \dot g^+\\
    &= \left(\sqrt{g}\frac{3T}{2E} - \frac{3}{2}g^{3/2} \right)(g^+ \dot g + 2 \dot g^+ g) 
    - g^{3/2}\frac{\dot T}{E}g^+
    - g^{3/2}\frac{2T}{E} \dot g^+\\
    &= \frac{T}{E}\partial_t(g^{3/2}) g^+ 
    + \frac{3T}{E} g^{3/2}\dot g^+
    - \frac{3}{2} g^{3/2} \dot g g^+ - 3 g g^{3/2}\dot g^+
    - g^{3/2}\frac{\dot T}{E}g^+
    - g^{3/2}\frac{2T}{E} \dot g^+\\
    &= \frac{T}{E} \partial_t(g^{3/2} g^+) 
    - \frac{\dot T}{E}g^{3/2} g^+ 
    - 3\left[g \partial_t(g^{3/2} g^+) 
    - \dot g (g^{3/2} g^+)\right]\\
    &= \sigma\left(\frac{T}{E}\right) - 3\sigma(g)
    = -2\sigma\left(\frac{T}{E}\right) -3 \sigma\left(g-\frac{T}{E}\right)\\
    &= -2\sigma\left(\frac{T}{E}\right)
    - 2\frac{3}{E}\sigma(g^{3/2}\text{EL}_g).
\end{align*}
Reintroducing $m\dot q/(2T)$ gives
\begin{align}
    \label{eq:Dg32q+parallel}
    \Dgr (g^{3/2} q^+_\parallel) 
    &= -2\sigma\left(\frac{T}{E}\right) \frac{m\dot q}{2T}
    - 2\cdot \frac{3}{E}\sigma(g^{3/2}\text{EL}_g)\frac{m\dot q}{2T}.
\end{align}
Using induction it is then possible to show that
\begin{align}
    \label{eq:Dkq+parallel}
    \Dgrk (g^{3/2} q^+_\parallel) = (-1)^k 2 \sigma\left(\frac{T}{E}\right) \frac{m\dot q}{2T} + (-2)^k \frac{3}{E} \sigma(g^{3/2}\text{EL}_g) \frac{m\dot q}{2T},
\end{align}
for $k\geq 1$. The case $k=1$ is presented in Equation (\ref{eq:Dg32q+parallel}). To see how the case $k+1$ follows from the case $k$, note that
\begin{align*}
    \Dgr \sigma\left(\frac{T}{E}\right) 
    = \Dgr \left(\frac{T}{E} \partial_t(g^{3/2} g^+) 
    - \frac{\dot T}{E}g^{3/2} g^+ \right)
    = - \sigma\left(\frac{T}{E}\right),
\end{align*}
where we used $\Dgr T = 0$ and $\Dgr (g^{3/2}g^+) = - g^{3/2}g^+$. Before computing $\Dgr \sigma(g^{3/2}\text{EL}_g)$ note that
\begin{align*}
    &g^{3/2}\text{EL}_g = \frac{E}{2}\left(g-\frac{T}{E}\right) = -\frac{E}{2}\Dgr g\\
    \Rightarrow \quad &\Dgr (g^{3/2}\text{EL}_g) = - \frac{E}{2} \Dgr^2g = \frac{E}{2} \Dgr g = - g^{3/2}\text{EL}_g.
\end{align*}
The action of $\Dgr $ on $\sigma(g^{3/2}\text{EL}_g)$ is then
\begin{align*}
    &\Dgr (\sigma(g^{3/2}\text{EL}_g)) =  \Dgr \left(g^{3/2}\text{EL}_g \partial_t(g^{3/2}g^+) - \partial_t(g^{3/2}\text{EL}_g) g^{3/2}g^+\right)\\
    &= -2\left(g^{3/2}\text{EL}_g \partial_t(g^{3/2}g^+) - \partial_t(g^{3/2}\text{EL}_g) g^{3/2}g^+\right)
    = -2\sigma(g^{3/2}\text{EL}_g),
\end{align*}
which proves the Equation (\ref{eq:Dkq+parallel}). Having $\Dgrk (g^{3/2} q^+_\parallel)$ we can now write
\begin{align*}
    \chi^*_s q^+_\parallel &= e^{s\mathcal{L}_{\Dgr}} (g^{3/2} q^+_\parallel) 
    = \sum_{k\geq0} \frac{s^k}{k!}\Dgrk (g^{3/2} q^+_\parallel)\\
    &= g^{3/2} q^+_\parallel 
    + \left(\sum_{k\geq1} \frac{(-s)^k}{k!}\right) 2 \sigma\left(\frac{T}{E}\right) \frac{m\dot q}{2T} 
    + \left(\sum_{k\geq1} \frac{(-2s)^k}{k!}\right) \frac{3}{E} \sigma(g^{3/2}\text{EL}_g) \frac{m\dot q}{2T}\\
    &= g^{3/2} q^+_\parallel
    + (e^{-s} - 1) 2 \sigma\left(\frac{T}{E}\right)\frac{m\dot q}{2T}
    + (e^{-s} - 1) \frac{3}{E} \sigma(g^{3/2}\text{EL}_g)\frac{m\dot q}{2T},
\end{align*}
taking the $s\rightarrow\infty$ limit then yields
\begin{align*}
    \lim_{s\rightarrow\infty}\chi^*_s (g^{3/2} q^+_\parallel)
    &= g^{3/2} q^+_\parallel
    - 2\sigma\left(\frac{T}{E}\right)\frac{m\dot q}{2T}
    - \frac{3}{E} \sigma(g^{3/2}\text{EL}_g)\frac{m\dot q}{2T}
\end{align*}
as desired. We can then extract $\lim_{s\rightarrow\infty}\chi^*_sq^+_\parallel$ from this expression using 
\begin{align*}
    \lim_{s\rightarrow\infty}\chi^*_sq^+_\parallel = (E/T)^{3/2}\lim_{s\rightarrow\infty}\chi^*_s(g^{3/2} q^+_\parallel),
\end{align*}
see Equation (\ref{eq:limchi^*_sPhi^+}). We have
\begin{align}
\begin{split}
    \label{eq:chiq^+complicated}
    \lim_{s\rightarrow\infty}\chi^*_sq^+_\parallel 
    &= \eta^{3/2} \left(q^+_\parallel
    - 2g^{-3/2}\sigma\left(\frac{T}{E}\right)\frac{m\dot q}{2T}
    - \frac{3}{E}g^{-3/2}\sigma(g^{3/2}\text{EL}_g)\frac{m\dot q}{2T}\right)\\
    &= \eta^{3/2} \left(q^+\cdot \dot q
    - 2g^{-3/2}\sigma\left(\frac{T}{E}\right)
    - \frac{3}{E}g^{-3/2}\sigma(g^{3/2}\text{EL}_g)\right)\frac{m\dot q}{2T}.
\end{split}
\end{align}
This expression can be further simplified, but first note that
\begin{align*}
    &\dot{\text{EL}_g} = -\frac{E}{4g^{3/2}} \dot g + \frac{3T}{4g^{5/2}} \dot g - \frac{\dot T}{2g^{3/2}}\\
    \Rightarrow &\frac{3T}{Eg} \dot g 
    - \frac{2\dot T}{E} = \frac{4g^{3/2}}{E} \dot{\text{EL}_g} + \dot g,
\end{align*}
and
\begin{align*}
    &\text{EL}_g = \frac{E}{2\sqrt{g}} - \frac{T}{2g^{3/2}}\\
    \Rightarrow &\frac{2T}{E} = 2g - \frac{4g^{3/2}}{E} \text{EL}_g.
\end{align*}
Using these two identities the last two terms in Equation (\ref{eq:chiq^+complicated}) yield
\begin{align*}
    &2g^{-3/2}\sigma\left(\frac{T}{E}\right)
    + \frac{3}{E}g^{-3/2}\sigma(g^{3/2}\text{EL}_g)\\
    &= \frac{2T}{E} g^{-3/2} \partial_t(g^{3/2}g^+)
    - \frac{2 \dot T}{E} g^+ 
    + \frac{3}{E} \text{EL}_g \partial_t(g^{3/2}g^+)
    - \frac{3}{E}\partial_t(g^{3/2}\text{EL}_g) g^+\\
    &= g^+\left[\frac{3T}{Eg} \dot g 
    - \frac{2\dot T}{E} 
    + \frac{3}{E} \text{EL}_g \partial_t(g^{3/2})
    - \frac{3}{E} \partial_t(g^{3/2}\text{EL}_g) \right]
    + \dot g^+\left[\frac{2T}{E} 
    + \frac{3}{E} g^{3/2} \text{EL}_g \right],\\
    &=g^+\left[\dot g + \frac{g^{3/2}}{E} \dot{\text{EL}_g}\right]
    + \dot g^+\left[2g - \frac{g^{3/2}}{E} \text{EL}_g\right]\\
    &=\left[g^+\dot g + 2\dot g^+ g \right]
    +\frac{g^{3/2}}{E} \left[\dot{\text{EL}_g}g^+ - \text{EL}_g \dot g^+ \right],
\end{align*}
and as such
\begin{align*}
    \lim_{s\rightarrow\infty}\chi^*_s q^+_\parallel &= \eta^{3/2} \left(q^+_\parallel 
        - \left[g^+\dot g + 2\dot g^+ g \right] \frac{m\dot q}{2T}
        -\frac{g^{3/2}}{E} \left[\dot{\text{EL}_g}g^+ - \text{EL}_g \dot g^+ \right]\frac{m\dot q}{2T}\right)\\
    &= \chi^*q^+.   
\end{align*}

We now move to the computation of $\chi^*_sq^+_\perp$. As before our strategy will be to compute $\chi^*_sq^+_\perp$ via $\Dgrk (g^{3/2} q^+_\perp)$. First note that we have
\begin{align*}
    \delta_{GR} q^+_\perp 
    &= -\partial_t\left( \frac{m\dot q}{\sqrt{g}}\right) 
    + \frac{m \dot q}{2T} \partial_t\left( \frac{m\dot q}{\sqrt{g}}\right) \cdot \dot q\\
    &=-\frac{m\ddot q}{\sqrt{g}} 
    - \cancel{\partial_t\left( \frac{1}{\sqrt{g}}\right) m \dot q}
    + \frac{m \dot q}{2T} \frac{m \ddot q \cdot \dot q}{\sqrt{g}}
    + \cancel{\partial_t\left( \frac{1}{\sqrt{g}}\right) m \dot q}
    = -\frac{m \ddot q_\perp}{\sqrt{g}}.
\end{align*}
It follows that
\begin{align*}
    \Dgr q^+_\perp &= (\delta_{GR} \Rgr  + \Rgr  \delta_{GR}) q^+_\perp
    = \delta_{GR}\Rgr q^+_\perp - \Rgr \left(\frac{m \ddot q_\perp}{\sqrt{g}}\right)\\
    &= \delta_{GR}\Rgr q^+_\perp + \frac{m \ddot q_\perp}{2g^{3/2}} \frac{(-2)g^{3/2}}{E} g^+
    = \delta_{GR}\Rgr q^+_\perp - \frac{m\ddot q_\perp}{E}g^+,
\end{align*}
and thus
\begin{align*}
    &\Dgr (g^{3/2} q^+_\perp) 
    = \frac{3\sqrt{g}}{2} \Dgr g q^+_\perp
    + g^{3/2} \Dgr q^+_\perp
    = - \frac{3g^2}{E} \text{EL}_g q^+_\perp + g^{3/2} \delta_{GR} \Rgr q^+_\perp - g^{3/2} \frac{m\ddot q_\perp}{E}g^+\\
    &= - \frac{3g^2}{E} \delta_{GR}g^+ q^+_\perp
    +\frac{3g^2}{E} \delta_{GR}(g^+ q^+_\perp) - g^{3/2} \frac{m\ddot q_\perp}{E}g^+
    = - \frac{3g^2}{E} g^+ \delta_{GR}q^+_\perp 
    - g^{3/2} \frac{m\ddot q_\perp}{E}g^+\\
    &= \frac{3g^2}{E} g^+ \frac{m\ddot q_\perp}{\sqrt{g}}
    - g^{3/2} \frac{m\ddot q_\perp}{E}g^+
    = \frac{2m}{E}\ddot q_\perp g^{3/2} g^+.
\end{align*}
Since $\Dgr \ddot q_\perp = 0$, the computation of the higher powers of $\Dgrk (g^{3/2} q^+_\perp)$ becomes quite straightforward. We have
\begin{align*}
    \Dgrk (g^{3/2} q^+_\perp) = - (-1)^k \frac{2m}{E}\ddot q_\perp g^{3/2} g^+ \quad \text{for } k\geq1,
\end{align*}
which results in
\begin{align*}
    \chi^*_s (g^{3/2} q^+_\perp) 
    &= e^{s\mathcal{L}_{\Dgr}} (g^{3/2} q^+_\perp) 
    = \sum_{k\geq0} \frac{s^k}{k!} \Dgrk (g^{3/2} q^+_\perp)\\
    &= g^{3/2} q^+_\perp - \left(\sum_{k\geq1} \frac{(-s)^k}{k!}\right) \frac{2m}{E}\ddot q_\perp g^{3/2} g^+\\
    &= g^{3/2} q^+_\perp - (e^{-s}-1)  \frac{2m}{E}\ddot q_\perp g^{3/2} g^+,
\end{align*}
and as such
\begin{align*}
    \lim_{s\rightarrow\infty}\chi^*_s q^+_\perp = \eta^{3/2} \left(q^+_\perp 
    + \frac{2m}{E} g^+ \ddot  q_\perp ,\right)
    &= \chi^*q^+.   
\end{align*}
as desired. 
\end{proof}

\begin{lemma}
\label{lem:chiIdCohomologyJacobi1DGR}
The map $\chi^*$ is the identity in cohomology.
\end{lemma}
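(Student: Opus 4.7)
The plan is to follow the template used in the analogous Lemmata \ref{lem:chiIdCohomologyCM} and \ref{lem:chiIdCohomologyYM}: by Lemma \ref{lem:hchiexplicitform} it suffices to exhibit the map
\[
  h_\chi\varphi = \int_0^\infty e^{s\mathcal{L}_{\Dgr}}\mathcal{L}_{\Rgr}\varphi\,\d s,
\]
and by Lemma \ref{prop:hchiconvergence} it is enough to check its convergence on the generators $\varphi\in\Fgr$. First I would dispose of the trivial cases: since $\Rgr q=\Rgr\xi=\Rgr\xi^+=\Rgr g^+=0$, the integrands vanish identically, so $h_\chi q=h_\chi\xi=h_\chi\xi^+=h_\chi g^+=0$.

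Next I would handle $g$ using the identity $\mathcal{L}_{\Dgr}(g^{3/2}g^+) = -g^{3/2}g^+$ established during the proof of Lemma \ref{lem:chiJacobi1DGR}. This gives $e^{s\mathcal{L}_{\Dgr}}(g^{3/2}g^+)=e^{-s}g^{3/2}g^+$, and therefore
\[
  h_\chi g = -\frac{2}{E}\int_0^\infty e^{s\mathcal{L}_{\Dgr}}(g^{3/2}g^+)\,\d s = -\frac{2}{E}\int_0^\infty e^{-s}g^{3/2}g^+\,\d s = -\frac{2g^{3/2}}{E}g^+,
\]
which is finite.

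For $q^+_\parallel$ and $q^+_\perp$ the strategy is the same as in the main proof of Lemma \ref{lem:chiJacobi1DGR}: compute the integral by first reducing to $g^{3/2}q^+_\parallel$ and $g^{3/2}q^+_\perp$, using the closed-form expressions for $\Dgrk(g^{3/2}q^+_\parallel)$ and $\Dgrk(g^{3/2}q^+_\perp)$ that were already derived there. After changing variables to $\tau=e^{-s}$ as in Lemma \ref{prop:hchiconvergence} and dividing by $(\chi^*_sg)^{3/2}=[g+(e^{-s}-1)(g-T/E)]^{3/2}$, the integrands are products of bounded rational functions in $\tau$ (the denominator $\chi^*_sg$ interpolates smoothly between $g>0$ and $T/E>0$ and never vanishes) with polynomial factors in $\tau$, so the integrals over $[0,1]$ converge absolutely. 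Explicit evaluation (elementary integrals of $e^{-ns}/(\chi^*_sg)^{3/2}$, $n=1,2$) yields the stated formulas for $h_\chi q^+_\parallel$ and $h_\chi q^+_\perp$ announced in the outline of Theorem \ref{theorem:laxequivalenceJacobi1DGR}.

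The only nontrivial point, and the one I would treat most carefully, is the boundedness of the factor $(\chi^*_sg)^{-3/2}$ for all $s\ge 0$: this requires using that both $g$ and $T/E$ are strictly positive on $\Fgr$ (a consequence of $g\in C^\infty(I,\mathbb{R}_{>0})$ and $\dot q\neq 0$), so that $\chi^*_sg$ is uniformly bounded away from zero. Once convergence on $\Fgr$ is established, Lemma \ref{prop:hchiconvergence} extends it to the whole BV-BFV complex $\bvbfv^\bullet_{GR}$, hence $\chi^*-\mathrm{id}_{GR}=(\mathcal{L}_{\Qgr}-\d)h_\chi+h_\chi(\mathcal{L}_{\Qgr}-\d)$ holds on $\bvbfv^\bullet_{GR}$ and $\chi^*$ descends to the identity in $H^\bullet(\bvbfv^\bullet_{GR})$. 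The same argument applied to $\lambda^*$ (which is already the identity by Lemma \ref{lem:lambdaJacobi1DGR}) completes the proof that $H^\bullet(\bvbfv^\bullet_J)\simeq H^\bullet(\bvbfv^\bullet_{GR})$.
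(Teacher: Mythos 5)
Your proposal is correct and follows essentially the same route as the paper's proof: vanishing of $h_\chi$ on $\ker\Rgr$, the identity $e^{s\mathcal{L}_{\Dgr}}(g^{3/2}g^+)=e^{-s}g^{3/2}g^+$ for $h_\chi g$, and reduction of the $q^+_\parallel$, $q^+_\perp$ cases to explicit integrals of the form $\int_0^\infty e^{-ns}\,(\chi^*_s g)^{-k/2}\,\d s$ (the paper evaluates these directly as $I_\perp$, $I_1$, $I_2$, with denominators of power $5/2$ as well as $3/2$). Your explicit remark that convergence hinges on $\chi^*_s g = e^{-s}g+(1-e^{-s})T/E$ being bounded away from zero, thanks to $g\in C^\infty(I,\mathbb{R}_{>0})$ and $\dot q\neq 0$, makes precise a point the paper leaves implicit in its closed-form evaluations.
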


\begin{proof}
We need to show that \[h_\chi f = \int^\infty_0 e^{s\mathcal{L}_{\Dgr}}  \mathcal{L}_{\Rgr}  f \d s \] is well-defined on $\F_{GR}$. Note that $\{q,\xi,\xi^+,g^+\} \in \ker \Rgr $ and as such 
\begin{align*}
    &h_\chi q = 
    h_\chi \xi = 
    h_\chi \xi^+ = 
    h_\chi g^+ = 0.
\end{align*}
In the case of the metric field $g$ we compute
\begin{align*}
    h_\chi g &= \int^\infty_0e^{s\mathcal{L}_{\Dgr}}\Rgr g \,\mathrm{d}s
    = -\frac{2}{E} \int^\infty_0e^{s\mathcal{L}_{\Dgr}}(g^{3/2}g^+) \mathrm{d}s\\
    &= -\frac{2}{E} \int^\infty_0e^{-s} g^{3/2}g^+ \mathrm{d}s
    = -\frac{2g^{3/2}}{E} g^+.
\end{align*}
We now compute the action of the map $h_\chi$ on $q^+_\parallel$ and $q^+_\perp$. For the perpendicular part of $q^+$ we have
\begin{align*}
    h_\chi q^+_\perp 
    &= \int^\infty_0 e^{s\mathcal{L}_{\Dgr}} \Rgr q^+_\perp \mathrm{d}s
    = \int^\infty_0 e^{s\mathcal{L}_{\Dgr}} \left(\frac{3}{E}\sqrt{g}g^+ q^+_\perp\right) \mathrm{d}s\\
    &=\frac{3}{E} \int^\infty_0 (e^{s\mathcal{L}_{\Dgr}}g)^{1/2} e^{s\mathcal{L}_{\Dgr}}(g^{3/2}g^+)e^{s\mathcal{L}_{\Dgr}}(g^{3/2}q^+_\perp) (e^{s\mathcal{L}_{\Dgr}}g)^{-3} \mathrm{d}s\\
    &=\frac{3}{E} \int^\infty_0 (e^{s\mathcal{L}_{\Dgr}}g)^{-5/2} e^{-s}g^{3/2}g^+ \left(g^{3/2}q^+_\perp - (e^{-s}-1)\cancel{\frac{2m}{E} \ddot q_\perp g^{3/2}g^+} \right) \mathrm{d}s\\
    &=\frac{3}{E} g^3g^+q^+_\perp \int^\infty_0 \frac{e^{-s}}{\left[e^{-s}g + (1-e^{-s})\frac{T}{E}\right]^{5/2}} \mathrm{d}s.
\end{align*}
The integral yields
\begin{align*}
    I_\perp 
    &= \int^\infty_0 \frac{e^{-s}}{\left[e^{-s}g + (1-e^{-s})\frac{T}{E}\right]^{5/2}} \mathrm{d}s
    = \left. \frac{2}{3} \left(g-\frac{T}{E} \right)^{-1} \left[e^{-s}g + (1-e^{-s})\frac{T}{E}\right]^{-3/2}\right|^\infty_0\\
    &= \frac{2}{3} \left(g-\frac{T}{E} \right)^{-1} \left[\left(\frac{E}{T}\right)^{3/2} - \frac{1}{g^{3/2}}\right]
    = \frac{2}{3} \frac{E}{Tg^{3/2}} \frac{\eta^{3/2}-1}{\eta-1}\\
    &= \frac{2}{3} \frac{E}{Tg^{3/2}} \left(\frac{\eta}{\sqrt{\eta}+1} + 1\right).
\end{align*}
Which results in
\begin{align*}
    h_\chi q^+_\perp = \frac{2}{T} \left(\frac{\eta}{\sqrt{\eta}+1} + 1\right) g^{3/2} g^+q^+_\perp.
\end{align*}
Similarly we have
\begin{align*}
    h_\chi q^+_\parallel 
    &= \int^\infty_0 e^{s\mathcal{L}_{\Dgr}} \Rgr q^+_\parallel \mathrm{d}s
    = -\frac{3}{E}\int^\infty_0 e^{s\mathcal{L}_{\Dgr}} \left( \sqrt{g}\left(\frac{E}{2\sqrt{g}} - \frac{T}{2g^{3/2}}\right) \xi^+ \frac{m\dot q}{2T}\right) \mathrm{d}s\\
    &= \frac{3}{2} \frac{m\dot q}{2T} \int^\infty_0 e^{s\mathcal{L}_{\Dgr}} \left( \left(\frac{T}{Eg} - 1\right) \xi^+\right)  \mathrm{d}s\\
    &= \frac{3}{2} \frac{m\dot q}{2T} \int^\infty_0  \left(\frac{T}{E} (e^{s\mathcal{L}_{\Dgr}}g)^{-1} - 1\right) (e^{s\mathcal{L}_{\Dgr}}g)^{-3/2} \left[g^{3/2} \xi^+ -(e^{-2s} - 1)\frac{g^{3}}{E}\dot g^+ g^+\right] \mathrm{d}s\\
    &= \frac{3}{2} \frac{m\dot q}{2T} \int^\infty_0  \left(\frac{T}{E} - e^{s\mathcal{L}_{\Dgr}}g \right) (e^{s\mathcal{L}_{\Dgr}}g)^{-5/2} \left[g^{3/2} \xi^+ + \frac{g^{3}}{E}\dot g^+ g^+ - e^{-2s} \frac{g^{3}}{E}\dot g^+ g^+  \right] \mathrm{d}s\\
    &= \frac{3}{2} \frac{m\dot q}{2T} \left[g^{3/2} \xi^+ + \frac{g^{3}}{E}\dot g^+ g^+\right] I_1
    - \frac{3}{2} \frac{m\dot q}{2T} \frac{g^{3}}{E}\dot g^+ g^+ I_2.
\end{align*}
The integrals that we need to consider are
\begin{align*}
    I_1 
    &= \int^\infty_0 \left(\frac{T}{E} - e^{-s}g - (1-e^{-s})\frac{T}{E}\right) \left[e^{-s}g + (1-e^{-s})\frac{T}{E}\right]^{-5/2} \mathrm{d}s\\
    &= \left(\frac{T}{E}-g\right) \int^\infty_0 \frac{e^{-s}}{\left[e^{-s}g + (1-e^{-s})\frac{T}{E}\right]^{5/2}} \mathrm{d}s\\
    &= \frac{T}{E}(1-\eta) I_\perp = - \frac{2}{3} g^{-3/2} (\eta^{3/2}-1),
\end{align*}
and
\begin{align*}
    I_2 
    &= \int^\infty_0 \left(\frac{T}{E} - e^{-s}g - (1-e^{-s})\frac{T}{E}\right) \frac{e^{-2s}}{(e^{s\mathcal{L}_{\Dgr}}g)^{5/2}} \mathrm{d}s\\
    &= \left(\frac{T}{E}-g\right) \int^\infty_0  \frac{e^{-3s}}{(e^{s\mathcal{L}_{\Dgr}}g)^{5/2}} \mathrm{d}s
    = -\frac{2}{3} \int^\infty_0
    e^{-2s} \frac{\d}{\d s}
    \left[(e^{s\mathcal{L}_{\Dgr}}g)^{-3/2} \right]\mathrm{d}s\\
    &= -\frac{2}{3} \left. \frac{e^{-2s}}{(e^{s\mathcal{L}_{\Dgr}}g)^{3/2}} \right|^\infty_0
    -\frac{4}{3} \int^\infty_0
    \frac{e^{-2s}}{(e^{s\mathcal{L}_{\Dgr}}g)^{3/2}} \mathrm{d}s\\
    &= \frac{2}{3} g^{-3/2}
    -\frac{4}{3} \int^\infty_0
    e^{-s}\frac{\d}{\d s}
    \left[\frac{2}{\left(g-\frac{T}{E}\right)\sqrt{e^{s\mathcal{L}_{\Dgr}}g}}\right] \mathrm{d}s\\
    &= \frac{2}{3} g^{-3/2}
    -\frac{8}{3} 
    \left.\frac{e^{-s}}{\left(g-\frac{T}{E}\right)\sqrt{e^{s\mathcal{L}_{\Dgr}}g}}\right|^\infty_0
    - \frac{8}{3} \int^\infty_0 \frac{e^{-s}}{\left(g-\frac{T}{E}\right)\sqrt{e^{s\mathcal{L}_{\Dgr}}g}} \d s\\
    &= \frac{2}{3} g^{-3/2} 
    + \frac{8}{3} \left(g-\frac{T}{E}\right)^{-1} g^{-1/2}  
    + \frac{16}{3} \left(g-\frac{T}{E}\right)^{-2} \left.\sqrt{e^{s\mathcal{L}_{\Dgr}}g}\right|^\infty_0\\
    &= \frac{2}{3} g^{-3/2} 
    + \frac{8}{3} \left(g-\frac{T}{E}\right)^{-1} g^{-1/2}  
    + \frac{16}{3} \left(g-\frac{T}{E}\right)^{-2} \left(\sqrt\frac{T}{E} - \sqrt{g}\right)\\
    &=\left(\frac{E}{T}\right)^{3/2}
    \left[
    \frac{2}{3} \frac{1}{\eta^{3/2}} 
    + \frac{8}{3} \frac{1}{\left(\eta-1\right)\sqrt{\eta}}
    + \frac{16}{3} \frac{1 - \sqrt{\eta}}{\left(\eta-1\right)^{2}} 
    \right]\\
    &= -\frac{2}{3}g^{-3/2}\frac{3 \eta - 2 \sqrt{\eta} - 1}{(\sqrt{\eta} + 1)^2}.
\end{align*}
Where on the third line we used the integral in $I_\perp$ and similarly for the other integrals.
Gathering everything results in
\begin{align*}
    h_\chi q^+_\parallel=  (1-\eta^{3/2}) \left[\xi^+ + \frac{g^{3/2}}{E}\dot g^+ g^+\right] \frac{m\dot q}{2T}
    + \frac{3 \eta - 2 \sqrt{\eta} - 1}{(\sqrt{\eta}+1)^2} \frac{g^{3/2}}{E}\dot g^+ g^+ \frac{m\dot q}{2T}. 
\end{align*}
\end{proof}

\section*{Competing Interests} The authors have no competing interests to declare that are relevant to the content of this article.

\printbibliography

\end{document}